\newtheorem{theorem}{Theorem}
\newtheorem{lemma}{Lemma}
\newtheorem{proposition}{Proposition}
\newtheorem{definition}{Definition}
\newtheorem*{lemma1*}{Lemma 1}
\newtheorem*{lemma2*}{Lemma 2}
\newtheorem*{lemma3*}{Lemma 3}
\newtheorem*{lemma4*}{Lemma 4}
\newtheorem*{lemma5*}{Lemma 5}
\newtheorem*{lemma8*}{Lemma 8}
\newtheorem*{lemma9*}{Lemma 9}
\newtheorem*{lemma10*}{Lemma 10}
\newtheorem*{lemma11*}{Lemma 11}
\newtheorem*{theorem1*}{Theorem 1}
\newtheorem*{theorem2*}{Theorem 2}
\newtheorem*{theorem3*}{Theorem 3}
\newcommand{\damian}[1]{{\color{black}{#1}}}			
\newcommand{\damiannn}[1]{{\color{black}{#1}}}	
\newcommand{\damiann}[1]{{\color{black}{#1}}}
\newcommand{\dpg}[1]{{\color{violet}{#1}}}
\begin{document}
\interfootnotelinepenalty=10000

\title{Practical quantum tokens without quantum memories and experimental tests}

\author{Adrian Kent}
\affiliation{Centre for Quantum Information and Foundations, DAMTP, Centre for Mathematical Sciences, University of Cambridge, Wilberforce Road, Cambridge, CB3 0WA, U.K.}
\affiliation{Perimeter Institute for Theoretical Physics, 31 Caroline Street North, Waterloo, ON N2L 2Y5, Canada}

\author{David Lowndes}
\affiliation{Quantum Engineering Technology Labs, H. H. Wills Physics Laboratory and Department of Electrical and Electronic Engineering, University of Bristol, Bristol, U.K.
}

\author{Dami\'an Pital\'ua-Garc\'ia}
\email{D.Pitalua-Garcia@damtp.cam.ac.uk}
\affiliation{Centre for Quantum Information and Foundations, DAMTP, Centre for Mathematical Sciences, University of Cambridge, Wilberforce Road, Cambridge, CB3 0WA, U.K.}

\author{John Rarity}
\affiliation{Quantum Engineering Technology Labs, H. H. Wills Physics Laboratory and Department of Electrical and Electronic Engineering, University of Bristol, Bristol, U.K.
}

\date{\today}

\begin{abstract}

Unforgeable quantum money tokens were the first invention of quantum information science, but remain technologically challenging as they require quantum memories and/or long distance quantum communication. More recently, virtual  `S-money' tokens were introduced.   These are generated by quantum cryptography, do not require quantum memories or long distance quantum communication, and yet in principle
guarantee many of the security advantages of quantum money. Here, we describe implementations of S-money schemes with off-the-shelf quantum key distribution technology, and analyse security in the presence of noise, losses, and experimental imperfection. Our schemes satisfy near instant validation without cross-checking. We show that, given standard assumptions in mistrustful quantum cryptographic implementations, unforgeability and user privacy could be guaranteed with attainable refinements of our off-the-shelf setup.
We discuss the possibilities for unconditionally secure  (assumption-free) implementations.
\end{abstract}

\maketitle

\section{Introduction}


Quantum tokens, also called quantum money, were invented by Wiesner
\cite{wiesner1983conjugate} in 1970. In Wiesner's original
  quantum token scheme Bob (the bank) secretly and securely
generates a classical serial number $s$ and a quantum state
$\lvert\psi\rangle$ of $N$ qubits, prepared from a set of different
bases, gives $s$ and $\lvert\psi\rangle$ to Alice, and stores $s$ and
the classical description of $\lvert\psi\rangle$ in a database. Alice presents
the token by giving $s$ and $\lvert\psi\rangle$ back to Bob, and Bob
validates or rejects the token after measuring the received quantum
state in the basis in which $\lvert\psi\rangle$ was prepared. In
refinements of this scheme \cite{gavinsky2012quantum,MVW12,PYLLC12,GK15,MP16,AA17,BDG19,K19,HS20}, Alice can present the token to Bob
or to one of a set of verifiers, by communicating the classical
outcomes of quantum measurements applied on $\lvert\psi\rangle$, as
requested by Bob  or the verifier. Alternatively, Alice presents
the token by giving $s$ and $\lvert\psi\rangle$ to the verifier, who applies quantum measurements on $\lvert\psi\rangle$. The verifier communicates with Bob to validate or reject the token.

There exist quantum token schemes satisfying \emph{unforgeability}, i.e. they guarantee that a token cannot be validated more than once, with \emph{unconditional security}, i.e. based only on the laws of physics without restricting the technology of dishonest Alice \cite{gavinsky2012quantum,MVW12,PYLLC12,GK15,MP16,AA17,BDG19,K19,HS20}. Intuitively, this follows from the no-cloning theorem, stating that it is impossible to perfectly copy unknown quantum states \cite{WZ82,D82}. Unforgeable quantum token schemes based on computational assumptions have also been investigated (e.g. \cite{BBBW83,MS10,AC12,FGHLS12}), with some of these schemes not requiring communication with the bank for token validation (e.g \cite{AC12,FGHLS12}).

However, there exist purely classical token schemes that can also guarantee unforgeability with unconditional security. For example, the token may comprise a classical serial number $s$ and a classical secret password $x$ that Bob gives Alice and that Alice presents by giving to one of a set of verifiers; validation of the token comprises \emph{cross-checking}; for example, the verifier communicates with Bob and validates the token if this has not been presented before and if the given serial number and password correspond to each other. 

In addition to unforgeability, some important properties of quantum token schemes are the following. First, quantum tokens can be transferred while keeping Bob's database static. On the other hand, since classical information can be copied perfectly, in order to satisfy unforgebaility, when a purely classical token with serial number $s$ is transferred from Alice to another party Charlie, Bob must change the classical data associated to $s$; for example, Bob must change $x$ to another value $x'$ and give $s$ and $x'$ to Charlie in the example above \cite{gavinsky2012quantum}.

Second, some quantum token schemes satisfy \emph{instant
  validation}. This means that the schemes do not require
communication between the verifiers and Bob for validation after Alice
presents the token \cite{PYLLC12}. This implies in
particular that the token can be presented by Alice at one
of a set of different spacetime points that can be spacelike separated
without validation delays by the verifier due to cross-checking with
Bob and/or with other verifiers.

Third, quantum token schemes satisfy \emph{future privacy for the user}, or simply \emph{user privacy}. That is, neither Bob, nor the verifiers, can know where and when Alice will present the token. 

It is not difficult to construct purely classical token schemes that
satisfy with unconditional security any two of
  unforgeability, instant validation and user privacy. For example,
the classical token scheme above satisfies unforgeability and user
privacy with unconditional security, but not instant
validation. To the best of our knowledge no purely classical token scheme
has been shown to satisfy all three properties
simultaneously with unconditional security. Classical variations of the quantum token schemes we consider here, based on classical relativistic bit commitments, whose security is hypothesized but not proven, were proposed in Ref. \cite{KSmoney}, which considers their potential advantages and disadvantages.   As far as we are aware, aside from these, there are no known classical schemes that plausibly satisfy all three properties simultaneously with unconditional security.

Among plausible future applications of quantum token schemes
are very high value and time critical transactions requiring very high
security, such as financial trading, where many transactions take place
within half a millisecond \cite{WF10}, or network control, where
semi-autonomous teams need authentication as fast as possible. 
A reasonable assumption for such applications is
that tokens may be transferred a relatively small number of times
among a relatively small set of parties -- the tokens may be valid for
a relatively short time, for
example. In this context, Bob having a static database does not seem
to be a great advantage of quantum token schemes over classical
schemes whose databases must be updated after each transaction,
given that processing classical information is much easier and cheaper
than processing quantum information.  Furthermore, for very
high value transactions one might expect that the communication network among
Bob and the verifiers is sufficiently protected that communication
among them is very rarely (if ever) interrupted.
So, in this context, it appears to be a major advantage of quantum token schemes over classical token
schemes that a quantum token can be presented at one of a set of
spacelike separated points with near instant validation without time
delays due to cross-checking, while satisfying unforgeability and user
privacy with unconditional security.

Standard quantum token schemes satisfying unforgeability, user privacy and instant validation with unconditional security require to store quantum states in quantum memories and/or to transfer quantum states over long distances in order to give Alice enough flexibility in space and time to present the token \cite{wiesner1983conjugate,BBBW83,MS10,gavinsky2012quantum,AC12,FGHLS12,MVW12,PYLLC12,GK15,MP16,AA17,BDG19,K19,HS20}. Recently, a quantum memory of a single qubit with a coherence time of over an hour has been experimentally demonstrated \cite{WUZALZDYK17etal,WLQUZWYGZK21etal}. However, storing large quantum states for more than a fraction of a second remains challenging \cite{WLZSZLDYZ19etal,WMHSG20}. Furthermore, the transmission of quantum states over long distances in practice comprises the transmission of photons through optical fibre or through the atmosphere via satellites. In both cases a great fraction of the transmitted photons is lost. For these reasons, standard quantum token schemes are impractical for most purposes at present.

Recently, experimental investigations of quantum token schemes have been performed \cite{BCGLMN17,BBP17,BOVZKD18,GAAZLYWZP18etal,JBCL19}. Refs. \cite{BCGLMN17,JBCL19} investigated the experimental implementation of forging attacks on quantum token schemes. Ref. \cite{BBP17} presented a simulation of a quantum token scheme in IBM's five-qubit quantum computer. Refs. \cite{BOVZKD18,GAAZLYWZP18etal} reported proof-of-principle experimental demonstrations of the preparation and verification stages of quantum token schemes, by transmitting quantum states encoded in photons over a short distance -- for example, Ref. \cite{GAAZLYWZP18etal} reports optical fibre lengths of up to 10 meters. A full experimental demonstration of a quantum token scheme that includes storing quantum states in a quantum memory and/or transmitting quantum states over long distances remains an important open problem.

`S-money' \cite{KSmoney} is a 
class of quantum token schemes,
which is designed for the settings described above comprising networks
with relativistic or other trusted signalling constraints. These
schemes can guarantee many of the the security advantages of standard
quantum token schemes -- in particular, instant validation,
unforgeability and user privacy -- without requiring either
quantum state storage or long distance transmission of quantum states. Furthermore, S-money tokens that can be transferred among
several parties and that give the users a great flexibility in space
and time to present the token are also possible \cite{KPG20}.
In this paper, we begin to investigate how securely S-money schemes can be implemented in practice with current technology.

Our results are twofold. First, we introduce  quantum token schemes
that extend the quantum S-money scheme of
Ref. \cite{quantumtokenspat}  in practical experimental scenarios that consider losses, errors in the state preparations and
measurements, and deviations from random distributions; and, in photonic setups, photon sources that do not emit exactly single
photons, and single photon detectors with non-unit detection
efficiencies and with non-zero dark count probabilities, which are threshold detectors, i.e. which cannot distinguish the number of photons in detected pulses.
In our schemes, Alice can present the token at one of $2^M$
possible spacetime presentation points, which can have arbitrary
timelike or spacelike separation, for any positive integer $M$. Our
schemes satisfy instant validation and comprise Bob transmitting $N$
quantum states to Alice over a distance which can be arbitrarily
short, Alice measuring the received quantum states without storing
them, and further classical processing and classical communication
over distances which can be arbitrarily large. Thus, our schemes are advantageous over standard quantum token schemes because they do not need
quantum state storage or transmission of quantum states over long
distances. We use the flexible versions of S-money defined in
Ref. \cite{KPG20}, giving Alice the freedom to choose her spacetime
presentation point after having performed the quantum measurements. We show that our schemes satisfy unforgeability and user privacy,
given assumptions that have been standard in
implementations of mistrustful quantum cryptography to date (see Table \ref{tableassu}), but are
nonetheless idealizations.

Second, we performed experimental tests of the quantum stage of one of our schemes for the case of two presentation points, which show that with refinements of our setup our schemes can be
implemented securely, giving guarantees of unforgeability and user privacy, based on the standard assumptions in experimental mistrustful quantum cryptography mentioned above.

\section{Results}

\subsection{Preliminaries and Notation}

\label{section2}

We present below two quantum token schemes that do not require quantum state storage, are practical to implement with current technology, and allow for experimental imperfections. We show that for a range of experimental parameters our token schemes are secure.

In the token schemes below, Bob (the bank) and Alice (the acquirer) agree on spacetime regions $Q_i$ where a token can be presented by Alice to Bob, for $i\in\{0,1\}^M$ and for some agreed integer $M\geq 1$. Bob has trusted agents $\mathcal{B}$ and $\mathcal{B}_i$ controlling secure laboratories, and Alice has trusted agents $\mathcal{A}$ and $\mathcal{A}_i$ controlling secure laboratories, for $i\in\{0,1\}^M$. The agent $\mathcal{A}_i$ can send messages to $\mathcal{B}_i$ in the spacetime region $Q_i$, for $i\in\{0,1\}^M$. All communications among agents of the same party are performed via secure and authenticated classical channels, which can be implemented with previously distributed secret keys. Alice's agent $\mathcal{A}$ and Bob's agent $\mathcal{B}$ perform the specified actions in a spacetime region $P$ that lies within the intersection of the causal pasts of all $Q_i$, unless otherwise stated.

The token schemes comprise two main
stages. Stage I includes the quantum communication between
$\mathcal{B}$ and $\mathcal{A}$, which can take place between adjacent
laboratories, 
and must
be implemented within the intersection of the causal pasts of all the
presentation points.
In particular, this stage can take an arbitrarily long time and can be completed arbitrarily in the past of the presentation points, which is very helpful for practical implementations. Stage II comprises only classical
processing and classical communication among agents of Bob and Alice,
and must be implemented very fast in order to satisfy some
relativistic constraints. A token received by  $\mathcal{B}_b$ from $\mathcal{A}_b$ at $Q_b$ can be validated by $\mathcal{B}_b$ near-instantly at $Q_b$, without the need to cross check with other agents. We note that Alice chooses her presentation point in stage II, meaning in particular that it can take place after her quantum measurements have been completed. This is basically the application of  the refinement of flexible S-money tokens discussed in Ref. \cite{KPG20}, which gives Alice great flexibility in spacetime to choose her presentation point. See Tables \ref{ideal1} -- \ref{real2} for details.

In stage I, $\mathcal{B}$ generates quantum states randomly from a predetermined set and gives these to $\mathcal{A}$. $\mathcal{A}$ measures the received states in bases from a predetermined set. $\mathcal{A}$ sends some classical messages to $\mathcal{B}$, mainly to indicate the set of states that she successfully measured. For all $i\in\{0,1\}^M$, $\mathcal{A}$ communicates her classical outcomes to $\mathcal{A}_i$; $\mathcal{B}$ sends classical messages to $\mathcal{B}_i$, indicating mainly the labels of the states reported by $\mathcal{A}$ to be successfully measured.

In stage II, Alice chooses the label $b\in\{0,1\}^M$ of her chosen presentation point in the intersection of the causal pasts of the presentation points. Further classical communication steps among agents of Alice and Bob take place. The token schemes conclude by Alice giving a classical message $\mathbf{x}$ (the token) to Bob at her chosen presentation point $Q_b$ and Bob validating the token at $Q_b$ if $\mathbf{x}$ satisfies a mathematical condition.

The main difference between the first and second token schemes below (either in their idealized or
  realistic version) is that, in the first one, Alice measures each
received qubit randomly in one of two predetermined bases, while in
the second one Alice measures large sets of qubits in the
same basis, which is chosen randomly by Alice from two predetermined
bases. The first token scheme is more suitable to implement
with setups used for quantum key distribution. The second token scheme requires a slightly different setup.

We say a token scheme satisfies instant validation if, for any presentation point $Q_i$, an agent of Bob receiving a token from Alice at $Q_i$ can validate or reject the token nearly instantly at $Q_i$, without the need to wait for any messages from other agents at spacetime points spacelike separated from $Q_i$.

We say a token scheme is:
\begin{itemize}

\item $\epsilon_{\text{rob}}-$robust if the probability that Bob aborts when Alice and Bob follow the token scheme honestly is not greater than $\epsilon_{\text{rob}}$, for any $b\in\{0,1\}^M$;

\item $\epsilon_{\text{cor}}-$correct if the probability that Bob does not accept Alice's token as valid when Alice and Bob follow the token scheme honestly is not greater than $\epsilon_{\text{cor}}$, for any $b\in\{0,1\}^M$;

\item $\epsilon_{\text{priv}}-$private if the probability that Bob guesses Alice's bit-string $b$ before she presents her token to Bob is not greater than $\frac{1}{2^M}+\epsilon_{\text{priv}}$, if Alice follows the token scheme honestly, for $b\in\{0,1\}^M$ chosen randomly from a uniform distribution by Alice;

\item $\epsilon_{\text{unf}}-$unforgeable, if the probability that Bob accepts Alice's tokens as valid at any two or more different presentation points is not greater than $\epsilon_{\text{unf}}$, if Bob follows the token scheme honestly.

\end{itemize}

We say a token scheme using $N$ transmitted quantum states is:

\begin{itemize}

\item  \emph{robust} if it is $\epsilon_{\text{rob}}-$robust with $\epsilon_{\text{rob}}$ decreasing exponentially with $N$.

\item \emph{correct} if it is $\epsilon_{\text{cor}}-$correct with $\epsilon_{\text{cor}}$ decreasing exponentially with $N$.

\item \emph{private} if it is $\epsilon_{\text{priv}}-$private with $\epsilon_{\text{priv}}$ approaching zero by increasing some security parameter.

\item \emph{unforgeable} if it is $\epsilon_{\text{unf}}-$unforgeable with $\epsilon_{\text{unf}}$ decreasing exponentially with $N$.
\end{itemize}

Note that our definition of privacy is different because it depends on different parameters: see Lemma \ref{Bob1} below.  In our schemes each of the $N$ quantum states is a qubit state with probability $1-P_\text{noqub}$, and a quantum state of arbitrary Hilbert space dimension greater than two with probability $P_\text{noqub}$, where $P_\text{noqub}=0$ in ideal schemes and $P_\text{noqub}>0$ in practical schemes. In photonic implementations, each pulse transmitted by Bob is either vacuum or one-photon with probability $1-P_\text{noqub}$, and multi-photon with probability $P_\text{noqub}$.

Below we present token schemes for two presentation points ($M=1$) that satisfy instant validation and that are robust, correct, private and unforgeable. The extension to $2^M$ presentation points for any $M\in\mathbb{N}$ is given in Appendix \ref{manyapplast}. For clarity of the presentation we first present the ideal quantum token schemes $\mathcal{IQT}_1$ and $\mathcal{IQT}_2$ where there are not any losses, errors, or any other experimental imperfections. These are given in Table \ref{ideal1}. More realistic quantum token schemes $\mathcal{QT}_1$ and $\mathcal{QT}_2$ that allow for various experimental imperfections are presented in Tables \ref{real1} and \ref{real2}, respectively. An illustration of implementation in a token scheme for the case of two spacelike separated presentation points is given in Fig. \ref{fig0}.

We use the following notation. We use bold font notation $\mathbf{a}$ for strings of bits. The bitwise complement of a string $\mathbf{a}$ is denoted by $\bar{\mathbf{a}}$. The $k$th bit entry of a string $\mathbf{a}$ is denoted by $a_k$. We define the set $[N]=\{1,2,\ldots,N\}$. The symbol `$\oplus$' denotes bit-wise sum modulo 2 or sum modulo 2 depending on the context.  We write the Bennett-Brassard 1984 (BB84) states \cite{BB84} as $\lvert \phi_{00}\rangle=\lvert 0\rangle$, $\lvert \phi_{10}\rangle=\lvert 1\rangle$, $\lvert \phi_{01}\rangle=\lvert +\rangle$ and $\lvert \phi_{11}\rangle=\lvert -\rangle$, where $\lvert\pm\rangle=\frac{1}{\sqrt{2}}\bigl(\lvert 0\rangle\pm\lvert1\rangle\bigr)$, and where $\mathcal{D}_{0}=\{\lvert 0\rangle,\lvert 1\rangle\}$ and $\mathcal{D}_{1}=\{\lvert +\rangle,\lvert -\rangle\}$ are qubit orthonormal bases, called the computational and Hadamard bases, respectively. The Hamming distance is denoted by $d(\cdot,\cdot)$.

\begin{table*}
\begin{center}
\begin{tabular}{| p{15cm} | }
\hline
\\
\multicolumn{1}{|c|}{\textbf{Ideal quantum token scheme $\mathcal{IQT}_1$}}\\
\\
\hline
\\
\multicolumn{1}{|c|}{Stage I}\\

\vspace{0.4mm}

1. For $k\in[N]$, $\mathcal{B}$ generates the qubit state $\lvert\psi_k\rangle=\lvert\phi_{t_ku_k}\rangle$ randomly from the BB84 set and sends it to  $\mathcal{A}$ with its label $k$. Let the $N-$bit strings $\mathbf{t}=(t_1,\ldots,t_N)$ and $\mathbf{u}=(u_1,\ldots,u_N)$ denote the states and bases of preparation by $\mathcal{B}$.\\

\vspace{0.4mm}

2. For $k\in[N]$, $\mathcal{A}$ measures each received qubit randomly in the computational basis ($y_k=0$) or in the Hadamard basis ($y_k=1$) and obtains a string of $N$ bit outcomes $\mathbf{x}$. Let the $N$-bit string $\mathbf{y}=(y_1,\ldots,y_N)$ denote Alice's measurement bases.\\

\vspace{0.4mm}

3. $\mathcal{A}$ sends $\mathbf{x}$ to $\mathcal{A}_i$, for $i\in\{0,1\}$. \\

\vspace{0.4mm}

4. $\mathcal{A}$ chooses a bit $z$ randomly and gives $\mathcal{B}$ a string $\mathbf{d}$, where $\mathbf{d}=\mathbf{y}$ if $z=0$, or $\mathbf{d}=\bar{\mathbf{y}}$ if $z=1$.\\

\vspace{0.4mm}

5. For $i\in\{0,1\}$, $\mathcal{B}$ sends $\mathbf{d}$ to $\mathcal{B}_i$, who computes $\mathbf{d}_i$ in the causal past of $Q_i$, where $\mathbf{d}_0=\mathbf{d}$ and $\mathbf{d}_1=\bar{\mathbf{d}}$.\\

\vspace{0.4mm}

6. $\mathcal{B}$ sends $\mathbf{t}$ and $\mathbf{u}$ to $\mathcal{B}_i$, for $i\in\{0,1\}$.\\
\\

\hline
\\

\multicolumn{1}{|c|}{Stage II}\\

\vspace{0.4mm}

7. $\mathcal{A}$ chooses the presentation point $Q_b$ for the token, for some $b\in\{0,1\}$. $\mathcal{A}$ computes the bit $c=b\oplus z$ and sends it to $\mathcal{B}$.\\

\vspace{0.4mm}

8. $\mathcal{B}$ sends $c$ to $\mathcal{B}_i$, for $i\in\{0,1\}$.\\

\vspace{0.4mm}

9. For $i\in\{0,1\}$, in the causal past of $Q_i$, $\mathcal{B}_i$ computes the string $\tilde{\mathbf{d}}_i=\mathbf{d}_i$ if $c=0$, or $\tilde{\mathbf{d}}_i=\bar{\mathbf{d}_i}$ if $c=1$.\\

\vspace{0.4mm}

10. $\mathcal{A}$ sends a signal to $\mathcal{A}_b$ indicating to present the token at $Q_b$, and $\mathcal{A}_b$ presents the token $\mathbf{x}$ to $\mathcal{B}_b$ in $Q_b$.\\

\vspace{0.4mm}

11. $\mathcal{B}_b$ validates the token $\mathbf{x}$ received in $Q_b$ if $\mathbf{x}_b=\mathbf{t}_b$, where $\mathbf{a}_v$ is the restriction of a string $\mathbf{a}\in\{\mathbf{x},\mathbf{t}\}$ to entries $a_k$ with $k\in\Delta_v$, where $\Delta_v=\{k\in [N]\vert \tilde{d}_{v,k}=u_k\}$, and where $\tilde{d}_{v,k}$ is the $k$th bit entry of the string $\tilde{\mathbf{d}}_{v}$, for $k\in[N]$ and for $v\in\{0,1\}$. \damian{That is, Bob validates the token if Alice reports the correct measurement outcome for each qubit that she measured in Bob's preparation basis.}\\
\\
\botrule
\\
\multicolumn{1}{|c|}{\textbf{Ideal quantum token scheme $\mathcal{IQT}_2$}}\\
\\
\hline
\\
\multicolumn{1}{|c|}{Stage I}\\

\vspace{0.4mm}

1. As step 1 of $\mathcal{IQT}_1$.\\

\vspace{0.4mm}

2. The step 2 of $\mathcal{IQT}_1$ is replaced by the following. $\mathcal{A}$ chooses a bit $z$ randomly. $\mathcal{A}$ measures each received qubit in the computational basis if $z=0$ or in the Hadamard basis if $z=1$. The string $\mathbf{y}\in\{0,1\}^{N}$ denoting Alice's measurement bases has bit entries $y_k=z$ for $k\in[N]$.\\

\vspace{0.4mm}

3. As step 3 of $\mathcal{IQT}_1$. The steps 4 and 5 of $\mathcal{IQT}_1$ are discarded.\\

\vspace{0.4mm}

4. As step 6 of $\mathcal{IQT}_1$.\\
\\

\hline
\\

\multicolumn{1}{|c|}{Stage II}\\

\vspace{0.4mm}

5. As steps 7 and 8 of $\mathcal{IQT}_1$.\\

\vspace{0.4mm}

6. The step 9 of $\mathcal{IQT}_1$ is replaced by the following. For $i\in\{0,1\}$, in the causal past of $Q_i$, $\mathcal{B}_i$ computes the string $\tilde{\mathbf{d}}_i\in\{0,1\}^{N}$ with bit entries $\tilde{d}_{i,k}= i\oplus c$, for $k\in[N]$.\\

\vspace{0.4mm}

7. As steps 10 and 11 of $\mathcal{IQT}_1$.\\
\\

\botrule
\end{tabular}
\caption{Ideal quantum token schemes $\mathcal{IQT}_1$ and $\mathcal{IQT}_2$ for two presentation points. Steps 1 to 8 in $\mathcal{IQT}_1$, and 1 to 5 in $\mathcal{IQT}_2$, take place within the intersection of the causal pasts of the presentation points.}
\label{ideal1}
\end{center}
\end{table*}

\begin{figure}
\includegraphics{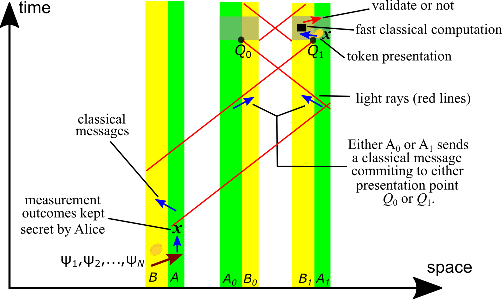}
\caption{\label{fig0} 
\textbf{Illustration of implementation in a quantum token scheme.} A  case of two presentation points in a Minkowski spacetime diagram in $1+1$ dimensions is illustrated. Bob has laboratories $B$, $B_0$ and $B_1$, controlled by agents $\mathcal{B}$, $\mathcal{B}_0$ and $\mathcal{B}_1$ (yellow rectangles), and Alice has laboratories $A$, $A_0$ and $A_1$, controlled by agents $\mathcal{A}$, $\mathcal{A}_0$ and $\mathcal{A}_1$ (green rectangles), adjacent to Bob's laboratories. The quantum communication stage takes place within $B$ and $A$, can take an arbitrarily long time and can be completed arbitrarily in the past of the presentation points ($Q_0$ and $Q_1$). Alice's classical measurement outcomes $\mathbf{x}$ are kept secret by Alice and communicated to her laboratories $A_0$ and $A_1$ via secure and authenticated classical channels. In this illustrated example, Alice sends classical messages to Bob at the laboratory $B$, and either at $B_0$ or $B_1$. The messages sent to $B$ can take place anywhere in the past of $Q_0$ and $Q_1$ after the quantum communication stage, and includes a message indicating the labels of the quantum states successfully measured by Alice. These messages are communicated from $B$ to $B_0$ and $B_1$ via secure and authenticated classical channels. Alice chooses to present her token at $Q_b$ within the intersection of the causal pasts of $Q_0$ and $Q_1$. The message at either $B_0$ or $B_1$ is the bit $c=b\oplus z$, effectively committing Alice to present her token at $Q_b$.
Alice presents the token by giving Bob
$\mathbf{x}$ at $Q_b$. The case $b=1$ is illustrated. The small black box represents a fast classical computation performed at Bob's laboratory receiving the token, to validate or reject Alice's token, as described in step 12 of the scheme $\mathcal{QT}_1$ (see Table \ref{real1}), for instance. As illustrated, this would require this computation to be completed within a time shorter than the time that light takes to travel between the locations of laboratories $B_0$ and $B_1$, which could be of 10 $\mu$s if $B_0$ and $B_1$ are separated by 3 km, for example.  \damian{This is because, as discussed in the introduction, we require presentation and acceptance to be completed within spacelike separated regions in order to achieve an advantage over purely classical token schemes.}}
\end{figure}

The quantum token schemes $\mathcal{IQT}_1$ and $\mathcal{IQT}_2$ given in Table \ref{ideal1} have the following properties.

First, the token schemes are correct. Since we assume there are not any errors in the state preparations and measurements, if Alice and Bob follow the token scheme honestly then Bob validates Alice's token at her chosen presentation point $Q_b$ with unit probability. If Alice and Bob follow $\mathcal{IQT}_1$ honestly, $\tilde{d}_{b,k}=d_{b,k}\oplus c=d_k\oplus b\oplus c=y_k\oplus z\oplus b\oplus c=y_k$, for $k\in[N]$. Thus, $\tilde{\mathbf{d}}_{b}=\mathbf{y}$, which means that $y_k=u_k$ for all $k\in\Delta_b$, hence, Alice measures in the same basis of preparation by Bob for all states $\lvert\psi_k\rangle$ with labels $k\in\Delta_b$. Therefore, Alice obtains the correct outcomes for these states: $\mathbf{x}_b=\mathbf{t}_b$.  Similarly,  if Alice and Bob follow $\mathcal{IQT}_2$ honestly then we have that $\tilde{\mathbf{d}}_b$ has bit entries $\tilde{d}_{b,k}=b\oplus c=z=y_k$, for $k\in[N]$. Thus, as above, $\tilde{\mathbf{d}}_b=\mathbf{y}$, i.e. $\tilde{\mathbf{d}}_b$ corresponds to the string of measurement basis implemented by Alice. Therefore, in both token schemes $\mathcal{IQT}_1$ and $\mathcal{IQT}_2$, Alice obtains $\mathbf{x}_b=\mathbf{t}_b$ and Bob validates Alice's token at $Q_b$ with unit probability.

Second, the token schemes are robust. More precisely, neither Bob nor Alice have the possibility to abort. This is because we assume there are not any losses of the transmitted quantum states and that Alice successfully measures all the received quantum states. Thus, Alice does not need to report to Bob any labels of states that she successfully measured, in contrast to the extended token schemes $\mathcal{QT}_1$ and $\mathcal{QT}_2$ discussed below.

Third, the token schemes are private, i.e. Bob cannot obtain any information about $b$ in the causal past of $Q_b$. This is because the messages Alice sends Bob in the causal past of $Q_b$ carry no information about $b$ and we assume that Alice's laboratories and communication channels are secure.

Fourth, the token schemes are unforgeable. This follows from the following lemma, which is shown in Appendix \ref{idealunfapp}. Alternative proofs are given in Ref. \cite{CK12}\damian{, based on quantum state discrimination tasks. We have chosen the proof given in Appendix \ref{idealunfapp} because an extension of it allows us to to prove Theorem \ref{Alice1} too.}

\begin{lemma}
	\label{lemma01}
	The quantum token schemes $\mathcal{IQT}_1$ and $\mathcal{IQT}_2$ are $\epsilon_{\text{unf}}-$unforgeable with
	\begin{equation}
	\label{000}
	\epsilon_{\text{unf}}=\Bigl(\frac{1}{2}+\frac{1}{2\sqrt{2}}\Bigr)^N.
\end{equation}	
\end{lemma}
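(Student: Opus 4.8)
The plan is to show that a successful forging attack would win a BB84 ``monogamy-of-entanglement'' game on the $N$ transmitted qubits, and then to bound the value of that game by a short operator-norm estimate; one argument covers $\mathcal{IQT}_1$ and $\mathcal{IQT}_2$.

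First I would carry out the reduction. In both schemes $\tilde{\mathbf{d}}_1=\bar{\tilde{\mathbf{d}}}_0$, so the test sets $\Delta_0=\{k:\tilde d_{0,k}=u_k\}$ and $\Delta_1=\{k:\tilde d_{1,k}=u_k\}$ partition $[N]$. Relabelling $Q_0\leftrightarrow Q_1$ amounts to flipping $c$ and swapping the strings presented at the two points, and the forging event ``both tokens accepted'' is invariant under this, so I may assume $c=0$; then $\tilde{\mathbf{d}}_0=\mathbf{d}$ in $\mathcal{IQT}_1$ and $\tilde{\mathbf{d}}_0=(0,\ldots,0)$ in $\mathcal{IQT}_2$. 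Since Bob sends Alice nothing except the qubit register, and the strings presented at $Q_0,Q_1$ are her only remaining messages, a dishonest Alice's most general strategy is, without loss of generality, a single joint POVM on the $N$ qubits yielding classical data $(\mathbf{d},\mathbf{x}^{(0)},\mathbf{x}^{(1)})$ in the common causal past of $Q_0,Q_1$, with $\mathbf{x}^{(i)}$ presented at $Q_i$ (for spacelike $Q_0,Q_1$ this is immediate; for other causal arrangements one notes that a later presentation may as well ignore the acceptance/rejection of an earlier one, since in the only branch relevant to forging a precommitted string is as good as any). The $Q_0$-test requires $\mathbf{x}^{(0)}$ to agree with $\mathbf{t}$ on $\Delta_0$ and the $Q_1$-test requires $\mathbf{x}^{(1)}$ to agree with $\mathbf{t}$ on $\Delta_1$; since $\Delta_0\sqcup\Delta_1=[N]$, this means that for each $k$ Alice has committed a guess $a_k$ of $t_k$ in the computational basis and a guess $b_k$ of $t_k$ in the Hadamard basis ($\{a_k,b_k\}=\{x^{(0)}_k,x^{(1)}_k\}$, the assignment determined by $d_k$; in $\mathcal{IQT}_2$ simply $a_k=x^{(0)}_k$, $b_k=x^{(1)}_k$), and that she wins iff $a_k=t_k$ whenever $u_k=0$ and $b_k=t_k$ whenever $u_k=1$, for all $k\in[N]$ --- precisely the $N$-fold BB84 monogamy game.

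Second I would bound the winning probability. Given the POVM outcome and the basis string $\mathbf{u}$, the winning value of $\mathbf{t}$ is unique, so, writing $\lvert\Phi_{\mathbf{s}\mathbf{u}}\rangle=\bigotimes_{k=1}^{N}\lvert\phi_{s_ku_k}\rangle$ and coarse-graining the POVM to $\{M_{\mathbf{a}\mathbf{b}}\}$ according to the induced guesses, one gets $\epsilon_{\text{unf}}=4^{-N}\sum_{\mathbf{a},\mathbf{b}}\mathrm{tr}(M_{\mathbf{a}\mathbf{b}}R_{\mathbf{a}\mathbf{b}})$, where $R_{\mathbf{a}\mathbf{b}}$ is the sum over all $\mathbf{u}\in\{0,1\}^N$ of the rank-one projectors onto the corresponding winning states. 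The key point is that this sum factorizes across the $N$ qubits, $R_{\mathbf{a}\mathbf{b}}=\bigotimes_{k=1}^{N}\bigl(\lvert\phi_{a_k0}\rangle\langle\phi_{a_k0}\rvert+\lvert\phi_{b_k1}\rangle\langle\phi_{b_k1}\rvert\bigr)$. Each tensor factor is the sum of a computational-basis and a Hadamard-basis rank-one projector, which have overlap squared $\tfrac12$ and hence operator norm $1+\tfrac1{\sqrt2}$, so $R_{\mathbf{a}\mathbf{b}}\le(1+\tfrac1{\sqrt2})^{N}I$ (as operators) for every $(\mathbf{a},\mathbf{b})$. Therefore $\mathrm{tr}(M_{\mathbf{a}\mathbf{b}}R_{\mathbf{a}\mathbf{b}})\le(1+\tfrac1{\sqrt2})^{N}\mathrm{tr}(M_{\mathbf{a}\mathbf{b}})$, and summing over $(\mathbf{a},\mathbf{b})$ with $\sum_{\mathbf{a},\mathbf{b}}M_{\mathbf{a}\mathbf{b}}=I$ on $(\mathbb{C}^2)^{\otimes N}$ gives $\epsilon_{\text{unf}}\le 4^{-N}(1+\tfrac1{\sqrt2})^{N}2^{N}=\bigl(\tfrac12+\tfrac1{2\sqrt2}\bigr)^{N}$, as claimed.

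I expect the delicate part to be the reduction rather than the estimate: one must verify carefully that an entangled measurement followed by two non-adaptive presentations is genuinely the most general attack, and that the relation $\Delta_0\sqcup\Delta_1=[N]$ really forces the ``guess $t_k$ in both conjugate bases'' condition; once this is set up, the quantum content collapses to the one-line norm bound above. The bound is tight --- attained by a tensor product of single-qubit Breidbart-type measurements --- consistent with the exponent $\cos^2(\pi/8)=\tfrac12+\tfrac1{2\sqrt2}$; alternative derivations are given in Ref.~\cite{CK12}.
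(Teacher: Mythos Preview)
Your proof is correct. The core quantum estimate is the same as the paper's: both reduce forging to bounding the operator norm of
\[
R_{\mathbf a\mathbf b}=\bigotimes_{k=1}^N\bigl(\lvert\phi_{a_k0}\rangle\langle\phi_{a_k0}\rvert+\lvert\phi_{b_k1}\rangle\langle\phi_{b_k1}\rvert\bigr),
\]
and both obtain $\lVert R_{\mathbf a\mathbf b}\rVert=(1+1/\sqrt2)^N$ from the single-qubit eigenvalue $1+\lvert\langle\phi_{a0}\vert\phi_{b1}\rangle\rvert$. The route differs in presentation. The paper works in an entanglement-based picture (Bob holds half of $N$ Bell pairs), allows Alice an ancilla so that her measurement is projective, and then applies Proposition~\ref{proposition2} to get $P_{\mathcal S}\le\max_{x,\mathbf a,\mathbf b}\lVert D_{x\mathbf a\mathbf b}\rVert$; the norm bound itself is imported from the general Lemma~\ref{lemma0} specialized to $\gamma_{\text{err}}=0$, $\beta_{\text{PB}}=0$, $O=1/\sqrt2$. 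You instead stay in the prepare-and-measure picture, keep Alice's measurement a POVM, and use the trace inequality $\mathrm{tr}(M_{\mathbf a\mathbf b}R_{\mathbf a\mathbf b})\le\lVert R_{\mathbf a\mathbf b}\rVert\,\mathrm{tr}(M_{\mathbf a\mathbf b})$ followed by $\sum_{\mathbf a,\mathbf b}\mathrm{tr}(M_{\mathbf a\mathbf b})=2^N$. Your argument is more self-contained for the ideal case; the paper's detour through Lemma~\ref{lemma0} is there because that lemma is the engine for the realistic Theorem~\ref{Alice1}, where errors, biases, and multi-photon pulses enter. Your reduction step (fixing $c=0$, arguing non-adaptivity of the two presentations, and reading off the $\Delta_0\sqcup\Delta_1=[N]$ structure) matches the paper's, which likewise packages Alice's classical outputs $x=(\mathbf d,c)$ into a single projective outcome and observes the bound is independent of $x$.
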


Fifth, the token schemes satisfy instant validation. We note from step 11 of $\mathcal{IQT}_1$ that a token received by Bob's agent $\mathcal{B}_b$ from Alice's agent $\mathcal{A}_b$ at a presentation point $Q_b$ can be validated by $\mathcal{B}_b$ near-instantly at $Q_b$. In particular, $\mathcal{B}_b$ does not need to wait for any signals coming from other agents  of Bob.

Finally, the token schemes above can be modified in various ways. For example, in $\mathcal{IQT}_1$, step 3 can be discarded, and step 10 can be replaced by the following: after choosing $b$, $\mathcal{A}$ sends $\mathbf{x}$ to $\mathcal{A}_b$ and $\mathcal{A}_b$ presents the token $\mathbf{x}$ to $\mathcal{B}_b$ in $Q_b$. In another variation, step 5 in $\mathcal{IQT}_1$ can be modified so that $\mathcal{B}$ computes $\mathbf{d}_i$ and sends it to $\mathcal{B}_i$; in both versions of step 5, $\mathcal{B}_i$ must have $\mathbf{d}_i$ in the causal past of $Q_i$, for $i\in\{0,1\}$. In another variation, the step 9 in $\mathcal{IQT}_1$ is performed only by Bob's agent $\mathcal{B}_b$ receiving a token from Alice. The  version we have chosen for step 9 allows $\mathcal{B}_b$ to reduce the computation time after receiving a token, hence, allowing faster token validation. Further variations of the token schemes can be devised in order to satisfy specific requirements; for example, some steps might need to be completed within very short times, which might require to reduce the computations within these steps, which can be achieved by delegating some computations within some other steps, for instance.

\subsection{Practical quantum token schemes $\mathcal{QT}_1$ and $\mathcal{QT}_2$ for two presentation points}
\label{sec1}

The quantum token schemes $\mathcal{QT}_1$ and $\mathcal{QT}_2$ presented in Tables \ref{real1} and \ref{real2} extend the quantum token schemes $\mathcal{IQT}_1$ and $\mathcal{IQT}_2$ to allow for various experimental imperfections (see Table \ref{tableimp}), and under some assumptions (see Table \ref{tableassu}). $\mathcal{QT}_1$ and $\mathcal{QT}_2$ can be implemented in practice with the photonic setups of \damian{Fig. \ref{setup}}. 

\begin{table*}
\begin{center}
\begin{tabular}{| p{15cm} | }
\hline
\\
\multicolumn{1}{|c|}{Preparation Stage}\\

\vspace{0.4mm}

0. Alice and Bob agree on a reference frame, on two presentation points $Q_0$ and $Q_1$ in the agreed frame, and on parameters $N\in\mathbb{N}$,  $\beta_{\text{PB}}\in\bigl(0,\frac{1}{2}\bigr)$, and $\gamma_{\text{det}},\gamma_{\text{err}}\in(0,1)$.\\

\\

\hline
\\
\multicolumn{1}{|c|}{Stage I}\\

\vspace{0.4mm}

1. For $k\in[N]$, $\mathcal{B}$ prepares bits $t_k$ and $u_k$ with respective probability distributions $P_{\text{PS}}^k(t_k)$ and $P_{\text{PB}}^k(u_k)$, satisfying $\frac{1}{2}-\beta_{\text{X}}\leq P_{\text{X}}^k(t) \leq \frac{1}{2}+\beta_{\text{X}}$, where $\beta_{\text{X}}\in\bigl(0,\frac{1}{2}\bigr)$ is a small parameter, for $\text{X}\in\{\text{PS},\text{PB}\}$, $t\in\{0,1\}$ and $k\in[N]$. We define $\mathbf{t}=(t_1,\ldots,t_N)$ and $\mathbf{u}=(u_1,\ldots,u_N)$. For $k\in[N]$, $\mathcal{B}$ prepares a quantum system $A_k$ in a quantum state $\lvert \psi_k\rangle$ and sends it to $\mathcal{A}$ with its label $k$. $\mathcal{B}$ chooses $k\in\Omega_{\text{noqub}}$ with probability $P_\text{noqub}>0$ or $k\in\Omega_{\text{qub}}$ with probability $1-P_\text{noqub}$. For $k\in\Omega_{\text{qub}}$, $\lvert\psi_k\rangle=\lvert \phi_{t_ku_k}^k\rangle$ is a qubit state, where $\langle \phi_{0u}^k\vert \phi_{1u}^k\rangle=0$ for $u\in\{0,1\}$, where the qubit orthonormal basis $\mathcal{D}_{u}^k=\{\lvert\phi_{tu}^k\rangle\}_{t=0}^1$ is the computational (Hadamard) basis up to an uncertainty angle $\theta$ on the Bloch sphere if $u=0$ ($u=1$).  For $k\in\Omega_{\text{noqub}}$, $\lvert\psi_k\rangle=\lvert \Phi_{t_ku_k}^k\rangle$ is a quantum state of arbitrary finite Hilbert space dimension greater than two. In photonic implementations, a vacuum or one-photon pulse has label $k\in\Omega_{\text{qub}}$, with a one-photon pulse encoding a qubit state, while a multi-photon pulse has label $k\in\Omega_{\text{noqub}}$ and encodes a quantum state of finite Hilbert space dimension greater than two.

\\

\vspace{0.4mm}

2. For $k\in[N]$, $\mathcal{A}$ measures $A_k$ in the qubit orthonormal basis $\mathcal{D}_{w_k}$, for $w_k\in\{0,1\}$ and $k\in[N]$. Due to losses, $\mathcal{A}$ only successfully measures quantum states $\vert \psi_k\rangle$ with labels $k$ from a proper subset $\Lambda$ of $[N]$. Let $W$ be the string of bit entries $w_k$ for $k\in\Lambda$ and let $n=\vert\Lambda\vert$. Conditioned on $k\in\Lambda$, the probability 
that $\mathcal{A}$ measures $A_k$ in the basis $\mathcal{D}_{w_k}$ satisfies $P_{\text{MB}}(w_k)=\frac{1}{2}$, for $w_k\in\{0,1\}$ and $k\in[N]$. $\mathcal{A}$ reports to $\mathcal{B}$ the set $\Lambda$. $\mathcal{B}$ does not abort if and only if $n\geq \gamma_\text{det} N$.\\

\vspace{0.4mm}

3. $\mathcal{A}$ chooses a one-to-one function $g: \Lambda\rightarrow [n]$, for example the numerical ordering, and sends it to $\mathcal{B}$. Let $y_j\in\{0,1\}$ indicate the basis $\mathcal{D}_{y_j}$ on which the quantum state $\vert \psi_k\rangle$ is measured by $\mathcal{A}$ and let $x_j\in\{0,1\}$ be the measurement outcome, where $j=g(k)$, for $k\in\Lambda$ and $j\in[n]$. Let $\mathbf{y}\in\{0,1\}^{n}$ and $\mathbf{x}\in\{0,1\}^{n}$ denote the strings of Alice's measurement bases and outcomes, respectively.\\

\vspace{0.4mm}

4. $\mathcal{A}$ sends $\mathbf{x}$ to $\mathcal{A}_i$, for $i\in\{0,1\}$.\\

\vspace{0.4mm}

5. $\mathcal{A}$ chooses a bit $z$ with probability $P_{\text{E}}(z)$ that satisfies $\frac{1}{2}-\beta_{\text{E}}\leq P_{\text{E}}(z) \leq \frac{1}{2}+\beta_{\text{E}}$, for $z\in\{0,1\}$, and for a small parameter $\beta_{\text{E}}\in\bigl(0,\frac{1}{2}\bigr)$. $\mathcal{A}$ computes the string $\mathbf{d}\in\{0,1\}^{n}$ with bit entries $d_j=y_j\oplus z$, for $j\in[n]$. $\mathcal{A}$ sends $\mathbf{d}$ to $\mathcal{B}$.\\

\vspace{0.4mm}

6. For $i\in\{0,1\}$, $\mathcal{B}$ sends $\mathbf{d}$ to  $\mathcal{B}_i$ and $\mathcal{B}_i$ computes the string $\mathbf{d}_i\in\{0,1\}^{n}$ with bit entries $d_{i,j}=d_j\oplus i$, for $j\in[n]$.\\

\vspace{0.4mm}

7. $\mathcal{B}$ uses $\mathbf{t},\mathbf{u},\Lambda$ and $g$ to compute the strings $\mathbf{s},\mathbf{r}\in\{0,1\}^{n}$, as follows. We define $r_j=t_k$, and $s_j=u_k$, where $j=g(k)$, for $j\in[n]$ and $k\in\Lambda$. We define $\mathbf{r}$ and $\mathbf{s}$ as the strings with bit entries $r_j$ and $s_j$, for $j\in[n]$. $\mathcal{B}$ sends $\mathbf{s}$ and $\mathbf{r}$ to $\mathcal{B}_i$, for $i\in\{0,1\}$.\\
\\

\hline
\\

\multicolumn{1}{|c|}{Stage II}\\

\vspace{0.4mm}

8. $\mathcal{A}$ chooses the presentation point $Q_b$ where to present the token, for some $b\in\{0,1\}$. $\mathcal{A}$ computes the bit $c=b\oplus z$ and sends it to $\mathcal{B}$.\\

\vspace{0.4mm}

9. $\mathcal{B}$ sends $c$ to $\mathcal{B}_i$, for $i\in\{0,1\}$.\\

\vspace{0.4mm}

10. For $i\in\{0,1\}$, in the causal past of $Q_i$, $\mathcal{B}_i$ computes the string $\tilde{\mathbf{d}}_i\in\{0,1\}^{n}$ with bit entries $\tilde{d}_{i,j}=d_{i,j}\oplus c$, for $j\in[n]$.\\

\vspace{0.4mm}

11. $\mathcal{A}$ sends a signal to $\mathcal{A}_b$ indicating to present the token at $Q_b$, and $\mathcal{A}_b$ presents the token $\mathbf{x}$ to $\mathcal{B}_b$ in $Q_b$.\\

\vspace{0.4mm}

12. $\mathcal{B}_b$ validates the token $\mathbf{x}$ received in $Q_b$ if  the Hamming distance between the strings $\mathbf{x}_b$ and $\mathbf{r}_b$ satisfies $d(\mathbf{x}_b,\mathbf{r}_b)\leq \lvert \Delta_b\rvert \gamma_\text{err}$, where $\Delta_v=\{j\in [n]\vert \tilde{d}_{v,j}=s_j\}$, and where $\mathbf{a}_v$ is the restriction of a string $\mathbf{a}\in\{\mathbf{x},\mathbf{r}\}$ to entries $a_j$ with $j\in\Delta_v$, for $v\in\{0,1\}$.\\

\\

\botrule
\end{tabular}
\caption{Practical quantum token scheme $\mathcal{QT}_1$ for two presentation points. Steps 1 to 9 take place within the intersection of the causal pasts of the presentation points. \damian{See Table \ref{notationtable} for a summary of the notation} \damiann{and Fig. \ref{newfigure} for an illustration of the scheme.}}
\label{real1}
\end{center}
\end{table*}

\begin{table}
\begin{center}
\begin{tabular}{| p{8cm} | }
\hline
\\
\multicolumn{1}{|c|}{Preparation Stage}\\

\vspace{0.5mm}

0. As step 0 of $\mathcal{QT}_1$.\\

\\

\hline
\\
\multicolumn{1}{|c|}{Stage I}\\

\vspace{0.5mm}

1. As step 1 of $\mathcal{QT}_1$.\\

\vspace{0.5mm}

2. The step 2 of $\mathcal{QT}_1$ is replaced by the following. $\mathcal{A}$ chooses a bit $z$ with probability $P_{\text{E}}(z)$ satisfying $\frac{1}{2}-\beta_{\text{E}}\leq P_{\text{E}}(z)\leq\frac{1}{2}+\beta_{\text{E}}$, for $z\in\{0,1\}$ and for a small parameter $\beta_{\text{E}}\in\bigl(0,\frac{1}{2}\bigr)$. $\mathcal{A}$ measures $A_k$ in the qubit orthonormal basis $\mathcal{D}_{z}$, for all $k\in[N]$. Due to losses, $\mathcal{A}$ only successfully measures quantum states $\vert \psi_k\rangle$ with labels $k$ from a proper subset $\Lambda$ of $[N]$. $\mathcal{A}$ reports to $\mathcal{B}$ the set $\Lambda$. Let $n=\lvert \Lambda\rvert$. $\mathcal{B}$ does not abort if and only if $n\geq \gamma_\text{det} N$.\\

\vspace{0.5mm}

3. As step 3 of $\mathcal{QT}_1$. The string $\mathbf{y}\in\{0,1\}^{n}$ of Alice's measurement bases has bit entries $y_j=z$ for $j\in[n]$.\\

\vspace{0.5mm}

4. As step 4 of $\mathcal{QT}_1$. The steps 5 and 6 of $\mathcal{QT}_1$ are discarded.\\

\vspace{0.5mm}

5. As step 7 of $\mathcal{QT}_1$.\\
\\

\hline
\\

\multicolumn{1}{|c|}{Stage II}\\

\vspace{0.5mm}

6. As steps 8 and 9 of $\mathcal{QT}_1$.\\

\vspace{0.5mm}

7. The step 10 of $\mathcal{QT}_1$ is replaced by the following. For $i\in\{0,1\}$, $\mathcal{B}_i$ computes the string $\tilde{\mathbf{d}}_i\in\{0,1\}^{n}$ with bit entries $\tilde{d}_{i,j}= i\oplus c$, for $j\in[n]$.\\

\vspace{0.5mm}

8. As steps 11 and 12 of $\mathcal{QT}_1$.\\

\\

\botrule
\end{tabular}
\caption{Practical quantum token scheme $\mathcal{QT}_2$ for two presentation points. \damian{See Table \ref{notationtable} for a summary of the notation} \damiann{and Fig. \ref{newfigure} for an illustration of the scheme.}}
\label{real2}
\end{center}
\end{table}

\begin{table}[!hbt]
\begin{center}
\small
\begin{tabular}{| >{\centering}p{2.0cm} | p{7.0cm} |}
\hline
\multicolumn{1}{|c|}{\normalsize \textbf{Symbol}} & \multicolumn{1}{|c|}{\normalsize \textbf{Brief description}} \\
\hline
\multicolumn{1}{|c|}{\normalsize $Q_i$} & Presentation points  \\
\hline
\multicolumn{1}{|c|}{\normalsize $\mathcal{A}$ ($\mathcal{B}$)} & Alice's (Bob's) agent participating in the quantum communication stage  \\
\hline
\multicolumn{1}{|c|}{\normalsize $\mathcal{A}_i$ ($\mathcal{B}_i$)} & Alice's (Bob's) agent by the presentation point $Q_i$  \\
\hline
\multicolumn{1}{|c|}{\normalsize $A_k$} & Quantum systems sent to Alice by Bob  \\
\hline
\multicolumn{1}{|c|}{\normalsize $N$} & Number of quantum states that Bob sends Alice \\
\hline
\multicolumn{1}{|c|}{\normalsize $\Omega_\text{qub}$} & Set of labels for prepared qubits states \\
\hline
\multicolumn{1}{|c|}{\normalsize $\Omega_\text{noqub}$} & Set of labels for prepared quantum states with dimension greater than two \\
\hline
\multicolumn{1}{|c|}{\normalsize $P_\text{noqub}$} & Probability that a prepared quantum state has dimension greater than two \\
\hline
\multicolumn{1}{|c|}{\normalsize $\bold{t}$} & String of bits encoding the quantum states prepared by Bob \\
\hline
\multicolumn{1}{|c|}{\normalsize $\bold{u}$} & String of bits encoding the bases of preparation by Bob \\
\hline
\multicolumn{1}{|c|}{\normalsize $\mathcal{D}_u^k$} & Qubit orthonormal bases of preparation by Bob \\
\hline
\multicolumn{1}{|c|}{\normalsize $\mathcal{D}_{w_k}$} & Qubit orthonormal bases of measurement by Alice \\
\hline
\multicolumn{1}{|c|}{\normalsize $P_{\text{MB}}(w_k)$} & Probability distribution for Alice's measurement bases \\
\hline
\multicolumn{1}{|c|}{\normalsize $\beta_\text{PB}$} & Bias for preparation basis \\
\hline
\multicolumn{1}{|c|}{\normalsize $\beta_\text{PS}$} & Bias for preparation state \\
\hline
\multicolumn{1}{|c|}{\normalsize $\Lambda$} & Set of labels for quantum states successfully measured by Alice \\
\hline
\multicolumn{1}{|c|}{\normalsize $W$} & String of bits encoding the measurement bases for the quantum states successfully measured by Alice \\
\hline
\multicolumn{1}{|c|}{\normalsize $\gamma_\text{det}$} & Minimum rate for states reported by Alice as successfully measured for Bob not aborting \\
\hline
\multicolumn{1}{|c|}{\normalsize $\gamma_\text{err}$} & Maximum error rate allowed by Bob for validating Alice's token  \\
\hline
\multicolumn{1}{|c|}{\normalsize $g$} & One-to-one function $g:\lvert \Lambda\rvert\rightarrow [n]$  \\
\hline
\multicolumn{1}{|c|}{\normalsize $\bold{y}$ ($\bold{x}$)} & String of bits encoding Alice's measurement outcomes (bases)  \\
\hline
\multicolumn{1}{|c|}{\normalsize $z$} & Bit chosen by Alice  \\
\hline
\multicolumn{1}{|c|}{\normalsize $P_\text{E}(z)$} & Probability distribution for bit $z$ chosen by Alice \\
\hline
\multicolumn{1}{|c|}{\normalsize $\beta_\text{E}$} & Bias for the probability distribution $P_\text{E}(z)$ \\
\hline
\multicolumn{1}{|c|}{\normalsize $\bold{d}$} & String with bit entries $d_j=y_j\oplus z$ that Alice sends Bob \\
\hline
\multicolumn{1}{|c|}{\normalsize $\bold{d}_i$} & String with bit entries $d_{i,j}=d_j\oplus i$ computed by Bob's agent $\mathcal{B}_i$ \\
\hline
\multicolumn{1}{|c|}{\normalsize $\bold{r}$ ($\bold{s}$)} & String of bits encoding Bob's prepared states (preparation bases) for the states that Alice reports as successfully measured \\
\hline
\multicolumn{1}{|c|}{\normalsize $b$} & Bit encoding Alice's chosen presentation point \\
\hline
\multicolumn{1}{|c|}{\normalsize $c$} & Bit $c=b\oplus z$, which Alice sends Bob \\
\hline
\multicolumn{1}{|c|}{\normalsize $\tilde{\bold{d}}_i$} & String with bit entries  $\tilde{d}_{i,j}=d_{i,j}\oplus c$ computed by Bob's agent $\mathcal{B}_i$\\
\hline
\multicolumn{1}{|c|}{\normalsize $\Delta_\nu$} & Set of labels defined by $\Delta_\nu=\bigl\{j\in[n]\vert \tilde{d}_{\nu,j}=s_j\bigr\}$, for $\nu\in\{0,1\}$\\
\hline
\multicolumn{1}{|c|}{\normalsize $\bold{a}_\nu$} & The substring of $\bold{a}\in\{\bold{x},\bold{r}\}$ restricted to bit entries $a_k$ with $k\in\Delta_\nu$, for $\nu\in\{0,1\}$\\
\hline
\botrule
\end{tabular}
\caption{\normalsize Summary of notation used for $\mathcal{QT}_1$ and $\mathcal{QT}_2$.\label{notationtable}}
\end{center}
\end{table}

\begin{figure*}[!htb]
\includegraphics{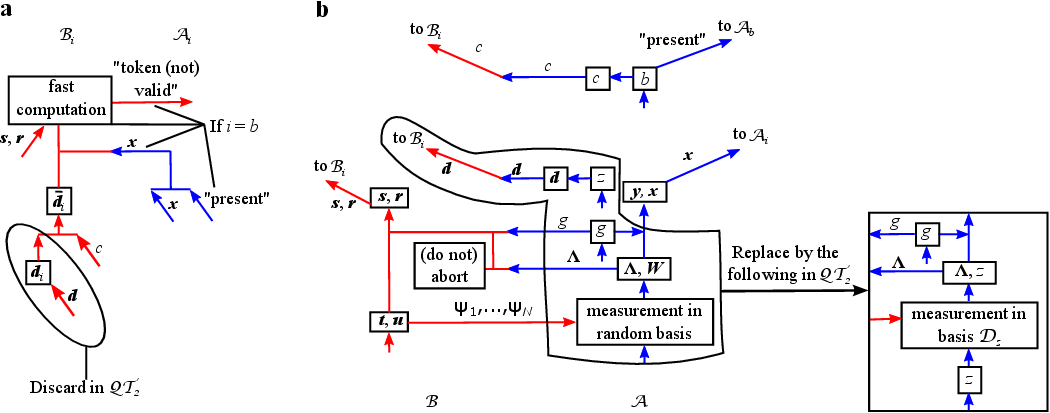}
\caption{\label{newfigure} \textbf{Diagram of the quantum token schemes $\mathcal{QT}_1$ and $\mathcal{QT}_2$.} Alice's (Bob's) steps are indicated with the blue (red) arrows. The differences between $\mathcal{QT}_1$ and $\mathcal{QT}_2$ are shown. \textbf{b} The steps perfomed by Alice's and Bob's agents $\mathcal{A}$ and $\mathcal{B}$ in $\mathcal{QT}_1$ are illustrated. \textbf{a} The steps of Alice's and Bob's agents $\mathcal{A}_i$ and $\mathcal{B}_i$ in $\mathcal{QT}_1$ are shown, for $i\in\{0,1\}$. The case $i=b$ represents Alice's token presentation and Bob's validation/rejection. }
\end{figure*}

\begin{figure}[!htb]
\includegraphics[scale=0.54]{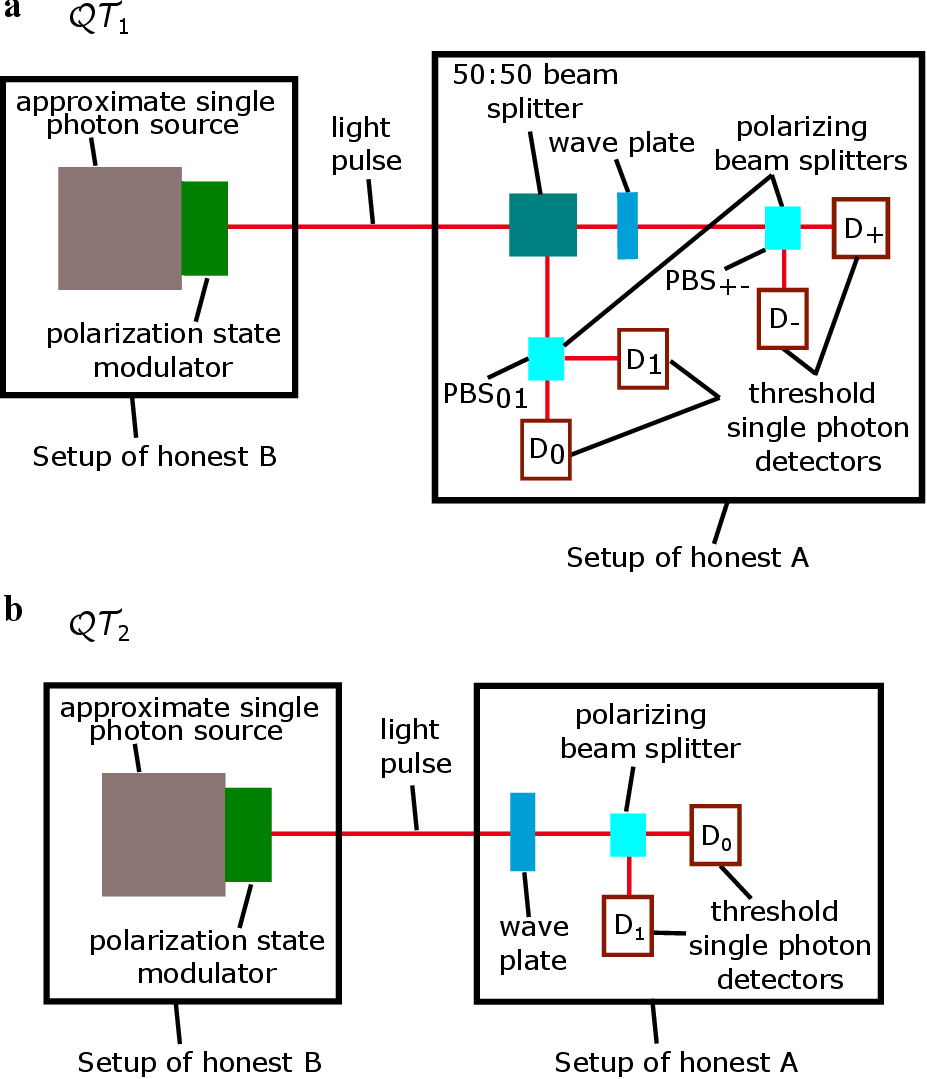}
\caption{\label{setup} \textbf{\damian{Photonic setups to implement the quantum stage of $\mathcal{QT}_1$ and $\mathcal{QT}_2$}.} 
\damian{In both $\mathcal{QT}_1$ and $\mathcal{QT}_2$, the} setup of honest $\mathcal{B}$ comprises an
  approximate single photon source and a polarization state modulator, encoding the quantum state $\lvert\psi_k\rangle$ in the polarization degrees of freedom of a photon pulse labelled by $k$, for $k\in[N]$. \damian{\textbf{a} In $\mathcal{QT}_1$, the} setup of honest $\mathcal{A}$ comprises a 50:50 beam splitter, a wave plate, two polarizing beam splitters (PBS$_{01}$ and PBS$_{+-}$) and four threshold single photon detectors D$_0$, D$_1$, D$_+$ and D$_-$. In order to counter multi-photon attacks by $\mathcal{B}$, $\mathcal{A}$ implements the following reporting strategy that we call here \emph{reporting strategy 1}: $\mathcal{A}$ assigns successful measurement outcomes in the basis $\mathcal{D}_0$ ($\mathcal{D}_1$) with unit probability for the pulses in which at least one of the detectors D$_0$ and D$_1$ (D$_+$ and D$_-$) click and  D$_+$ and D$_-$ (D$_0$ and D$_1$) do not click. As follows from Ref. \cite{BCDKPG21}, this reporting strategy offers perfect protection against arbitrary multi-photon attacks, given assumption
   F (see Table \ref{tableassu}, and Lemma \ref{lemmax} in Methods). \damian{\textbf{b} In $\mathcal{QT}_2$, the} setup of honest $\mathcal{A}$ comprises a wave plate set in one of two positions, according to the value of her bit $z$, a polarizing beam splitter and two threshold
  single photon detectors D$_0$ and D$_1$. In order to counter multi-photon attacks by $\mathcal{B}$, $\mathcal{A}$ implements the following reporting strategy that we call here \emph{reporting strategy 2}: $\mathcal{A}$ reports to $\mathcal{B}$ as successful measurements those in which at least one of her two detectors click. As follows from Ref. \cite{BCDKPG21}, this reporting strategy offers perfect protection against arbitrary multi-photon attacks, given assumption F 
  (see Table \ref{tableassu}, and Lemma \ref{lemmax} in Methods).}
\end{figure}

\begin{table*}
\begin{center}
\small
\begin{tabular}{| >{\centering}p{0.5cm} | p{5.5cm} | p{11.3cm} |}
\hline
\multicolumn{1}{|c|}{\normalsize \textbf{No}} & \multicolumn{1}{|c|}{\normalsize \textbf{Brief description}} & \multicolumn{1}{|c|}{\normalsize \textbf{Explanation and comments}}\\
\hline
\multicolumn{1}{|c|}{\normalsize \textbf{1}} & For $k\in[N]$, there is a small probability $P_\text{noqub}>0$ for $\mathcal{B}$ to prepare a quantum state $\lvert\psi_k\rangle$ of arbitrary finite Hilbert space dimension greater than two.  & In photonic implementations, we define $P_\text{noqub}$ and $\Omega_{\text{noqub}}\subseteq[N]$ as the probability that a pulse is multi-photon and as the set of labels for multi-photon pulses (see Methods). We define $\Omega_{\text{qub}}=[N]\setminus\Omega_{\text{noqub}}$ as the set of labels for vacuum or one-photon pulses, where the subindex refers to `qubit'. When showing unforgeability, we treat vacuum pulses as one-photon pulses encoding the qubit state Bob attempted to send. Since this gives Alice extra options that
cannot make it more difficult for her to cheat, the
deduced unforgeability bound 
holds in
general. 
 A Poissonian photon source (e.g. weak coherent) with average photon number $\mu<<1$ gives $P_\text{noqub}=1-(1+\mu)e^{-\mu}=\frac{\mu^2}{2}+O(\mu^3)$,  while a heralded single-photon source can give extremely small values for $P_\text{noqub}$, of the order of $10^{-10}$ for usual experimental parameters.
\\
\hline

\multicolumn{1}{|c|}{\normalsize \textbf{2}}  & For $k\in\Omega_{\text{qub}}$, $\mathcal{B}$ prepares $\lvert \psi_k\rangle=\lvert \phi_{t_ku_k}^k\rangle$ in a qubit orthonormal basis $\mathcal{D}_{u_k}^k$ that is the computational (Hadamard) basis within an uncertainty angle $\theta\in\bigl(0,\frac{\pi}{4}\bigr)$ on the Bloch sphere if $u_k=0$ ($u_k=1$).   & 
Thus, the angle on the Bloch sphere between the states $\lvert \phi_{t0}^k\rangle$ and $\lvert \phi_{t1}^k\rangle$ is guaranteed to be within the range $[\frac{\pi}{2}-2\theta,\frac{\pi}{2}+2\theta]$, for $k\in\Omega_{\text{qub}}$. We define $O(\theta)=\frac{1}{\sqrt{2}}\bigl(\cos\theta +\sin\theta \bigr)$, where the notation refers to `overlap' on the Bloch sphere. It follows that $\lvert \langle \phi_{t0}^k\vert \phi_{t'1}^k\rangle \rvert\leq O(\theta)$, for some $O(\theta)\in\bigl(\frac{1}{\sqrt{2}},1\bigr)$, for $t,t'\in\{0,1\}$ and $k\in\Omega_{\text{qub}}$.\\
\hline
\multicolumn{1}{|c|}{\normalsize \textbf{3}}  & For $k\in[N]$, $\mathcal{B}$ generates the bits $t_k$ and $u_k$ with probability distributions $P_{\text{PS}}^k(t_k)$ and $P_{\text{PB}}^k(u_k)$ that have small deviations from the random distributions given by  biases $\beta_{\text{PS}}, \beta_{\text{PB}} >0 $.  &  That is, we have $\frac{1}{2}-\beta_{\text{X}}\leq P_{\text{X}}^k(t)\leq \frac{1}{2}+\beta_\text{X}$, with $0<\beta_\text{X}<\frac{1}{2}$, for $t\in\{0,1\}$, $k\in[N]$ and $\text{X}\in\{\text{PS},\text{PB}\}$. The subindices `PS' and `PB' refer to `preparation state' and `preparation basis', respectively.
It is useful for our security analysis to define: $\lambda(\theta,\beta_{\text{PB}})=\frac{1}{2}\bigl(1-\sqrt{1-[1-(O(\theta))^2](1-4\beta_\text{PB}^2)}\bigr)$. It follows from $0<\beta_\text{PB}<\frac{1}{2}$ and $\frac{1}{\sqrt{2}}<O(\theta)<1$ that $0<\lambda(\theta,\beta_{\text{PB}}) < \frac{1}{2}\bigl(1-O(\theta)\bigr)<\frac{1}{2}\bigl(1-\frac{1}{\sqrt{2}}\bigr)$.
\\
\hline
\multicolumn{1}{|c|}{\normalsize \textbf{4}}  & A fraction of the quantum states transmitted from $\mathcal{B}$ to $\mathcal{A}$ is lost. In photonic setups, $\mathcal{A}$ has single photon detectors with non unit detection efficiencies. & Because of losses and non unit detection efficiencies (in
photonic setups), $\mathcal{A}$ must report to $\mathcal{B}$ the set $\Lambda\subset [N]$ of labels of the successfully measured states. $\mathcal{B}$ does not abort if and only if $\lvert \Lambda \rvert \geq \gamma_\text{det} N$,
where the subindex `det' stands for `detection'. \\
\hline
\multicolumn{1}{|c|}{\normalsize \textbf{5}}  & For $k\in[N]$, $\mathcal{A}$ measures the received state $\lvert\psi_k\rangle$ in one of two distinct orthogonal qubit basis, $\mathcal{D}_0$ and $\mathcal{D}_1$, where this pair of bases is arbitrary.  &  $\mathcal{A}$ applying a measurement on a qubit basis $\mathcal{D}_{0}$ ($\mathcal{D}_{1}$) on a received quantum state that is not a qubit, i.e. for $k\in\Omega_{\text{noqub}}$, means that $\mathcal{A}$ sets her devices  as she would do to apply a measurement in the qubit basis $\mathcal{D}_{0}$ ($\mathcal{D}_{1}$) -- we note that $\mathcal{A}$ does not know the sets $\Omega_{\text{qub}}$ and $\Omega_{\text{noqub}}$. For photonic setups, this may include arranging a set of wave plates, polarizing beam splitters and single photon detectors in a particular setting.
If $\mathcal{D}_{0}$ and $\mathcal{D}_{1}$ are very different to the computational and Hadamard bases, the number of measurement errors in Alice's outcomes is high. But, this is considered in our security analysis via Alice's error rate. Moreover, the set of two measurement bases applied by $\mathcal{A}$ could vary slightly for different quantum states $\lvert\psi_k\rangle$, i.e., for different $k\in[N]$. However, we can include these deviations from the measurement bases $\mathcal{D}_{0}$ and $\mathcal{D}_{1}$ of $\mathcal{A}$ in the bases $\mathcal{D}_{u_k}^k$ of preparation by $\mathcal{B}$, and assume that $\mathcal{A}$ applies either $\mathcal{D}_{0}$ or $\mathcal{D}_{1}$ to $\lvert \psi_k\rangle$, for $k\in[N]$. In other words, the uncertainty angle $\theta$ on the Bloch sphere accounts for both preparation and measurement misalignments. Thus, our analysis is without loss of generality.\\
\hline
\multicolumn{1}{|c|}{\normalsize \textbf{6}}  & There are errors in Alice's quantum measurements. & Thus, Alice obtains some errors in the measurements that she performs in the same basis of preparation by Bob. For this reason, in the validation stage, Bob allows a fraction of bit errors in Alice's reported measurement outcomes, up to a predetermined small threshold $\gamma_{\text{err}}>0$, where `err' stands for `errors'.\\
\hline
\multicolumn{1}{|c|}{\normalsize \textbf{7}}  & $\mathcal{A}$ generates the bit $z$ with probability distribution $P_{\text{E}}(z)$ that has small deviation from the random distribution given by a bias $\beta_{\text{E}}>0$.  & That is, we have that $\frac{1}{2}-\beta_{\text{E}}\leq P_{\text{E}}(z)\leq
\frac{1}{2}+\beta_{\text{E}}$,
for $z\in\{0,1\}$. The subindex `E'  refers to `encoding'. 
$\mathcal{A}$ can guarantee the parameter $\beta_{\text{E}}$ to decrease exponentially with a number $N_\text{CRB}$ of close-to-random bits with biases not greater than $\beta_{\text{CRB}}\in\bigl(0,\frac{1}{2}\bigr)$, as follows from the Piling-up Lemma \cite{Pilinguplemma}.\\

\hline
\multicolumn{1}{|c|}{\normalsize \textbf{8}}  & In photonic setups, the single photon detectors used by $\mathcal{A}$ are threshold, i.e. they cannot distinguish the number of photons activating a detection. Moreover, the detectors have non-zero dark count probabilities.  & Thus, for some photon pulses received from $\mathcal{B}$, more than one of the detectors of $\mathcal{A}$ click. In order to counter multi-photon attacks \cite{BCDKPG21}, in which $\mathcal{B}$ sends and tracks multi-photon pulses to obtain information about the measurement bases of $\mathcal{A}$, and guarantee privacy, $\mathcal{A}$ must carefully choose how to report multiple clicks to $\mathcal{B}$, i.e. how to define successful measurements. For this reason, in the second step of our token schemes $\mathcal{QT}_1$ and $\mathcal{QT}_2$ with the photonic setups of \damian{Fig. \ref{setup}}, $\mathcal{A}$ implements the reporting strategies 1 and 2, respectively. As follows from straightforward extensions of Lemmas 1 and 12 of Ref. \cite{BCDKPG21}, assumption F (see Table \ref{tableassu}) guarantees that these reporting strategies offer perfect protection against arbitrary multi-photon attacks (see Lemma \ref{lemmax} in Methods).\\
\botrule
\end{tabular}
\caption{\normalsize Allowed experimental imperfections for $\mathcal{QT}_1$ and $\mathcal{QT}_2$.\label{tableimp}}
\end{center}
\end{table*}

\begin{table*}
\begin{center}
\small
\begin{tabular}{| >{\centering}p{1cm} | p{6.5cm} | p{9.8cm} |}
\hline
\multicolumn{1}{|c|}{\normalsize \textbf{Label}} & \multicolumn{1}{|c|}{\normalsize \textbf{Brief description}} & \multicolumn{1}{|c|}{\normalsize \textbf{Explanation and comments}}\\
\hline
\multicolumn{1}{|c|}{\normalsize \textbf{A}}  & For $k\in\Omega_{\text{qub}}$, $\mathcal{B}$ prepares  $\lvert\psi_k\rangle=\lvert \phi_{t_ku_k}^k\rangle$, where $\langle \phi_{0u}^k\vert \phi_{1u}^k\rangle=0$, defining the qubit orthonormal basis $\mathcal{D}_{u}^k=\{\lvert\phi_{tu}^k\rangle\}_{t=0}^1$, for $u\in\{0,1\}$. & That is, we assume that $\mathcal{B}$ prepares each qubit state from exactly two qubit bases. 
However, in the most general case (not considered here), $\mathcal{B}$ prepares each qubit state from a set of four qubit states that does not necessarily define two qubit basis.
\\
\hline
\multicolumn{1}{|c|}{\normalsize \textbf{B}}  & $\mathcal{B}$ generates the bit strings $\mathbf{t}=(t_1,\ldots,t_N)$ and $\mathbf{u}=(u_1,\ldots,u_N)$ with probability distributions that are exactly products of single bit probability distributions. & In the general case (not considered here), the strings $\mathbf{t}$ and $\mathbf{u}$ could be generated with a probability distribution in which $\mathbf{t}$ and $\mathbf{u}$, and different bit entries of $\mathbf{t}$ and $\mathbf{u}$, could be correlated.\\
\hline
\multicolumn{1}{|c|}{\normalsize \textbf{C}}  & The set $\Lambda$ of labels transmitted to $\mathcal{B}$ in step 2 of $\mathcal{QT}_1$ and $\mathcal{QT}_2$ gives $\mathcal{B}$ no information about the string $W$ and the bit $z$. & In the photonic setups of \damian{Fig. \ref{setup}} to implement $\mathcal{QT}_1$ and $\mathcal{QT}_2$, with the reporting strategies 1 and 2, respectively, assumption C (and also assumption D for $\mathcal{QT}_1$) follows from assumptions E and F (see Lemma \ref{lemmax} in Methods). \\
\hline
\multicolumn{1}{|c|}{\normalsize \textbf{D}}  & In $\mathcal{QT}_1$, conditioned on reporting the quantum state $\lvert\psi_k\rangle$ as successfully measured, i.e. conditioned on $k\in\Lambda$, $\mathcal{A}$ measures $\lvert\psi_k\rangle$ in an orthogonal qubit basis $\mathcal{D}_{w_k}$ with a probability distribution $P_{\text{MB}}(w_k)=\frac{1}{2}$, for $w_k\in\{0,1\}$ and $k\in[N]$, where the subindex denotes `measurement basis'. & This is a necessary, but in general not sufficient, condition for $\mathcal{QT}_1$ to satisfy assumption C. If this assumption did not hold, there would be at least one label $k'\in\Lambda$ for which $P_{\text{MB}}(w_{k'}=i)>P_{\text{MB}}(w_{k'}=i\oplus 1)$, for some $i\in\{0,1\}$. Thus, $\mathcal{B}$ could in principle guess the entry $w_{k'}$ of $W$ with probability greater than $\frac{1}{2}$. This would mean that the set $\Lambda$ reported by $\mathcal{A}$ would have given $\mathcal{B}$ some information about $W$, in contradiction with assumption C.\\
\hline
\multicolumn{1}{|c|}{\normalsize \textbf{E}}  & $\mathcal{B}$ cannot use degrees of freedom not previously agreed for the transmission of the quantum states to affect, or obtain information about, the statistics of the quantum measurement devices of $\mathcal{A}$. & This assumption guarantees that $\mathcal{A}$ is perfectly protected from any side-channel attack by $\mathcal{B}$ in any type of physical setup (not necessarily photonic) \cite{BCDKPG21}. \\
\hline
\multicolumn{1}{|c|}{\normalsize \textbf{F}}  & In the photonic setup of Fig. \ref{setup}, the detectors D$_0$, D$_1$, D$_+$ and D$_-$ of $\mathcal{A}$ have equal detection efficiencies $\eta\in(0,1)$, and respective dark count probabilities $d_0,  d_1, d_+, d_- \in(0,1)$ satisfying $(1-d_0)(1-d_1)=(1-d_+)(1-d_-)$, for $k\in[N]$.
In the photonic setup of \damian{Fig. \ref{setup}}, the detectors D$_0$ and D$_1$ of $\mathcal{A}$ satisfy that: 1) their detection efficiencies have the same value $\eta\in(0,1)$, for $k\in[N]$; and 2) their dark count probabilities have values $d_0\in(0,1)$ and $d_1\in(0,1)$, for $k\in[N]$. Dark counts and each photo-detection are independent random events, for $k\in[N]$. & In our notation, the term `detection efficiency' includes the quantum
efficiency of the detectors of $\mathcal{A}$ and the transmission
efficiency from the setup of $\mathcal{B}$ to the detectors. We note that the condition $(1-d_0)(1-d_1)=(1-d_+)(1-d_-)$ can be satisfied if $d_0=d_+$ and $d_1=d_-$, or if $d_0=d_-$ and $d_1=d_+$, for instance. Exactly equal detection efficiencies cannot be guaranteed in practice. But, attenuators can be used to make the detector efficiencies approximately equal. Furthermore, $\mathcal{A}$ can effectively make the dark count probabilities of her detectors approximately equal by simulating dark counts in the detectors with lower dark count probabilities so that they approximate the dark count probability of the detector with highest dark count probability. To our knowledge, that dark counts and each photo-detection are independent random events is a valid assumption.\\
\hline
\multicolumn{1}{|c|}{\normalsize \textbf{G}}  & In photonic setups, from the perspective of $\mathcal{A}$, the pulses of $\mathcal{B}$ are mixtures
of Fock states: in particular $\mathcal{A}$ has no information about
relative phases of the components with definite photon number. & If this assumption is not satisfied, the quantum state received by $\mathcal{A}$ could not be described by our analysis, opening the possibility to attacks more powerful than the ones considered in our security proof (e.g. more powerful state discrimination attacks \cite{DHL00}). This assumption is consistent with our security analysis and is satisfied in practice if $\mathcal{B}$ uses a weak coherent source and he uniformly randomizes the phase of each pulse transmitted to $\mathcal{A}$; or if $\mathcal{B}$ uses an arbitrary photonic source with arbitrary signal states and he applies a physical operation to the transmitted pulses with the property that it applies a random
phase $\varphi$ per photon -- i.e. an $l$-photon pulse acquires an amplitude
$e^{il\varphi}$ \cite{ILM07}. Alternatively, this condition can be satisfied to a good approximation if $\mathcal{B}$ uses a photonic source with low spatio-temporal coherence, for example, a source comprising LEDs \cite{MSDS10}, as in our experimental tests reported below.\\
\botrule
\end{tabular}
\caption{\normalsize Assumptions for $\mathcal{QT}_1$ and $\mathcal{QT}_2$.\label{tableassu}}
\end{center}
\end{table*}

The token schemes $\mathcal{QT}_1$ and $\mathcal{QT}_2$ can be modified in various ways, as discussed for the token schemes $\mathcal{IQT}_1$ and $\mathcal{IQT}_2$.

Regarding correctness, we note in the token scheme $\mathcal{QT}_1$ that if Alice follows the token scheme honestly and chooses to present the token in $Q_b$, then we have that  $\tilde{\mathbf{d}}_b$ has bit entries $\tilde{d}_{b,j}=d_{b,j}\oplus c=d_j\oplus b\oplus c=d_j\oplus z=y_j$, for $j\in[n]$. Thus, $\tilde{\mathbf{d}}_b=\mathbf{y}$, i.e. $\tilde{\mathbf{d}}_b$ corresponds to the string of measurement bases implemented by Alice on the quantum states reported to be successfully measured.  Similarly, in the token scheme $\mathcal{QT}_2$ if Alice follows the token scheme honestly and chooses to present the token in $Q_b$, then we have that $\tilde{\mathbf{d}}_b$ has bit entries $\tilde{d}_{b,j}=b\oplus c=z=y_j$, for $j\in[n]$. Thus, as above, $\tilde{\mathbf{d}}_b=\mathbf{y}$, i.e. $\tilde{\mathbf{d}}_b$ corresponds to the string of measurement bases implemented by Alice on the quantum states reported to be successfully measured. Therefore, in both token schemes $\mathcal{QT}_1$ and $\mathcal{QT}_2$, if Alice can guarantee her error probability to be bounded by $E<\gamma_\text{err}$ then with very high probability she will make less than $\lvert \Delta_b\rvert \gamma_\text{err}$ bit errors in the $\lvert \Delta_b\rvert$ quantum states that she measured in the basis of preparation by Bob.

Let $P_{\text{det}}$ be the probability that a quantum state $\lvert \psi_k\rangle$ transmitted by Bob is reported by Alice as being successfully measured, with label $k\in\Lambda$, for $k\in[N]$. Let $E$ be the probability that Alice obtains a wrong measurement outcome when she measures a quantum state $\lvert\psi_k\rangle$ in the basis of preparation by Bob; if the error rates $E_{tu}$ are different for different prepared states, labelled by $t$, and for different measurement bases, labelled by $u$, we simply take $E=\max_{t,u}\{E_{tu}\}$. 

The robustness, correctness, privacy and unforgeability of $\mathcal{QT}_1$ and $\mathcal{QT}_2$ are stated by the following lemmas, proven in Appendix \ref{proofoflemmas}, and theorem, proven in Appendix \ref{newapp}. These lemmas and theorem consider parameters $\gamma_\text{det},\gamma_\text{err}\in(0,1)$, allow for the experimental imperfections of Table \ref{tableimp} and make the assumptions of Table \ref{tableassu}. \damiann{A diagram presenting the conditions under which robustness, correctness and unforgeability are satisfied simultaneously is given in Fig. \ref{figdiagram}.}

\begin{lemma}
\label{robust1}
If
\begin{equation}
				\label{rob1}
		0<\gamma_\text{det} <P_{\text{det}},
	\end{equation}
then $\mathcal{QT}_1$ and $\mathcal{QT}_2$ are $\epsilon_{\text{rob}}-$robust with
\begin{equation}
			\label{rob2}
			\epsilon_{\text{rob}}=e^{-\frac{P_{\text{det}}N}{2}\bigl(1-\frac{\gamma_\text{det} }{P_{\text{det}}}\bigr)^2.
			}
			\end{equation}
\end{lemma}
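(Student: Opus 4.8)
The plan is to reduce robustness to a single lower-tail estimate on the number $n=\lvert\Lambda\rvert$ of pulses that $\mathcal{A}$ reports as successfully measured. First I would check that when $\mathcal{A}$ and $\mathcal{B}$ both follow $\mathcal{QT}_1$ (resp.\ $\mathcal{QT}_2$) honestly, the only step at which $\mathcal{B}$ can abort is step 2 (resp.\ step 2 of $\mathcal{QT}_2$), where $\mathcal{B}$ aborts precisely when $n<\gamma_\text{det}N$: every other step of Stage I and Stage II consists of deterministic classical computations and transmissions over secure authenticated channels that never trigger an abort, and Alice's choice of presentation point $b$ enters only in Stage II and is irrelevant to the detection count. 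Hence the abort probability equals $\Pr[n<\gamma_\text{det}N]$, uniformly in $b\in\{0,1\}$, and it suffices to bound this quantity.

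Next I would model $n=\sum_{k=1}^{N}X_k$, where $X_k$ is the indicator that the pulse labelled $k$ is reported by $\mathcal{A}$ as successfully measured. By the definition of $P_\text{det}$, and because in the honest implementation the per-pulse detection events are independent (for the photonic setups this uses assumption F, under which dark counts and each photo-detection are independent random events), the $X_k$ are i.i.d.\ Bernoulli random variables with mean $P_\text{det}$. Therefore $n$ is Binomial$(N,P_\text{det})$ and $\mu:=\mathbb{E}[n]=NP_\text{det}$.

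Then I would apply the multiplicative Chernoff bound for the lower tail: for any $\delta\in(0,1)$, $\Pr[n\le(1-\delta)\mu]\le e^{-\mu\delta^2/2}$. The hypothesis $0<\gamma_\text{det}<P_\text{det}$ guarantees that $\delta:=1-\gamma_\text{det}/P_\text{det}\in(0,1)$, and with this choice $(1-\delta)\mu=\gamma_\text{det}N$, so that $\Pr[n<\gamma_\text{det}N]\le\Pr[n\le\gamma_\text{det}N]\le\exp\!\bigl(-\tfrac{NP_\text{det}}{2}\bigl(1-\tfrac{\gamma_\text{det}}{P_\text{det}}\bigr)^2\bigr)$, which is exactly $\epsilon_\text{rob}$ in Eq.~\eqref{rob2}. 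Combined with the first paragraph, this shows $\mathcal{QT}_1$ and $\mathcal{QT}_2$ are $\epsilon_\text{rob}$-robust.

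I do not expect a real obstacle here: the argument is a textbook Chernoff estimate. The only points that merit explicit care are (i) confirming that no step other than the detection-count check can cause an honest run to abort, which is read off from Tables~\ref{real1}--\ref{real2}, and (ii) justifying the independence and common value $P_\text{det}$ of the per-pulse detection events so that the Binomial model — and hence the exponential bound in precisely the stated form — is valid; if the per-pulse detection probabilities were only bounded below by $P_\text{det}$, a stochastic-domination argument would still yield the same bound.
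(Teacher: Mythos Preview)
Your proposal is correct and follows essentially the same approach as the paper: identify that the honest abort probability equals $\Pr[n<\gamma_\text{det}N]$, note $E(n)=NP_\text{det}$, and apply the lower-tail Chernoff bound (the paper's Proposition~\ref{proposition3}) with $\epsilon=1-\gamma_\text{det}/P_\text{det}\in(0,1)$. The paper's proof is terser---it simply asserts $P_\text{abort}=\Pr[n<\gamma_\text{det}N]$ and applies the bound---while you spell out why no other step can abort and why the $X_k$ are independent; your extra care in (i) and (ii) is appropriate but not a different route.
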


\begin{lemma}
\label{correct1}
If
\begin{eqnarray}
				\label{cor1}
				0&<&\frac{\gamma_\text{err}}{2}<E<\gamma_\text{err},\nonumber\\
		0&<&\nu_\text{cor}<\frac{P_{\text{det}}(1-2\beta_\text{PB})}{2},
	\end{eqnarray}
then $\mathcal{QT}_1$ and $\mathcal{QT}_2$ are $\epsilon_{\text{cor}}-$correct with
\begin{equation}
			\label{cor2}
			\epsilon_{\text{cor}}=e^{-\frac{P_{\text{det}}(1-2\beta_\text{PB})N}{4}\bigl(1-\frac{2\nu_\text{cor}}{P_{\text{det}}(1-2\beta_\text{PB})}\bigr)^2}+e^{-\frac{E\nu_\text{cor} N}{3}\bigl(\frac{\gamma_\text{err}}{E}-1\bigr)^2}.
			\end{equation}
\end{lemma}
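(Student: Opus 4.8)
\emph{Proof proposal.} The plan is to reduce the claim to two concentration estimates and combine them by a union bound, matching the two summands of $\epsilon_\text{cor}$ in \eqref{cor2}. First I would invoke the correctness computation already carried out above: when $\mathcal{A}$ is honest and presents at $Q_b$ one has $\tilde{\mathbf{d}}_b=\mathbf{y}$, so $\Delta_b=\{j\in[n]\mid\tilde{d}_{b,j}=s_j\}$ is exactly the set of rounds in which $\mathcal{A}$ measured in $\mathcal{B}$'s preparation basis, and on those rounds $x_j=r_j$ unless a measurement error occurred; hence $d(\mathbf{x}_b,\mathbf{r}_b)$ is the number of erroneous rounds inside $\Delta_b$. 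Consequently the event that $\mathcal{B}_b$ receives the token and its validation test in step 12 fails is contained in $E_1\cup E_2$, where $E_1=\{\lvert\Delta_b\rvert<\nu_\text{cor} N\}$ and $E_2=\{d(\mathbf{x}_b,\mathbf{r}_b)>\lvert\Delta_b\rvert\gamma_\text{err}\}$ (the event that $\mathcal{B}$ aborts is controlled separately by Lemma \ref{robust1}). It then suffices to bound $\Pr[E_1]$ and $\Pr[E_2]$ by the first and second summands of \eqref{cor2}, respectively.

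For $E_1$: for $k\in[N]$ let $X_k$ be the indicator that $g(k)\in\Delta_b$, i.e. that $k\in\Lambda$ and $\mathcal{A}$'s measurement basis on round $k$ coincides with $u_k$. Using the assumptions of Table \ref{tableassu} --- in particular the product structure of $\mathcal{B}$'s bit generation (Assumption B), and, for $\mathcal{QT}_1$, that $\mathcal{A}$'s effective measurement basis is uniform and uncorrelated with $\mathcal{B}$'s data once a round is reported (Assumptions C and D) --- together with $\tfrac12-\beta_\text{PB}\le P_{\text{PB}}^k\le\tfrac12+\beta_\text{PB}$, one checks that $\Pr[X_k=1]\ge\tfrac12 P_{\text{det}}(1-2\beta_\text{PB})$ for every $k$ (the value equals $\tfrac12 P_{\text{det}}$ for $\mathcal{QT}_1$, while for $\mathcal{QT}_2$ it is governed by the bias of $u_k$ relative to the common bit $z$, whence the $(1-2\beta_\text{PB})$ factor). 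Conditioning on $z$ in the case of $\mathcal{QT}_2$ makes the $X_k$ independent Bernoulli variables, so $\mu:=\mathbb{E}\lvert\Delta_b\rvert\ge\mu_0:=\tfrac12 P_{\text{det}}(1-2\beta_\text{PB})N$. Writing $\nu_\text{cor} N=(1-\delta)\mu_0$ with $\delta=1-\tfrac{2\nu_\text{cor}}{P_{\text{det}}(1-2\beta_\text{PB})}\in(0,1)$ --- the range of $\delta$ being guaranteed by the hypothesis $0<\nu_\text{cor}<\tfrac12 P_{\text{det}}(1-2\beta_\text{PB})$ --- the multiplicative Chernoff lower tail gives $\Pr[E_1]\le e^{-\mu_0\delta^2/2}$, which is precisely the first summand of \eqref{cor2}; since this holds for each value of $z$, it holds unconditionally.

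For $E_2$: condition on $\lvert\Delta_b\rvert=m\ge\nu_\text{cor} N$. Each round in $\Delta_b$ is measured in the preparation basis and so contributes a wrong bit independently with probability at most $E=\max_{t,u}\{E_{tu}\}$, so $d(\mathbf{x}_b,\mathbf{r}_b)$ is stochastically dominated by a $\mathrm{Binomial}(m,E)$ variable. The multiplicative Chernoff upper tail with $(1+\delta')mE=m\gamma_\text{err}$, i.e. $\delta'=\tfrac{\gamma_\text{err}}{E}-1$, which lies in $(0,1)$ exactly because $\tfrac{\gamma_\text{err}}{2}<E<\gamma_\text{err}$, yields $\Pr[d(\mathbf{x}_b,\mathbf{r}_b)>m\gamma_\text{err}]\le e^{-mE\delta'^2/3}\le e^{-\nu_\text{cor} NE\delta'^2/3}$ using $m\ge\nu_\text{cor} N$; averaging over $m$ gives the second summand of \eqref{cor2}. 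A union bound then yields $\Pr[E_1]+\Pr[E_2]\le\epsilon_\text{cor}$, and the whole argument is identical for $\mathcal{QT}_1$ and $\mathcal{QT}_2$ because $\tilde{\mathbf{d}}_b=\mathbf{y}$ in both.

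I expect the main obstacle to be the $E_1$ step, specifically extracting from the assumptions that $\Pr[X_k=1]\ge\tfrac12 P_{\text{det}}(1-2\beta_\text{PB})$ and that (after conditioning on $z$) the $X_k$ are genuinely independent: the detection event for round $k$ can a priori be correlated with $u_k$ because $\lvert\psi_k\rangle$ depends on $u_k$, so one must either take $P_{\text{det}}$ as a worst-case per-round detection probability or appeal to Assumption F and the reporting strategies to decouple detection from the basis, and for $\mathcal{QT}_1$ one needs Assumptions C and D to guarantee the reported measurement basis is uniform and independent of $\mathcal{B}$'s side information. Once this structural input is secured, the two tail bounds are routine multiplicative Chernoff estimates with the parameters $\delta,\delta'$ dictated by the hypotheses \eqref{cor1}.
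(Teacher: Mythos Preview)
Your proposal is correct and follows essentially the same route as the paper: split according to whether $|\Delta_b|<\nu_\text{cor} N$, bound $\Pr[|\Delta_b|<\nu_\text{cor} N]$ via the lower-tail Chernoff bound using $\mathbb{E}|\Delta_b|\ge \tfrac12 P_\text{det}(1-2\beta_\text{PB})N$, and on the complementary event bound the number of errors via the upper-tail Chernoff bound with parameter $\tfrac{\gamma_\text{err}}{E}-1\in(0,1)$. If anything, you are more careful than the paper about the independence hypotheses needed to apply the Chernoff bound to $|\Delta_b|$; the paper simply asserts $E(|\Delta_b|)\ge P_\text{det} N(\tfrac12-\beta_\text{PB})$ and invokes Proposition~\ref{proposition3} without discussing the decoupling of detection from $u_k$ that you flag.
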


\begin{lemma}
\label{Bob1}
$\mathcal{QT}_1$ and  $\mathcal{QT}_2$ are $\epsilon_{\text{priv}}-$private with
\begin{equation}
\label{Bo1}
\epsilon_{\text{priv}}=\beta_{\text{E}}.
\end{equation}
\end{lemma}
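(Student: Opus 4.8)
The plan is to track exactly what Bob's agents hold in their joint view at the moment just before Alice presents the token, and to show that the only part of that view correlated with $b$ is the bit $c=b\oplus z$ of step~8 of $\mathcal{QT}_1$ (step~6 of $\mathcal{QT}_2$), which is a one-time-pad encryption of $b$ under a key $z$ that honest Alice generates with bias at most $\beta_{\text{E}}$. Recall that privacy is defined with $b$ drawn uniformly from $\{0,1\}$ (here $M=1$), so the target is $\Pr[\text{Bob guesses }b]\le \frac12+\beta_{\text{E}}$.

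First I would enumerate Bob's pre-presentation view $V$. In $\mathcal{QT}_1$ it consists of the detected set $\Lambda$ and the labelling $g$ (steps~2--3), the string $\mathbf{d}=\mathbf{y}\oplus z$ (step~5), the bit $c$ (step~8), together with whatever quantum ancillas or classical records $\mathcal{B}$ retained from the state transmission in step~1; in $\mathcal{QT}_2$ it is the same list with $\mathbf{d}$ removed. The key structural observation is that Alice's outcome string $\mathbf{x}$, her presentation choice $b$, and her encoding bit $z$ are transmitted only over secure authenticated channels internal to Alice, so nothing about them can reach Bob except through $\Lambda$, $g$, $\mathbf{d}$ and $c$.

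Second I would show that $V\setminus\{c\}$ is statistically independent of $z$ from Bob's standpoint. For $\mathcal{QT}_2$ this is immediate: by assumption~C the reported set $\Lambda$ carries no information about $z$, and $g$ is a deterministic function of $\Lambda$, so $\Pr[z=\zeta\mid \Lambda,g,\text{ancillas}]=P_{\text{E}}(\zeta)$. For $\mathcal{QT}_1$ one must additionally argue that broadcasting $\mathbf{d}=\mathbf{y}\oplus z$ leaks nothing about $z$: conditioned on $\Lambda$, assumption~D forces each entry of $\mathbf{y}$ (a relabelling of $W$ by $g$) to be unbiased, assumption~C makes $W$ and $z$ jointly independent of Bob's post-$\Lambda$ information, and honest Alice picks $z$ independently of her measurement bases; hence, from Bob's viewpoint, $\mathbf{y}$ is uniform on $\{0,1\}^{n}$ and independent of $z$, so $\mathbf{d}$ is a one-time pad of $\mathbf{y}$ and a one-line Bayes calculation gives $\Pr[z=\zeta\mid \Lambda,g,\mathbf{d},\text{ancillas}]=P_{\text{E}}(\zeta)$ still.

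Third, since Alice draws $b$ uniformly from $\{0,1\}$ independently of $z$ and of all Stage~I data, $c=b\oplus z$ is a uniform bit independent of $(z,V\setminus\{c\})$; conditioning on $c$ then yields $\Pr[b=\beta\mid V]=\Pr[z=\beta\oplus c\mid V\setminus\{c\}]=P_{\text{E}}(\beta\oplus c)$. Bob's optimal guess is the more likely value, which succeeds with probability $\max_{z}P_{\text{E}}(z)\le \frac12+\beta_{\text{E}}$; with $M=1$ this equals $\frac{1}{2^{M}}+\beta_{\text{E}}$, so $\epsilon_{\text{priv}}=\beta_{\text{E}}$. I expect the only delicate point to be the $\mathcal{QT}_1$ half of the second step -- showing that sending $\mathbf{d}$ reveals nothing about $z$ -- because it is there that one must carefully combine assumptions~C and~D with the independence of Alice's basis choices from her encoding bit, and interpret ``no information'' as meaning the relevant posterior coincides with the prior. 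The enumeration of Bob's view and the final one-time-pad guessing bound are otherwise routine.
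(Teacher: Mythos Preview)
Your proposal is correct and follows essentially the same line as the paper's proof: both reduce to showing that Bob's only handle on $b$ is the one-time-padded bit $c=b\oplus z$, and then bound the guessing probability by $\tfrac12+\beta_{\text{E}}$. The paper phrases the last step via the variational distance between the conditional distributions $P_{\text{bit}}^{(0)}$ and $P_{\text{bit}}^{(1)}$ of $c$, while you compute Bob's posterior directly and take the MAP estimate; these are equivalent for a binary hypothesis with uniform prior.

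One genuine difference worth noting: you are actually more careful than the paper about the string $\mathbf{d}=\mathbf{y}\oplus z$ in $\mathcal{QT}_1$. The paper simply asserts, after invoking assumptions~C and~E, that ``the only way in which Bob can obtain information about Alice's bit $b$ \ldots\ is via the message $c=z\oplus b$,'' without explicitly explaining why $\mathbf{d}$ does not leak $z$ (and hence $b$ once combined with $c$). Your second step fills this gap by arguing that, from Bob's viewpoint after $\Lambda$, the string $\mathbf{y}$ is uniform and independent of $z$, so $\mathbf{d}$ has the same distribution for either value of $z$. That is exactly the missing justification, and it is where assumptions~C and~D do the work. (A small phrasing quibble: $\mathbf{d}$ is not literally a one-time pad of $\mathbf{y}$, since $z$ is a single repeated bit; the actual point is the weaker symmetry that $\mathbf{y}$ and $\bar{\mathbf{y}}$ are identically distributed, which suffices.)
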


\begin{theorem}
\label{Alice1}
Consider the constraints
\begin{eqnarray}
\label{Al1}
0&<&\gamma_\text{err}<\lambda(\theta,\beta_{\text{PB}}),\nonumber\\
0&<&P_\text{noqub}<\nu_\text{unf}<\min\biggl\{\!2P_\text{noqub},\gamma_\text{det}\biggl(\!1\!-\!\frac{\gamma_\text{err}}{\lambda(\theta,\beta_{\text{PB}})}\!\biggr)\!\biggr\},\nonumber\\
		0&<&\beta_\text{PS}<\frac{1}{2}\biggl[e^{\frac{\lambda(\theta,\beta_{\text{PB}})}{2}\bigl(1-\frac{\delta}{\lambda(\theta,\beta_{\text{PB}})}\bigr)^2}-1\biggr].
	\end{eqnarray}
	We define the function
\begin{eqnarray}
\label{Al3}	
	&&f(\!\gamma_\text{err},\!\beta_\text{PS},\!\beta_\text{PB},\!\theta,\!\nu_\text{unf},\!\gamma_\text{det})\nonumber\\
	&&\quad=\!(\gamma_\text{det}\!-\!\nu_\text{unf})\!\biggl[\!\frac{\lambda(\theta,\beta_{\text{PB}}) }{2}\!\biggl(\!1\!-\!\frac{\delta}{\lambda(\theta,\beta_{\text{PB}})}\!\biggr)^2\!\!-\!\ln(1\!+\!2\beta_\text{PS})\!\biggr]\nonumber\\
	&&\quad\qquad-\bigl(1\!-\!(\gamma_\text{det}\!-\!\nu_\text{unf})\bigr)\ln\bigl[1+h(\beta_\text{PS},\beta_\text{PB},\theta)\bigr],
	\end{eqnarray}	
	 where
	 \begin{eqnarray}
\label{Al4}
h(\beta_\text{PS},\beta_\text{PB},\theta)&=&2\beta_\text{PS}\sqrt{\frac{1}{2}\!+\!2\beta_\text{PB}^2\!+\!\Bigl(\frac{1}{2}\!-\!2\beta_\text{PB}^2\Bigr)\sin(2\theta)},\nonumber\\
\delta&=&\frac{\gamma_\text{det}\gamma_\text{err}}{\gamma_\text{det}-\nu_\text{unf}}.
\end{eqnarray}
There exist parameters satisfying the constraints (\ref{Al1}), for which $f(\!\gamma_\text{err},\!\beta_\text{PS},\!\beta_\text{PB},\!\theta,\!\nu_\text{unf},\!\gamma_\text{det})>0$. For these parameters, $\mathcal{QT}_1$ and $\mathcal{QT}_2$ are $\epsilon_{\text{unf}}-$unforgeable with
\begin{equation}
\label{Al5}
\epsilon_{\text{unf}}=e^{-\frac{P_\text{noqub}N}{3}\bigl(\frac{\nu_\text{unf}}{P_\text{noqub}}-1\bigr)^2} +e^{-Nf(\gamma_\text{err},\beta_\text{PS},\beta_\text{PB},\theta,\nu_\text{unf},\gamma_\text{det})}.	
\end{equation}
\end{theorem}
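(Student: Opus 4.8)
\emph{Proof proposal.} The plan is to reduce a forging attack to a ``double-guessing'' game on the qubit pulses and then combine a one-pulse monogamy bound with a concentration argument. Throughout I would treat vacuum pulses as one-photon pulses carrying the qubit $\mathcal{B}$ intended (this only enlarges Alice's options, cf.\ the discussion of assumption~1), and I would note that in $\mathcal{QT}_2$ a dishonest Alice controls only $c$ whereas in $\mathcal{QT}_1$ she additionally controls all of $\mathbf d$, so the $\mathcal{QT}_2$ attack model is a restriction of the $\mathcal{QT}_1$ one and it suffices to prove the bound for $\mathcal{QT}_1$. First I would control the number of non-qubit (multi-photon) pulses: since each pulse is non-qubit independently with probability $P_\text{noqub}$, a multiplicative Chernoff bound gives that $\lvert\Omega_\text{noqub}\rvert\ge\nu_\text{unf}N$ with probability at most $e^{-\frac{P_\text{noqub}N}{3}\bigl(\frac{\nu_\text{unf}}{P_\text{noqub}}-1\bigr)^2}$, which is the first term of $\epsilon_\text{unf}$. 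I then condition on $\lvert\Omega_\text{noqub}\rvert<\nu_\text{unf}N$ and on $\mathcal{B}$ not aborting, i.e.\ $n=\lvert\Lambda\rvert\ge\gamma_\text{det}N$, so that the number of reported qubit pulses is $n_q:=\lvert\Lambda\setminus\Omega_\text{noqub}\rvert\ge(\gamma_\text{det}-\nu_\text{unf})N$ while $n\le\frac{\gamma_\text{det}}{\gamma_\text{det}-\nu_\text{unf}}n_q$.

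Next comes the combinatorial reduction. Since $\tilde d_{1,j}=\tilde d_{0,j}\oplus1$ for every $j$, the sets $\Delta_0$ and $\Delta_1$ partition $[n]$; writing $\hat s_j:=\tilde d_{0,j}$ (a ``declared basis'' for pulse $j$ fixed by Alice's classical messages $\mathbf d$ and $c$), one has $j\in\Delta_0$ iff $\hat s_j=s_j$ and $j\in\Delta_1$ iff $\hat s_j=s_j\oplus1$. If $\mathcal{B}$ accepts tokens $\mathbf x^{(0)}$ at $Q_0$ and $\mathbf x^{(1)}$ at $Q_1$, then $d(\mathbf x^{(0)}_{\Delta_0},\mathbf r_{\Delta_0})\le\lvert\Delta_0\rvert\gamma_\text{err}$ and $d(\mathbf x^{(1)}_{\Delta_1},\mathbf r_{\Delta_1})\le\lvert\Delta_1\rvert\gamma_\text{err}$; adding these and using $\lvert\Delta_0\rvert+\lvert\Delta_1\rvert=n\le\frac{\gamma_\text{det}}{\gamma_\text{det}-\nu_\text{unf}}n_q$, the number of reported qubit pulses $j$ (with $j=g(k)$) on which the relevant bit of $(\mathbf x^{(0)},\mathbf x^{(1)})$ disagrees with $r_j=t_k$ is at most $\delta n_q$, $\delta=\frac{\gamma_\text{det}\gamma_\text{err}}{\gamma_\text{det}-\nu_\text{unf}}$. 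Hence Alice must, on at least $(1-\delta)n_q$ of the reported qubit pulses, output one bit that equals $t_k$ on the event $u_k=\hat s_j$ and another bit that equals $t_k$ on the event $u_k=\hat s_j\oplus1$: a simultaneous guess of the state value in both (nearly conjugate) bases.

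I would then establish the single-pulse monogamy bound: for $k\in\Omega_\text{qub}$, with $t_k,u_k$ drawn with biases $\beta_\text{PS},\beta_\text{PB}$ and $\lvert\langle\phi^k_{t0}\vert\phi^k_{t'1}\rangle\rvert\le O(\theta)$, any measurement-plus-postprocessing strategy wins the above double guess with probability at most $1-\lambda(\theta,\beta_{\text{PB}})$, where $\lambda(\theta,\beta_{\text{PB}})$ is the half-angle gap of assumption~3 (reducing to $\tfrac{1}{2}-\tfrac{1}{2\sqrt{2}}$ in the ideal case and recovering Lemma~\ref{lemma01}); this is a finite-dimensional POVM optimization that, following Ref.~\cite{CK12}, reduces to a qubit operator-norm estimate, with the state bias $\beta_\text{PS}$ separated out by a change of measure whose cost is controlled by $\ln(1+2\beta_\text{PS})$ and $h(\beta_\text{PS},\beta_\text{PB},\theta)$. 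By assumption~B the pulses are prepared independently, so Alice's input state is a product over $k$, and I would run a supermartingale change-of-measure argument over the $N$ rounds in which each reported qubit round contributes a factor $(1+2\beta_\text{PS})\,e^{-\frac{\lambda(\theta,\beta_{\text{PB}})}{2}\bigl(1-\delta/\lambda(\theta,\beta_{\text{PB}})\bigr)^2}<1$ (the Chernoff gain from needing $(1-\delta)n_q$ wins at per-round success $\le1-\lambda(\theta,\beta_{\text{PB}})$, net of de-biasing) while each of the remaining $\le\bigl(1-(\gamma_\text{det}-\nu_\text{unf})\bigr)N$ rounds — non-qubit pulses, unreported pulses, and the freedom to choose $\Lambda,\mathbf d,c,\mathbf x^{(0)},\mathbf x^{(1)}$ — contributes at most a factor $1+h(\beta_\text{PS},\beta_\text{PB},\theta)$. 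Markov's inequality then yields the second term $e^{-Nf(\gamma_\text{err},\beta_\text{PS},\beta_\text{PB},\theta,\nu_\text{unf},\gamma_\text{det})}$. Existence of parameters with $f>0$ follows by a limiting argument: fix $\gamma_\text{det}\in(0,1)$ and send $P_\text{noqub}$, $\nu_\text{unf}-P_\text{noqub}$, $\theta$, $\beta_\text{PB}$, $\beta_\text{PS}$, $\gamma_\text{err}$ to $0$; then $\delta\to0$, $\ln(1+2\beta_\text{PS})\to0$, $h\to0$, $\lambda(\theta,\beta_{\text{PB}})\to\tfrac{1}{2}-\tfrac{1}{2\sqrt{2}}>0$, so $f\to\tfrac{1}{2}\gamma_\text{det}\bigl(\tfrac{1}{2}-\tfrac{1}{2\sqrt{2}}\bigr)>0$, and the constraints (\ref{Al1}) hold on an open neighbourhood by continuity.

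The hard part will be the single-pulse step together with its composition: proving the tight double-guessing bound with all three imperfections ($\theta$, $\beta_\text{PB}$, $\beta_\text{PS}$) present at once, and then merging it into one clean concentration bound despite (i)~Alice's joint, adaptive measurement over all $N$ pulses and (ii)~the fact that the partition $\{\Delta_0,\Delta_1\}$, and even the set $\Lambda$ of ``used'' pulses, is chosen by Alice after and depending on her measurements. Handling (ii) cleanly is precisely what forces the $h$-dependent correction and the $\bigl(1-(\gamma_\text{det}-\nu_\text{unf})\bigr)$ coefficient in $f$.
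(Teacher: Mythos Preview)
Your high-level decomposition matches the paper: the first term of $\epsilon_{\text{unf}}$ is exactly the multiplicative Chernoff bound on $|\Omega_{\text{noqub}}|$, and conditioned on $|\Omega_{\text{noqub}}|\le\nu_{\text{unf}}N$ and $n\ge\gamma_{\text{det}}N$ the combinatorial step (using $\tilde d_{1,j}=\tilde d_{0,j}\oplus1$, so $\Delta_0\cup\Delta_1=[n]$) shows that double acceptance forces at most $\delta\,n_q$ errors on the $n_q\ge(\gamma_{\text{det}}-\nu_{\text{unf}})N$ reported qubit pulses, with $\delta=\gamma_{\text{det}}\gamma_{\text{err}}/(\gamma_{\text{det}}-\nu_{\text{unf}})$. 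The existence of parameters with $f>0$ is also argued essentially as you do (the paper imposes the slightly stronger constraint $\beta_{\text{PS}}<\tfrac12\bigl[e^{(\gamma_{\text{det}}-\nu_{\text{unf}})\frac{\lambda}{2}(1-\delta/\lambda)^2}-1\bigr]$ and checks directly that $f>0$).

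The genuine divergence, and the gap in your sketch, is the composition step. The paper uses \emph{no} sequential or supermartingale argument. It passes to an entanglement-based picture: Bob prepares $|\Phi^+\rangle^{\otimes|\Omega_{\text{qub}}|}$ on $\underline B\underline A$, sends $\underline A$ to Alice, and delays his own basis measurements. Alice's most general attack is then a single projective measurement $\{\Pi_{x\mathbf a\mathbf b}\}$ on $AE$, and the success probability is $\text{Tr}[\tilde P\rho]\le\|\tilde P\|$ with $\tilde P=\sum_{x,\mathbf a,\mathbf b}(D_{x\mathbf a\mathbf b})_{\underline B}\otimes(\Pi_{x\mathbf a\mathbf b})_{AE}$. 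The key device is the elementary identity $\bigl\|\sum_a D_a\otimes\Pi_a\bigr\|=\max_a\|D_a\|$ for positive $D_a$ and projective $\{\Pi_a\}$ (Proposition~2), which in one stroke collapses both Alice's joint-measurement freedom and her adaptive choice of $(\Lambda,g,\mathbf d,c,\mathbf a,\mathbf b)$ into a supremum of operator norms on \emph{Bob's} side with all classical data fixed. For fixed $x,\mathbf a,\mathbf b$ the operator factorizes as $D_{x\mathbf a\mathbf b}=\tilde D_{x\mathbf a\mathbf b}\otimes\tilde\phi_x$; Lemma~6 (an explicit diagonalization over $\underline\Delta_0\cup\underline\Delta_1$ followed by a Chernoff bound on the eigenvalue weights) gives $\|\tilde D_{x\mathbf a\mathbf b}\|\le 2^{-n_q}(1+2\beta_{\text{PS}})^{n_q}e^{-n_q\frac{\lambda}{2}(1-\delta/\lambda)^2}$, and Lemma~7 (largest eigenvalue of the basis-and-state-averaged single-qubit density matrix) gives $\|\tilde\phi_x\|\le 2^{n_q}(1+h)^{|\Omega_{\text{qub}}|-n_q}$; the $(1+h)$ factor thus comes specifically from \emph{unreported qubit} pulses $k\in\Omega_{\text{qub}}\setminus\underline\Lambda$, not from non-qubit pulses or the classical adaptivity as you suggest.

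Your supermartingale route would need, at each ``round'', a conditional success bound $\le 1-\lambda$ that holds even though Alice's attack is a single joint POVM on all pulses plus ancilla; you have not supplied such a conditional bound, and it does not follow from the one-shot monogamy statement alone, since entanglement with the ancilla can shuttle information between pulses. The paper's Proposition~2 is precisely the device that replaces this missing step, and Lemma~6 is what turns the per-pulse $1-\lambda$ into the Chernoff-type exponent once the problem has been reduced to an operator norm on a tensor product.
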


\begin{figure}[!htb]
\includegraphics[scale=0.8]{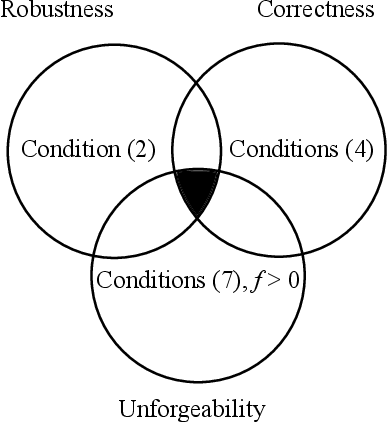}
\caption{\label{figdiagram} 
\textbf{Illustration of security conditions for $\mathcal{QT}_1$ and $\mathcal{QT}_2$.} A diagram presenting the conditions under which robustness, correctness and unforgeability of the quantum token schemes $\mathcal{QT}_1$ and $\mathcal{QT}_2$ are satisfied simultaneously is illustrated (see Lemmas \ref{robust1} and \ref{correct1}, and Theorem \ref{Alice1}). The function $f$ is defined by (\ref{Al3}). If all the conditions are satisfied (filled area) then there exist a sufficiently large integer $N$ such that $\mathcal{QT}_1$ and $\mathcal{QT}_2$ are $\epsilon_\text{rob}$-robust, $\epsilon_\text{cor}$-correct and $\epsilon_\text{unf}$-unforgeable, for desired values of $\epsilon_\text{rob},\epsilon_\text{cor},\epsilon_\text{unf}>0$. }
\end{figure}

We note in step 0 of $\mathcal{QT}_1$ and $\mathcal{QT}_2$ that Alice and Bob agree on parameters $N$, $\beta_{\text{PB}}$, $\gamma_\text{det}$ and $\gamma_\text{err}$. As follows from Lemmas \ref{robust1} -- \ref{Bob1}, in order for Alice to obtain a required degree of correctness, robustness and privacy, she must guarantee her experimental parameters $P_\text{det}$, $E$ and $\beta_{\text{E}}$ to be good enough. This is independent of any experimental parameters of Bob, except for the previously agreed parameter $\beta_{\text{PB}}$, which plays a role in correctness but not in robustness or privacy. Additionally, Alice must choose a suitable mathematical variable $\nu_{\text{cor}}$ to compute a guaranteed degree of correctness, as given by the bound of Lemma \ref{correct1}.

On the other hand, as follows from Theorem \ref{Alice1}, in order for Bob to obtain a required degree of unforgeability, he must guarantee his experimental parameters $P_{\text{noqub}}$, $\theta$, $\beta_{\text{PB}}$ and $\beta_{\text{PS}}$ to be good enough. This is independent of any experimental parameters of Alice. Additionally, Bob must choose a suitable mathematical variable $\nu_{\text{unf}}$ to compute a guaranteed degree of unforgeability, as given by the bound of Theorem \ref{Alice1}.

Furthermore, as follows from Lemma \ref{Bob1}, in order for Alice to obtain a required degree of privacy, she must guarantee her experimental parameter $\beta_{\text{E}}$ to be small enough.

The parameters $N$, $\beta_{\text{PB}}$, $\gamma_\text{det}$ and $\gamma_\text{err}$ agreed by Alice and Bob must be good enough to achieve their required degrees of robustness, correctness and unforgeability. But they must also be achievable given their experimental setting.

\subsection{Extension of $\mathcal{QT}_1$ and $\mathcal{QT}_2$ to $2^M$ presentation points}

Extensions of the quantum token schemes $\mathcal{QT}_1$ and $\mathcal{QT}_2$ to $2^M$ presentation points, for any integer $M\geq 1$, and the proof of the following theorem are given in Appendix \ref{manyapplast}. 

\begin{theorem}
\label{manypoints}
For any integer $M\geq 1$, there exist quantum token schemes $\mathcal{QT}_1^M$ and $\mathcal{QT}_2^M$ extending $\mathcal{QT}_1$ and $\mathcal{QT}_2$ to $2^M$ presentation points, in which Bob sends Alice $NM$ quantum states, satisfying instant validation and the following properties. Consider parameters $\beta_\text{PB}$, $\beta_\text{PS}$,  $\beta_{\text{E}}$, $P_{\text{det}}$, $P_\text{noqub}$, $E$ and $\theta$ satisfying the constraints (\ref{rob1}), (\ref{cor1}), (\ref{Al1}) of Lemmas \ref{robust1} and \ref{correct1} and Theorem \ref{Alice1}, for which the function $f(\!\gamma_\text{err},\!\beta_\text{PS},\!\beta_\text{PB},\!\theta,\!\nu_\text{unf},\!\gamma_\text{det})$ defined by (\ref{Al3}) is positive. For these parameters, $\mathcal{QT}_1^M$ and $\mathcal{QT}_2^M$ are $\epsilon_{\text{rob}}^{M}-$robust, $\epsilon_{\text{cor}}^M-$correct, $\epsilon_{\text{priv}}^M-$private and $\epsilon_{\text{unf}}^M-$unforgeable with 
\begin{eqnarray}
\epsilon_{\text{rob}}^M&=&M\epsilon_{\text{rob}},\nonumber\\
\epsilon_{\text{cor}}^M&=&M\epsilon_{\text{cor}},\nonumber\\
\epsilon_{\text{priv}}^M&=&\frac{1}{2^M}\bigl[(1+2\epsilon_\text{priv})^M-1\bigr],\nonumber\\
\epsilon_{\text{unf}}^M&=&C\epsilon_{\text{unf}},
\end{eqnarray}
where $C$ is the number of pairs of spacelike separated presentation points, and where $\epsilon_{\text{rob}}$, $\epsilon_{\text{cor}}$, $\epsilon_{\text{priv}}$ and $\epsilon_{\text{unf}}$ are given by (\ref{rob2}), (\ref{cor2}), (\ref{Bo1}) and (\ref{Al5}).
\end{theorem}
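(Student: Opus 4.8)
The plan is to realise $\mathcal{QT}_1^M$ and $\mathcal{QT}_2^M$ as $M$ parallel and independent copies of $\mathcal{QT}_1$ (respectively $\mathcal{QT}_2$): the $m$-th copy uses a fresh block of $N$ transmitted quantum states and a fresh random bit $z_m$ of Alice, and it is responsible for the $m$-th bit $b_m$ of Alice's presentation-point label $b=(b_1,\ldots,b_M)\in\{0,1\}^M$. The token presented at $Q_b$ is the concatenation $\mathbf{x}=(\mathbf{x}^{(1)},\ldots,\mathbf{x}^{(M)})$, and Bob's agent at $Q_b$ accepts it if and only if, for every $m\in[M]$, the block-$m$ validity test of $\mathcal{QT}_1$ (respectively $\mathcal{QT}_2$) for branch $b_m$ is passed, i.e.\ Bob runs $M$ tests of the form of step 12 of $\mathcal{QT}_1$, one per block. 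This uses $NM$ transmitted states, and instant validation is inherited block by block since each test is local to the receiving agent. For robustness, when Alice and Bob are honest Bob aborts only if fewer than $\gamma_{\text{det}}N$ states are detected in some block, and for correctness he fails to accept the token at $Q_b$ only if the error count exceeds its threshold in some block; union bounds over the $M$ blocks, with Lemmas \ref{robust1} and \ref{correct1} applied per block, give $\epsilon_{\text{rob}}^M\le M\epsilon_{\text{rob}}$ and $\epsilon_{\text{cor}}^M\le M\epsilon_{\text{cor}}$.

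For privacy, the only part of Bob's view that depends on $b$ consists of the bits $c_m=b_m\oplus z_m$ for $m\in[M]$; in $\mathcal{QT}_1$ the strings $\mathbf{d}^{(m)}$ also involve $z_m$, but by assumptions C and D they are, from Bob's standpoint, uniform and independent of $z_m$, hence carry no information about it. Since Alice's bits $z_1,\ldots,z_M$ are mutually independent and each has bias at most $\beta_{\text{E}}=\epsilon_{\text{priv}}$, the posterior distribution of $b$ given Bob's view factorises over the $M$ blocks, so by Lemma \ref{Bob1} Bob's probability of guessing $b$ is at most $\bigl(\tfrac12+\epsilon_{\text{priv}}\bigr)^M=\tfrac{1}{2^M}(1+2\epsilon_{\text{priv}})^M$, which is $\tfrac{1}{2^M}+\epsilon_{\text{priv}}^M$ with $\epsilon_{\text{priv}}^M=\tfrac{1}{2^M}\bigl[(1+2\epsilon_{\text{priv}})^M-1\bigr]$.

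For unforgeability I would apply a union bound, over all pairs of distinct presentation points, to the event that Bob's agents accept Alice's tokens at both points of the pair. If the two points are causally ordered, Bob's agent at the earlier point can forward an ``already spent'' bit to the agent at the later point, which then rejects; this message travels only into the causal future and so is consistent with instant validation, hence such pairs contribute $0$. If the pair $(Q_i,Q_{i'})$ is spacelike separated, choose $m^\ast$ with $i_{m^\ast}\ne i'_{m^\ast}$. Acceptance at $Q_i$ forces the block-$m^\ast$ component of the token presented there to pass the branch-$i_{m^\ast}$ test, and acceptance at $Q_{i'}$ forces the block-$m^\ast$ component presented there to pass the branch-$\overline{i_{m^\ast}}$ test. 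But block $m^\ast$ — its $N$ transmitted states, Alice's message $\mathbf{d}^{(m^\ast)}$ (in $\mathcal{QT}_1$), and the bit $c_{m^\ast}$ — is a self-contained instance of $\mathcal{QT}_1$ (respectively $\mathcal{QT}_2$) on $N$ states, and Theorem \ref{Alice1} bounds by $\epsilon_{\text{unf}}$ the probability that \emph{any} strategy of Alice produces tokens accepted for both branches in that instance. Since this bound holds for arbitrary Alice strategies, it still applies after conditioning on everything occurring in the other $M-1$ blocks, so cross-block correlations do not help Alice. Summing over the $C$ spacelike-separated pairs yields $\epsilon_{\text{unf}}^M\le C\epsilon_{\text{unf}}$, and the existence of parameters obeying the constraints (\ref{Al1}) with $f(\gamma_{\text{err}},\beta_{\text{PS}},\beta_{\text{PB}},\theta,\nu_{\text{unf}},\gamma_{\text{det}})>0$ is exactly the content of Theorem \ref{Alice1}.

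The main obstacle is the unforgeability reduction. One must verify carefully that restricting the full $2^M$-point validation to a single block $m^\ast$ reproduces \emph{exactly} the acceptance predicate of the two-point scheme there; that block $m^\ast$ is genuinely decoupled, causally and statistically, from the other $M-1$ blocks, so that conditioning on them is legitimate and Theorem \ref{Alice1} may be invoked unchanged; and that the relativistic forwarding used to eliminate the causally ordered pairs is consistent with the instant-validation requirement. The remaining work — the union bounds and the product estimate for privacy — is routine.
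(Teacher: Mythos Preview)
Your proposal is correct and matches the paper's proof essentially step for step: the paper defines $\mathcal{QT}_a^M$ as $M$ independent parallel rounds of $\mathcal{QT}_a$, adds precisely the causal ``already spent'' forwarding you describe (step 12 of $\mathcal{QT}_1^M$) to kill timelike pairs, and then proves robustness, correctness, privacy, and unforgeability as separate Lemmas 8--10 and Theorem 3 using the same union bounds, product bound, and single-block reduction via an index $l'$ with $v^{l'}\neq w^{l'}$ that you outline. Your identification of the unforgeability reduction as the only non-routine step, and of the need to check that block $m^\ast$ is a self-contained instance to which Theorem \ref{Alice1} applies regardless of what happens in the other blocks, is exactly the point the paper's Theorem 3 proof addresses.
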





\subsection{Quantum experimental tests}
\label{section4}

We performed experimental tests for the quantum stage of the $\mathcal{QT}_1$ scheme for the case of two presentation points ($M=1$), using the photonic setup of Fig. \ref{setup} and reporting strategy 1 (see Methods for details). Using a photon source with Poissonian distribution of average photon number $\mu=0.09$, and at repetition rate of 10 MHz, we generated a token of $N=4\times 10^7$ photon pulses, with detection efficiency of $\eta=0.21$, detection probability of $P_\text{det}=0.019$ and error rate of $E = 0.058$. We obtained deviations from the random distributions for the basis and state generation of $\beta_\text{PB}=2.4\times 10^{-3}$ and $\beta_\text{PS}=3.6\times 10^{-3}$, respectively. In order to guarantee unforgeability using Theorem \ref{Alice1} we need to improve some experimental parameters (see Fig. \ref{fig2}).

\begin{figure}
\includegraphics[scale=0.80]{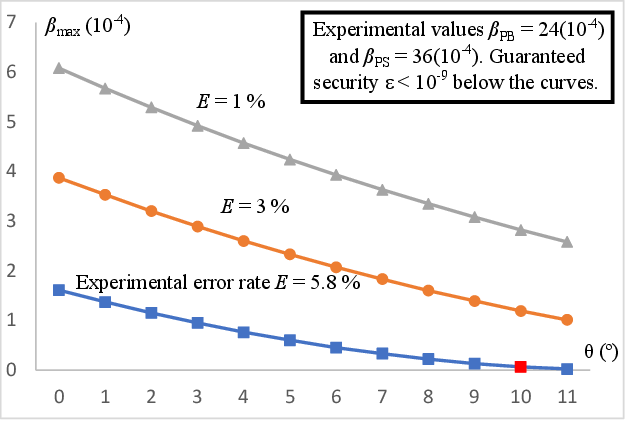}
\caption{\label{fig2}\textbf{Numerical example.} The plots denote the maximum value $\beta_{\text{max}}$ for $\beta_\text{PB}$ and $\beta_\text{PS}$ that our bounds can allow to guarantee correctness, robustness and unforgeability  simultaneously in a numerical example with the allowed experimental imperfections 
of Table \ref{tableimp} and under the assumptions 
of Table \ref{tableassu} for our quantum token schemes $\mathcal{QT}_1$ and $\mathcal{QT}_2$. The region below the plotted curves denote the secure region in which we have set $\epsilon_{\text{rob}}=\epsilon_{\text{cor}}=\epsilon_{\text{unf}}=10^{-9}$ in Lemmas \ref{robust1} and \ref{correct1} and in Theorem \ref{Alice1}. The plotted values keep all parameters fixed to the experimental values reported above, except for the deviations from the random distributions for basis and state generation, $\beta_\text{PB}$ and $\beta_\text{PS}$, the uncertainty $\theta$ on the Bloch sphere in the state generation, and the error rate  $E$. The blue curve denotes the values obtained for the experimentally obtained value $E=0.058$. The red square denotes the assumed upper bound for our experimental values of $\theta\leq 10^{\circ}$, and corresponds to a value of $\beta_{\text{max}}=6\times10^{-6}$, which is about 400 and 600 times smaller than the obtained experimental values of $\beta_\text{PB}=2.4\times10^{-3}$ and $\beta_\text{PS}=3.6\times 10^{-3}$, respectively. The orange and gray curves plot the values of $\beta_{\text{max}}$ assuming $E=0.03$ and $E=0.01$, respectively. In an ideal case in which $\theta=0^{\circ}$ and $E=0.01$, the value for $\beta_{\text{max}}$ would be approximately $6\times 10^{-4}$, which is about four and six times smaller than our obtained experimental values for $\beta_\text{PB}$ and $\beta_\text{PS}$, respectively. In a more realistic case, with $\theta=5^{\circ}$  and $E=0.03$, our numerical example gives approximately $\beta_{\text{max}}=2.3\times 10^{-4}$; meaning that with these experimental values, by reducing our obtained experimental values for $\beta_\text{PB}$ and $\beta_\text{PS}$ by respective factors of approximately 
10 and 16, we could guarantee correctness, robustness and unforgeability  simultaneously in our schemes.}
\end{figure}

Guaranteeing privacy in our schemes $\mathcal{QT}_1$ and $\mathcal{QT}_2$ can be satisfied with good enough random number generators, as follows from Lemma \ref{Bob1}. Due to the piling-up lemma, by using a large number of close-to-random bits, we can guarantee $\epsilon_{\text{priv}}$ to be arbitrarily small in practice.

\section{Discussion}
\label{section5}

We have presented two quantum token schemes that do not require either quantum state storage or long distance quantum communication, and are practical with current technology.
 Our schemes allow for losses, errors in the state preparations and measurements, and deviations from random distributions; and, in photonic setups, photon sources that do not emit exactly single photons, and threshold single photon detectors with non-unit detection efficiencies and with non-zero dark count probabilities 
(see Table \ref{tableimp}).

Our analyses follow much of the literature on practical mistrustful quantum cryptography (e.g. \cite{NJMKW12,LKBHTKGWZ13etal,LCCLWCLLSLZZCPZCP14etal,PJLCLTKD14etal,ENGLWW14}) in making the assumptions
of Table \ref{tableassu}. Under these assumptions, we have shown that there exist attainable experimental parameters for which our schemes can satisfy instant validation, correctness, robustness, unforgeability and user privacy. Importantly, Theorem \ref{manypoints} shows that this holds, in principle, for $2^M$ presentation points with arbitrary $M$. As in the schemes of Ref. \cite{KPG20}, our schemes allow the user to choose her presentation point $Q_b$ after her quantum measurements are completed, as long as she chooses $Q_b$ within the intersection of the causal past of all the presentation points. This means that the quantum communication stage of our schemes can take an arbitrarily long time and  can be implemented arbitrarily in the past of the presentation points, which is very convenient for practical implementations.

\damiannn{We note that the security of our quantum token schemes does not rely on any spacetime constraints. In principle, all presentation points could be timelike separated, for example. However, as discussed in the introduction, in order for our quantum token schemes to have an advantage over purely classical schemes, some spacetime presentation points need to be spacelike separated.

In practice, this means that some classical processing and classical communication steps in our schemes must be implemented sufficiently fast. This is in general feasible with current technology (for example, using field programmable gate arrays), if the presentation points are sufficiently far apart,  as demonstrated by previous implementations of relativistic cryptographic protocols \cite{LKBHTKGWZ13etal,LCCLWCLLSLZZCPZCP14etal,LKBHTWZ15etal,VMHBBZ16,ABCDHSZ21etal}. Furthermore, Alice's and Bob's laboratories must be synchronized securely to a common reference frame with sufficiently high time precision. This can be implemented using GPS devices and atomic clocks \cite{LKBHTKGWZ13etal,LCCLWCLLSLZZCPZCP14etal,LKBHTWZ15etal,VMHBBZ16,ABCDHSZ21etal}, for example. A detailed analysis of these experimental challenges is left for future work.

Using quantum key distribution for secure communications in our quantum token schemes can be useful, but it is not crucial. As discussed, Alice's and Bob's agents must communicate via secure and authenticated classical channels, which can be implemented with previously distributed secret keys. In an ideal situation where Alice's and Bob's agents have access to enough quantum channels, for example in a quantum network \cite{ECPPSY05,S17,Liaoetal18etal,DWTSTLPYDCTEGWP19etal} or in the envisaged quantum internet \cite{K08,WER18}, these keys can be expanded securely with quantum key distribution \cite{BB84,FLDCTPSYS17etal,Liaoetal17etal}. However, it is also possible to distribute the secret keys via secure physical transportation, as implemented in previous demonstrations of relativistic quantum cryptography \cite{LKBHTKGWZ13etal,LCCLWCLLSLZZCPZCP14etal,LKBHTWZ15etal,VMHBBZ16,ABCDHSZ21etal}.}

We note that in our proof of unforgeability, our only
 potential restriction on the technology and capabilities of dishonest
Alice is indirectly made through assumption G in photonic setups (see Table \ref{tableassu}), in the case where Bob's photon source does not perfectly conceal phase information.
In fact, we believe that assumptions A, B and G can be significantly
weakened. Investigating unforgeability for realistic weaker
forms of these assumptions is left as an open problem.

We implemented experimental tests of the quantum part of our scheme ($\mathcal{QT}_1$) using a free space optical setup \cite{DGHMR06,DLthesis} for quantum key distribution (QKD) that was slightly adapted for our scheme, and which can operate at daylight conditions. Importantly, Bob's transmission device is small, hand-held and low cost. These type of QKD setups are designed for future daily-life applications, for example with mobile devices (see e.g. \cite{MVRCCOW16etal,MFLVRSRW17etal,CCFCBGCNWOB17etal}).

Experiments with our relatively low precision devices do not guarantee unforgeability, but show it can be guaranteed with refinements. Crucial experimental parameters that we need to improve to achieve this are the deviations $\beta_\text{PB}$ and $\beta_\text{PS}$ from random basis and state generation, respectively. In our tests we obtained $\beta_\text{PB} =2.4\times 10^{-3}$ and $\beta_\text{PS} =3.6\times 10^{-3}$. An implementation in which the uncertainty in basis choices was bounded by $\theta=5^{\circ}$ and the error rate by $E=0.03$ would guarantee unforgeability if $\beta_\text{PB} \approx \beta_\text{PS} \approx 2.3\times 10^{-4}$ (about a factor of 10 and 16 lower than our values). This highlights that it is crucial to consider the parameters $\beta_\text{PS}$ and $\beta_\text{PB}$ in practical security proofs. 
For example, if we simply assumed $\beta_\text{PS}=\beta_\text{PB}=0$ as our experimental values then our results would imply that we had attained unforgeability, even for $\theta=10^{\circ}$ (see Fig. \ref{fig2}).
Taking $\beta_\text{PS}=\beta_\text{PB}= \theta = 0$, as implicitly assumed in some previous analyses of practical mistrustful quantum cryptography (e.g. \cite{LKBHTKGWZ13etal,LCCLWCLLSLZZCPZCP14etal,ENGLWW14}), is unsafe.

User privacy can also be guaranteed by using good enough random number generators. However, further security issues arise from the assumptions that
Bob cannot use degrees of freedom not previously agreed for the transmission of the quantum states to affect, or obtain information about, the statistics of Alice's quantum measurement devices; and, in photonic setups, that Alice's single photon detectors have equal efficiencies and equal dark count probabilities
(assumptions E and F in Table \ref{tableassu}). These issues are not specific to our implementations or to quantum token schemes: they arise quite generally in practical mistrustful quantum cryptographic schemes
in which one party measures states sent by the other. The attacks they allow, and defences against these (such as requiring single photon sources and 
using attenuators to equalize detector efficiencies) are analysed in detail elsewhere \cite{BCDKPG21}. As noted in Ref. \cite{BCDKPG21}, further options, such as iterating the scheme and using the XOR of the bits generated, also merit investigation. Importantly, our analyses here take into account multi-photon attacks \cite{BCDKPG21} in photonic setups, and the reporting strategies we have considered offer perfect protection against arbitrary multi-photon attacks, given our assumptions (see \damian{Fig. \ref{setup}}, and Lemma \ref{lemmax} in Methods).

In conclusion, our theoretical and experimental results give a proof of principle that quantum token schemes are implementable with current technology, and
that, conditioned on standard technological assumptions, security can be maintained in the presence of the various experimental imperfections 
we have considered (see Table \ref{tableimp}). As with other practical implementations of mistrustful quantum cryptography (and indeed quantum key distribution), completely unconditional security would require
defences against every possible collection of physical systems Bob might transmit to Alice, including programmed nano-robots that could enter and
reconfigure her laboratory \cite{LC99}. Attaining this is beyond current technology, but such far-fetched possibilities also illustrate that security based on suitable technological
assumptions (which may depend on context) may suffice for practical purposes.     More work on attacks and defences in practical mistrustful quantum cryptography is
undoubtedly needed to reach consensus on trustworthy technologies.   That said, as our schemes are built on simple mistrustful cryptographic primitives, 
we expect they can be refined to incorporate any agreed practical defences \cite{BCDKPG21}.

\section{Methods}

\subsection{Protection against multi-photon attacks in photonic implementations}

The following lemma is a straightforward extension of Lemmas 1 and 12 of Ref. \cite{BCDKPG21} to the case of $N>1$ transmitted photon pulses. Note that Alice (Bob) in our notation refers to Bob (Alice) in the notation of Ref. \cite{BCDKPG21}. The proof is given in Appendix \ref{appnewlemma}.

\begin{lemma}
\label{lemmax}
Suppose that Bob sends Alice $N$ photon pulses, labelled by $k\in[N]$. Let the $k$th pulse have $L_k$ photons. Let $\rho$ be an arbitrary quantum state prepared by Bob in the polarization degrees of freedom of the photons sent to Alice, which can be arbitrarily entangled among all photons in all pulses and can also be arbitrarily entangled with an ancilla held by Bob. Let $\mathcal{D}_0$ and $\mathcal{D}_1$ be two arbitrary qubit orthogonal bases. Suppose that either Alice uses the setup of Fig. \ref{setup} with reporting strategy 1 to implement the quantum token scheme $\mathcal{QT}_1$ (see Table \ref{real1}), or Alice uses the setup of \damian{Fig. \ref{setup}} with reporting strategy 2 to implement the quantum token scheme $\mathcal{QT}_2$ (see Table \ref{real2}). Suppose also that assumptions E and F (see Table \ref{tableassu}) hold.
For $k\in[N]$, let $m_k=1$ if Alice assigns a successful measurement to the $k$th pulse and $m_k=0$ otherwise; let $w_k=0$ ($w_k=1$) if Alice assigns a measurement basis to the $k$th pulse in the basis $\mathcal{D}_{0}$ ($\mathcal{D}_{1}$). If Alice uses the setup of Fig. \ref{setup} and reporting strategy 1 to implement the scheme $\mathcal{QT}_1$, without loss of generality, suppose also that Alice sets $w_k=0$ with unit probability, if $m_k=0$, for $k\in[N]$. Let $m=(m_1,\ldots,m_N)$, $w=(w_1,\ldots,w_N)$ and $L=(L_1,\ldots,L_N)$.

If Alice uses the setup of Fig. \ref{setup} with reporting strategy 1 to implement the scheme $\mathcal{QT}_1$, then the probability that Alice reports the string  $m$ to Bob and assigns the string of measurement bases $w$, given $\rho$  and $L$, is
\begin{equation}
\label{hhh}
P^{(1)}_{\text{rep}}(m,w\lvert \rho, L)=\prod_{k=1}^N G^{(1)}_{m_k,w_k}(d_0,d_1,\eta,L_k),
\end{equation}
where
\begin{eqnarray}
\label{hhh0}
G^{(1)}_{1,b}(d_0,d_1,\eta,a)&=&(1-d_0)(1-d_1)\Bigl(1-\frac{\eta}{2}\Bigr)^{a}\nonumber\\
&&\qquad-(1-d_0)^2(1-d_1)^2(1-\eta)^{a},\nonumber\\
G^{(1)}_{0,0}(d_0,d_1,\eta,a)&=&1-2G^{(1)}_{1,0}(d_0,d_1,\eta,a),\nonumber\\
G^{(1)}_{0,1}(d_0,d_1,\eta,a)&=&0,
\end{eqnarray}
for $b\in\{0,1\}$, $m,w\in\{0,1\}^N$ and $a,L_1,\ldots,L_N\in\{0,1,2,\ldots\}$. Furthermore, the probability $P_\text{MB}(w_k)$ that Alice assigns a measurement in the basis $\mathcal{D}_{w_k}$, conditioned on the value $m_k=1$, for the $k$th pulse, satisfies  
\begin{equation}
\label{h000.1}
P_\text{MB}(w_k)=\frac{1}{2},
\end{equation}
for $w_k\in\{0,1\}$ and $k\in[N]$.

If Alice uses the setup of \damian{Fig. \ref{setup}} with reporting strategy 2 to implement the scheme $\mathcal{QT}_2$, then the probability that Alice reports the string  $m$ to Bob, given $\rho$, $w$ and $L$, is
\begin{equation}
\label{ggg}
P^{(2)}_{\text{rep}}(m\lvert w, \rho, L)=\prod_{k=1}^N G^{(2)}_{m_k}(d_0,d_1,\eta,L_k),
\end{equation}
where
\begin{eqnarray}
\label{ggg0}
G^{(2)}_0(d_0,d_1,\eta,a)&=&(1-d_0)(1-d_1)(1-\eta)^{a},\nonumber\\
G^{(2)}_1(d_0,d_1,\eta,a)&=&1-(1-d_0)(1-d_1)(1-\eta)^{a},
\end{eqnarray}
for
$m,w\in\{0,1\}^N$ and $a,L_1,\ldots,L_N\in\{0,1,2,\ldots\}$.

In any of the two cases, the message $m$ gives Bob no information about the bit entries $w_k$ for which $m_k=1$. Equivalently, the set $\Lambda\subset[N]$ of labels transmitted to Bob in step 2 of $\mathcal{QT}_1$ and $\mathcal{QT}_2$ gives Bob no information about the string $W$ and the bit $z$.

\end{lemma}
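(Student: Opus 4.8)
The plan is to reduce the $N$-pulse statement to the single-pulse results of Ref.~\cite{BCDKPG21} (its Lemmas 1 and 12), and then to tensorize. First I would set up the single-pulse analysis: fix a pulse $k$ with $L_k$ photons and consider the reduced state Alice's detectors actually see. By assumption G (mixtures of Fock states, no coherences between photon-number sectors) the relevant input is, for each pulse, a mixture over definite photon numbers, so conditioning on $L_k$ is legitimate; by assumption E the extra (non-agreed) degrees of freedom Bob might use cannot bias Alice's detector statistics, so the click pattern depends only on the polarization state and on $(d_0,d_1,d_+,d_-,\eta)$; and by assumption F the efficiencies are equal and the dark-count factorization condition $(1-d_0)(1-d_1)=(1-d_+)(1-d_-)$ holds. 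With these in hand, the single-pulse computation of the click probabilities under reporting strategy~1 (resp.~2) is exactly the content of Lemma~1 (resp.~Lemma~12) of Ref.~\cite{BCDKPG21}: a pulse of $a$ photons, each transmitted with probability $\eta$ and split $50{:}50$ at the beam splitter in the $\mathcal{QT}_1$ setup, gives the stated $G^{(1)}_{m_k,w_k}$, and similarly for the two-detector $\mathcal{QT}_2$ setup one gets $G^{(2)}_{m_k}$. I would quote these single-pulse formulas and the single-pulse ``no information about $w_k$'' property directly.

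Next I would handle the tensorization. Since Bob may send a state $\rho$ arbitrarily entangled across all pulses and with an ancilla, the product forms (\ref{hhh}) and (\ref{ggg}) are not immediate and require an argument. The key observation is that Alice's reporting map acts pulse-by-pulse: for each $k$ she performs the fixed POVM determined by her setup and reporting strategy on the polarization modes of pulse $k$ only, producing $(m_k,w_k)$ (or $m_k$ in the $\mathcal{QT}_2$ case where $w_k=z$ is chosen independently). Thus the overall probability of an outcome string is obtained by applying the tensor-product POVM $\bigotimes_k M^{(k)}_{m_k,w_k}$ to $\rho$. Crucially, each single-pulse POVM element, when we further condition on the photon number $L_k$ of that pulse (which is well-defined sector-by-sector by assumption~G), has the property that its ``value'' --- i.e. the operator $\mathrm{Tr}_{\text{rest}}$ of it against any state --- depends on the polarization sector only through the total click-probability bookkeeping, and in fact the map $\rho_k \mapsto P(m_k,w_k\,|\,L_k)$ is a constant (independent of the polarization state within the $L_k$-photon sector), because strategies~1 and~2 are engineered precisely so that the reported/assigned data is basis-independent. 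Constancy of each conditional single-pulse probability is what lets the joint probability factor: $P(m,w\,|\,\rho,L)=\prod_k P(m_k,w_k\,|\,L_k)=\prod_k G_{m_k,w_k}(\dots,L_k)$, regardless of entanglement in $\rho$. I would make this precise by noting that if an operator $E$ on a single pulse's Hilbert space satisfies $\langle\chi|E|\chi\rangle = p$ for all normalized $|\chi\rangle$ in the $L_k$-sector, then $E = pI$ on that sector, and tensor products of such scalars multiply through any global $\rho$.

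Then I would record the conditional-uniformity statement (\ref{h000.1}): in the $\mathcal{QT}_1$ setup, $G^{(1)}_{1,0}=G^{(1)}_{1,1}$ by the formulas in (\ref{hhh0}), so $P_\text{MB}(w_k=0\,|\,m_k=1)=P_\text{MB}(w_k=1\,|\,m_k=1)=\tfrac12$; this uses assumption~F's efficiency-equality and the dark-count condition, which is exactly why those hypotheses appear. Finally, the ``no information'' conclusion follows: for $\mathcal{QT}_1$, $P^{(1)}_\text{rep}(m,w\,|\,\rho,L)$ factors over $k$ and each factor with $m_k=1$ is symmetric under $w_k\mapsto w_k\oplus1$, so the marginal of $w$ restricted to indices with $m_k=1$ is uniform conditioned on $m$ (and on anything Bob knows), hence independent of it; for $\mathcal{QT}_2$, $w$ does not even enter $P^{(2)}_\text{rep}(m\,|\,w,\rho,L)$, so $m$ carries nothing about $w=(z,\dots,z)$. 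Translating back to the protocol language, $\Lambda=\{k: m_k=1\}$ and $W$ is the restriction of $w$ to $\Lambda$ (and $z$ determines $W$ in $\mathcal{QT}_2$), giving the stated conclusion about $\Lambda$, $W$ and $z$.

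The main obstacle I anticipate is the tensorization step against globally entangled $\rho$: one must argue carefully that Alice's local, basis-symmetric reporting POVMs force the joint statistics to factor and to be $w$-independent even though Bob's state is not a product --- i.e., that the ``constant conditional probability'' property of each single-pulse element survives being embedded in an arbitrary multipartite state. Everything else is either a direct quotation of the single-pulse lemmas of Ref.~\cite{BCDKPG21} or elementary bookkeeping with the binomial factors $(1-\eta)^a$ and $(1-\tfrac{\eta}{2})^a$ coming from photon transmission and the $50{:}50$ beam splitter.
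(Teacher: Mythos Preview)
Your approach is correct and follows the same overall strategy as the paper: invoke the single-pulse results of Ref.~\cite{BCDKPG21} and then tensorize to handle the globally entangled $\rho$. The tensorization step is where your argument genuinely differs. The paper proceeds sequentially via the chain rule, writing $P^{(1)}_{\text{rep}}(m,w\lvert\rho,L)=\prod_k P^{(1,k)}_{\text{rep}}(m_k,w_k\lvert\rho,L,\tau_k)$ with $\tau_k$ the outcome history for pulses $1,\ldots,k-1$, and then applying the single-pulse lemma to each factor in turn --- treating the as-yet-unmeasured pulses $k+1,\ldots,N$ as part of Bob's ancilla --- to show that each conditional probability equals $G^{(1)}_{m_k,w_k}(d_0,d_1,\eta,L_k)$ irrespective of the post-measurement state $\rho_k$. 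Your argument is more algebraic: you note that state-independence of the single-pulse probability forces the corresponding POVM element to equal $p_k I$ on the $L_k$-photon polarization sector, whence the tensor product is $(\prod_k p_k)I$ and the trace against any global $\rho$ factors immediately. Both routes are valid; yours is slightly more direct once the scalar-POVM observation is made, while the paper's stays closer to the operational picture of Alice measuring pulses one at a time. Two minor corrections: you invoke assumption~G to justify conditioning on photon number, but the lemma already takes $L=(L_1,\ldots,L_N)$ as a hypothesis and requires only assumptions~E and~F; and you have the lemma numbers from Ref.~\cite{BCDKPG21} swapped --- the paper cites Lemma~12 for reporting strategy~1 (the four-detector $\mathcal{QT}_1$ setup) and Lemma~1 for reporting strategy~2 (the two-detector $\mathcal{QT}_2$ setup).
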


\subsection{Clarification about unforgeability in photonic implementations}

A subtle technical issue when implementing our quantum token schemes with photonic setups
is that in our schemes we have assumed the quantum systems $A_k$ that Bob transmits to Alice to have finite Hilbert space dimension, for $k\in[N]$. However, some light sources, like weak coherent sources, or other photon sources with Poissonian statistics,
can emit pulses with a number of photons $J$, where $J$ can tend to infinity, although with a probability tending to zero. This issue is easily solved by fixing a maximum number of photons $J_{\text{max}}$ and assuming that unforgeability is not guaranteed whenever Bob's photon source emits a pulse with more than $J_{\text{max}}$ photons. By fixing $J_{\text{max}}$ to be arbitrarily large, but finite, the probability that among the $N$ emitted pulses there is at least one pulse with more than $J_{\text{max}}$ photons can be made arbitrarily small. Thus, with probability arbitrarily close to unity, honest Bob is guaranteed that each of his $N$ emitted pulses does not have more than $J_{\text{max}}$ photons, i.e., the internal degrees of freedom -- like the polarization degrees of freedom -- of each pulse, represented by the quantum system $A_k$, have a finite Hilbert space dimension.

\subsection{Experimental setup}

Our experimental setup is based on a free space optical quantum key distribution (QKD) system, which can operate at daylight conditions. This setup was developed by one of us (D. L.) during his PhD \cite{DLthesis}, based upon the work of Ref. \cite{DGHMR06}.
The main features of our experimental setup are illustrated in Figs. \ref{setup} and \ref{fig1}.

\begin{figure*}
\includegraphics[scale=0.204]{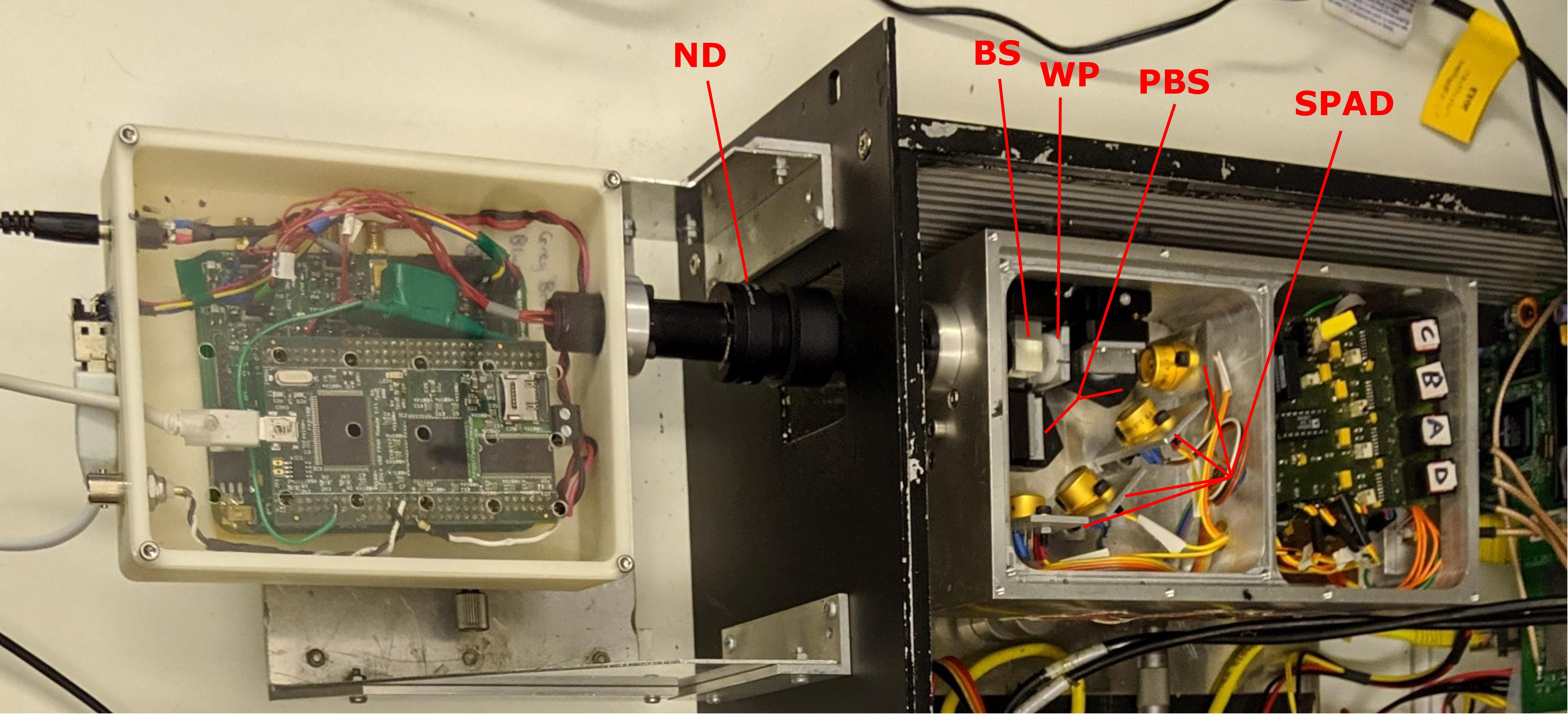}
\caption{\label{fig1} \textbf{Photograph of the experimental setup.} Bob's quantum transmitter (white box in the left) is a small and low-cost hand-held device of approximately $20\times 15\times 5$ cm. Alice's quantum receiver is contained within a box of approximately $20\times 12\times 5$ cm (grey box on the right), with further electronics contained within another box of approximately $30\times 50\times 15$ cm (bigger black box). At Bob's site, the QKD transmitter comprises a field programmable gate array (FPGA) which pulses 4 LEDs, each polarized in one of the horizontal (H), vertical (V), diagonal (D), and anti-diagonal (A) states, corresponding to the $\lvert 0\rangle$, $\lvert 1\rangle$, $\lvert +\rangle$ and $\lvert -\rangle$ BB84 states, respectively. The light from the LEDs is collimated by a diffraction grating and pinholes. The statistics of Bob's photon source is assumed Poissonian \cite{HPLG07,DLthesis}. Neutral-density (ND) filters (small black cylinders) are used to attenuate the pulses down to the required mean photon number, which in our experiment was $\mu=0.09$. Since Bob's photon source consists of LEDs, and LEDs have low spatio-temporal coherence \cite{MSDS10}, no phase randomization is required to satisfy assumption G to a good approximation (see Table \ref{tableassu}).
At Alice's site, the received light pulses from Bob are focused from the transmitter pinhole, through a 50:50 beam splitter (\damiannn{BS}, small transparent cube) which performs basis selection, and wave plate (\damiannn{WP}, thin white cylinder) and polarizing beam splitters (\damiannn{PBS}, small black boxes) which perform the measurement of the polarization. The photons are detected with single photon avalanche diodes (\damiannn{SPAD,} small golden cylinders), which are threshold single photon detectors with efficiency $\eta=0.21$, including the quantum efficiency of the detectors and the transmission efficiency from Bob's setup to the detectors.  An FPGA time tags the detections with 52 bit precision, equivalent to 30.5 ps \cite{NDR11}, and sends them to a PC for processing. Alice's gray and black boxes are closed during operation to decrease noise due to environment light. But they are shown open here for illustration.}
\end{figure*}

Only minor changes to our quantum setup are needed to implement the quantum stage of $\mathcal{QT}_2$. For example, the 50:50 beam splitter in Alice's site can be replaced by a suitably placed mirror directing the received photon pulses to one of the two polarizing beam splitters. This mirror can be set in a movable arm, which positions the mirror in place if $z$ has a specific value (e.g. if $z=1$) and out of place, letting the photon pulses reach the other polarizing beam splitter, if $z$ takes the other value (e.g. $z=0$). The movable arm putting the mirror in place or out of place does not need to move very fast, as it remains in the same position during the transmission of all $N$ pulses from Bob in the case of two presentation points, or during the transmission of each set of 
$N$ pulses from the total of $NM$ in the case of $2^M$ presentation points (see quantum token scheme $\mathcal{QT}_2^M$ in Appendix \ref{manyapplast}).

\subsection{Experimental tests and numerical example}
The quantum stage of the token scheme $\mathcal{QT}_1$ was implemented with the experimental setup described above. Below we describe our experiment and the numerical example of Fig. \ref{fig2}. Unless we consider it necessary or helpful, all values smaller than unity obtained in our experiment and numerical example are given below rounded to two significant figures.

As we explain below, our obtained experimental values for the parameters in Lemmas \ref{robust1} and \ref{correct1} and in Theorem \ref{Alice1} are  $N=4\times 10 ^7$, $P_\text{det}=0.019$, $E=0.058$, $\beta_\text{PB}=2.4\times10^{-3}$, $\beta_\text{PS}=3.6\times10^{-3}$, $P_\text{noqub}=3.8\times10^{-3}$. We assume an angle $\theta\leq 10^{\circ}$ in our experiment.

In the numerical example of Fig. \ref{fig2} we used the previous experimental values, except for $\theta$ and $E$, which were varied as shown in the plots, and for $\beta_\text{PB}$ and $\beta_\text{PS}$. In the plots of Fig. \ref{fig2}, if $\beta_\text{PB}\leq \beta_\text{max}$ and $\beta_\text{PS}\leq \beta_\text{max}$ hold, then we obtain from Lemmas \ref{robust1} and \ref{correct1} and from Theorem \ref{Alice1} that $\epsilon_\text{rob},\epsilon_\text{cor},\epsilon_\text{unf}\leq 10^{-9}$. We do not claim that our numerical example is optimal. In other words, we do not claim that with our experimental parameters every point above the curves of Fig. \ref{fig2} is insecure, in the sense that the conditions $\epsilon_\text{rob},\epsilon_\text{cor},\epsilon_\text{unf}\leq 10^{-9}$ do not hold. Our claim is only that given our experimental parameters, the regions of points below the curves of Fig. \ref{fig2} satisfy the conditions $\epsilon_\text{rob},\epsilon_\text{cor},\epsilon_\text{unf}\leq 10^{-9}$.

For the three curves of Fig. \ref{fig2} we set $\gamma_\text{det}=0.018$. Thus, the condition (\ref{rob1}) of Lemma \ref{robust1} is satisfied, and from (\ref{rob2}), we have $\epsilon_\text{rob}=e^{-1052}<10^{-9}$.

For the three curves of Fig. \ref{fig2} we set $\nu_\text{unf}=3.9\times10^{-3}$. This is the minimum value for which the first term of $\epsilon_\text{unf}$ in (\ref{Al5}) equals $\frac{10^{-9}}{2}$. This is because, as we describe below, we also chose the parameters satisfying that the second term of $\epsilon_\text{unf}$ in (\ref{Al5}) equals $\frac{10^{-9}}{2}$, from which we have $\epsilon_\text{unf}=10^{-9}$. We recognize that although this particular choice seems natural, it probably does not optimize our results.

Then, for each of the three considered error rates $E=0.01$, $E=0.03$ and $E=0.058$, and for each of the angles $\theta=0^\circ,1^\circ,\ldots, 11^\circ$, we set $\beta_{\text{PB}}=\beta_{\text{PS}}=\beta_\text{max}$ and varied $\beta_\text{max}$, $\nu_{\text{cor}}$ and $\gamma_\text{err}$ trying to find the maximum value of $\beta_{\text{max}}$ for which both terms of $\epsilon_\text{cor}$ in (\ref{cor2}) and the second term of $\epsilon_\text{unf}$ in (\ref{Al5}) were as close as possible to $\frac{10^{-9}}{2}$, but not bigger than $\frac{10^{-9}}{2}$, while guaranteeing that the constraints (\ref{cor1}) and (\ref{Al1}) were satisfied. Our results for $\beta_\text{max}$ are plotted in Fig. \ref{fig2}.

We describe how we obtained the experimental parameters presented above. At a repetition rate of $10$ MHz, Bob transmitted photon pulses to Alice during four seconds. Thus, the number of transmitted pulses was $N=4\times 10 ^7$.

Since the photon statistics of Bob's source is assumed Poissonian \cite{HPLG07,DLthesis}, the probability that a photon pulse has two or more photons is $P_\text{noqub}=1-(1+\mu)e^{-\mu}$. Since in our experiment $\mu=0.09$, we obtain $P_\text{noqub}=3.8\times10^{-3}$.

As discussed below, Alice assigned successful measurements using reporting strategy 1. The number of pulses for which Alice assigned successful measurement was $n=742491$. The obtained estimation for the probability $P_\text{det}$, was obtained as $P_\text{det}=\frac{n}{N}=0.019$.

The measured detection efficiency, including the quantum efficiency of the detectors and the transmission probability from Bob's setup to the detectors, was $\eta=0.21$. We note that our obtained value of $P_\text{det}=0.01856$, which we reported above with the less precise value $P_\text{det}=0.019$, is a good approximation to the theoretical prediction in which the photon statistics of Bob's source follow a Poisson distribution with average photon number $\mu=0.09$, Alice uses reporting strategy 1 with her four detectors having the same efficiency $\eta=0.21$, and the dark count probabilities are assumed to be zero.
As follows from (\ref{hhh}) and (\ref{hhh0}) in Lemma \ref{lemmax},
this theoretical prediction for $P_\text{det}$ is given by
\begin{eqnarray}
\label{theoretical}
P_\text{det}^\text{theo}&=&\sum_{k=0}^\infty \frac{e^{-\mu}\mu^k}{k!}\bigl(G^{(1)}_{1,0}(0,0,\eta,k)+G^{(1)}_{1,1}(0,0,\eta,k)\bigr)\nonumber\\
&=&2\sum_{k=0}^\infty \frac{e^{-\mu}\mu^k}{k!}\Bigl[\Bigl(1-\frac{\eta}{2}\Bigr)^k-(1-\eta)^k\Bigr]\nonumber\\
&=&2\bigl(e^{-\frac{\mu\eta}{2}}-e^{-\mu\eta}\bigr)\nonumber\\
&=&0.01863,
\end{eqnarray}
where in the last line we used our experimental parameters $\mu=0.09$ and $\eta=0.21$. This gives a ratio $\frac{P_\text{det}}{P_\text{det}^{\text{theo}}}=0.996$.

As mentioned in Fig. \ref{setup}, Alice applies reporting strategy 1, in order to protect against multi-photon attacks \cite{BCDKPG21} (see Lemma \ref{lemmax}). That is, Alice assigns successful measurement outcomes in the basis $\mathcal{D}_0$ ($\mathcal{D}_1$) with unit probability for the pulses in which at least one of the detectors D$_0$ and D$_1$ (D$_+$ and D$_-$) click and D$_+$ and D$_-$ (D$_0$ and D$_1$) do not click. It is clear that when only the detector D$_i$ clicks, Alice associates the measurement outcome to the BB84 state $\lvert i\rangle$, for $i\in\{0,1,+,-\}$. However, it is not clear how Alice should assign measurement outcomes to the cases in which both D$_0$ and D$_1$ (D$_+$ and D$_-$) click and D$_+$ and D$_-$ (D$_0$ and D$_1$) do not click. The results of Lemma \ref{lemmax} are independent of how Alice assigns these outcomes. In order to make clear this generality of the results of Lemma \ref{lemmax}, we have not included how these outcomes are assigned by Alice in the definition of reporting strategy 1 in Fig. \ref{setup}. Nevertheless, how these outcomes are assigned by Alice plays a role in the error rate $E$, and thus also in the degrees of correctness and unforgeability that can be guaranteed (see Lemma \ref{correct1} and Theorem \ref{Alice1}). In our experiment, Alice assigns a random measurement outcome associated to the state $\lvert 0\rangle$ and $\lvert 1\rangle$ ($\lvert +\rangle$ and $\lvert -\rangle$) when both D$_0$ and D$_1$ (D$_+$ and D$_-$) click and D$_+$ and D$_-$ (D$_0$ and D$_1$) do not click.

As mentioned above, in our experiment we obtained Alice's error rate $E=0.058$, and deviations from the random distributions for basis and state generation by Bob of $\beta_\text{PB}=2.4\times10^{-3}$ and $\beta_\text{PS}=3.6\times10^{-3}$, respectively. These values were computed as we describe below.

\subsection{Statistical information}
In our experimental tests, the number of photon pulses transmitted from Bob to Alice
was $N=4\times 10 ^7$. The number of pulses for which Alice assigned successful measurement was $n=742491$. The obtained estimation for the probability $P_\text{det}$, was obtained as $P_\text{det}=\frac{n}{N}=0.019$.

The error rate $E=0.058$ was computed as follows. From the $n$ pulses that Alice assigned as successful measurements,
$n_{tu}^{\text{same}}$ pulses were prepared by Bob with polarization given by the qubit state $\lvert\phi_{tu}\rangle$ and were measured by Alice in the same basis of preparation by Bob ($\mathcal{D}_u=\{\lvert \phi_{tu}\rangle\}_{t=0}^1$), from which $n_{tu}^{\text{error}}$ gave Alice the outcome opposite to the state prepared by Bob, i.e. an error, for $t,u\in\{0,1\}$. We computed $E_{tu}=\frac{n_{tu}^{\text{error}}}{n_{tu}^{\text{same}}}$, for $t,u\in\{0,1\}$. The estimation for the error rate $E$ was taken as $E=\max\{E_{00},E_{10},E_{01},E_{11}\}$. We obtained $n_{00}^{\text{same}}=80786$, $n_{10}^{\text{same}}=121159$, $n_{01}^{\text{same}}=93618$, $n_{11}^{\text{same}}=80653$, $n_{00}^{\text{error}}=4725$, $n_{10}^{\text{error}}=2250$, $n_{01}^{\text{error}}=1602$ and $n_{11}^{\text{error}}=3851$. From these values, we obtained $E_{00}=0.058$, $E_{10}=0.019$, $E_{01}=0.017$, $E_{11}=0.048$ and $E=0.058$.

Our experimentally obtained estimations $\beta_\text{PB}=2.4\times10^{-3}$ and $\beta_\text{PS}=3.6\times10^{-3}$ were obtained from the number $n$ of pulses that Alice reported as successfully measured. We did not use the whole number $N$ of transmitted pulses for these estimations, because the software integrated in our experimental setup is configured to output data for the pulses that produce a detection event in at least one detector. From the $n=742491$ pulses reported above, Bob produced $n_{tu}$ pulses in the state $\lvert \phi_{tu}\rangle$, for $t,u\in\{0,1\}$. We note that $n=n_{00}+n_{10}+n_{01}+n_{11}$. We computed $\beta_{\text{PB}}=\bigl\lvert\frac{n_{00}+n_{10}}{n}-\frac{1}{2}\bigr\rvert$ and $\beta_{\text{PS}}=\max\bigl\{\bigl\lvert\frac{n_{00}}{n_{00}+n_{10}}-\frac{1}{2}\bigr\rvert,\bigl\lvert\frac{n_{01}}{n_{01}+n_{11}}-\frac{1}{2}\rvert\bigr\}$. We obtained $n_{00}=185166$, $n_{10}=187842$, $n_{01}=184251$, $n_{11}=185232$, $\beta_\text{PB}=2.4\times10^{-3}$ and $\beta_\text{PS}=3.6\times10^{-3}$.

\begin{acknowledgments}
The authors acknowledge financial support from the UK Quantum Communications Hub grants no. EP/M013472/1 and EP/T001011/1, and thank Siddarth Koduru Joshi for helpful conversations. A.K. and D.P.G. also thank Sarah Croke for helpful conversations. A.K. is partially supported  by Perimeter Institute for Theoretical Physics. Research at Perimeter Institute is supported by the Government of Canada through Industry Canada and by the Province of Ontario through the Ministry of Research and Innovation.

\textbf{Author contributions}

A.K and J.R. conceived the project. D.P.G. did the majority of the theoretical work, with input from A.K.
D.L. devised the experimental setup and took the experimental data. D.P.G. analysed the experimental data and did the numerical work. A.K. and D.P.G. wrote the manuscript with input from D.L.

\textbf{Competing interests:} 
A.K. jointly owns the patent ‘A. Kent, Quantum tokens, US Patent No. 10,790,972 (2020)’ and similar patents in
other jurisdictions, and has consulted for and owns shares in a corporate co-owner.

\textbf{Data availability:} the datasets generated and analysed during the current study are available from the corresponding author on reasonable request.


\textbf{Materials and correspondence:} correspondence and requests for materials should be addressed to D.P.G. (email: D.Pitalua-Garcia@damtp.cam.ac.uk).

  \end{acknowledgments}

\appendix

\section{Summary}

These appendices provide the mathematical proofs of the lemmas and theorems given in the main text, as well as some known mathematical results, and new lemmas and a new theorem derived here to prove these results. Appendix \ref{sec2} states some well known mathematical results that are used along this text. The rest of this text is divided in three parts. The first part comprises Appendices \ref{secnewlemma} and \ref{idealunfapp}. Appendix \ref{secnewlemma} provides Lemmas \ref{lemma0} and \ref{eigenvalue}, which are used in other appendices to prove various results. In Appendix \ref{idealunfapp}, Lemma \ref{lemma01} is proved from Lemma \ref{lemma0}. Thus, the first part, along with Appendix \ref{sec2} is all that the reader requires for the security proof in the case of two presentation points (the case $M=1$) in the ideal case where there are not any errors or losses, i.e. Lemma \ref{lemma01}. The second part comprises Appendices \ref{proofoflemmas}, \ref{appnewlemma} and \ref{newapp}, which give mathematical details for the case of two presentation points in the general case that there are errors, losses and other experimental imperfections. The third part comprises Appendix \ref{manyapplast}, which gives mathematical details for the case of $2^M$ presentation points, for any integer $M\geq1$, in the general case that there are errors, losses and other experimental imperfections. 

In the second part, Lemmas \ref{robust1}, \ref{correct1} and \ref{Bob1}, which indicate the robustness, correctness and privacy for the token schemes $\mathcal{QT}_1$ and $\mathcal{QT}_2$ given in the main text, are proved in Appendix \ref{proofoflemmas}. Appendix \ref{appnewlemma} proves Lemma \ref{lemmax}, showing that reporting strategies 1 and 2 with the photonic setup of \damian{Fig. \ref{setup}} guarantee perfect protection against arbitrary multi-photon  attacks \cite{BCDKPG21}, given assumptions E and F of Table \ref{tableassu}. Using Lemma \ref{lemma0}, Appendix \ref{newapp} proves Theorem \ref{Alice1}, which corresponds to unforgeability for the token schemes $\mathcal{QT}_1$ and $\mathcal{QT}_2$ given in the main text.

In the third part, Theorem \ref{manypoints} given in the main text is proved in Appendix \ref{manyapplast} in the following way. First, the quantum token scheme $\mathcal{QT}_a$ is extended to a quantum token scheme $\mathcal{QT}_a^M$ for the case of $2^M$ presentation points, for any integer $M\geq 1$, and for $a\in\{1,2\}$. 
Then, Lemmas \ref{lastrobustM}, \ref{lastcorrectM} and \ref{lastBobM} and Theorem \ref{lastAliceM}, which respectively state the robustness, correctness, privacy and unforgeability for the token schemes $\mathcal{QT}_1^M$ and $\mathcal{QT}_2^M$, are given and proved. 

\section{Mathematical preliminaries}
\label{sec2}
The following results are well known in the literature.  We state them here for completeness. In the following, $\lVert O \rVert$ denotes the Schatten $\infty-$norm of the linear operator $O$, which equals the greatest eigenvalue of $O$, if $O$ is a positive semi definite operator acting on a finite dimensional Hilbert space.

\begin{proposition}
\label{proposition3}
Let $X_1,X_2,\ldots,X_N$ be independent random variables taking values $X_k\in \{0,1\}$, for $k\in[N]$. Let $X=\sum_{k=1}^N {X}_k$, and let $E(X)$ be the average value of $X$. Two Chernoff bounds state that \cite{Mitzenmacherbook}
\begin{eqnarray}
\label{ly}
\text{Pr}[X\leq (1-\epsilon)E(X)]&\leq& e^{-\frac{E(X)}{2}\epsilon^2},\nonumber\\
\text{Pr}[X\geq (1+\epsilon)E(X)]&\leq& e^{-\frac{E(X)}{3}\epsilon^2},
\end{eqnarray}
for $0< \epsilon<1$.
\end{proposition}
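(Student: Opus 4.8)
The plan is to prove this standard Chernoff bound by the exponential moment method. Write $p_k=E(X_k)=\Pr[X_k=1]$ and $\mu=E(X)=\sum_{k=1}^N p_k$. For the upper tail I would fix $t>0$ and apply Markov's inequality to the nonnegative random variable $e^{tX}$:
\[
\Pr[X\ge(1+\epsilon)\mu]=\Pr\bigl[e^{tX}\ge e^{t(1+\epsilon)\mu}\bigr]\le e^{-t(1+\epsilon)\mu}\,E\bigl[e^{tX}\bigr].
\]
By independence, $E[e^{tX}]=\prod_{k=1}^N E[e^{tX_k}]=\prod_{k=1}^N\bigl(1+p_k(e^t-1)\bigr)$, and applying $1+x\le e^x$ (valid for all real $x$, which will matter for both tails) termwise gives $E[e^{tX}]\le e^{\mu(e^t-1)}$. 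Hence $\Pr[X\ge(1+\epsilon)\mu]\le\exp\bigl(\mu(e^t-1-t(1+\epsilon))\bigr)$.

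Next I would optimise the exponent over $t>0$. The minimum of $e^t-1-t(1+\epsilon)$ is attained at $t=\ln(1+\epsilon)$, which is admissible since $1+\epsilon>1$, and yields the clean intermediate bound $\Pr[X\ge(1+\epsilon)\mu]\le\exp\bigl(-\mu[(1+\epsilon)\ln(1+\epsilon)-\epsilon]\bigr)$. The lower tail is symmetric: apply Markov to $e^{-tX}$ for $t>0$, factorise over $k$, use $1+x\le e^x$ again to get $\Pr[X\le(1-\epsilon)\mu]\le\exp\bigl(\mu(e^{-t}-1+t(1-\epsilon))\bigr)$, then minimise over $t$ at $t=-\ln(1-\epsilon)>0$ (admissible since $0<\epsilon<1$) to obtain $\Pr[X\le(1-\epsilon)\mu]\le\exp\bigl(-\mu[(1-\epsilon)\ln(1-\epsilon)+\epsilon]\bigr)$.

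The only remaining, and genuinely fiddly, step is to weaken these transcendental exponents to the advertised quadratic ones, i.e. to establish the elementary inequalities $(1-\epsilon)\ln(1-\epsilon)+\epsilon\ge\epsilon^2/2$ and $(1+\epsilon)\ln(1+\epsilon)-\epsilon\ge\epsilon^2/3$ on $0<\epsilon<1$. For the first, set $h(\epsilon)=(1-\epsilon)\ln(1-\epsilon)+\epsilon-\epsilon^2/2$; then $h(0)=h'(0)=0$ and $h''(\epsilon)=\epsilon/(1-\epsilon)>0$ on $(0,1)$, so $h\ge0$. For the second, set $h(\epsilon)=(1+\epsilon)\ln(1+\epsilon)-\epsilon-\epsilon^2/3$; here $h(0)=h'(0)=0$ but $h''(\epsilon)=(1-2\epsilon)/\bigl(3(1+\epsilon)\bigr)$ changes sign at $\epsilon=1/2$, so a pure convexity argument fails. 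Instead I would note that $h'(\epsilon)=\ln(1+\epsilon)-2\epsilon/3$ increases on $(0,1/2)$ and decreases on $(1/2,1)$, so its minimum over $[0,1]$ is at an endpoint, and since $h'(0)=0$ and $h'(1)=\ln2-2/3>0$ we get $h'\ge0$, hence $h\ge0$, on $(0,1)$. This sign-change in the upper-tail auxiliary function is the one place where a little care is needed; everything else is bookkeeping.
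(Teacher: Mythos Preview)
Your proof is correct and is the standard textbook derivation via the exponential moment method; the auxiliary inequalities and the handling of the sign change in $h''$ for the upper tail are all fine. Note, however, that the paper does not actually prove this proposition: it is stated in the ``Mathematical preliminaries'' appendix as a well-known result and simply cited to Mitzenmacher and Upfal \cite{Mitzenmacherbook}, so there is no paper-proof to compare against. Your argument is essentially the one found in that reference.
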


\begin{proposition}
\label{proposition1}
For any quantum density matrix $\xi$ and any positive semi definite operator $O$ acting on a finite dimensional Hilbert space $\mathcal{H}$, we have
\begin{equation}
\label{eq0}
\text{Tr}(O \xi )\leq \lVert O\rVert.
\end{equation}
\end{proposition}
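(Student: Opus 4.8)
The plan is to reduce the inequality to the elementary fact that a positive semidefinite operator has nonnegative expectation values in every state (equivalently, that the trace of a product of two positive semidefinite operators is nonnegative). First I would invoke the spectral decomposition of $O$: since $O$ is positive semidefinite on the finite-dimensional space $\mathcal{H}$, write $O=\sum_j \lambda_j \lvert e_j\rangle\langle e_j\rvert$ with $\{\lvert e_j\rangle\}$ an orthonormal basis of eigenvectors and eigenvalues $0\le\lambda_j\le\lVert O\rVert$, where the upper bound is precisely the characterization of the Schatten $\infty$-norm recalled just above the proposition statement.

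Next I would expand $\text{Tr}(O\xi)=\sum_j \lambda_j\,\langle e_j\rvert\xi\lvert e_j\rangle$. Since $\xi$ is a density matrix it is positive semidefinite, so each $\langle e_j\rvert\xi\lvert e_j\rangle\ge 0$; and since $\{\lvert e_j\rangle\}$ is orthonormal, $\sum_j\langle e_j\rvert\xi\lvert e_j\rangle=\text{Tr}(\xi)=1$. Replacing each $\lambda_j$ by its upper bound $\lVert O\rVert$ in this sum with nonnegative coefficients then gives $\text{Tr}(O\xi)\le\lVert O\rVert\sum_j\langle e_j\rvert\xi\lvert e_j\rangle=\lVert O\rVert$, which is the claim.

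An essentially equivalent, decomposition-free route would be to observe that $\lVert O\rVert\, I-O$ is positive semidefinite, hence so is $\xi^{1/2}\bigl(\lVert O\rVert\, I-O\bigr)\xi^{1/2}$; taking the trace, using cyclicity, and using $\text{Tr}(\xi)=1$ yields $\lVert O\rVert-\text{Tr}(O\xi)\ge 0$. Either way there is no genuine obstacle here: the only point needing a word of justification is that a positive semidefinite operator has nonnegative diagonal entries relative to any orthonormal basis, which is immediate from the defining property $\langle v\rvert A\lvert v\rangle\ge 0$ applied to basis vectors, together with the finite-dimensionality of $\mathcal{H}$ that guarantees the spectral theorem applies.
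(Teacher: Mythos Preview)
Your argument is correct and follows essentially the same approach as the paper: diagonalize $O$ via the spectral theorem and bound each eigenvalue by $\lVert O\rVert$. The only cosmetic difference is that the paper first treats a pure state $\xi=\lvert\xi\rangle\langle\xi\rvert$ by expanding $\lvert\xi\rangle$ in the eigenbasis of $O$ and then extends to mixed states by convexity, whereas you handle general $\xi$ in one shot by writing $\mathrm{Tr}(O\xi)=\sum_j\lambda_j\langle e_j\rvert\xi\lvert e_j\rangle$ and using $\langle e_j\rvert\xi\lvert e_j\rangle\ge 0$ together with $\sum_j\langle e_j\rvert\xi\lvert e_j\rangle=1$; this is a mild streamlining of the same idea.
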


\begin{proof}
Since $O$ acts on a finite dimensional Hilbert space $\mathcal{H}$ and it is positive semi definite, it is also Hermitian, hence, from the  spectral theorem there exists an orthonormal basis $\{\lvert e_i\rangle\}_i$ of $\mathcal{H}$ which is an eigenbasis of $O$, with real eigenvalues $\{\mu_i\}_i$. Suppose that $\xi=\lvert \xi\rangle\langle \xi\rvert$ is pure, then we express it in the eigenbasis $\{\lvert e_i\rangle\}_i$ of $O$. We have $\lvert \xi\rangle=\sum_i\alpha_i \lvert e_i\rangle$, where $\sum_i\lvert \alpha_i\rvert^2=1$, hence, $\text{Tr}(O\xi)=\sum_i \mu_i\lvert\alpha_i \rvert^2\leq \lVert O\rVert$. If $\xi$ is not pure, it can be written as the convex combination of pure states, hence, applying the inequality to each of these pure states, the result follows.
\end{proof}

\begin{proposition}
\label{proposition2}
For any finite set of positive semi definite operators $\{D_a\}_{a\in \Omega}$ acting on a finite dimensional Hilbert space $\mathcal{H}_A$ and any projective measurement $\{\Pi_a\}_{a\in \Omega}$ acting on a finite dimensional Hilbert space $\mathcal{H}_B$, it holds that
\begin{equation}
\label{l1.9}
\biggl\lVert \sum_{a\in \Omega} (D_a)_A\otimes (\Pi_a)_B\biggr\rVert =\max_{a\in \Omega} \lVert D_a\rVert.
\end{equation}
\end{proposition}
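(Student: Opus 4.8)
The plan is to use the fact that the projective measurement $\{\Pi_a\}_{a\in\Omega}$ decomposes $\mathcal{H}_B$ into the orthogonal direct sum of the ranges $R_a=\Pi_a\mathcal{H}_B$, and that with respect to this decomposition the positive semidefinite operator $O\equiv\sum_{a\in\Omega}(D_a)_A\otimes(\Pi_a)_B$ is block diagonal, acting as $D_a\otimes I_{R_a}$ on $\mathcal{H}_A\otimes R_a$ — this is immediate from $\Pi_a\Pi_{a'}=\delta_{aa'}\Pi_a$ and $\sum_a\Pi_a=I_B$. Since the Schatten $\infty$-norm of a block-diagonal operator is the maximum of the norms of its blocks, and $\lVert D_a\otimes I_{R_a}\rVert=\lVert D_a\rVert$ whenever $R_a\neq 0$, the claim follows; any indices with $\Pi_a=0$ contribute nothing to $O$ and may be dropped, consistent with $\{\Pi_a\}_{a\in\Omega}$ being a genuine measurement.

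Rather than invoke the block-diagonal norm fact, I would spell out the two inequalities directly. For ``$\leq$'', I would start from the observation that $O$ is positive semidefinite, so by the spectral theorem on the finite-dimensional space $\mathcal{H}_A\otimes\mathcal{H}_B$ we have $\lVert O\rVert=\max_{\lVert\psi\rVert=1}\langle\psi|O|\psi\rangle$. Given a unit vector $|\psi\rangle$, I would set $|\psi_a\rangle=(I_A\otimes\Pi_a)|\psi\rangle$; these are mutually orthogonal with $\sum_{a}\lVert\psi_a\rVert^2=1$, and a one-line computation using $\Pi_a\Pi_{a'}=\delta_{aa'}\Pi_a$ gives
\[
\langle\psi|O|\psi\rangle=\sum_{a\in\Omega}\langle\psi|(D_a)_A\otimes(\Pi_a)_B|\psi\rangle=\sum_{a\in\Omega}\langle\psi_a|(D_a)_A\otimes I_B|\psi_a\rangle\leq\sum_{a\in\Omega}\lVert D_a\rVert\,\lVert\psi_a\rVert^2\leq\max_{a\in\Omega}\lVert D_a\rVert,
\]
where the first inequality is $\langle\psi_a|(D_a\otimes I_B)|\psi_a\rangle\leq\lVert D_a\otimes I_B\rVert\,\lVert\psi_a\rVert^2=\lVert D_a\rVert\,\lVert\psi_a\rVert^2$. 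For ``$\geq$'', I would let $a^\star$ attain $\max_a\lVert D_a\rVert$, take a normalized eigenvector $|\phi\rangle_A$ of $D_{a^\star}$ for its largest eigenvalue $\lVert D_{a^\star}\rVert$ and any unit vector $|\chi\rangle_B\in R_{a^\star}$, and note that $\langle\chi|\Pi_a|\chi\rangle=\delta_{a a^\star}$ since the $R_a$ are orthogonal, so that $|\psi\rangle=|\phi\rangle_A\otimes|\chi\rangle_B$ yields $\langle\psi|O|\psi\rangle=\langle\phi|D_{a^\star}|\phi\rangle=\lVert D_{a^\star}\rVert$ and hence $\lVert O\rVert\geq\max_a\lVert D_a\rVert$.

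Combining the two bounds gives the stated equality. I do not expect a genuine obstacle; the only points needing a little care are that $O$ is positive semidefinite (so that $\lVert O\rVert=\max_\psi\langle\psi|O|\psi\rangle$), the identity $\lVert D_a\otimes I_{R_a}\rVert=\lVert D_a\rVert$ for a nonzero ancilla space, and — for the lower bound only — the existence of a unit vector in $R_{a^\star}$, i.e. $\Pi_{a^\star}\neq 0$, which holds because $\{\Pi_a\}_{a\in\Omega}$ is assumed to be a (nondegenerate) projective measurement.
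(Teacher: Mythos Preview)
Your proof is correct and follows essentially the same approach as the paper's: decompose a maximizing vector along the orthogonal ranges of the $\Pi_a$, bound each piece by $\lVert D_a\rVert$ for the upper bound, and exhibit a product eigenvector for the lower bound. Your presentation is slightly more streamlined (you invoke $\langle\psi_a|(D_a\otimes I_B)|\psi_a\rangle\leq\lVert D_a\rVert\,\lVert\psi_a\rVert^2$ directly, whereas the paper expands $|\omega_a\rangle$ explicitly in the eigenbasis of $D_a$), but the structure is the same; one small terminological quibble is that ``nondegenerate'' for a projective measurement usually means rank-one projectors rather than nonzero projectors, which is the condition you actually need for the lower bound.
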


\begin{proof}
Let $\lvert \psi\rangle$ be the eigenstate of $O=\sum_{a\in \Omega} (D_a)_A\otimes (\Pi_a)_B$ with the greatest eigenvalue, hence, $\lVert O\rVert=\langle \psi\rvert O\lvert \psi\rangle$. We can write $\lvert\psi\rangle=\sum_{a\in\Omega}\alpha_a \lvert \omega_a\rangle$, where $\sum_{a\in\Omega}\lvert \alpha_a\rvert^2=1$, and where $\bigl(\mathds{1}_A\otimes (\Pi_a)_B\bigr)\lvert\omega_b\rangle=\delta_{a,b}\lvert\omega_b\rangle$, for $a,b\in\Omega$. Thus, we obtain 
\begin{equation}
\label{l1.10}
\lVert O\rVert=\sum_{a\in\Omega} \lvert \alpha_a\rvert^2 \langle \omega_a\rvert  \bigl((D_a)_A\otimes \mathds{1}_B\bigr)\lvert\omega_a\rangle.
\end{equation}
We can then write $\lvert\omega_a\rangle=\sum_{i,j}\beta_{i,j}^a\lvert e_i^a\rangle\otimes\lvert j\rangle$, where $\{\lvert e_i^a\rangle\}_i$ is the eigenbasis  of $D_a$, with eigenvalues $\{\mu_i^a\}_i$, $\{\lvert j\rangle\}_j$ is an orthonormal basis of $\mathcal{H}_B$, and where $\sum_{i,j}\lvert \beta_{i,j}^a\rvert^2=1$, for $a\in\Omega$. From (\ref{l1.10}), we have
\begin{eqnarray}
\label{l1.11}
\lVert O\rVert&=&\sum_{a\in\Omega} \lvert \alpha_a\rvert^2\sum_{i,j} \lvert \beta_{i,j}^a\rvert^2 \mu_i^a\nonumber\\
&\leq&\sum_{a\in\Omega} \lvert \alpha_a\rvert^2\sum_{i,j} \lvert \beta_{i,j}^a\rvert^2\lVert D_a\rVert\nonumber\\
&=& \sum_{a\in\Omega} \lvert \alpha_a\rvert^2 \lVert D_a\rVert\nonumber\\
&\leq&\max_{a\in\Omega} \lVert D_a\rVert,
\end{eqnarray}
where in the second line we used that $\lVert D_a\rVert$ is the greatest of the eigenvalues $\{\mu_i^a\}_i$. Now let $\lvert \tau_b\rangle\in\mathcal{H}_A$ be an eigenstate of $D_{b}$ whose corresponding eigenvalue is $\lVert D_b\rVert$, i.e. the greatest eigenvalue of $D_b$, and where $\lVert D_{b}\rVert =\max_{a\in\Omega} \lVert D_a\rVert$,  for some $b\in\Omega$. Let $\lvert\chi_b\rangle\in\mathcal{H}_B$ be a pure state satisfying $\Pi_{a}\lvert \chi_b\rangle=\delta_{a,b}\lvert \chi_b\rangle$, for $a\in\Omega$. It can easily be verified that $\lvert \tau_b\rangle\otimes \lvert \chi_b\rangle$ is an eigenstate of $O$ with eigenvalue $\lVert D_{b}\rVert=\max_{a\in\Omega} \lVert D_a\rVert$, i.e. there is an eigenvalue of $O$ equal to $\max_{a\in\Omega} \lVert D_a\rVert$. Thus, since $O$ is positive semi definite, $\lVert O\rVert$ is the greatest eigenvalue of $O$, hence, the result follows from (\ref{l1.11}).
\end{proof}

\section{Useful mathematical results}
\label{secnewlemma}

In this appendix, we state and prove Lemmas \ref{lemma0} and \ref{eigenvalue}.
Lemma \ref{lemma0} extends results of Ref. \cite{LKBHTKGWZ13etal}, for example in allowing a small deviation from the random distribution, as characterized by the parameters $\beta_\text{PS}>0$ and $\beta_\text{PB}>0$. Lemma \ref{lemma0} is a central mathematical result that we use to prove Lemma \ref{lemma01} and Theorem \ref{Alice1} in Appendices \ref{idealunfapp} and \ref{newapp}, respectively. 
Lemma \ref{eigenvalue} states an upper bound on the maximum eigenvalue of a particular qubit density matrix. It will be useful in the proof of Theorem \ref{Alice1} in Appendix \ref{newapp}.

\begin{lemma}
	\label{lemma0}
	For $r,r',s\in\{0,1\}$ and $k\in[N]$, and for some $O\in\bigl[\frac{1}{\sqrt{2}},1\bigr)$, let $\lvert \phi_{r,s}^k\rangle$ be qubit pure states satisfying $\langle \phi_{0,s}^k\vert \phi_{1,s}^k\rangle=0$ and $\bigl\lvert \langle \phi_{r,0}^k\vert \phi_{r',1}^k\rangle\bigr\rvert \leq O $. Let $\mathbf{h}=(h_1,\ldots,h_N)$ be a $N-$bit string. For any $N-$bit string $\mathbf{x}=(x_1,\ldots,x_N)$ and for $i\in\{0,1\}$, let $\mathbf{x}_i$ denote the restriction of $\mathbf{x}$ to $S_i^\mathbf{h}=\bigl\{k\in[N]\vert s_k=h_k\oplus i\bigr\}$. Let $P_{\mathbf{s}}=\prod_{k=1}^N P_{s_k}^k$ be the probability distribution for $\mathbf{s}=(s_1,\ldots,s_N)\in\{0,1\}^N$, where $\{P_{0}^k,P_{1}^k\}$ is a binary probability distribution satisfying $P_0^{\text{LB}}\leq P_0^k\leq P_0^{\text{UB}}$, for $k\in[N]$. Let $d(\cdot,\cdot)$ denote the Hamming distance, and let $\bigl(\phi_{\mathbf{r},\mathbf{s}}\bigr)_B=\bigotimes _{k=1}^N \bigl(\lvert \phi_{r_k,s_k}^k\rangle\langle\phi_{r_k,s_k}^k\rvert\bigr)_{B_k}$, where $B=B_1B_2\cdots B_N$ denotes a quantum system of $N$ qubits. For $\mathbf{a},\mathbf{b}\in\{0,1\}^N$, we define
\begin{equation}
\label{l0.1}
\bigl(D_{\mathbf{a},\mathbf{b}}\bigr)_B
=\sum_{\mathbf{s}\in\{0,1\}^N} P_{\mathbf{s}}\!\!\!\!\sum_{\substack{\mathbf{r}\in\{0,1\}^N\\ d(\mathbf{a}_0,\mathbf{r}_0)+d(\mathbf{b}_1,\mathbf{r}_1)\leq N\gamma_\text{err}}}\!\!\!\!\!\!\!\!\! \bigl(\phi_{\mathbf{r},\mathbf{s}}\bigr)_{B},
\end{equation}
for some $\gamma_\text{err}\geq 0$. Let $\lambda$ be a lower bound on the minimum of the function $B(P_0,O)$ evaluated over the range $P_0 \in[P_0^{\text{LB}},P_0^{\text{UB}}]$, with 
\begin{equation}
\label{1003}
B(P_0,O)= \frac{1-\sqrt{1-4(1-O^2)P_0(1-P_0)}}{2}.
\end{equation}
In particular, if $P_0^{\text{LB}}=\frac{1}{2}-\beta_\text{PB}$ and $P_0^{\text{UB}}=\frac{1}{2}+\beta_\text{PB}$ for some $\beta_\text{PB}\geq 0$ then 
\begin{equation}
\label{1004}
\lambda=\frac{1}{2}\Bigl(1-\sqrt{1-(1-O^2)(1-4\beta_\text{PB}^2)}\Bigr).
\end{equation}
If $\gamma_{\text{err}}=0$ then it holds that
\begin{equation}
	\label{l0.1.2}
	\max_{\mathbf{a},\mathbf{b}\in\{0,1\}^N}\lVert D_{\mathbf{a},\mathbf{b}}\rVert \leq\bigl(1-\lambda\bigr)^N.
	\end{equation}
If $0<\gamma_\text{err}<\lambda$ then it holds that
	\begin{equation}
	\label{l0.2}
	\max_{\mathbf{a},\mathbf{b}\in\{0,1\}^N}\lVert D_{\mathbf{a},\mathbf{b}}\rVert \leq e^{-\frac{N\lambda}{2}\bigl(1-\frac{\gamma_\text{err}}{\lambda}\bigr)^2}.
	\end{equation}	
	\end{lemma}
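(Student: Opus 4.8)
The plan is to uncover the hidden tensor‑product structure of $D_{\mathbf{a},\mathbf{b}}$ and reduce the whole estimate to a one‑qubit eigenvalue bound, handling the two regimes $\gamma_{\text{err}}=0$ and $0<\gamma_{\text{err}}<\lambda$ separately. The single‑qubit input I would establish first is: for any binary weight $p$ and qubit pure states $|u\rangle,|v\rangle$ with $c:=|\langle u|v\rangle|\le O<1$, the operator $p\,|u\rangle\langle u|+(1-p)\,|v\rangle\langle v|$ has, on $\mathrm{span}\{|u\rangle,|v\rangle\}$, trace $1$ and determinant $p(1-p)(1-c^{2})$, hence largest eigenvalue $1-B(p,c)$ with $B$ as in (\ref{1003}); since $B(p,c)$ is non‑increasing in $c\in[0,1]$ and is bounded below by $\lambda$ for $p\in[P_0^{\text{LB}},P_0^{\text{UB}}]$, this eigenvalue is at most $1-\lambda$. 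Formula (\ref{1004}) is just the value of this lower bound: $B(p,O)$ is maximal at $p=\tfrac12$, so over $[\tfrac12-\beta_{\text{PB}},\tfrac12+\beta_{\text{PB}}]$ it is minimal at the endpoints, where $p(1-p)=\tfrac14-\beta_{\text{PB}}^{2}$.

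For $\gamma_{\text{err}}=0$ the constraint in (\ref{l0.1}) forces $d(\mathbf{a}_0,\mathbf{r}_0)=d(\mathbf{b}_1,\mathbf{r}_1)=0$, so for each $\mathbf{s}$ the string $\mathbf{r}$ is uniquely fixed by $r_k=a_k$ if $s_k=h_k$ and $r_k=b_k$ if $s_k=h_k\oplus1$; crucially $r_k$ depends on $\mathbf{s}$ only through $s_k$. Because $P_{\mathbf{s}}=\prod_k P_{s_k}^{k}$ is a product, this makes $D_{\mathbf{a},\mathbf{b}}=\bigotimes_{k=1}^{N}\big(P_0^{k}|\phi^{k}_{r_k(0),0}\rangle\langle\phi^{k}_{r_k(0),0}|+P_1^{k}|\phi^{k}_{r_k(1),1}\rangle\langle\phi^{k}_{r_k(1),1}|\big)$, each factor being of the form just analysed with overlap $\le O$. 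Then $\|D_{\mathbf{a},\mathbf{b}}\|$ equals the product of the $N$ factor norms, each at most $1-\lambda$, giving (\ref{l0.1.2}).

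For $0<\gamma_{\text{err}}<\lambda$ the Hamming ball couples the $r_k$, so I would relax it by a generating‑function trick: fix $x=e^{u}$ with $u\ge0$; for $\mathbf{r}$ in the ball, $x^{\,N\gamma_{\text{err}}-e_0-e_1}\ge1$, so, using positivity of each $\phi_{\mathbf{r},\mathbf{s}}$ and monotonicity of the largest eigenvalue, $\|D_{\mathbf{a},\mathbf{b}}\|\le x^{\,N\gamma_{\text{err}}}\big\|\sum_{\mathbf{s}}P_{\mathbf{s}}\sum_{\mathbf{r}\in\{0,1\}^{N}}x^{-(e_0+e_1)}\phi_{\mathbf{r},\mathbf{s}}\big\|$, where now the unconstrained inner double sum factorizes over $k$ (writing $e_0+e_1=\sum_k\varepsilon_k$ with $\varepsilon_k=\mathbf{1}[r_k\ne a_k]$ when $s_k=h_k$ and $\varepsilon_k=\mathbf{1}[r_k\ne b_k]$ when $s_k=h_k\oplus1$). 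Each single‑qubit factor equals $\tfrac1x\mathds{1}+(1-\tfrac1x)R_k$ with $R_k=P_{h_k}^{k}|\phi^{k}_{a_k,h_k}\rangle\langle\phi^{k}_{a_k,h_k}|+P_{h_k\oplus1}^{k}|\phi^{k}_{b_k,h_k\oplus1}\rangle\langle\phi^{k}_{b_k,h_k\oplus1}|$, $\|R_k\|\le1-\lambda$, so its norm is $\le 1-\lambda(1-e^{-u})$. Hence, using $1+t\le e^{t}$,
\begin{equation*}
\|D_{\mathbf{a},\mathbf{b}}\|\le e^{Nu\gamma_{\text{err}}}\big(1-\lambda(1-e^{-u})\big)^{N}\le e^{N\left(u\gamma_{\text{err}}-\lambda(1-e^{-u})\right)} .
\end{equation*}
Minimizing over $u\ge0$ yields $u=\ln(\lambda/\gamma_{\text{err}})$ (admissible since $\gamma_{\text{err}}<\lambda$) and exponent $-N\lambda\,(1-q+q\ln q)$ with $q=\gamma_{\text{err}}/\lambda\in(0,1)$; the bound (\ref{l0.2}) then follows from the elementary inequality $1-q+q\ln q\ge\tfrac12(1-q)^{2}$ on $(0,1]$, proved by noting the two sides agree at $q=1$ while the difference has derivative $\ln q-q+1\le0$, hence is non‑increasing and therefore $\ge0$ for $q\le1$.

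The step I expect to be the main obstacle is the bookkeeping that makes the tensor factorization genuinely single‑qubit in both regimes — in particular recognising that the target bit at position $k$ depends on the basis string only through $s_k$, so that summing against the product measure $P_{\mathbf{s}}$ still factorizes — together with pinning down the relaxation weights and the direction of the operator inequality in the $\gamma_{\text{err}}>0$ case; the single‑qubit eigenvalue computation and the final scalar inequality are routine.
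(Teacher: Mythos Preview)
Your proof is correct. The $\gamma_{\text{err}}=0$ case and the single-qubit eigenvalue bound are handled essentially as in the paper (the paper computes the $2\times2$ eigenvalues explicitly rather than via trace and determinant, but this is cosmetic). For $0<\gamma_{\text{err}}<\lambda$ the two arguments genuinely diverge. The paper first observes that the two single-site operators $\rho^{k}_{b,c}$ and $\rho^{k}_{b\oplus1,c\oplus1}$ sum to the identity, hence commute and share an eigenbasis; this diagonalises $D_{\mathbf{a},\mathbf{b}}$ exactly in a fixed tensor-product basis, reducing $\lVert D_{\mathbf{a},\mathbf{b}}\rVert$ to a scalar maximisation that is then bounded by a binomial tail $\sum_{n\le N\gamma_{\text{err}}}\binom{N}{n}(1-\lambda)^{N-n}\lambda^{n}$, to which the standard quadratic Chernoff bound is applied. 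You instead apply the exponential-tilting (moment-generating-function) trick directly at the operator level, relaxing the Hamming-ball constraint to an unconstrained sum that factorises into $\tfrac{1}{x}\mathds{1}+(1-\tfrac{1}{x})R_{k}$, and then optimise the tilt. Your route sidesteps the common-eigenbasis observation entirely and is arguably more direct; the price is the extra scalar step $1-q+q\ln q\ge\tfrac12(1-q)^{2}$, which the paper avoids by invoking the quadratic Chernoff bound as a black box. Either way the bound (\ref{l0.2}) follows.
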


\begin{proof}
For $\mathbf{a},\mathbf{b},\mathbf{s}\in\{0,1\}^N$, we define $\mathbf{u}\in\{0,1\}^N$ satisfying $\mathbf{u}_0=\mathbf{a}_0$ and $\mathbf{u}_1=\mathbf{b}_1$. Then, in (\ref{l0.1}), we change variables $\mathbf{r}= \mathbf{x}\oplus\mathbf{u}$ and we sum over $\mathbf{x}\in\{0,1\}^N$, where `$\oplus$' denotes bit-wise sum modulo 2. We obtain
\begin{equation}
\label{l1.6}
D_{\mathbf{a},\mathbf{b}}=\sum_{\mathbf{s}\in\{0,1\}^N}P_{\mathbf{s}} \sum_{\substack{\mathbf{x}\in\{0,1\}^N\\ w(\mathbf{x})\leq N\gamma_\text{err}}} \bigl(\phi_{\mathbf{x}\oplus \mathbf{u},\mathbf{s}}\bigr)_{B},
\end{equation}
where $w(\mathbf{x})$ denotes the Hamming weight of $\mathbf{x}$.

We note that $D_{\mathbf{a},\mathbf{b}}$ is a positive semi definite operator, hence, $\lVert D_{\mathbf{a},\mathbf{b}}\rVert$ corresponds to the greatest eigenvalue of $D_{\mathbf{a},\mathbf{b}}$, for $\mathbf{a},\mathbf{b}\in\{0,1\}^N$. 
For given $\mathbf{a},\mathbf{b}\in\{0,1\}^N$, in order to compute $\bigl\lVert D_{\mathbf{a},\mathbf{b}}\bigr\rVert$, we first evaluate the sum over $\mathbf{s}\in\{0,1\}^N$ in (\ref{l1.6}). We obtain
\begin{eqnarray}
\label{l1.12}
\sum_{\mathbf{s}\in\{0,1\}^N} P_{\mathbf{s}} \phi_{\mathbf{x}\oplus \mathbf{u},\mathbf{s}}&=&\bigotimes_{k=1}^N \Bigl(\sum_{s_k=0}^1 P^k_{s_k}\lvert \phi^k_{x_k\oplus u_k,s_k}\rangle\langle \phi^k_{x_k\oplus u_k,s_k}\rvert\Bigr)\nonumber\\
&=&\bigotimes_{k=1}^N \rho^k_{x_k\oplus a_k,x_k\oplus b_k},
\end{eqnarray}
where
\begin{equation}
\label{l1.13}
\rho_{b,c}^k=\Bigl(P_{h_k}^k \lvert \phi^k_{b,h_k}\rangle\langle \phi^k_{b,h_k}\rvert + P_{h_k\oplus 1}^k \lvert \phi^k_{c,h_k\oplus1}\rangle\langle \phi^k_{c,h_k\oplus 1}\rvert \Bigr),
\end{equation}
for $b,c\in\{0,1\}$ and $k\in[N]$. We note that $\rho_{b,c}^k+\rho_{b\oplus1,c\oplus1}^k=\mathds{1}$ since $\{\lvert \phi_{r,s}^k\rangle\}_{r\in\{0,1\}}$ is a qubit orthonormal basis for $s\in\{0,1\}$ and since $\{P_0^k,P_1^k\}$ is a probability distribution, for $k\in[N]$. Thus, $\rho_{b,c}^k$ and $\rho_{b\oplus1,c\oplus1}^k$ are diagonal in the same basis, for $b,c\in\{0,1\}$ and $k\in[N]$. Therefore, without loss of generality, we can write
\begin{equation}
\label{l1.14}
\rho_{b,c}^k=\sum_{t=0}^1 \lambda_{t\oplus b,b\oplus c}^k \bigl\lvert \mu_{t,b\oplus c}^k\bigr\rangle\bigl\langle \mu_{t,b\oplus c}^k\bigr\rvert,
\end{equation}
where $\bigl\{\bigl\lvert \mu_{t,b\oplus c}^k\bigr\rangle\bigr \}_{t=0}^1$ is the eigenbasis of $\rho_{b,c}^k$ with real non-negative eigenvalues $\bigl\{\lambda_{t\oplus b,b\oplus c}^k \bigr\}_{t=0}^1$, and where 
\begin{equation}
\label{lx}
\lambda_{0,c}^k+\lambda_{1,c}^k=1,
\end{equation}
for $b,c\in\{0,1\}$ and $j\in[N]$. Thus, we have
\begin{eqnarray}
\label{l1.15}
&&\Biggl(\bigotimes_{k=1}^N \rho^k_{x_k\oplus a_k,x_k\oplus b_k}\Biggr)\Biggl(\bigotimes_{k=1}^N  \bigl\lvert \mu_{t_k,a_k\oplus b_k}^k\bigr\rangle \Biggr)=\nonumber\\
&&\quad\Biggl(\prod_{k=1}^N \lambda_{t_k\oplus x_k\oplus a_k,a_k\oplus b_k}^k\Biggr) \Biggl(\bigotimes_{k=1}^N  \bigl\lvert \mu_{t_k,a_k\oplus b_k}^k\bigr\rangle \Biggr),
\end{eqnarray}
for $\mathbf{t}\in\{0,1\}^N$. Importantly, we see from (\ref{l1.15}) that the eigenbasis of $\bigotimes_{k=1}^N \rho^k_{x_k\oplus a_k,x_k\oplus b_k}$ is the same for all $\mathbf{x}\in\{0,1\}^N$. Thus, from (\ref{l1.6}), (\ref{l1.12}) and (\ref{l1.15}), we see that the eigenbasis of $D_{\mathbf{a},\mathbf{b}}$ is $\Bigl\{\bigotimes_{k=1}^N  \bigl\lvert \mu_{t_k,a_k\oplus b_k}^k\bigr\rangle\Bigr\}_{\mathbf{t}\in\{0,1\}^N}$, with eigenvalues 
\begin{equation}
\label{l1.16}
\sum_{\substack{\mathbf{x}\in\{0,1\}^N\\ w(\mathbf{x})\leq N\gamma_\text{err} }} \Biggl(\prod_{k=1}^N \lambda_{t_k\oplus x_k\oplus a_k,a_k\oplus b_k}^k\Biggr),\nonumber
\end{equation}
for $\mathbf{t}\in\{0,1\}^N$. It follows that 
\begin{eqnarray}
\label{l1.17}
&&\max_{\mathbf{a},\mathbf{b}\in\{0,1\}^N}\lVert D_{\mathbf{a},\mathbf{b}}\rVert\nonumber\\
&&\qquad\qquad=\max_{\mathbf{a},\mathbf{b},\mathbf{t}\in\{0,1\}^N} \sum_{\substack{\mathbf{x}\in\{0,1\}^N\\ w(\mathbf{x})\leq N\gamma_\text{err} }} \Biggl(\prod_{k=1}^N \lambda_{t_k\oplus x_k\oplus a_k,a_k\oplus b_k}^k\Biggr)\nonumber\\
&&\qquad\qquad = \max_{\mathbf{\alpha},\mathbf{\beta}\in\{0,1\}^N} \sum_{\substack{\mathbf{x}\in\{0,1\}^N\\ w(\mathbf{x})\leq N\gamma_\text{err} }} \Biggl(\prod_{k=1}^N \lambda_{x_k\oplus \beta_k,\alpha_k}^k\Biggr),
\end{eqnarray}
by taking the change of variables $\alpha_k=a_k\oplus b_k$ and $\beta_k=a_k\oplus t_k$, for $k\in[N]$.

Below we compute the maximum given by the second line of (\ref{l1.17}). We consider two cases separately, the case $\gamma_{\text{err}}=0$, and the case $0<\gamma_{\text{err}}<\lambda$. Within the second case we consider the subcases $0<\gamma_{\text{err}}<\frac{1}{N}$ and $\gamma_{\text{err}}\geq \frac{1}{N}$. We use the following definitions:
\begin{eqnarray}
\label{l1.20}
\lambda_0^k&=&\max\{\lambda_{\beta,\alpha}^k\}_{\alpha,\beta\in\{0,1\}}\nonumber\\
\lambda_1^k&=&1-\lambda_0^k,
\end{eqnarray}
where in the second line we used (\ref{lx}), for $k\in[N]$. We also define parameters $\lambda_0\leq 1$ and $\lambda_1$ that satisfy
\begin{eqnarray}
\label{l1.21}
\lambda_0^k&\leq& \lambda_0,\nonumber\\
\lambda_1&=&1-\lambda_0,
\end{eqnarray}
for $k\in[N]$.

In the case $\gamma_{\text{err}}=0$, we have from (\ref{l1.17}) that
\begin{eqnarray}
\label{l1.17.1}
\max_{\mathbf{a},\mathbf{b}\in\{0,1\}^N}\lVert D_{\mathbf{a},\mathbf{b}}\rVert&=&  \max_{\mathbf{\alpha},\mathbf{\beta}\in\{0,1\}^N}  \prod_{k=1}^N \lambda_{\beta_k,\alpha_k}^k\nonumber\\
&=&  \prod_{k=1}^N \lambda_{0}^k\nonumber\\
&\leq&  (1-\lambda_1)^N\nonumber\\
&=&(1-\lambda)^N,
\end{eqnarray}
where in the second line we used (\ref{l1.20}), in the third line we used (\ref{l1.21}), and in the last line we used that $\lambda_1=\lambda$, which is shown below. The bound (\ref{l0.1.2}) follows from (\ref{l1.17.1}).

In the case $0<\gamma_{\text{err}}<\frac{1}{N}$, we note that since $\lambda\in(0,1)$, we have $\ln (1-\lambda)\leq -\lambda<-\frac{\lambda}{2}$. It follows that
\begin{equation}
\label{l1.17.2}
(1-\lambda)^N < e^{-\frac{N\lambda}{2}\bigl(1-\frac{\gamma_{\text{err}}}{\lambda}\bigr)^2}.
\end{equation}
Thus, from (\ref{l1.17.1}) and (\ref{l1.17.2}), it follows that in the case that the conditions $0<\gamma_{\text{err}}<\lambda$ and $0<\gamma_{\text{err}}<\frac{1}{N}$ hold, the bound (\ref{l0.2}) is satisfied.

We show below that the bound (\ref{l0.2}) is satisfied too in the case that $\frac{1}{N}\leq\gamma_{\text{err}}<\lambda$ holds. It follows that (\ref{l0.2}) holds if $0<\gamma_{\text{err}}<\lambda$, as stated in the lemma.

We consider $\frac{1}{N}\leq\gamma_{\text{err}}<\lambda$. For any $\mathbf{\alpha},\mathbf{\beta}\in\{0,1\}^N$ and for any $l\in [N]$, we define $\tilde{\mathbf{x}}_l=(x_1,x_2,\ldots,x_{l-1},x_{l+1},x_{l+2},\ldots,x_N)$, and we can write
\begin{eqnarray}
\label{l1.18}
&&\sum_{\substack{\mathbf{x}\in\{0,1\}^N\\ w(\mathbf{x})\leq N\gamma_\text{err} }} \prod_{k=1}^N \lambda_{x_k\oplus \beta_k,\alpha_k}^k=\nonumber\\
&&\qquad \lambda_{\beta_l\oplus 1,\alpha_l}^l \Biggl(  \sum_{\substack{\tilde{\mathbf{x}}_l\in\{0,1\}^{N-1}\\ w(\tilde{\mathbf{x}}_l)\leq N\gamma_\text{err} -1}} \prod_{\substack{k=1\\k\neq l}}^N \lambda_{x_k\oplus \beta_k,\alpha_k}^k \Biggr)\nonumber\\
&&\qquad\!\!\quad+\lambda_{\beta_l,\alpha_l}^l\Biggl( \sum_{\substack{\tilde{\mathbf{x}}_l\in\{0,1\}^{N-1}\\ w(\tilde{\mathbf{x}}_l)\leq N\gamma_\text{err} }} \prod_{\substack{k=1\\ k\neq l}}^N \lambda_{x_k\oplus \beta_k,\alpha_k}^k \Biggr).
\end{eqnarray}
We see that the term inside the second bracket cannot be smaller than the term inside the first one. Since this holds for any choice of $l\in[N]$, in order to maximize the quantity on the left-hand side, we need to maximize $\lambda_{\beta_l,\alpha_l}^l$ for $l\in[N]$. Thus, we obtain from (\ref{l1.17}), (\ref{l1.20}) and (\ref{l1.18}) that
\begin{equation}
\label{l1.19}
\max_{\mathbf{a},\mathbf{b}\in\{0,1\}^N}\lVert D_{\mathbf{a},\mathbf{b}}\rVert= \sum_{\substack{\mathbf{x}\in\{0,1\}^N\\ w(\mathbf{x})\leq N\gamma_\text{err} }} \Biggl(\prod_{k=1}^N \lambda_{x_k}^k\Biggr).
\end{equation}
Similarly, reasoning as in the previous lines, it is straightforward to obtain from (\ref{l1.21}) and (\ref{l1.19}) that
\begin{eqnarray}
\label{l1.23}
\max_{\mathbf{a},\mathbf{b}\in\{0,1\}^N}\lVert D_{\mathbf{a},\mathbf{b}}\rVert&\leq& \sum_{\substack{\mathbf{x}\in\{0,1\}^N\\ w(\mathbf{x})\leq N\gamma_\text{err} }} \Biggl(\prod_{k=1}^N \lambda_{x_k}\Biggr)\nonumber\\
&=&\!\sum_{n=0}^{\lfloor N\gamma_\text{err}\rfloor}\!\!\! \bigl(\begin{smallmatrix} N\\ n\end{smallmatrix}\bigr)(\lambda_0)^{N-n}(\lambda_1)^n.
\end{eqnarray}

We upper bound the right-hand side of (\ref{l1.23}) using the Chernoff bound given by Proposition \ref{proposition3}. Let $X_k$ be a random variable taking value $X_k=i$ with probability $\lambda_i$, for $i\in\{0,1\}$ and $k\in[N]$. Let $X=\sum_{k=1}^N {X}_k$, whose average value is $E(X)=N\lambda_1$. We have
\begin{eqnarray}
\label{l1.24}
\sum_{n=0}^{\lfloor N\gamma_\text{err}\rfloor} \bigl(\begin{smallmatrix} N\\ n\end{smallmatrix}\bigr)(\lambda_0)^{N-n}(\lambda_1)^n&\leq&  \text{Pr}[X\leq N\gamma_\text{err} ]\nonumber\\
&\leq& e^{-\frac{N\lambda_1}{2}\bigl(1-\frac{\gamma_\text{err}}{\lambda_1}\bigr)^2},
\end{eqnarray}
for $0<\gamma_\text{err} < \lambda_1$, where in the second line we used the Chernoff bound (\ref{ly}), by taking $\epsilon=1-\frac{\gamma_\text{err}}{\lambda_1}$. By taking $\lambda_1=\lambda$, it follows from (\ref{l1.23}) and (\ref{l1.24}) that
\begin{equation}
\label{l1.25}
\max_{\mathbf{a},\mathbf{b}\in\{0,1\}^N}\lVert D_{\mathbf{a},\mathbf{b}}\rVert\leq e^{-\frac{N\lambda}{2}\bigl(1-\frac{\gamma_\text{err}}{\lambda}\bigr)^2},
\end{equation}
for $0<\gamma_\text{err} < \lambda$, as claimed.

We complete the proof below by showing that $\lambda_{1}=\lambda$ satisfies the condition (\ref{l1.21}). We write $\langle \phi_{b,h_k}^k\vert \phi_{c,h_k\oplus 1}^k\rangle=\omega^k_{b,c}e^{i\chi_{b,c}^k}$, with $\omega^k_{b,c}=\bigl\lvert  \langle \phi_{b,h_k}^k\vert \phi_{c,h_k\oplus 1}^k\rangle\bigr\rvert$, for $b,c\in\{0,1\}$ and $k\in[N]$. We define 
\begin{equation}
\label{1000}
R^k_{\pm,b,c}=\frac{P_{h_k\oplus 1}^k-P_{h_k}^k\pm \sqrt{(P_{1}^k-P_{0}^k)^2+4(\omega_{b,c}^k)^2P_{0}^kP_{1}^k}}{2\omega_{b,c}^k P_{h_k}^k},
\end{equation}
for $b,c\in\{0,1\}$ and $k\in[N]$. It is straightforward to verify that the density matrix $\rho_{b,c}^k$ given by (\ref{l1.13}) has eigenstates
\begin{eqnarray}
\label{l1.26}
&&\bigl\lvert e_{\pm ,b, c}^k\bigr\rangle=\nonumber\\
&&\qquad \frac{1}{\sqrt{1+\bigl(R^k_{\pm,b,c}\bigr)^2}}\Bigl(\bigl\lvert \phi_{b,h_k}^k\bigr\rangle+ R_{\pm,b,c}e^{-i\chi_{b,c}^k}  \bigl\lvert \phi_{c,h_k\oplus 1}^k\bigr\rangle \Bigr)\nonumber\\
\end{eqnarray}
with eigenvalues
\begin{equation}
\label{l1.27}
\lambda_{\pm ,b, c}^k= P_{h_k}^k\Bigl(1+ \omega^k_{b,c} R^k_{\pm,b,c}\Bigr),
\end{equation}
for $b,c\in\{0,1\}$ and $k\in[N]$. Thus, from the definition (\ref{l1.20}) and from (\ref{l1.27}), we obtain
\begin{equation}
\label{l1.28}
2\lambda_0^k = P_0^k+P_1^k+\sqrt{\bigl(P_1^k\!-\!P_0^k\bigr)^2\!+\!4P_0^kP_1^k\max_{b,c\in\{0,1\}}\{(\omega_{b,c}^k)^2\}},\\
\end{equation}
for $k\in[N]$. Since by assumption of the lemma, $\bigl\lvert  \langle \phi_{b,0}^k\vert \phi_{c,1}^k\rangle\bigr\rvert\leq O$ for some $O\in\bigl[\frac{1}{\sqrt{2}},1\bigr)$, using $P_1^k=1-P_0^k$ and $\omega^k_{b,c}=\bigl\lvert  \langle \phi_{b,h_k}^k\vert \phi_{c,h_k\oplus 1}^k\rangle\bigr\rvert$, we have from (\ref{l1.28}) that
\begin{equation}
\label{1001}
\lambda_0^k\leq A(P_0^k,O),
\end{equation}
for $k\in[N]$, where
\begin{equation}
\label{1002}
A(P_0,O)= \frac{1+\sqrt{1-4[1-O^2]P_0(1-P_0)}}{2}.
\end{equation}
Thus, since $P_0^k\in[P_0^{\text{LB}},P_0^{\text{UB}}]$ for $k\in[N]$, by defining $\lambda_0$ as an upper bound on the maximum of the function $A(P_0,O)$ evaluated over the range $P_0 \in[P_0^{\text{LB}},P_0^{\text{UB}}]$, with $\lambda_0 <1-\gamma_\text{err}$, and by defining $\lambda_1=1-\lambda_0$, the conditions given by (\ref{l1.21}) hold. We see from (\ref{1002}) that the function  $B(P_0,O)$ given by (\ref{1003}) satisfies $B(P_0,O)=1-A(P_0,O)$. Thus, we define $\lambda_1$ as a lower bound on the minimum of the function $B(P_0,O)$ evaluated over the range $P_0 \in[P_0^{\text{LB}},P_0^{\text{UB}}]$, and we define $\lambda=\lambda_1$.
In the case that $P_0^{\text{LB}}=\frac{1}{2}-\beta_\text{PB}$ and $P_0^{\text{UB}}=\frac{1}{2}+\beta_\text{PB}$ for $\beta_\text{PB}\geq 0$,
we define $\lambda_1$ as the minimum of the function $B(P_0,O)$ evaluated over the range $P_0 \in[P_0^{\text{LB}},P_0^{\text{UB}}]$,  and we define $\lambda=\lambda_1$. It is straightforward to see from (\ref{1002}) that in this case $\lambda_1=\frac{1}{2}\bigl(1-\sqrt{1-[1-O^2](1-4\beta_\text{PB}^2)}\bigr)$. The result follows by noting that, as stated in the lemma, $\lambda=\frac{1}{2}\bigl(1-\sqrt{1-[1-O^2](1-4\beta_\text{PB}^2)}\bigr)$.
\end{proof}

\begin{lemma}
\label{eigenvalue}
Let $\rho$ be a qubit density matrix given by
\begin{equation}
\label{N1018000}
\rho=\sum_{u=0}^1 \sum_{t=0}^1P_{\text{PB}}(u)P_{\text{PS}}(t)\lvert \phi_{tu}\rangle\langle\phi_{tu}\rvert,
\end{equation}
where $\{\lvert \phi_{tu}\rangle\}_{t,u\in\{0,1\}}$ is a set of qubit states satisfying $\langle \phi_{0u}\vert \phi_{1u}\rangle=0$ for $u\in\{0,1\}$, where the qubit orthonormal basis $\mathcal{D}_u=\{\lvert \phi_{tu}\rangle\}_{t=0}^1$ is the computational (Hadamard) basis within an uncertainty angle $\theta\in\bigl(0,\frac{\pi}{4}\bigr)$ on the Bloch sphere if $u=0$ ($u=1$); and where the binary probability distributions $\{P_{\text{PB}}(u)\}_{u=0}^1$ and $\{P_{\text{PS}}(t)\}_{t=0}^1$ satisfy $\frac{1}{2}-\beta_{\text{PB}}\leq P_{\text{PB}}(u)\leq \frac{1}{2}+\beta_{\text{PB}}$ and $\frac{1}{2}-\beta_{\text{PS}}\leq P_{\text{PS}}(u)\leq \frac{1}{2}+\beta_{\text{PS}}$ for $u\in\{0,1\}$, and for given parameters $\beta_{\text{PB}},\beta_{\text{PS}}\in\bigl(0,\frac{1}{2}\bigr)$. Let $\mu_+$ be the greatest eignevalues of $\rho$. It holds that
\begin{equation}
\label{0007}
\mu_+\leq \frac{1}{2}\bigl(1+h(\beta_\text{PS},\beta_\text{PB},\theta)\bigr),
\end{equation}
where 
\begin{equation}
\label{0006}
h(\beta_\text{PS},\beta_\text{PB},\theta)=2\beta_\text{PS}\sqrt{\frac{1}{2}\!+\!2\beta_\text{PB}^2\!+\!\Bigl(\!\frac{1}{2}-2\beta_\text{PB}^2\Bigr)\sin(2\theta)}.\\
\end{equation}
\end{lemma}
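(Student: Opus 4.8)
The plan is to work entirely in the Bloch-sphere picture. Write $\rho=\tfrac12\bigl(\mathds{1}+\vec r\cdot\vec\sigma\bigr)$ for the Bloch vector $\vec r$ of $\rho$; since $\rho$ is a qubit state, $\lvert\vec r\rvert\le 1$ and the greatest eigenvalue is $\mu_+=\tfrac12\bigl(1+\lvert\vec r\rvert\bigr)$. Thus it suffices to show $\lvert\vec r\rvert\le h(\beta_\text{PS},\beta_\text{PB},\theta)$. Let $\hat n_{tu}$ be the unit Bloch vector of $\lvert\phi_{tu}\rangle$, so that $\vec r=\sum_{u,t}P_{\text{PB}}(u)P_{\text{PS}}(t)\,\hat n_{tu}$. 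The orthogonality hypothesis $\langle\phi_{0u}\vert\phi_{1u}\rangle=0$ forces the two vectors $\hat n_{0u}$ and $\hat n_{1u}$ to be antipodal, $\hat n_{1u}=-\hat n_{0u}$, so the $t$-sum collapses and
\begin{equation}
\vec r=\bigl(P_{\text{PS}}(0)-P_{\text{PS}}(1)\bigr)\bigl[P_{\text{PB}}(0)\,\hat n_{00}+P_{\text{PB}}(1)\,\hat n_{01}\bigr].
\end{equation}

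Next I would bound the two factors separately. The prefactor satisfies $\bigl\lvert P_{\text{PS}}(0)-P_{\text{PS}}(1)\bigr\rvert\le 2\beta_\text{PS}$ directly from the bias bound on $P_\text{PS}$. For the bracketed vector, expand its squared norm:
\begin{equation}
\bigl\lvert P_{\text{PB}}(0)\hat n_{00}+P_{\text{PB}}(1)\hat n_{01}\bigr\rvert^2=P_{\text{PB}}(0)^2+P_{\text{PB}}(1)^2+2P_{\text{PB}}(0)P_{\text{PB}}(1)\,(\hat n_{00}\cdot\hat n_{01}).
\end{equation}
The key geometric input is that $\hat n_{00}$ lies within angle $\theta$ of the $+\hat z$ axis and $\hat n_{01}$ within angle $\theta$ of the $+\hat x$ axis, so the angle between them lies in $[\tfrac\pi2-2\theta,\tfrac\pi2+2\theta]$, which gives $\lvert\hat n_{00}\cdot\hat n_{01}\rvert\le\sin(2\theta)$ (this is the Bloch-sphere content of assumption 2 in Table \ref{tableimp}). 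Substituting, and writing $P_{\text{PB}}(0)=\tfrac12+x$, $P_{\text{PB}}(1)=\tfrac12-x$ with $\lvert x\rvert\le\beta_\text{PB}$, the right-hand side becomes $\tfrac12+2x^2+\bigl(\tfrac12-2x^2\bigr)\sin(2\theta)$, which is monotonically increasing in $x^2$ because $1-\sin(2\theta)>0$ for $\theta\in\bigl(0,\tfrac\pi4\bigr)$; hence it is maximized at $x^2=\beta_\text{PB}^2$, yielding the bound $\tfrac12+2\beta_\text{PB}^2+\bigl(\tfrac12-2\beta_\text{PB}^2\bigr)\sin(2\theta)$.

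Combining the two estimates gives $\lvert\vec r\rvert\le 2\beta_\text{PS}\sqrt{\tfrac12+2\beta_\text{PB}^2+\bigl(\tfrac12-2\beta_\text{PB}^2\bigr)\sin(2\theta)}=h(\beta_\text{PS},\beta_\text{PB},\theta)$, and therefore $\mu_+=\tfrac12(1+\lvert\vec r\rvert)\le\tfrac12\bigl(1+h(\beta_\text{PS},\beta_\text{PB},\theta)\bigr)$, as claimed. I expect the only point requiring care is the geometric step $\lvert\hat n_{00}\cdot\hat n_{01}\rvert\le\sin(2\theta)$: one must argue that two unit vectors each confined to a cone of half-angle $\theta$ about two orthogonal axes can have angular separation no less than $\tfrac\pi2-2\theta$ and no more than $\tfrac\pi2+2\theta$, and then translate this into the stated inner-product bound; everything else is routine algebra and the elementary monotonicity check in $x^2$.
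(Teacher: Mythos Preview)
Your proof is correct and follows essentially the same route as the paper's: both compute the Bloch vector of $\rho$, factor out $\lvert P_{\text{PS}}(0)-P_{\text{PS}}(1)\rvert\le 2\beta_{\text{PS}}$, write the remaining norm-squared as $P^2+(1-P)^2+2P(1-P)(\hat n_{00}\cdot\hat n_{01})=\tfrac12+2x^2+(\tfrac12-2x^2)(\hat n_{00}\cdot\hat n_{01})$ with $P=\tfrac12+x$, bound the inner product by $\sin(2\theta)$, and then maximize in $x^2$. The only difference is cosmetic: the paper first applies a unitary to place $\hat n_{00}=\hat z$ and $\hat n_{01}$ in the $z$--$x$ plane (so that $\hat n_{00}\cdot\hat n_{01}=\sin\xi$ with $\xi\in[-2\theta,2\theta]$), whereas you work directly with the abstract inner product, which is slightly more streamlined.
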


\begin{proof}
To simplify notation, we define $P=P_{\text{PB}}(0)$, $1-P=P_{\text{PB}}(1)$, $R=P_{\text{PS}}(0)$, and $1-R=P_{\text{PS}}(1)$. Since  applying a unitary operation $U$ on $\rho$ does not change its eigenvalues, we define $\rho'= U\rho U^\dagger$ and we compute an upperbound on the greatest eigenvalue of $\rho'$. Since $\{\lvert \phi_{tu}\rangle\}_{t=0}^1$ is a qubit orthonormal basis, for $u\in\{0,1\}$, we can choose $U$ such that $U\lvert \phi_{t0}\rangle=\lvert t\rangle$ and $U\lvert \phi_{t1}\rangle=\lvert \tilde{t}\rangle$, where $\{\lvert 0\rangle,\lvert 1\rangle\}$ is the computational basis, which has Bloch vectors in the $z$ axis, and where $\{\lvert \tilde{0}\rangle,\lvert \tilde{1}\rangle\}$ is another orthonormal basis with Bloch vectors on the $z-x$ plane. Thus, we see that from the statement of the lemma, we can choose $U$ such that $\lvert \tilde{0}\rangle$ has a Bloch vector with angle $\xi$ above the $x$ axis, towards the $z$ axis; hence, $\lvert \tilde{1}\rangle$ has a Bloch vector with angle $\xi$ below the $-x$ axis, towards the $-z$ axis; where $\xi\in[-2\theta,2\theta]$ for some $\theta\in\bigl(0,\frac{\pi}{4}\bigr)$.

Thus, using this notation, from (\ref{N1018000}), we obtain
\begin{equation}
\label{N1038000}
\rho'=P \rho_0+(1-P)\rho_1,
\end{equation}
where 
\begin{eqnarray}
\label{N1039000}
\rho_0&=&R\lvert 0\rangle\langle 0\rvert +(1-R)\lvert 1\rangle\langle 1\rvert,\nonumber\\
\rho_1&=&R\lvert \tilde{0}\rangle\langle \tilde{0}\rvert +(1-R)\lvert \tilde{1}\rangle\langle \tilde{1}\rvert.
\end{eqnarray}
The Bloch vector of $\rho_0$ is $(2R-1)\hat{z}$ and the Bloch vector of $\rho_1$ is $(2R-1)\bigl(\cos\xi\hat{x}+\sin\xi\hat{z}\bigr)$, where $\hat{x}$ and $\hat{z}$ are unit vectors pointing along the $x$ and $z$ axes, respectively. Thus, the Bloch vector of $\rho'$ is 
\begin{equation}
\label{N1040000}
\vec{r}=(2R-1)\bigl[\bigl(P+(1-P)\sin\xi\bigr)\hat{z}+(1-P)\cos\xi\hat{x}\bigr].
\end{equation}
The eigenvalues of $\rho'$, hence also of $\rho$, are given by $\mu_{\pm}=\frac{1}{2}\bigl(1\pm\lvert \vec{r}\rvert\bigr)$. Thus, from (\ref{N1040000}), the greatest eigenvalue is given by
\begin{equation}
\label{N1041000}
\mu_+=\frac{1}{2}\bigl(1+\lvert\vec{r}\rvert\bigr),
\end{equation}
where
\begin{equation}
\label{N1042000}
\lvert\vec{r}\rvert = \lvert 2R-1\rvert\sqrt{\bigl(P+(1-P)\sin\xi\bigr)^2+(1-P)^2\cos^2\xi}.
\end{equation}
Since from the statement of the lemma we have $\frac{1}{2}-\beta_\text{PS}\leq R\leq \frac{1}{2}+\beta_\text{PS}$, we see from (\ref{N1041000}) and (\ref{N1042000}) that for fixed values of $P$ and $\xi$, $\mu_+$ achieves its maximum if $R=\frac{1}{2}\pm \beta_\text{PS}$. Thus, it holds that 
\begin{equation}
\label{N1043000}
\mu_+ \leq \frac{1}{2}\bigl(1+2\beta_\text{PS} \sqrt{g(P,\xi)}\bigr),
\end{equation}
where 
\begin{equation}
\label{N1044000}
g(P,\xi)=\bigl(P+(1-P)\sin\xi\bigr)^2+(1-P)^2\cos^2\xi.
\end{equation}
We write $P=\frac{1}{2}+d$. From the statement of the lemma, we have $d\in[-\beta_\text{PB},\beta_\text{PB}]$ for some $\beta_\text{PB}\in\bigl(0,\frac{1}{2}\bigr)$. It follows from (\ref{N1044000}) that
\begin{equation}
\label{N1045000}
g(P,\xi)=\frac{1}{2}(1+\sin\xi)+2(1-\sin\xi)d^2.
\end{equation}
Since $\xi\in[-2\theta,2\theta]$ with $0<\theta<\frac{\pi}{4}$, we have $(1-\sin\xi)>0$. Thus, $g(P,\xi)$ is maximum when $d^2$ is maximum. Since $d\in[-\beta_\text{PB},\beta_\text{PB}]$, it follows that
\begin{eqnarray}
\label{N1046000}
g(P,\xi)&\leq&\frac{1}{2}(1+\sin\xi)+2(1-\sin\xi)\beta_\text{PB}^2\nonumber\\
&=&\frac{1}{2}+2\beta_\text{PB}^2+\Bigl(\frac{1}{2}-2\beta_\text{PB}^2\Bigr)\sin\xi.
\end{eqnarray}
Since $\beta_\text{PB}\in\bigl(0,\frac{1}{2}\bigr)$, we have $\frac{1}{2}-2\beta_\text{PB}^2>0$. Thus, since $\xi\in[-2\theta,2\theta]$ with $0<\theta<\frac{\pi}{4}$, the second line of (\ref{N1046000}) is maximum when $\xi=2\theta$. It follows that
\begin{equation}
\label{N1047000}
g(P,\xi)\leq \frac{1}{2}+2\beta_\text{PB}^2+\Bigl(\frac{1}{2}-2\beta_\text{PB}^2\Bigr)\sin(2\theta).
\end{equation}
Thus, from (\ref{N1043000}) and (\ref{N1047000}), we obtain (\ref{0007}):
\begin{equation}
\label{N1048000}
\mu_+\leq \frac{1}{2}\bigl(1+h(\beta_\text{PS},\beta_\text{PB},\theta)\bigr),
\end{equation}
where $h(\beta_\text{PS},\beta_\text{PB},\theta)$ is given by (\ref{0006}), as claimed.
\end{proof}

\section{Proof of Lemma \ref{lemma01}}
\label{idealunfapp}

\begin{lemma1*}
\label{lemma1repeated}
The quantum token schemes $\mathcal{IQT}_1$ and $\mathcal{IQT}_2$ are $\epsilon_{\text{unf}}-$unforgeable with
	\begin{equation}
	\label{000repeated}
	\epsilon_{\text{unf}}=\Bigl(\frac{1}{2}+\frac{1}{2\sqrt{2}}\Bigr)^N.
\end{equation}
\end{lemma1*}

\begin{proof}

In summary, the proof comprises  reducing a general cheating strategy by Alice in the token schemes $\mathcal{IQT}_1$ and $\mathcal{IQT}_2$ to the task of producing the $N-$bit strings $\mathbf{a}$ and $\mathbf{b}$ given in Lemma \ref{lemma0}, with $\gamma_{\text{err}}=0$, $\beta_{\text{PB}}=0$, and $\theta=0$, i.e. $O(\theta)\equiv O=\frac{1}{\sqrt{2}}$. As we show, then Alice's success probability is upper bounded by the quantity $ \max_{\mathbf{a},\mathbf{b}\in\{0,1\}^N}\lVert D_{\mathbf{a},\mathbf{b}}\rVert$, which for these parameters is upper bounded by $\bigl(\frac{1}{2}+\frac{1}{2\sqrt{2}}\bigr)^N$.

Consider the token schemes $\mathcal{IQT}_1$ and $\mathcal{IQT}_2$. In these token schemes Alice gives Bob a $N-$bit strings $\mathbf{d}=(d_1,\ldots,d_N)$ and a bit $c$. Using this information, honest Bob computes the $N-$bit string  $\tilde{\mathbf{d}}_i=(\tilde{d}_{i,1},\ldots,\tilde{d}_{i,N})$ in the causal past of the presentation point $Q_i$, where $\tilde{d}_{i,k}=d_k\oplus c \oplus i$, for $k\in[N]$ and $i\in\{0,1\}$. In a general cheating strategy $\mathcal{S}$, Alice applies a joint projective measurement on the quantum system $A$ of $N-$qubits in the state $\lvert \phi_{\mathbf{t}\mathbf{u}}\rangle_A=\bigotimes_{k=1}^N\lvert \phi_{t_ku_k}\rangle_{A_k}$ received from Bob and an ancilla $E$ of arbitrary finite Hilbert space dimension in a quantum state $\vert\chi\rangle_E$, and obtains the classical message $x=(\mathbf{d},c)$ of $N+1$ bits that she gives Bob within the causal pasts of $Q_0$ and $Q_1$ and two $N-$bit (token) strings $\mathbf{a}=(a_1\ldots,a_N)$ and $\mathbf{b}=(b_1,\ldots,b_N)$ that she gives to Bob at $Q_0$ and $Q_1$, respectively. Alice succeeds in making Bob validate these token strings at $Q_0$ and $Q_1$ if $\mathbf{a}_0=\mathbf{t}_0$ and $\mathbf{b}_1=\mathbf{t}_1$, where $\mathbf{x}_i$ is a restriction of the string $\mathbf{x}\in\{\mathbf{a},\mathbf{b},\mathbf{t}\}$ to the bit entries $x_k\in\Delta_i$, where $\Delta_i=\{k\in[N]\vert \tilde{d}_{i,k}=u_k\}$, for $i\in\{0,1\}$. Since $\tilde{d}_{i,k}=d_k\oplus c \oplus i$, for $k\in[N]$, we have that $\Delta_i=\{k\in[N]\vert u_k=d_k\oplus c \oplus i\}$, for $i\in\{0,1\}$.

Now consider the task of Lemma \ref{lemma0} with the following parameters. The states $\lvert\phi_{r,s}^k\rangle$ are the BB84 states: $\lvert \phi_{0,0}^k\rangle\equiv\lvert \phi_{00}\rangle=\lvert0\rangle$, $\lvert \phi_{1,0}^k\rangle\equiv\lvert \phi_{10}\rangle=\lvert 1\rangle$, $\lvert \phi_{0,1}^k\rangle\equiv\lvert \phi_{01}\rangle=\lvert +\rangle$, $\lvert \phi_{1,1}^k\rangle\equiv\lvert \phi_{11}\rangle=\lvert -\rangle$, for $k\in[N]$. It follows that $O=\frac{1}{\sqrt{2}}$. We also consider that $P_{\mathbf{s}}=\bigl(\frac{1}{2}\bigr)^N$ for $\mathbf{s}\in\{0,1\}^N$, i.e. $\beta_{\text{PB}}=0$. It follows from Lemma \ref{lemma0} that $\lambda=\frac{1}{2}-\frac{1}{2\sqrt{2}}$ and that 
		\begin{equation}
	\label{001}
	\max_{\mathbf{a},\mathbf{b}\in\{0,1\}^N}\lVert D_{\mathbf{a},\mathbf{b}}\rVert \leq\Bigl(\frac{1}{2}+\frac{1}{2\sqrt{2}}\Bigr)^N.
	\end{equation}
	We define the $N-$bit strings $\mathbf{r}=(r_1,\ldots,r_N)$, $\mathbf{s}=(s_1,\ldots,s_N)$ and $\mathbf{h}=(h_1,\ldots,h_N)$ in terms of the strings $\mathbf{t}$, $\mathbf{u}$ and $\mathbf{d}$ and of the bit $c$ of the token schemes $\mathcal{IQT}_1$ and $\mathcal{IQT}_2$ as follows: $r_k=t_k$, $s_k=u_k$ and $h_k=d_k\oplus c$, for $k\in[N]$. Thus, the set  $S_i^{\mathbf{h}}$ in Lemma \ref{lemma0} is the set $\Delta_i$ in the token schemes $\mathcal{IQT}_1$ and $\mathcal{IQT}_2$: $S_i^{\mathbf{h}}=\Delta_i$, for $i\in\{0,1\}$. It follows that the operator $D_{\mathbf{a},\mathbf{b}}$ in Lemma \ref{lemma0} can be associated to Alice's cheating strategy in the token schemes $\mathcal{IQT}_1$ and $\mathcal{IQT}_2$. We deduce this connection below.

We consider an entanglement-based version of the token schemes $\mathcal{IQT}_1$ and $\mathcal{IQT}_2$. Bob prepares a pair of qubits $B_kA_k$ in the Bell state $\lvert\Phi^+\rangle_{B_kA_k}=\frac{1}{\sqrt{2}}\bigl(\lvert 0\rangle\lvert0\rangle+\lvert1\rangle\lvert1\rangle\bigr)_{B_kA_k}$, sends the qubit $A_k$ to Alice, chooses $u_k\in\{0,1\}$ with probability $\frac{1}{2}$ and then measures the qubit $B_k$ in the basis $\mathcal{D}_{u_k}=\{\lvert \phi_{tu_k}\rangle\}_{t=0}^1$, obtaining the outcome $\lvert \phi_{t_ku_k}\rangle$ randomly, with Alice's qubit $A_k$ projecting into the same state, for $t_k\in\{0,1\}$. In a general cheating strategy $\mathcal{S}$, Alice introduces an ancillary quantum system $E$ of arbitrary finite Hilbert space dimension in a pure state $\lvert \chi\rangle_E$ and then applies a projective measurement on $AE$, with projector operators $\Pi_{x\mathbf{a}\mathbf{b}}$, where the measurement outcomes $x=(\mathbf{d},c)\in\{0,1\}^{N+1}$ and $\mathbf{a},\mathbf{b}\in\{0,1\}^{N}$ correspond to the classical messages that Alice gives Bob, and where $A=A_1\cdots A_N$. The probability that Alice obtains outcomes $x, \mathbf{a}$ and $\mathbf{b}$ following her strategy $\mathcal{S}$, for given values of  $\mathbf{u}$ and $\mathbf{t}$, is given by
\begin{equation}
\label{N1009000}
P_{\mathcal{S}}[x\mathbf{a}\mathbf{b}\vert \mathbf{t}\mathbf{u}]=\text{Tr}\Bigl[\Phi_{\mathbf{t}\mathbf{u}}\Pi_{x\mathbf{a}\mathbf{b}}\Bigr],
\end{equation}
where $\Phi_{\mathbf{t}\mathbf{u}}=\bigl(\lvert\phi_{\mathbf{t}\mathbf{u}}\rangle \langle\phi_{\mathbf{t}\mathbf{u}}\rvert\bigr)_{A}\otimes \bigl(\lvert \chi\rangle\langle\chi\rvert\bigr)_E$. We define the sets
\begin{eqnarray}
\label{0003}
\Gamma^x_{\mathbf{a}\mathbf{b}\mathbf{u}}\!&=&\!\bigl\{\!(\mathbf{a},\mathbf{b})\!\in\!\{0,1\}^N\!\!\times\!\!\{0,1\}^N\!\vert \mathbf{a}_0\!=\!\mathbf{t}_0,\!\mathbf{b}_1\!=\!\mathbf{t}_1\!\bigr\}\!,
\\
\label{0004}
\xi^x_{\mathbf{a}\mathbf{b}\mathbf{u}}&=&\bigl\{\mathbf{t}\in\{0,1\}^N\vert \mathbf{a}_0=\mathbf{t}_0,\mathbf{b}_1=\mathbf{t}_1\bigr\}.
\end{eqnarray}
It follows that Alice's success probability $P_{\mathcal{S}}$ satisfies
\begin{eqnarray}
\label{N1007000}
P_{\mathcal{S}}&=&\Bigl(\frac{1}{4}\Bigr)^N\sum_{x,\mathbf{u},\mathbf{t}}~\sum_{ (\mathbf{a},\mathbf{b})\in \Gamma^x_{\mathbf{t}\mathbf{u}}} P_{\mathcal{S}}[x\mathbf{a}\mathbf{b}\vert \mathbf{t}\mathbf{u}]\nonumber\\
&=&\Bigl(\frac{1}{4}\Bigr)^N\sum_{\mathbf{a},\mathbf{b},x,\mathbf{u}}~\sum_{ \mathbf{t}\in \xi^x_{\mathbf{a}\mathbf{b}\mathbf{u}}} P_{\mathcal{S}}[x\mathbf{a}\mathbf{b}\vert \mathbf{t}\mathbf{u}],
\end{eqnarray}
where in the first line we used (\ref{N1009000}) and (\ref{0003}); and where in the second line we used (\ref{0003}) and (\ref{0004}), and the fact that the string $\mathbf{z}_i$ has bit entries with labels from the set $\Delta_i$ satisfying $\Delta_0\cap\Delta_1=\emptyset$ and $\Delta_0\cup\Delta_1=[N]$, for $i\in\{0,1\}$ and $\mathbf{z}\in\{\mathbf{a},\mathbf{b},\mathbf{t}\}$.

We define the quantum state
\begin{equation}
\label{Nl1.1000}
\rho=\bigl(\Phi^+\bigr)_{BA}\otimes\bigl(\lvert \chi\rangle\langle \chi\rvert\bigr)_{E},
\end{equation}
where $B$ denotes the system held by Bob and where $\bigl(\Phi^+\bigr)_{BA}=\bigotimes_{k\in[N]}\bigl(\lvert\Phi^+\rangle\langle\Phi^+\rvert\bigr)_{B_kA_k}$. We define the positive semi definite (and Hermitian) operators
\begin{eqnarray}
\label{Nl1.7x000}
D_{x\mathbf{a}\mathbf{b}}&=&\Bigl(\frac{1}{2}\Bigr)^N\sum_{\mathbf{u}}\sum_{\mathbf{t}\in\xi^x_{\mathbf{a}\mathbf{b}\mathbf{u}}}\bigl(\phi_{\mathbf{t}\mathbf{u}}\bigr)_{B},\\
\label{Nl1.7000}
\tilde{P}&=&\sum_{x,\mathbf{a},\mathbf{b}} \bigl(D_{x\mathbf{a}\mathbf{b}}\bigr)_{B}\otimes \bigl(\Pi_{x\mathbf{a}\mathbf{b}}\bigr)_{AE},
\end{eqnarray}
where $\bigl(\phi_{\mathbf{t}\mathbf{u}}\bigr)_{B}=\bigotimes _{k\in[N]} \bigl(\lvert \phi_{t_ku_k}\rangle\langle\phi_{t_ku_k}\rvert\bigr)_{B_k}$ and where $\mathbf{u}$ runs over $\{0,1\}^N$, $x$ runs over $\{0,1\}^{N+1}$, and $\mathbf{a}$ and $\mathbf{b}$ run over $\{0,1\}^{N}$. It follows straightforwardly from (\ref{N1009000}) -- (\ref{Nl1.7000}) that
\begin{eqnarray}
\label{N1010000}
P_{\mathcal{S}}&=& \text{Tr} \bigl( \tilde{P}\rho\bigr)\nonumber\\
&\leq&\lVert \tilde{P}\rVert\nonumber\\
&=&\max_{x,\mathbf{a},\mathbf{b}} \bigl\lVert D_{x\mathbf{a}\mathbf{b}}\bigr\rVert,
\end{eqnarray}
where in the second line we used Proposition \ref{proposition1}; and where in the third line we used (\ref{Nl1.7000}) and Proposition \ref{proposition2}, since $\{\Pi_{x\mathbf{a}\mathbf{b}}\}_{x,\mathbf{a},\mathbf{b}}$ is a projective measurement acting on a finite dimensional Hilbert space and $\{D_{x\mathbf{a}\mathbf{b}}\}_{x,\mathbf{a},\mathbf{b}}$ is a finite set of positive semi definite operators acting on a finite dimensional Hilbert space. We note that the operator $D_{x\mathbf{a}\mathbf{b}}$ defined by (\ref{Nl1.7x000}) equals the operator $D_{\mathbf{a}\mathbf{b}}$ given in Lemma \ref{lemma0}, for the parameters $\gamma_{\text{err}}=0$, $O=\frac{1}{\sqrt{2}}$ and $\beta_{\text{PB}}=0$ that we are considering here. Thus, from (\ref{001}) and (\ref{N1010000}),
and because this bound does not depend on $x$, we obtain
\begin{equation}
\label{0005}
P_{\mathcal{S}}\leq \Bigl(\frac{1}{2}+\frac{1}{2\sqrt{2}}\Bigr)^N.
\end{equation}
Thus, the quantum token schemes $\mathcal{IQT}_1$ and $\mathcal{IQT}_2$ are $\epsilon_{\text{unf}}-$unforgeable with $\epsilon_{\text{unf}}=\bigl(\frac{1}{2}+\frac{1}{2\sqrt{2}}\bigr)^N$, as claimed.	
	\end{proof}

\section{Proofs of Lemmas \ref{robust1}, \ref{correct1} and \ref{Bob1}}
\label{proofoflemmas}

We recall that Lemmas \ref{robust1}, \ref{correct1} and \ref{Bob1} consider parameters $\gamma_\text{det},\gamma_\text{err}\in(0,1)$, allow for the experimental imperfections of Table \ref{tableimp} and make the assumptions of Table \ref{tableassu}.

\subsection{Proof of Lemma \ref{robust1}}
\begin{lemma2*}
\label{lemma2repeated}
If
\begin{equation}
				\label{rob1repeated}
		0<\gamma_\text{det} <P_{\text{det}},
	\end{equation}
then $\mathcal{QT}_1$ and $\mathcal{QT}_2$ are $\epsilon_{\text{rob}}-$robust with
\begin{equation}
			\label{rob2repeated}
			\epsilon_{\text{rob}}=e^{-\frac{P_{\text{det}}N}{2}\bigl(1-\frac{\gamma_\text{det} }{P_{\text{det}}}\bigr)^2.
			}
			\end{equation}
\end{lemma2*}

We note that the condition (\ref{rob1repeated}) is necessary to guarantee robustness, as in the limit $N\rightarrow \infty$ the number $n$ of quantum states $\lvert \psi_k\rangle$ reported by Alice as being successfully measured tends to its expectation value $E(n)=P_{\text{det}}N$ with probability tending to unity. Thus, if $P_{\text{det}}<\gamma_\text{det} $ then $n<\gamma_\text{det} N$ and Bob aborts with probability tending to unity for $N\rightarrow \infty$.

\begin{proof}[Proof of Lemma 2]
Let $P_{\text{abort}}$ be the probability that Bob aborts the token scheme if Alice and Bob follow the token scheme honestly. 
By definition of the token schemes $\mathcal{QT}_1$ and $\mathcal{QT}_2$, we have
\begin{equation}
\label{aaax1}
P_{\text{abort}}=\text{Pr}[n<\gamma_\text{det} N].
\end{equation}
We note that the expectation value of $n$ is $E(n)=NP_\text{det}$. From (\ref{rob1repeated}), we have that $0<1-\frac{\gamma_\text{det}}{P_{\text{det}}}<1$. Thus, we obtain from a Chernoff bound of Proposition \ref{proposition3} that
 \begin{equation}
 \label{aaax3}
 \text{Pr}[n<\gamma_\text{det} N]<e^{-\frac{P_{\text{det}}N}{2}\bigl(1-\frac{\gamma_\text{det} }{P_{\text{det}}}\bigr)^2}.
 \end{equation}
It follows from (\ref{aaax1}) and (\ref{aaax3}) that
 \begin{equation}
 \label{aaax4}
 P_{\text{abort}}<\epsilon_{\text{rob}},
 \end{equation}
 with $\epsilon_{\text{rob}}$ given by (\ref{rob2repeated}). 
It follows from (\ref{aaax4}) that the token schemes $\mathcal{QT}_1$ and $\mathcal{QT}_2$ are $\epsilon_{\text{rob}}$-robust with $\epsilon_{\text{rob}}$ given by (\ref{rob2repeated}).

\end{proof}

\subsection{Proof of Lemma \ref{correct1}}
\begin{lemma3*}
\label{lemma3repeated}
If
\begin{eqnarray}
				\label{cor1repeated}
				0&<&\frac{\gamma_\text{err}}{2}<E<\gamma_\text{err},\nonumber\\
		0&<&\nu_\text{cor}<\frac{P_{\text{det}}(1-2\beta_\text{PB})}{2},
	\end{eqnarray}
then $\mathcal{QT}_1$ and $\mathcal{QT}_2$ are $\epsilon_{\text{cor}}-$correct with
\begin{equation}
			\label{cor2repeated}
			\epsilon_{\text{cor}}=e^{-\frac{P_{\text{det}}(1-2\beta_\text{PB})N}{4}\bigl(1-\frac{2\nu_\text{cor}}{P_{\text{det}}(1-2\beta_\text{PB})}\bigr)^2}+e^{-\frac{E\nu_\text{cor} N}{3}\bigl(\frac{\gamma_\text{err}}{E}-1\bigr)^2}.
			\end{equation}
\end{lemma3*}

We recall that $E=\max_{t,u}\{E_{tu}\}$, where $E_{tu}$ is Alice's error rate when Bob prepares states $\lvert\phi_{tu}^k\rangle$ and Alice measures in the basis of preparation by Bob, for $t,u\in\{0,1\}$. The condition
\begin{equation}
E_{\text{min}}<\gamma_\text{err},
\end{equation}
with $E_{\text{min}}=\min_{t,u} \{E_{tu}\}$, is necessary to guarantee correctness. To see this, suppose that $E_{\text{min}}>\gamma_\text{err}$. In the limit $N\rightarrow \infty$, we have $\lvert \Delta_b\rvert \rightarrow \infty$, in which case the number of error outcomes $n_{\text{errors}}$ when Alice measures in the same basis of preparation by Bob satisfies $n_{\text{errors}}\geq E_{\text{min}} \lvert \Delta_b\rvert>\gamma_\text{err} \lvert \Delta_b\rvert $ with probability tending to unity. Thus, with probability tending to unity, Bob does not accept Alice's token as valid, for $N\rightarrow \infty$, if $E_{\text{min}}>\gamma_\text{err}$.

\begin{proof}[Proof of Lemma 3]
Let $P_{\text{error}}$ be the probability that Bob does not accept Alice's token as valid if Alice and Bob follow the token scheme honestly. 
By definition of the token schemes $\mathcal{QT}_1$ and $\mathcal{QT}_2$, we have
\begin{equation}
\label{aaax1.2}
P_{\text{error}}=\sum_{\lvert \Delta_b\rvert=0}^N P_{\text{error}}(\lvert \Delta_b\rvert)\text{Pr}(\lvert \Delta_b\rvert),
\end{equation}
where
\begin{equation}
\label{aaax1.1}
P_{\text{error}}(\lvert \Delta_b\rvert)=\text{Pr}\Bigl[n_{\text{errors}}>\lvert \Delta_b\rvert\gamma_\text{err}\big\vert \lvert \Delta_b\rvert\Bigr],
\end{equation}
and where $n_{\text{errors}}$ is the number of bit errors in the substring $\mathbf{x}_b$ of the the token $\mathbf{x}$ that Alice presents to Bob at $Q_b$, compared to the bits of the substring $\mathbf{r}_b$ of $\mathbf{r}$ encoded by Bob.
From 
(\ref{aaax1.2}),
we have
\begin{eqnarray}
\label{aaay3.1}
P_{\text{error}}&=&\sum_{\lvert \Delta_b\rvert<\nu_\text{cor} N}P_{\text{error}}(\lvert \Delta_b\rvert)\text{Pr}(\lvert \Delta_b\rvert)\nonumber\\
&&\quad+ \sum_{\lvert \Delta_b\rvert\geq \nu_\text{cor} N}P_{\text{error}}(\lvert \Delta_b\rvert)\text{Pr}(\lvert \Delta_b\rvert)\nonumber\\
&\leq&\text{Pr}\big[ \lvert \Delta_b\rvert < \nu_\text{cor} N\bigr]\nonumber\\
&&\quad+ \sum_{\lvert \Delta_b\rvert\geq \nu_\text{cor} N}P_{\text{error}}(\lvert \Delta_b\rvert)\text{Pr}(\lvert \Delta_b\rvert).
\end{eqnarray}
We show below that
\begin{equation}
\label{aaay3.2x}
P_{\text{error}}(\lvert \Delta_b\rvert )<e^{-\frac{E \lvert \Delta_b\rvert }{3}\bigl(\frac{\gamma_\text{err}}{E}-1\bigr)^2},
\end{equation}
and that
\begin{equation}
\label{aaay3.2}
\text{Pr}[\lvert \Delta_b\rvert\!<\!\nu_\text{cor} N]\leq e^{-\frac{P_{\text{det}}(1-2\beta_\text{PB})N}{4}\bigl(1-\frac{2\nu_\text{cor}}{P_{\text{det}}(1-2\beta_\text{PB})}\bigr)^2}.
\end{equation}
From (\ref{aaay3.1}) -- (\ref{aaay3.2}), and noting that $e^{-\frac{E \lvert \Delta_b\rvert}{3}\bigl(\frac{\gamma_\text{err}}{E}-1\bigr)^2}$ decreases with increasing $\lvert \Delta_b\rvert$, we obtain
\begin{equation}
			\label{neweq2}
			P_\text{error}<\epsilon_\text{cor},
						\end{equation}
			with $\epsilon_\text{cor}$ given by (\ref{cor2repeated}). Thus, the token schemes $\mathcal{QT}_1$ and $\mathcal{QT}_2$ are $\epsilon_\text{cor}-$correct with $\epsilon_\text{cor}$ given by (\ref{cor2repeated}), as claimed.

We show (\ref{aaay3.2x}). 
Let us assume for now that $E_{tu}=E$ for $t,u\in\{0,1\}$. Given $\lvert \Delta_b\rvert$, we note that the expectation value of $n_\text{error}$ equals $E\lvert \Delta_b\rvert$. From (\ref{cor1repeated}), we have $0<\frac{\gamma_{\text{err}}}{E}-1<1$. Thus, from a Chernoff bound of Proposition \ref{proposition3}, we have
\begin{equation}
\label{aaay2}
\text{Pr}\bigl[n_{\text{errors}}>\lvert \Delta_b\rvert \gamma_\text{err}\big\vert \lvert \Delta_b\rvert \bigr]<e^{-\frac{E \lvert \Delta_b\rvert }{3}\bigl(\frac{\gamma_\text{err}}{E}-1\bigr)^2}.
\end{equation}

The function $f(E)=E\bigl(\frac{\gamma_\text{err}}{E}-1\bigr)^2$ is decreasing with increasing $E$, because from (\ref{cor1repeated}) we have that $f'(E)=1-(\frac{\gamma_{\text{error}}}{E})^2<0$. 
Let $E_{\text{max}}\geq E$. Thus, from 
(\ref{aaay2}), we have
\begin{equation}
\label{aaay2.2}
\text{Pr}\bigl[n_{\text{errors}}\!>\!\lvert \Delta_b\rvert \gamma_\text{err}\big\vert \lvert \Delta_b\rvert \bigr]\!<\!e^{-\frac{E_{\text{max}} \lvert \Delta_b\rvert }{3}\bigl(\!\frac{\gamma_\text{err}}{E_{\text{max}}}-1\!\bigr)^2}.
\end{equation}
It follows from (\ref{aaax1.1}) and (\ref{aaay2.2}) that 
\begin{equation}
\label{aaay3.2xx}
P_{\text{error}}(\lvert \Delta_b\rvert )<e^{-\frac{E_{\text{max}} \lvert \Delta_b\rvert }{3}\bigl(\frac{\gamma_\text{err}}{E_{\text{max}}}-1\bigr)^2}.
\end{equation}
Since in general we have $E_{tu}\leq E$, for $t,u\in\{0,1\}$, we can replace $E_\text{max}$ by $E$ in (\ref{aaay3.2xx}) and obtain (\ref{aaay3.2x}).

We show (\ref{aaay3.2}). Since for the quantum state $\lvert \psi_k\rangle$, with $g(k)=j$, for $k\in\Lambda $ and $j\in[n]$, $\mathcal{B}$ encodes the bit $t_k=r_j$ in the basis labelled by $u_k=s_j$, with $u_k$ chosen with probability $P^{k}_{\text{PB}}(u_k)$ satisfying $\frac{1}{2}-\beta_\text{PB}\leq P^{k}_{\text{PB}}(u_k)\leq \frac{1}{2}+\beta_\text{PB}$ for $t_k,u_k\in\{0,1\}$, the expectation value $E(\lvert \Delta_b\rvert)$ of $\lvert \Delta_b\rvert$ satisfies
\begin{equation}
\label{neweq3}
E(\lvert \Delta_b\rvert)\geq P_{\text{det}}N\Bigl(\frac{1}{2}-\beta_\text{PB}\Bigr).
\end{equation}
This is easily seen as follows. By the definition of $\Delta_b$ given in the token schemes $\mathcal{QT}_1$ and $\mathcal{QT}_2$, we see that $\lvert\Delta_b\rvert$ corresponds to the number of labels $k\in\Lambda$ satisfying $g(k)=j\in[n]$ for which it holds that $y_j=s_j$, where we recall $y_j$ and $s_j$ are the bits labelling the qubit measurement basis by Alice and the preparation basis by Bob, respectively. Thus, $E(\lvert \Delta_b\rvert)=P_{\text{det}}N\text{Pr}[y_j=s_j]=P_{\text{det}}N\sum_{a=0}^1 \text{Pr}[s_j=a]\text{Pr}[y_j=a] \geq P_{\text{det}}N\Bigl(\frac{1}{2}-\beta_\text{PB}\Bigr)$, as claimed. We define
\begin{equation}
\label{aaay3.5}
 \epsilon=1-\frac{2\nu_\text{cor}}{P_{\text{det}}(1-2\beta_\text{PB})}.
\end{equation}
From the condition (\ref{cor1repeated}), we have $0<\epsilon<1$. It follows that 
\begin{equation}
\label{aaay3.4}
\nu_\text{cor} N=(1-\epsilon)P_{\text{det}}N\Bigl(\frac{1}{2}-\beta_\text{PB}\Bigr)=(1-\epsilon')E(\lvert \Delta_b\rvert),
\end{equation}
for some $\epsilon'$ satisfying $0<\epsilon\leq \epsilon'<1$. Thus, from the Chernoff bound of Proposition \ref{proposition3}, we have
\begin{eqnarray}
\label{aaay3.3}
\text{Pr}[\lvert \Delta_b\rvert<\nu_\text{cor} N]&=&\text{Pr}[\lvert \Delta_b\rvert<(1-\epsilon')E(\lvert \Delta_b\rvert)]\nonumber\\
&\leq&e^{-\frac{E(\lvert \Delta_b\rvert)}{2}\epsilon'^2}\nonumber\\
&\leq&e^{-\frac{P_{\text{det}}N(1-2\beta_\text{PB})}{4}\epsilon^2}\nonumber\\
&=&e^{-\frac{P_{\text{det}}(1-2\beta_\text{PB})N}{4}\bigl(1-\frac{2\nu_\text{cor}}{P_{\text{det}}(1-2\beta_\text{PB})}\bigr)^2},\nonumber\\
\end{eqnarray}
where in the first line we used (\ref{aaay3.4}); in the second line we used the Chernoff bound of Proposition \ref{proposition3}; in the third line we used (\ref{neweq3}) and $0<\epsilon\leq\epsilon'$ ; and in the last line we used (\ref{aaay3.5}).

\end{proof}

\subsection{Proof of Lemma \ref{Bob1}}

\begin{lemma4*}
\label{lemma4repeated}
$\mathcal{QT}_1$ and  $\mathcal{QT}_2$ are $\epsilon_{\text{priv}}-$private with
\begin{equation}
\label{Bo1repeated}
\epsilon_{\text{priv}}=\beta_{\text{E}}.
\end{equation}
\end{lemma4*}

\begin{proof}
From assumption C (see Table \ref{tableassu}), the set $\Lambda$ of labels transmitted to $\mathcal{B}$ in step 2 of $\mathcal{QT}_1$
and $\mathcal{QT}_1$ gives $\mathcal{B}$ no information about the string $W$
and the bit $z$. Furthermore, from assumption E (see Table \ref{tableassu}), 
$\mathcal{B}$ cannot use degrees of freedom
not previously agreed for the transmission of the quantum states to affect, or obtain information about, the statistics of the quantum measurement devices of $\mathcal{A}$. Moreover, in our setting, we assume that Alice's laboratories are secure and that communication among Alice's agents is made through secure and authenticated classical channels. It follows from these assumptions that the only way in which Bob can obtain information about Alice's bit $b$ before she presents the token is via the message $c=z\oplus b$.

In order to prove our result, let us assume that Bob knows Alice's probability distributions $P_{\text{E}}(z)$. Since this cannot make it more difficult for Bob to guess Alice's bit $b$, we can assume this without loss of generality. In the ideal case that the probability distribution $P_{\text{E}}(z)$ is totally random Bob cannot obtain any information about $b$. However, as stated by our allowed experimental imperfection 7 (see Table \ref{tableimp}), this probability distribution is only close to random: 
\begin{eqnarray}
\label{aaaaaa}
\frac{1}{2}-\beta_{\text{E}}\leq P_{\text{E}}(z)&\leq& \frac{1}{2}+\beta_{\text{E}},
\end{eqnarray}
for a small parameters $\beta_{\text{E}}>0$, for $z\in\{0,1\}$. Thus, Bob can guess $b$ with some probability greater than $\frac{1}{2}$.

Let $P_{\text{bit}}^{(i)}(c)$ be the probability distribution for the bit $c$ that $\mathcal{A}$ sends $\mathcal{B}$, when $b=i$, for $i,c\in\{0,1\}$. Since $c=b\oplus z$, we have
\begin{equation}
\label{aaaaaa1}
P_{\text{bit}}^{(i)}(c)=P_{\text{E}}(z=i\oplus c),
\end{equation}
for $c,i\in\{0,1\}$.

For any two probability distributions $P(x)$ and $Q(x)$ over a set of values $x\in\mathcal{X}$, the maximum probability $P_{\text{max}}$ to distinguish them is given by $P_{\text{max}}=\frac{1}{2}+\frac{1}{2}\lVert P-Q\rVert$, where $\lVert P - Q\rVert $ is their variational distance. Thus, Bob's probability $P_{\text{Bob}}$ to guess Alice's bit $b$ is upper bounded by
\begin{eqnarray}
\label{aaaaaaz22}
P_{\text{Bob}}&\leq& \frac{1}{2}+\frac{1}{2}\bigl\lVert P_{\text{bit}}^{(0)} - P_{\text{bit}}^{(1)}\bigr\rVert \nonumber\\
&=&\frac{1}{2}+\frac{1}{4}\sum_{c=0}^1\bigl\lvert P_{\text{bit}}^{(0)}(c)-P_{\text{bit}}^{(1)}(c)\bigr\rvert\nonumber\\
&=&\frac{1}{2}+\frac{1}{4}\sum_{c=0}^1\bigl\lvert P_{\text{E}}(z=c) - P_{\text{E}}(z= c\oplus 1)\bigr\rvert\nonumber\\
&\leq&\frac{1}{2}+\frac{1}{4}\sum_{c=0}^1\bigl\lvert 2\beta_\text{E}\bigr\rvert\nonumber\\
&=&\frac{1}{2}+\beta_\text{E},
\end{eqnarray}
where in the second line we used the definition of the variational distance; in the third line we used (\ref{aaaaaa1}); and in the fourth line we used (\ref{aaaaaa}).

It follows from (\ref{aaaaaaz22}) that the token schemes $\mathcal{QT}_1$ and $\mathcal{QT}_2$ are $\epsilon_{\text{priv}}$-private, with $\epsilon_{\text{priv}}$ given by (\ref{Bo1repeated}), as claimed.
\end{proof}

\section{Proof of Lemma \ref{lemmax}}
\label{appnewlemma}

\begin{lemma5*}
\label{lemmaxrepeated}
Suppose that Bob sends Alice $N$ photon pulses, labelled by $k\in[N]$. Let the $k$th pulse have $L_k$ photons. Let $\rho$ be an arbitrary quantum state prepared by Bob in the polarization degrees of freedom of the photons sent to Alice, which can be arbitrarily entangled among all photons in all pulses and can also be arbitrarily entangled with an ancilla held by Bob. Let $\mathcal{D}_0$ and $\mathcal{D}_1$ be two arbitrary qubit orthogonal bases. Suppose that either Alice uses the setup of Fig. \ref{setup} with reporting strategy 1 to implement the quantum token scheme $\mathcal{QT}_1$ (see Table \ref{real1}), or Alice uses the setup of \damian{Fig. \ref{setup}} with reporting strategy 2 to implement the quantum token scheme $\mathcal{QT}_2$ (see Table \ref{real2}). Suppose also that assumptions E and F (see Table \ref{tableassu}) hold.
For $k\in[N]$, let $m_k=1$ if Alice assigns a successful measurement to the $k$th pulse and $m_k=0$ otherwise; let $w_k=0$ ($w_k=1$) if Alice assigns a measurement basis to the $k$th pulse in the basis $\mathcal{D}_{0}$ ($\mathcal{D}_{1}$). If Alice uses the setup of Fig. \ref{setup} and reporting strategy 1 to implement the scheme $\mathcal{QT}_1$, without loss of generality, suppose also that Alice sets $w_k=0$ with unit probability, if $m_k=0$, for $k\in[N]$. Let $m=(m_1,\ldots,m_N)$, $w=(w_1,\ldots,w_N)$ and $L=(L_1,\ldots,L_N)$.

If Alice uses the setup of Fig. \ref{setup} with reporting strategy 1 to implement the scheme $\mathcal{QT}_1$, then the probability that Alice reports the string  $m$ to Bob and assigns the string of measurement bases $w$, given $\rho$  and $L$, is
\begin{equation}
\label{hhhrepeated}
P^{(1)}_{\text{rep}}(m,w\lvert \rho, L)=\prod_{k=1}^N G^{(1)}_{m_k,w_k}(d_0,d_1,\eta,L_k),
\end{equation}
where
\begin{eqnarray}
\label{hhh0repeated}
G^{(1)}_{1,b}(d_0,d_1,\eta,a)&=&(1-d_0)(1-d_1)\Bigl(1-\frac{\eta}{2}\Bigr)^{a}\nonumber\\
&&\qquad-(1-d_0)^2(1-d_1)^2(1-\eta)^{a},\nonumber\\
G^{(1)}_{0,0}(d_0,d_1,\eta,a)&=&1-2G^{(1)}_{1,0}(d_0,d_1,\eta,a),\nonumber\\
G^{(1)}_{0,1}(d_0,d_1,\eta,a)&=&0,
\end{eqnarray}
for $b\in\{0,1\}$, $m,w\in\{0,1\}^N$ and $a,L_1,\ldots,L_N\in\{0,1,2,\ldots\}$. Furthermore, the probability $P_\text{MB}(w_k)$ that Alice assigns a measurement in the basis $\mathcal{D}_{w_k}$, conditioned on the value $m_k=1$, for the $k$th pulse, satisfies  
\begin{equation}
\label{h000.1repeated}
P_\text{MB}(w_k)=\frac{1}{2},
\end{equation}
for $w_k\in\{0,1\}$ and $k\in[N]$.

If Alice uses the setup of \damian{Fig. \ref{setup}} with reporting strategy 2 to implement the scheme $\mathcal{QT}_2$, then the probability that Alice reports the string  $m$ to Bob, given $\rho$, $w$ and $L$, is
\begin{equation}
\label{gggrepeated}
P^{(2)}_{\text{rep}}(m\lvert w, \rho, L)=\prod_{k=1}^N G^{(2)}_{m_k}(d_0,d_1,\eta,L_k),
\end{equation}
where
\begin{eqnarray}
\label{ggg0repeated}
G^{(2)}_0(d_0,d_1,\eta,a)&=&(1\!-\!d_0)(1\!-\!d_1)(1\!-\!\eta)^{a},\nonumber\\
G^{(2)}_1(d_0,d_1,\eta,a)&=&1-(1\!-\!d_0)(1\!-\!d_1)(1\!-\!\eta)^{a},
\end{eqnarray}
for $m,w\in\{0,1\}^N$ and $a,L_1,\ldots,L_N\in\{0,1,2,\ldots\}$.

In any of the two cases, the message $m$ gives Bob no information about the bit entries $w_k$ for which $m_k=1$. Equivalently, the set $\Lambda\subset[N]$ of labels transmitted to Bob in step 2 of $\mathcal{QT}_1$ and $\mathcal{QT}_2$ gives Bob no information about the string $W$ and the bit $z$.

\end{lemma5*}

\begin{proof}
We note from (\ref{hhhrepeated}) and (\ref{hhh0repeated}) that if Alice uses the setup of Fig. \ref{setup} with reporting strategy 1 to implement the scheme $\mathcal{QT}_1$, then Alice's probability $P^{(1)}_{\text{rep}}(m,w\lvert \rho, L)$ to report the message $m$ to Bob and assign measurement basis with string of labels $w$ is the same for all strings $w$ satisfying that $w_k=0$ if $m_k=0$, for arbitrary fixed given values of $m$, $\rho$ and $L$. It follows from this and from assumption E (see Table \ref{tableassu}) that the message $m$ gives Bob no information about the bit entries $w_k$ for which $m_k=1$.

Similarly, we note from (\ref{gggrepeated}) and (\ref{ggg0repeated}) that if Alice uses the setup of \damian{Fig. \ref{setup}} with reporting strategy 2 to implement the scheme $\mathcal{QT}_2$, then Alice's probability $P^{(2)}_{\text{rep}}(m\lvert w,\rho, L)$ to report the message $m$ to Bob, given that she
applied quantum measurements with string of labels $w$, is the same for all strings $w$, for arbitrary fixed given values of $m$, $\rho$ and $L$. It follows from this and from assumption E (see Table \ref{tableassu}) that the message $m$ gives Bob no information about any bit entries $w_k$ of $w$. We note that in the scheme $\mathcal{QT}_2$, $w_k=z$ for $k\in[N]$, and for some bit $z$ chosen by Alice. Thus, it follows that the message $m$ gives Bob no information about the bit $z$.

Alice sending the message $m$ to Bob is equivalent to Alice sending the set $\Lambda$ of labels $k\in[N]$ for which $m_k=1$, i.e. the labels of pulses that were successfully measured by Alice. Furthermore, the string $W$ is defined on the set of labels $\Lambda$, and has entries equal to $w_k$, for $k\in\Lambda$, i.e. for $k\in[N]$ satisfying $m_k=1$. It follows that the set $\Lambda\subset[N]$ of labels transmitted to Bob in step 2 of $\mathcal{QT}_1$ and $\mathcal{QT}_2$ gives Bob no information about the string $W$ and the bit $z$.

We prove (\ref{hhhrepeated}). We suppose that Alice uses the setup of Fig. \ref{setup} with reporting strategy 1 to implement the scheme $\mathcal{QT}_1$, and that assumptions E and F of Table \ref{tableassu} hold.
It is straightforward to obtain
\begin{equation}
\label{hhh1}
P^{(1)}_{\text{rep}}(m,w\lvert \rho, L)=\prod_{k=1}^N P^{(1,k)}_{\text{rep}}(m_k,w_k\lvert  \rho, L,\tau_k),
\end{equation}
where $\tau_k=(m_0,w_0,\ldots,m_{k-1},w_{k-1})$, $P^{(1,k)}_{\text{rep}}(m_k,w_k\lvert \rho, L,\tau_k)$ is the probability that Alice reports the bit message $m_k$ to Bob and assigns a measurement in the basis $\mathcal{D}_{w_k}$ for the $k$th pulse, given $\rho$, $L$,  and $\tau_k$, for $k\in[N]$; and where without loss of generality we define $m_0=w_0=1$.

From Lemma 12 of Ref. \cite{BCDKPG21} and the definition (\ref{hhh0repeated}), after measuring the first pulse received from Bob, Alice reports the message $m_1=1$ to Bob and assigns a measurement outcome in the basis $\mathcal{D}_{w_1}$ with probability 
\begin{equation}
\label{hhh2}
P^{(1,1)}_{\text{rep}}(1,w_1\lvert  \rho, L,\tau_1)=G^{(1)}_{1,w_1}(d_0,d_1,\eta,L_1),
\end{equation}
for $w_1\in\{0,1\}$, which only depends on the dark count probabilities $d_0$ and $d_1$, on the detector efficiency $\eta$, and on the number of photons $L_1$ of the first pulse, for an arbitrary quantum state $\rho$, which can be arbitrarily entangled with an ancilla held by Bob. We can consider this ancilla to include the pulses labelled by $2,3,\ldots,N$.

Similarly, for $k\in\{2,3,\ldots,N\}$, after measuring the pulses with labels $1,2,\ldots,k-1$, Alice obtains a value $\tau_{k}$ according to her obtained detection statistics for these pulses, and the joint quantum state of the pulses with labels $k,k+1,\ldots,N$ and any ancilla held by Bob changes to some quantum state $\rho_k$. Then, from Lemma 12 of Ref. \cite{BCDKPG21} and the definition (\ref{hhh0repeated}), after measuring the $k$th pulse, Alice reports the message $m_k=1$ to Bob and assigns a measurement outcome in the basis $\mathcal{D}_{w_k}$ with probability 
\begin{equation}
\label{hhh3}
P^{(1,k)}_{\text{rep}}(1,w_{k}\lvert  \rho, L,\tau_{k})=G^{(1)}_{1,w_k}(d_0,d_1,\eta,L_{k}),
\end{equation}
for $w_{k}\in\{0,1\}$, which only depends on the dark count probabilities $d_0$ and $d_1$, on the detector efficiency $\eta$, and on the number of photons $L_{k}$ of the $k$th pulse; in particular, $P^{(1,k)}_{\text{rep}}(1,w_{k}\lvert  \rho, L,\tau_{k})$ does not depend on the quantum state $\rho_k$, which can be arbitrarily entangled with an ancilla held by Bob. In this case, we can consider this ancilla to include the pulses labelled by $k+1,k+2,\ldots,N$. We note that since $P^{(1,k)}_{\text{rep}}(1,w_{k}\lvert  \rho, L,\tau_{k})$ does not depend on $\rho_k$, it does not depend on $\rho$, apart from the number of photons $L_{k}$ of the $k$th pulse, and it does not depend on $\tau_k$ either.

By definition of Alice's reporting strategy 1, we have
\begin{eqnarray}
\label{hhh4}
P^{(1,k)}_{\text{rep}}(0,0\lvert  \rho, L,\tau_{k})&=&1-P^{(1,k)}_{\text{rep}}(1,0\lvert  \rho, L,\tau_{k})\nonumber\\
&&\qquad-P^{(1,k)}_{\text{rep}}(1,1\lvert  \rho, L,\tau_{k})\nonumber\\
&=&1-2G^{(1)}_{1,0}(d_0,d_1,\eta,L_{k})\nonumber\\
&=&G^{(1)}_{0,0}(d_0,d_1,\eta,L_{k}),
\end{eqnarray}
for $k\in[N]$, where in the second line we used (\ref{hhh2}) and (\ref{hhh3}), and the definition (\ref{hhh0repeated}); and in the third line we used (\ref{hhh0repeated}) again. Similarly, by definition of Alice's reporting strategy and from the definition (\ref{hhh0repeated}), we have
\begin{equation}
\label{hhh5}
P^{(1,k)}_{\text{rep}}(0,1\lvert  \rho, L,\tau_{k})=G^{(1)}_{0,1}(d_0,d_1,\eta,L_{k}),
\end{equation}
for $k\in[N]$. Thus, the claimed result (\ref{hhhrepeated}) follows straightforwardly from (\ref{hhh1}) -- (\ref{hhh4}).

We prove (\ref{h000.1repeated}). Let $P^{(1,k)}_\text{MB}(w_k\lvert m_k=1, \rho, L,\tau_{k})$ be the probability that Alice assigns a measurement in the basis $\mathcal{D}_{w_k}$, conditioned on the value $m_k=1$, for the $k$th pulse, given the values of $\rho$, $L$ and $\tau_k$, for $k\in[N]$. We have
\begin{eqnarray}
\label{hhh6}
&&P^{(1,k)}_\text{MB}(w_k\lvert m_k=1,  \rho, L,\tau_{k})\nonumber\\
&&\qquad\qquad=\frac{P^{(1,k)}_{\text{rep}}(1,w_k\lvert  \rho, L,\tau_{k})}{P^{(1,k)}_{\text{rep}}(1,0\lvert  \rho, L,\tau_{k})+P^{(1,k)}_{\text{rep}}(1,1\lvert  \rho, L,\tau_{k})}\nonumber\\
&&\qquad\qquad=\frac{G^{(1)}_{1,w_k}(d_0,d_1,\eta,L_k)}{2G^{(1)}_{1,w_k}(d_0,d_1,\eta,L_k)}\nonumber\\
&&\qquad\qquad=\frac{1}{2},
\end{eqnarray}
for $w_k\in\{0,1\}$ and $k\in[N]$; where in the second line we used (\ref{hhh2}) and (\ref{hhh3}), and the definition (\ref{hhh0repeated}); and in the third line we used that $G^{(1)}_{1,w_k}(d_0,d_1,\eta,L_k)>0$ from (\ref{hhh0repeated}) and from the fact that $d_0,d_1,\eta\in(0,1)$, as stated in assumption F (see Table \ref{tableassu}). From (\ref{hhh6}), since $P^{(1,k)}_\text{MB}(w_k\lvert m_k=1,  \rho, L,\tau_{k})$ does not depend on $k$, $\rho$, $L$ or $\tau_k$, we have that 
\begin{equation}
\label{hhh7}
P^{(1,k)}_\text{MB}(w_k\lvert m_k=1, \rho, L,\tau_{k})=P_\text{MB}(w_k),
\end{equation}
for $w_k\in\{0,1\}$ and $k\in[N]$, and the claimed result (\ref{h000.1repeated}) follows.

We prove (\ref{gggrepeated}). We suppose that Alice uses the setup of \damian{Fig. \ref{setup}} with reporting strategy 2 to implement the scheme $\mathcal{QT}_2$, and that assumptions E and F of Table \ref{tableassu} hold. We note that in the scheme $\mathcal{QT}_2$, the string $w$ has bit entries $w_k=z$, for a bit $z$ chosen by Alice, and for $k\in[N]$. However, the analysis below is more general, and works for arbitrary $w\in\{0,1\}^N$. It is straightforward to obtain
\begin{equation}
\label{ggg1}
P^{(2)}_{\text{rep}}(m\lvert w, \rho, L)=\prod_{k=1}^N P^{(2,k)}_{\text{rep}}(m_k\lvert w, \rho, L,\tilde{\tau}_k),
\end{equation}
where $\tilde{\tau}_k=(m_0,\ldots,m_{k-1})$ and $P^{(2,k)}_{\text{rep}}(m_k\lvert w, \rho, L,\tilde{\tau}_k)$ is the probability that Alice reports the bit message $m_k$ to Bob for the $k$th pulse, given $\rho$, $L$, $w$ and $\tilde{\tau}_k$, for $k\in[N]$; and where without loss of generality we define $m_0=1$. 

From Lemma 1 of Ref. \cite{BCDKPG21} and the definition (\ref{ggg0repeated}), after measuring the first pulse received from Bob in the basis $\mathcal{D}_{w_1}$, Alice reports the message $m_1=1$ to Bob with probability 
\begin{equation}
\label{ggg2}
P^{(2,1)}_{\text{rep}}(1\lvert  w,\rho, L,\tilde{\tau}_1)=G^{(2)}_{1}(d_0,d_1,\eta,L_1),
\end{equation}
which only depends on the dark count probabilities $d_0$ and $d_1$, on the detector efficiency $\eta$, and on the number of photons $L_1$ of the first pulse, for an arbitrary quantum state $\rho$, which can be arbitrarily entangled with an ancilla held by Bob. We can consider this ancilla to include the pulses labelled by $2,3,\ldots,N$.

Similarly, for $k\in\{2,3,\ldots,N\}$, after measuring the pulses with labels $1,2,\ldots,k-1$ in the bases $\mathcal{D}_{w_1},\ldots,\mathcal{D}_{w_{k-1}}$, Alice obtains a value $\tilde{\tau}_{k}$ according to her obtained detection statistics for these pulses, and the joint quantum state of the pulses with labels $k,k+1,\ldots,N$ and any ancilla held by Bob changes to some quantum state $\rho_k$. Then, from Lemma 1 of Ref. \cite{BCDKPG21} and the definition (\ref{ggg0repeated}), after measuring the $k$th pulse in the basis $\mathcal{D}_{w_k}$, Alice reports the message $m_k=1$ to Bob  with probability 
\begin{equation}
\label{ggg3}
P^{(2,k)}_{\text{rep}}(1\lvert  w,\rho, L,\tilde{\tau}_{k})=G^{(2)}_{1}(d_0,d_1,\eta,L_{k}),
\end{equation}
which only depends on the dark count probabilities $d_0$ and $d_1$, on the detector efficiency $\eta$, and on the number of photons $L_{k}$ of the $k$th pulse; in particular, $P^{(2,k)}_{\text{rep}}(1\lvert w,  \rho, L,\tilde{\tau}_{k})$ does not depend on the quantum state $\rho_k$, which can be arbitrarily entangled with an ancilla held by Bob. In this case, we can consider this ancilla to include the pulses labelled by $k+1,k+2,\ldots,N$. We note that since $P^{(2,k)}_{\text{rep}}(1\lvert  w,\rho, L,\tilde{\tau}_{k})$ does not depend on $\rho_k$, it does not depend on $\rho$, apart from the number of photons $L_{k}$ of the $k$th pulse, and it does not depend on $\tilde{\tau}_k$ either.

By definition of Alice's reporting strategy 2, we have
\begin{eqnarray}
\label{ggg4}
&&P^{(2,k)}_{\text{rep}}(0\lvert w, \rho, L,\tilde{\tau}_k)\nonumber\\
&&\qquad\qquad=1-P^{(2,k)}_{\text{rep}}(1\lvert w, \rho, L,\tilde{\tau}_k)\nonumber\\
&&\qquad\qquad=1-G^{(2)}_{1}(d_0,d_1,\eta,L_{k})\nonumber\\
&&\qquad\qquad=G^{(2)}_{0}(d_0,d_1,\eta,L_{k}),
\end{eqnarray}
for $k\in[N]$, where in the second line we used (\ref{ggg2}) and (\ref{ggg3}), and in the third line we used the definition (\ref{ggg0repeated}). Thus, the claimed result (\ref{gggrepeated}) follows straightforwardly from (\ref{ggg1}) -- (\ref{ggg4}).

\end{proof}

\section{Proof of Theorem \ref{Alice1}}
\label{newapp}

\begin{theorem1*}
\label{Alice1repeated}
Consider the constraints
\begin{eqnarray}
\label{Al1repeated}
0&<&\gamma_\text{err}<\lambda(\theta,\beta_{\text{PB}}),\nonumber\\
0&<&P_\text{noqub}<\nu_\text{unf}<\min\biggl\{\!2P_\text{noqub},\gamma_\text{det}\biggl(\!1\!-\!\frac{\gamma_\text{err}}{\lambda(\theta,\beta_{\text{PB}})}\!\biggr)\!\biggr\},\nonumber\\
		0&<&\beta_\text{PS}<\frac{1}{2}\biggl[e^{\frac{\lambda(\theta,\beta_{\text{PB}})}{2}\bigl(1-\frac{\delta}{\lambda(\theta,\beta_{\text{PB}})}\bigr)^2}-1\biggr].
	\end{eqnarray}
	We define the function
\begin{eqnarray}
\label{Al3repeated}	
	&&f(\!\gamma_\text{err},\!\beta_\text{PS},\!\beta_\text{PB},\!\theta,\!\nu_\text{unf},\!\gamma_\text{det})\nonumber\\
	&&\quad=\!(\gamma_\text{det}\!-\!\nu_\text{unf})\!\biggl[\!\frac{\lambda(\theta,\beta_{\text{PB}}) }{2}\!\biggl(\!1\!-\!\frac{\delta}{\lambda(\theta,\beta_{\text{PB}})}\!\biggr)^2\!\!-\!\ln(1\!+\!2\beta_\text{PS})\!\biggr]\nonumber\\
	&&\quad\qquad-\bigl(1\!-\!(\gamma_\text{det}\!-\!\nu_\text{unf})\bigr)\ln\bigl[1+h(\beta_\text{PS},\beta_\text{PB},\theta)\bigr],
	\end{eqnarray}	
	 where
	 \begin{eqnarray}
\label{Al4repeated}
h(\beta_\text{PS},\beta_\text{PB},\theta)&=&2\beta_\text{PS}\sqrt{\frac{1}{2}\!+\!2\beta_\text{PB}^2\!+\!\Bigl(\frac{1}{2}\!-\!2\beta_\text{PB}^2\Bigr)\sin(2\theta)},\nonumber\\
\delta&=&\frac{\gamma_\text{det}\gamma_\text{err}}{\gamma_\text{det}-\nu_\text{unf}}.
\end{eqnarray}
There exist parameters satisfying the constraints (\ref{Al1repeated}), for which $f(\!\gamma_\text{err},\!\beta_\text{PS},\!\beta_\text{PB},\!\theta,\!\nu_\text{unf},\!\gamma_\text{det})>0$. For these parameters, $\mathcal{QT}_1$ and $\mathcal{QT}_2$ are $\epsilon_{\text{unf}}-$unforgeable with
\begin{equation}
\label{Al5repeated}
\epsilon_{\text{unf}}=e^{-\frac{P_\text{noqub}N}{3}\bigl(\frac{\nu_\text{unf}}{P_\text{noqub}}-1\bigr)^2} +e^{-Nf(\gamma_\text{err},\beta_\text{PS},\beta_\text{PB},\theta,\nu_\text{unf},\gamma_\text{det})}.	
\end{equation}
\end{theorem1*}

We recall that Theorem \ref{Alice1} considers parameters $\gamma_\text{det},\gamma_\text{err}\in(0,1)$, allows for the experimental imperfections of Table \ref{tableimp} and makes the assumptions of Table \ref{tableassu}.

\subsection{Summary of the Proof}
We allow Alice to have arbitrarily advanced quantum technology. In particular, we assume that Alice knows the set $\Omega_{\text{qub}}$ and $\Omega_{\text{noqub}}$ and that she receives all quantum systems $A_k$, for $k\in[N]$. Let $\lvert \psi\rangle_A=\otimes_{k\in[N]}\lvert \psi_k\rangle_{A_k}$ be the quantum state that Alice receives from Bob, where $A$ is the global quantum system received from Bob.

Alice's more general cheating strategy in the token scheme $\mathcal{QT}_1$ is as follows. In the intersection of the causal pasts of $Q_0$ and $Q_1$, Alice adds an ancillary system $E$ of arbitrary finite Hilbert space dimension in a quantum state $\lvert\chi\rangle_E$ and applies an arbitrary projective measurement on $AE$, which may depend on $\Omega_{\text{qub}}$ and $\Omega_{\text{noqub}}$, and obtains an outcome that includes $\Lambda,g,\mathbf{d}$ and $c$ satisfying the required constraints, as well as respective tokens $\mathbf{a}$ and $\mathbf{b}$ to give at the presentation points $Q_0$ and $Q_1$. Alice sends $ \Lambda$, $g$, $\mathbf{d}$, and $c$ to Bob, as required by the task. Alice gives Bob tokens $\mathbf{a}$ at $Q_0$ and $\mathbf{b}$ at $Q_1$. Alice's more general cheating strategy in the token scheme $\mathcal{QT}_2$ is equivalent, with the only difference that the string $\mathbf{d}$ is not required in Alice's measurement outcome. We recall that the $j$th entry of $\mathbf{d}$ in $\mathcal{QT}_1$ is $d_j=y_j\oplus z$, for $j\in[n]$. On the other hand, in $\mathcal{QT}_2$ we have $y_j=z$, for $j\in[n]$. Thus, without loss of generality, in Alice's general cheating strategy  in the token scheme $\mathcal{QT}_2$ we simply set $\mathbf{d}$ to be a fixed string with all bit entries being zero.

We note that if  $Q_1$ is in the causal future of $Q_0$ Alice's agent $\mathcal{A}_0$ can send a signal to $\mathcal{A}_1$ indicating whether her token $\mathbf{a}$ was validated or not at $Q_0$, and $\mathcal{A}_1$ can in principle use this information to adapt her strategy at $Q_1$. However, this possibility cannot increase Alice's probability to have tokens validated at both $Q_0$ and $Q_1$. If $\mathcal{A}_0$ fails in having $\mathbf{a}$ validated at $Q_0$ then Alice fails in her attempt to have Bob validating her tokens at both $Q_0$ and $Q_1$. Thus, without loss of generality we assume that the signal sent from $\mathcal{A}_0$ to $\mathcal{A}_1$ indicates that $\mathbf{a}$ was successfully validated at $Q_0$, which is equivalent to $\mathcal{A}_0$ not sending any signal to $\mathcal{A}_1$ and $\mathcal{A}_1$ always acting as if $\mathbf{a}$ were successfully validated at $Q_0$. The same reasoning applies if $Q_0$ is in the causal future of $Q_1$. Furthermore, if $Q_0$ and $Q_1$ are spacelike separated $\mathcal{A}_1$ cannot receive any signals informing her whether the token $\mathbf{a}$ was successfully validated at $Q_0$ before $\mathcal{A}_1$ presents the token $\mathbf{b}$ at $Q_1$. This means that the strategy outlined in the previous paragraph can be considered as the most general one.

If $\lvert \Omega_{\text{noqub}}\rvert>\nu_\text{unf} N$ then we assume that Alice can succeed in giving valid tokens $\mathbf{a}$ and $\mathbf{b}$ at $Q_0$ and $Q_1$. This is because, for example, if $\lvert \Omega_{\text{noqub}}\rvert$ is too large, then there is not any constraint on the quantum states $\lvert\psi_k\rangle$ for a large number of labels $k$, i.e for $k\in \Omega_{\text{noqub}}$. For example, if Bob's quantum state source is a
Poissonian photon source (e.g. weak coherent) with average photon number $\mu<<1$ then we associate pulses with two or more photons to have labels from the set $\Omega_{\text{noqub}}$, giving $P_\text{noqub}=1-(1+\mu)e^{-\mu}$. In this case, the states $\lvert\psi_k\rangle$ with $k\in\Omega_{\text{noqub}}$ consist of two or more copies of quantum states $\lvert\phi_{t_ku_k}^k\rangle$ and Alice can measure each copy in the corresponding basis, being able to present correct bit outcomes at $Q_0$ and $Q_1$, for bit entries with labels $k\in\Omega_{\text{noqub}}$. But, since the probability $P_\text{noqub}$ that Bob prepares states $\lvert\psi_k\rangle$ with labels $k\in\Omega_{\text{noqub}}$ is bounded, the probability that $\lvert \Omega_{\text{noqub}}\rvert > \nu_\text{unf} N$ is bounded by the first term of $ \epsilon_{\text{unf}}$ in (\ref{Al5repeated}).

On the other hand, we show that there exist parameters satisfying the constraints (\ref{Al1repeated}) for which $f(\gamma_\text{err},\beta_\text{PS},\beta_\text{PB},\theta,\nu_\text{unf},\gamma_\text{det})>0$. We show that, for these parameters, and for the case $\lvert \Omega_{\text{noqub}}\rvert\leq \nu_\text{unf} N$, the probability that Alice gives valid tokens $\mathbf{a}$ and $\mathbf{b}$ at $Q_0$ and $Q_1$ is upper bounded by $e^{-Nf(\gamma_\text{err},\beta_\text{PS},\beta_\text{PB},\theta,\nu_{\text{unf}},\gamma_\text{det})}$. This analysis gives the second term of $\epsilon_{\text{unf}}$ in (\ref{Al5repeated}).

If $\lvert \Omega_{\text{noqub}}\rvert\leq \nu_\text{unf} N$, from the conditions (\ref{Al1repeated}) it follows that Alice must obtain the correct bits in two token strings $\mathbf{a}$ and $\mathbf{b}$, given respectively at $Q_0$ and $Q_1$, for a sufficiently large number of entries $j\in\Delta_0$ with $j=g(k)$ for some $k\in\Omega_{\text{qub}}$, for $\mathbf{a}$, and $j\in\Delta_1$ with $j=g(k)$ for some $k\in\Omega_{\text{qub}}$, for $\mathbf{b}$. That is, Alice could in principle learn perfectly the bit entries of both strings  $\mathbf{a}_0$ and $\mathbf{b}_1$ with labels $k\in\Omega_{\text{noqub}}$, but the number of these entries is small and thus Alice must be able to obtain the correct bit entries of the strings $\mathbf{a}_0$ and $\mathbf{b}_1$ for a sufficiently large number of labels $k\in\Omega_{\text{qub}}$ for which such bits are encoded in single qubits, and not in multiple copies of the same quantum states. The probability that Alice succeeds in this task is upper bounded using Lemma \ref{lemma0}, which considers the ideal situation in which Bob encodes each bit in a single qubit.

We see that the $n-$bit strings $\tilde{\mathbf{d}}_0=(\tilde{d}_{0,1},\ldots,\tilde{d}_{0,n})$ and $\tilde{\mathbf{d}}_1=(\tilde{d}_{1,1},\ldots,\tilde{d}_{1,n})$ are the complement of each other. In the scheme $\mathcal{QT}_1$, we have $\tilde{d}_{1,j}=d_j\oplus c\oplus 1=\tilde{d}_{0,j}\oplus 1$, for $j\in[n]$. Similarly, in the scheme $\mathcal{QT}_2$, we have $\tilde{d}_{1,j}= c\oplus 1= \tilde{d}_{0,j}\oplus 1$, for $j\in[n]$. Thus, in both schemes $\mathcal{QT}_1$ and $\mathcal{QT}_2$, we define the sets of labels $\Delta_0=\{j\in[n] \vert \tilde{d}_{0,j}=s_j\}$ and $\Delta_1=\{j\in[n] \vert \tilde{d}_{1,j}=s_j\}$, which do not intersect, implying that $\lvert \Delta_0\rvert +\lvert \Delta_1\rvert=n$. We define the sets $\underline{\Delta}_0$ and $\underline{\Delta}_1$ as the sets of labels $j\in \Delta_0$ and $j\in \Delta_1$ with $j=g(k)$ for some $k\in\Omega_{\text{qub}}$, respectively. We also define the strings $\underline{\mathbf{a}}_0$ and $\underline{\mathbf{b}}_1$ as the restrictions of the strings $\mathbf{a}_0$ and $\mathbf{b}_1$ to bit entries with labels $j\in \underline{\Delta}_0$ and $j\in\underline{\Delta}_1$, respectively. The strings $\underline{\mathbf{r}}_0$ and $\underline{\mathbf{r}}_1$ are defined similarly. Then, we can upper bound Alice's success probability by the probability that the string $\underline{\mathbf{a}}_0$ and the string $\underline{\mathbf{b}}_1$  satisfy $d(\underline{\mathbf{a}}_0,\underline{\mathbf{r}}_0)+d(\underline{\mathbf{b}}_1,\underline{\mathbf{r}}_1)\leq (\lvert \underline{\Delta}_0\rvert+\lvert \underline{\Delta}_1\rvert)\delta$, where $0<\tilde{\delta}\leq \delta<\lambda(\theta,\beta_{\text{PB}})$, and where $\tilde{\delta}=\frac{(\lvert \Delta_0\rvert+\lvert \Delta_1\rvert)\gamma_\text{err}}{(\lvert \underline{\Delta}_0\rvert+\lvert \underline{\Delta}_1\rvert)}$. Finally, this probability is upper bounded using Lemma \ref{lemma0}.

\subsection{Preliminaries}

We consider that Alice performs an arbitrary cheating strategy $\mathcal{S}$ trying to have Bob validating tokens at $Q_0$ and $Q_1$. Let $P_\mathcal{S}$ be Alice's success probability. We show an upper bound on $P_\mathcal{S}$, for any strategy $\mathcal{S}$. We assume that Alice has arbitrarily advanced quantum technology. In particular, we assume that Alice knows the set $\Omega_{\text{qub}}$, hence also the set $\Omega_{\text{noqub}}$, and that she receives all quantum systems $A_k$ transmitted by Bob, for $k\in[N]$.

Let $P_{\mathcal{S}}^{\Omega_{\text{qub}}}$ be Alice's success probability following the strategy $\mathcal{S}$ given a set $\Omega_{\text{qub}}$, and let $P_{\Omega_{\text{qub}}}$ be the probability that the set $\Omega_{\text{qub}}$ is generated. We have
\begin{eqnarray}
	\label{Nxy3}
	P_{\mathcal{S}}&=&\sum_{m=0}^N \;\: \sum_{\Omega_{\text{qub}} : \lvert \Omega_{\text{noqub}}\rvert =m}P_{\mathcal{S}}^{\Omega_{\text{qub}}} P_{\Omega_{\text{qub}}}\nonumber\\
	&=&\sum_{m\leq \nu_\text{unf} N} \;\: \sum_{\Omega_{\text{qub}} : \lvert \Omega_{\text{noqub}}\rvert =m}P_{\mathcal{S}}^{\Omega_{\text{qub}}} P_{\Omega_{\text{qub}}}\nonumber\\
	&&\qquad+\sum_{m>\nu_\text{unf} N} \;\: \sum_{\Omega_{\text{qub}} : \lvert \Omega_{\text{noqub}}\rvert =m}P_{\mathcal{S}}^{\Omega_{\text{qub}}} P_{\Omega_{\text{qub}}}\nonumber\\
	&\leq& \sum_{m\leq \nu_\text{unf} N} \;\: \sum_{\Omega_{\text{qub}} : \lvert \Omega_{\text{noqub}}\rvert =m}P_{\mathcal{S}}^{\Omega_{\text{qub}}} P_{\Omega_{\text{qub}}}\nonumber\\
	&&\qquad+\text{Pr}[\lvert \Omega_{\text{noqub}}\rvert>\nu_\text{unf} N],
	\end{eqnarray}
where in the third line we used $\text{Pr}[\lvert \Omega_{\text{noqub}}\rvert>\nu_\text{unf} N]=\sum_{m>\nu_\text{unf} N} \sum_{\Omega_{\text{qub}} : \lvert \Omega_{\text{noqub}}\rvert =m} P_{\Omega_{\text{qub}}}$ and the tivial bound $P_{\mathcal{S}}^{\Omega_{\text{qub}}} \leq 1$ for $\lvert \Omega_{\text{noqub}}\rvert>\nu_\text{unf} N$.

Since for each element $k\in[N]$, Bob's agent $\mathcal{B}$ asigns it to be an element of the set $\Omega_{\text{noqub}}$ with probability $P_\text{noqub}$, the probability  that $\Omega_{\text{noqub}}$ has $m$ elements is
\begin{equation}
\sum_{\Omega_{\text{qub}} : \lvert \Omega_{\text{noqub}}\rvert =m}P_{\Omega_{\text{qub}}}=\bigl(\begin{smallmatrix} N\\ m\end{smallmatrix}\bigr)(P_\text{noqub})^m(1-P_\text{noqub})^{N-m},
\end{equation}
for $m\in\{0,1,\ldots,N\}$. To simplify notation, below we write $m=\lvert \Omega_{\text{noqub}}\rvert$. We use the Chernoff bound of Proposition \ref{proposition3} to show below that
	\begin{equation}
	\label{Nxy4}
	\text{Pr}[m>\nu_\text{unf} N]< e^{-\frac{P_{\text{noqub}}N}{3}\bigl(\frac{\nu_\text{unf}}{P_\text{noqub}}-1\bigr)^2} .
	\end{equation}
Let $Z_1,Z_2,\ldots, Z_N$ be independent random variables, where $Z_k\in\{0,1\}$ and $\text{Pr}[Z_k=1]=P_\text{noqub}$ for $k\in[N]$. We can then write $m=\sum_{k=1}^NZ_k$. The expectation value of $m$ is $E(m)=P_\text{noqub}N$. Let $\epsilon=\frac{\nu_\text{unf}}{P_\text{noqub}}-1$. It follows from (\ref{Al1repeated}) that $0<\epsilon<1$. Thus, we have
	\begin{eqnarray}
	\label{Nxy5}
	\text{Pr}[m>\nu_\text{unf} N]&=&\text{Pr}[m>(1+\epsilon)E(m)]\nonumber\\
	&<& e^{-\frac {P_{\text{noqub}}N}{3}\bigl(\frac{\nu_\text{unf}}{P_\text{noqub}}-1\bigr)^2},
	\end{eqnarray}
	as claimed, where in the second line we used $E(m)=P_{\text{noqub}}N$,  $0<\epsilon=\frac{\nu_\text{unf}}{P_\text{noqub}}-1<1$ and the Chernoff bound of Proposition \ref{proposition3}.
	
From (\ref{Nxy3}) and (\ref{Nxy4}), we have
\begin{eqnarray}
	\label{Nq2}
	P_{\mathcal{S}}&<& e^{-\frac{P_\text{noqub}N}{3}\bigl(\frac{\nu_\text{unf}}{P_\text{noqub}}-1\bigr)^2}\nonumber\\
	&&\quad+ \sum_{m\leq \nu_\text{unf} N} \; \sum_{\Omega_{\text{qub}} : \lvert \Omega_{\text{noqub}}\rvert = m}P_{\mathcal{S}}^{\Omega_{\text{qub}}} P_{\Omega_{\text{qub}}}.
	\end{eqnarray}
	Let 
\begin{equation}
\label{Nq1}
\lvert \Psi_{\mathbf{t}\mathbf{u}}^{\Omega_{\text{qub}}}\rangle_A=\bigotimes_{k\in\Omega_{\text{qub}}}\lvert \phi_{t_ku_k}^k\rangle_{A_k}\bigotimes_{k\in\Omega_{\text{noqub}}}\lvert \Phi_{t_ku_k}^k\rangle_{A_k}
\end{equation}
be the quantum state that Alice's agent $\mathcal{A}$ receives from Bob's agent $\mathcal{B}$,
where the global quantum system received by $\mathcal{A}$ is $A=A_1\cdots A_N$, and where $\mathbf{t}=(t_1,\ldots,t_N)$ and $\mathbf{u}=(u_1,\ldots,u_N)$. We recall that $\Omega_{\text{noqub}}=[N]\setminus\Omega_{\text{qub}}$. For a given set $\Omega_{\text{qub}}\subseteq [N]$, let $\underline{\mathbf{x}}$ and $\overline{\mathbf{x}}$ denote the sub-strings of the $N-$bit string $\mathbf{x}$ with bit entries $k\in\Omega_{\text{qub}}$ and $k\in\Omega_{\text{noqub}}$, respectively, for $\mathbf{x}\in\{\mathbf{u},\mathbf{t}\}$. Thus, from (\ref{Nq1}), we can write
\begin{equation}
\label{N1006}
\lvert \Psi_{\mathbf{t}\mathbf{u}}^{\Omega_{\text{qub}}}\rangle_A=\lvert \phi_{\underline{\mathbf{t}}\underline{\mathbf{u}}}\rangle_{\underline{A}}\otimes \lvert \Phi_{\overline{\mathbf{t}}\overline{\mathbf{u}}}\rangle_{\overline{A}},
\end{equation}
where
\begin{eqnarray}
\label{neweq28}
\lvert\phi_{\underline{\mathbf{t}}\underline{\mathbf{u}}}\rangle_{\underline{A}}&=&\bigotimes_{k\in\Omega_{\text{qub}}}\lvert \phi_{t_ku_k}^k\rangle_{A_k},\nonumber\\
\lvert \Phi_{\overline{\mathbf{t}}\overline{\mathbf{u}}}\rangle_{\overline{A}}&=& \bigotimes_{k\in\Omega_{\text{noqub}}}\lvert \Phi^k_{t_ku_k}\rangle_{A_k},
\end{eqnarray}
and where $\underline{A}=\bigotimes_{k\in\Omega_{\text{qub}}} A_k$ and $\overline{A}=\bigotimes_{k\in\Omega_{\text{noqub}}} A_k$. Let $P_{\mathbf{t}\mathbf{u}}$ be the probability distribution for the variables $(\mathbf{t},\mathbf{u})\in\{0,1\}^N\times\{0,1\}^N$. By the statement of the theorem, we have
\begin{equation}
\label{N1034}
P_{\mathbf{t}\mathbf{u}}=\prod_{k=1}^NP^k_{\text{PS}}(t_k)P^k_{\text{PB}}(u_k),
\end{equation}
where $\bigl\{P^k_{\text{PB}}(0),P^k_{\text{PB}}(1)\bigr\}$ and $\bigl\{P^k_{\text{PS}}(0),P^k_{\text{PS}}(1)\bigr\}$ are binary probability distributions, for $k\in[N]$. Thus, for any sets $F_0\subseteq [N]$ and $F_1=[N]\setminus F_0$ with $\mathbf{u}_0$ and $\mathbf{u}_1$ being substrings of $\mathbf{u}$, and with $\mathbf{t}_0$ and $\mathbf{t}_1$ being substrings of $\mathbf{t}$, with bit entries with labels from the sets $F_0$ and $F_1$, respectively, we use the notation $P_{\mathbf{t}\mathbf{u}}=P_{\mathbf{t}_0\mathbf{t}_1\mathbf{u}_0\mathbf{u}_1}=P_{\mathbf{t}_0\mathbf{t}_1}P_{\mathbf{u}_0\mathbf{u}_1}$. Thus, we can express $
P_{\mathcal{S}}^{\Omega_{\text{qub}}}$ by
\begin{equation}
\label{N1005}
P_{\mathcal{S}}^{\Omega_{\text{qub}}}=\sum_{\overline{\mathbf{t}},\overline{\mathbf{u}}}P_{\overline{\mathbf{t}}\overline{\mathbf{u}}}P_{\mathcal{S}}^{\Omega_{\text{qub}}\overline{\mathbf{t}}\overline{\mathbf{u}}},
\end{equation}
where $P_{\overline{\mathbf{t}}\overline{\mathbf{u}}}$ is the probability distribution for the variables $\overline{\mathbf{t}},\overline{\mathbf{u}}$; and where $P_{\mathcal{S}}^{\Omega_{\text{qub}}\overline{\mathbf{t}}\overline{\mathbf{u}}}$ is the probability that Alice succeeds in giving Bob valid tokens at the presentation points $Q_0$ and $Q_1$ by following the strategy $\mathcal{S}$, given a set $\Omega_{\text{qub}}$ and given variables $\overline{\mathbf{t}}$ and $\overline{\mathbf{u}}$. 

We show below that there exist parameters satisfying the constraints (\ref{Al1repeated}) and satisfying
\begin{equation}
\label{N1028}
 f(\!\gamma_\text{err},\!\beta_\text{PS},\!\beta_\text{PB},\!\theta,\!\nu_\text{unf},\!\gamma_\text{det})>0,
\end{equation}
where $f(\!\gamma_\text{err},\!\beta_\text{PS},\!\beta_\text{PB},\!\theta,\!\nu_\text{unf},\!\gamma_\text{det})$ is given by (\ref{Al3repeated}). We show below that, for the parameters satisfying (\ref{Al1repeated}) and (\ref{N1028}), it holds that
\begin{equation}
\label{Nq3}
P_{\mathcal{S}}^{\Omega_{\text{qub}}\overline{\mathbf{t}}\overline{\mathbf{u}}}\leq e^{-Nf(\gamma_\text{err},\beta_\text{PS},\beta_\text{PB},\theta,\nu_\text{unf},\gamma_\text{det})},
	\end{equation}
	for any finite dimensional quantum state $\lvert \Phi_{\overline{\mathbf{t}}\overline{\mathbf{u}}}\rangle_{\overline{A}}$ and for any set $\Omega_{\text{qub}}$ satisfying $m=\lvert \Omega_{\text{noqub}}\rvert \leq \nu_\text{unf} N$. It follows from (\ref{Nq2}) and from (\ref{N1005}) -- (\ref{Nq3}) that Alice's probability to succeed in her cheating strategy is not greater than $\epsilon_{\text{unf}}$, with $\epsilon_{\text{unf}}$ given by (\ref{Al5repeated}). Thus, $\mathcal{QT}_1$ and $\mathcal{QT}_2$ are $\epsilon_{\text{unf}}-$unforgeable, with $\epsilon_{\text{unf}}$ given by (\ref{Al5repeated}), as claimed.

We show that there exist parameters satisfying the constraints (\ref{Al1repeated}) for which (\ref{N1028}) holds. Consider parameters satisfying (\ref{Al1repeated}) and the following constraint:
\begin{equation}
\label{N1035}
0<\beta_\text{PS}<\frac{1}{2}\Bigl[e^{\frac{(\gamma_\text{det}-\nu_\text{unf})\lambda(\theta, \beta_{\text{PB}})}{2}\bigl(1-\frac{\delta}{\lambda(\theta, \beta_{\text{PB}})}\bigr)^2}-1\Bigr],
\end{equation}
which is equivalent to
\begin{equation}
\label{neweq9}
0\!<\!\ln\bigl(1+2\beta_\text{PS}\bigr)\!<\!\frac{(\gamma_\text{det}\!-\!\nu_\text{unf})\lambda(\theta, \beta_{\text{PB}})}{2}\biggl(\!1-\frac{\delta}{\lambda(\theta, \beta_{\text{PB}})}\!\biggr)^2.
\end{equation}
From (\ref{Al1repeated}), we have
\begin{equation}
\label{neweq4}
0<\gamma_\text{det}-\nu_\text{unf}<1.
\end{equation}
Thus, from (\ref{N1035}) and (\ref{neweq4}), we have
 \begin{equation}
 \label{neweq5}
 0<\beta_\text{PS}<\frac{1}{2}\Bigl[e^{\frac{\lambda(\theta, \beta_{\text{PB}})}{2}\bigl(1-\frac{\delta}{\lambda(\theta, \beta_{\text{PB}})}\bigr)^2}-1\Bigr],
 \end{equation} 
  as required by (\ref{Al1repeated}). Then, since we have $0< \theta<\frac{\pi}{4}$, $0<\beta_\text{PB}<\frac{1}{2}$ and $0<\beta_\text{PS}<\frac{1}{2}$, we obtain from the definition of $h(\beta_\text{PS},\beta_{\text{PB}},\theta)$ in (\ref{Al4repeated}) that 
   \begin{equation}
 \label{neweq6}
 0<h(\beta_\text{PS},\beta_\text{PB},\theta)<2\beta_\text{PS}.
 \end{equation} 
  From 
   (\ref{neweq6}), we have
    \begin{equation}
 \label{neweq7}
 0<\ln\big[1+h(\beta_\text{PS},\beta_\text{PB},\theta)\bigr]<\ln\bigl(1+2\beta_\text{PS}\bigr).
 \end{equation} 
From (\ref{Al3repeated}), we have
\begin{eqnarray}
\label{neweq8}
&& f(\!\gamma_\text{err},\!\beta_\text{PS},\!\beta_\text{PB},\!\theta,\!\nu_\text{unf},\!\gamma_\text{det})\nonumber\\
&&\!\quad=\!\frac{(\gamma_\text{det}\!-\!\nu_\text{unf})\lambda(\theta,\beta_{\text{PB}}) }{2}\!\biggl(\!1\!-\!\frac{\delta}{\lambda(\theta,\beta_{\text{PB}})}\!\biggr)^2\!-\ln(1\!+\!2\beta_\text{PS}).\nonumber\\
&&\qquad+\bigl(\!1\!-\!(\gamma_\text{det}\!-\!\nu_\text{unf})\!\bigr)\!\Bigl[\!\ln\bigl(\!1\!+\!2\beta_\text{PB}\!\bigr)\!-\!\ln\bigl[\!1\!+\!h(\!\beta_{\text{PS}},\beta_{\text{PB}},\theta\!)\!\bigr]\!\Bigr]\nonumber\\
&&\!\quad>\!\frac{(\gamma_\text{det}\!-\!\nu_\text{unf})\lambda(\theta,\beta_{\text{PB}}) }{2}\!\biggl(\!1\!-\!\frac{\delta}{\lambda(\theta,\beta_{\text{PB}})}\!\biggr)^2\!-\ln(1\!+\!2\beta_\text{PS}).\nonumber\\
&&\!\quad>0,
\end{eqnarray}
where in the second line we used (\ref{neweq4}) and (\ref{neweq7}), and in the third line we used (\ref{neweq9}).

A general cheating strategy $\mathcal{S}$ by Alice is as follows. Alice receives the quantum system $A$ from Bob in the quantum state $\lvert \Psi_{\mathbf{t}\mathbf{u}}^{\Omega_{\text{qub}}}\rangle_A$ given by (\ref{N1006}), she adds an ancillary system $E$ of arbitrary finite Hilbert space dimension in a quantum state $\lvert\chi\rangle_E$ and applies an arbitrary projective measurement $\{\Pi_{x\mathbf{a}\mathbf{b}}\}_{x,\mathbf{a},\mathbf{b}}$ on $AE$, which may depend on $\Omega_{\text{qub}}$, and obtains an outcome $(x,\mathbf{a},\mathbf{b})$, where 
 $x=(\Lambda,g,\mathbf{d},c,\zeta)$, with $\Lambda,g,\mathbf{d}$ and $c$ comprising the information that Alice must send Bob before token presentation satisfying the required constraints, $\mathbf{a},\mathbf{b}\in\{0,1\}^{n}$ are token strings to present at the respective presentation points $Q_0$ and $Q_1$, and $\zeta$ is any other classical variable obtained by Alice's measurement.
This means  that $\Lambda\subseteq [N]$ with $\lvert \Lambda\rvert=n$, for some integer $n\geq \gamma_\text{det} N$ and for a predetermined $\gamma_\text{det}\in(0,1)$, $g$ is a one-to-one function of the form $g(k)=j$ for $k\in\Lambda$ and $j\in[n]$, $\mathbf{d}\in\{0,1\}^{n}$ and $c\in\{0,1\}$. In the intersection of the causal pasts of $Q_0$ and $Q_1$, Alice sends $\Lambda,g,c$ and $\mathbf{d}$ to Bob, as required by the task. Bob does not abort as he receives the set $\Lambda$ satisfying the required condition, as well as the $g,\mathbf{d}$ and $c$ of the required form. Alice sends the tokens to her respective agents who present them at the corresponding presentation points.

Below, we show the bound (\ref{Nq3}) on the probability $P_{\mathcal{S}}^{\Omega_{\text{qub}}\overline{\mathbf{t}}\overline{\mathbf{u}}}$ that Alice succeeds in giving Bob valid tokens at the presentation points $Q_0$ and $Q_1$ by following the strategy $\mathcal{S}$, given a set $\Omega_{\text{qub}}$ and given variables $\overline{\mathbf{t}}$ and $\overline{\mathbf{u}}$. Thus, we consider that Alice gives tokens $\mathbf{a}\in\{0,1\}^{n}$ at $Q_0$ and $\mathbf{b}\in\{0,1\}^{n}$ at $Q_1$.

\subsection{Notation and useful relations}

We recall notation. Using the set $\Lambda$ and the function $g:\Lambda\rightarrow[n]$, we have the following relations between $\mathbf{t},\mathbf{u}\in\{0,1\}^N$ and $\mathbf{r},\mathbf{s}\in\{0,1\}^{n}$. We have $r_j=t_k$, and $s_j=u_k$, where $j=g(k)$, for $j\in[n]$ and $k\in\Lambda$. We define $\mathbf{r}=(r_1,\ldots,r_n)$ and $\mathbf{s}=(s_1,\ldots,s_n)$. In the token scheme $\mathcal{QT}_1$, we have defined 
\begin{equation}
\label{neweq13}
\Delta_i=\{j\in [n]\vert \tilde{d}_{i,j}=s_j\},
\end{equation}
and $\mathbf{a}_i$ as the restriction of a string $\mathbf{a}$ to entries $a_j$ with $j\in\Delta_i$, for $i\in\{0,1\}$. These variables are defined similarly in the token scheme $\mathcal{QT}_2$ by simply setting $\mathbf{d}$ as a string whose bit entries are only zero. 

By definition of $\Lambda$, we have
\begin{equation}
\label{neweq11}
\Lambda\subseteq [N],\qquad \lvert\Lambda\rvert= n\leq N.
\end{equation}
By definition of $\Omega_\text{qub}$ and $\Omega_\text{noqub}$, we have
\begin{equation}
\label{neweq12}
\Omega_\text{qub}\cap\Omega_\text{noqub}=\emptyset,\qquad \Omega_\text{qub}\cup\Omega_\text{noqub}=[N].
\end{equation}

We note that the sets $\Delta_i$ depend on $x=(\Lambda,g,\mathbf{d},c,\zeta)$, for $i\in\{0,1\}$, but we do not write this dependence explicitly, in order to simplify the notation. Since $\tilde{d}_{1,j}=\tilde{d}_{0,j}\oplus 1$, for $j\in[n]$, we have 
\begin{equation}
\label{Nq5}
\Delta_0\cap\Delta_1=\emptyset,\qquad \Delta_0\cup\Delta_1=[n].
\end{equation}

We define 
\begin{equation}
\label{neweq14}
\underline{\Lambda}=\Lambda\cap\Omega_{\text{qub}},\qquad\overline{\Lambda}=\Lambda\cap\Omega_{\text{noqub}}.
\end{equation}
It follows that 
\begin{equation}
\label{neweq15}
\underline{\Lambda}\cap \overline{\Lambda}=\emptyset,\qquad \underline{\Lambda}\cup \overline{\Lambda}=\Lambda.
\end{equation}
 Similarly, we define 
 \begin{eqnarray}
 \label{neweq16}
 \underline{\Delta}&=&\{j\in[n]\vert ~\exists k\in\underline{\Lambda} \text{~s.~t.~} g(k)=j\},\nonumber\\
 \overline{\Delta}&=&\{j\in[n]\vert ~\exists k\in\overline{\Lambda} \text{~s.~t.~} g(k)=j\}.
  \end{eqnarray}
 Since the function $g: \Lambda\rightarrow [n]$ is one-to-one, there is a one-to-one correspondence between the elements of $\underline{\Lambda}$ ($\overline{\Lambda}$) and the elements of $\underline{\Delta}$ ($\overline{\Delta}$). Thus, from (\ref{neweq15}) and (\ref{neweq16}), we have
 \begin{equation}
\label{neweq17}
\underline{\Delta}\cap \overline{\Delta}=\emptyset,\qquad \underline{\Delta}\cup \overline{\Delta}=[n].
\end{equation}
 We define 
 \begin{equation}
\label{neweq18}
\underline{\Delta}_i=\Delta_i\cap \underline{\Delta},
\end{equation}
 for $i\in\{0,1\}$. It follows that $\underline{\Delta}_i\subseteq \Delta_i$, for $i\in\{0,1\}$. 
From the definitions of $\underline{\Delta}_0$, $\underline{\Delta}_1$ and $\underline{\Delta}$, we have
\begin{equation}
\label{neweq10}
\underline{\Delta}_0\cap\underline{\Delta}_1=\emptyset,\qquad
\underline{\Delta}_0\cup\underline{\Delta}_1=\underline{\Delta}.
\end{equation}
 We define
 \begin{eqnarray}
 \label{neweq19}
 \Lambda_i&=&\{k\in\Lambda\vert g(k)\in \Delta_i \},\nonumber\\\underline{\Lambda}_i&=&\{k\in\Lambda\vert g(k)\in\underline{\Delta}_i\},
 \end{eqnarray}
for $i\in\{0,1\}$. Since the function $g:\Lambda\rightarrow[n]$ is one-to-one, we have from (\ref{Nq5}), (\ref{neweq16}), (\ref{neweq10}) and (\ref{neweq19}) that
\begin{eqnarray}
\label{neweq20}
\Lambda_0\cap\Lambda_1&=&\emptyset,\qquad
\Lambda_0\cup\Lambda_1=\Lambda,\\
\label{neweq21}
\underline{\Lambda}_0\cap\underline{\Lambda}_1&=&\emptyset,\qquad
\underline{\Lambda}_0\cup\underline{\Lambda}_1=\underline{\Lambda}.
\end{eqnarray}

We define $\mathbf{\underline{\underline{e}}}$, $\mathbf{\overline{\overline{e}}}$, $\underline{\mathbf{e}}_0$ and $\underline{\mathbf{e}}_1$ to be the restrictions of the string $\mathbf{e}\in\{0,1\}^{n}$ to the bit entries $e_j$ with labels $j\in\underline{\Delta}$, $j\in\overline{\Delta}$, $j\in\underline{\Delta}_0$ and $j\in\underline{\Delta}_1$, respectively, for $\mathbf{e}\in\{\mathbf{a},\mathbf{b},\mathbf{r},\mathbf{s}\}$. We have chosen the notation $\mathbf{\underline{\underline{e}}}$ and $\mathbf{\overline{\overline{e}}}$ instead of the more obvious $\mathbf{\underline{e}}$ and $\mathbf{\overline{e}}$ for consistency with the notation chosen below, which simplifies the notation in various equations that follow. From (\ref{neweq18}), we have $\underline{\Delta}_i\subseteq \Delta_i$, for $i\in\{0,1\}$. Thus,  and since $\mathbf{e}_0$ and $\mathbf{e}_1$ are restrictions of the string $\mathbf{e}$ to the bit entries $e_j$ with labels $j\in\Delta_0$ and $j\in\Delta_1$, respectively,
we have that $\underline{\mathbf{e}}_0$ and $\underline{\mathbf{e}}_1$ are substrings of the strings $\mathbf{e}_0$ and $\mathbf{e}_1$, respectively, for $\mathbf{e}\in\{\mathbf{a},\mathbf{b},\mathbf{r},\mathbf{s}\}$. From (\ref{neweq17}) and (\ref{neweq10}), we have
\begin{equation}
\label{Nq7}
\mathbf{e}=(\underline{\underline{\mathbf{e}}},\overline{\overline{\mathbf{e}}})=(\underline{\mathbf{e}}_0,\underline{\mathbf{e}}_1,\overline{\overline{\mathbf{e}}}),\qquad
\underline{\underline{\mathbf{e}}}=(\underline{\mathbf{e}}_0,\underline{\mathbf{e}}_1),
\end{equation}
where $\underline{\underline{\mathbf{e}}}\in\{0,1\}^{\underline{\Delta}}$,  $\overline{\overline{\mathbf{e}}}\in\{0,1\}^{\overline{\Delta}}$, $\underline{\mathbf{e}}_0\in\{0,1\}^{\underline{\Delta}_0}$ and $\underline{\mathbf{e}}_1\in\{0,1\}^{\underline{\Delta}_1}$, for $\mathbf{e}\in\{\mathbf{a},\mathbf{b},\mathbf{r},\mathbf{s}\}$.

 We define $\underline{\mathbf{e}}_i$ to be the restrictions of the string $\mathbf{e}\in\{0,1\}^{N}$ to the bit entries $e_k$ with labels $k\in\underline{\Lambda}_i$, for $i\in\{0,1\}$ and $\mathbf{e}\in\{\mathbf{u},\mathbf{t}\}$. Since the function $g:\Lambda\rightarrow [n]$ is one to one, there is a one to one correspondence between $\underline{\mathbf{t}}_i$ and $\underline{\mathbf{r}}_i$, and between $\underline{\mathbf{u}}_i$ and $\underline{\mathbf{s}}_i$, for $i\in\{0,1\}$. We define the string $\underline{\underline{\mathbf{e}}}$ to be the restriction of the string $\mathbf{e}\in\{0,1\}^N$ to the bit entries $e_k$ with labels $k\in\underline{\Lambda}$, for $\mathbf{e}\in\{\mathbf{u},\mathbf{t}\}$. We define $\underline{\mathbf{e}}$ to be the restriction of $\mathbf{e}\in\{0,1\}^N$ to the bit entries $e_k$ with labels $k\in\Omega_{\text{qub}}$, for $\mathbf{e}\in\{\mathbf{u},\mathbf{t}\}$. Thus, from (\ref{neweq14}), $\underline{\underline{\mathbf{e}}}$ is a sub-string of $\underline{\mathbf{e}}$, for $\mathbf{e}\in\{\mathbf{u},\mathbf{t}\}$. From (\ref{neweq21}), we can write
\begin{equation}
\label{Nq7.1}
\underline{\underline{\mathbf{e}}}=(\underline{\mathbf{e}}_0,\underline{\mathbf{e}}_1),
\end{equation}
where $\underline{\underline{\mathbf{e}}}\in\{0,1\}^{\underline{\Lambda}}$, $\underline{\mathbf{e}}_0\in\{0,1\}^{\underline{\Lambda}_0}$ and $\underline{\mathbf{e}}_1\in\{0,1\}^{\underline{\Lambda}_1}$, for $\mathbf{e}\in\{\mathbf{t},\mathbf{u}\}$. We define $\underline{\mathbf{e}}'$ as the restriction of the string $\mathbf{e}\in\{0,1\}^N$ to the bit entries $e_k$ with labels $k\in\Omega_{\text{qub}}\setminus\underline{\Lambda}$, for $\mathbf{e}\in\{\mathbf{u},\mathbf{t}\}$. It follows that we can write 
\begin{equation}
\label{N1037}
\underline{\mathbf{e}}=(\underline{\underline{\mathbf{e}}},\underline{\mathbf{e}}')=(\underline{\mathbf{e}}_0,\underline{\mathbf{e}}_1,\underline{\mathbf{e}}'),
\end{equation}
where $\underline{\mathbf{e}}\in\{0,1\}^{\Omega_{\text{qub}}}$, $\underline{\underline{\mathbf{e}}}\in\{0,1\}^{\underline{\Lambda}}$, $\underline{\mathbf{e}}'\in\{0,1\}^{\Omega_{\text{qub}}\setminus\underline{\Lambda}}$, $\underline{\mathbf{e}}_0\in\{0,1\}^{\underline{\Lambda}_0}$ and $\underline{\mathbf{e}}_1\in\{0,1\}^{\underline{\Lambda}_1}$, for $\mathbf{e}\in\{\mathbf{t},\mathbf{u}\}$.

As mentioned above, given our notation, there is a one-to-one correspondence between $\underline{\mathbf{t}}_i$ and $\underline{\mathbf{r}}_i$, and between $\underline{\mathbf{u}}_i$ and $\underline{\mathbf{s}}_i$, for $i\in\{0,1\}$. Similarly, there is a one-to-one correspondence between $\underline{\underline{\mathbf{t}}}$ and $\underline{\underline{\mathbf{r}}}$, and between $\underline{\underline{\mathbf{u}}}$ and $\underline{\underline{\mathbf{s}}}$. We express these, and previously mentioned, one-to-one correspondences as follows
\begin{eqnarray}
\label{neweq23}
\Lambda&\leftrightarrow &[n],\nonumber\\
\underline{\Lambda}&\leftrightarrow &\underline{\Delta},\nonumber\\
\overline{\Lambda}&\leftrightarrow &\overline{\Delta},\nonumber\\
\Lambda_i&\leftrightarrow &\Delta_i,\nonumber\\
\underline{\Lambda}_i&\leftrightarrow &\underline{\Delta}_i,\nonumber\\
\underline{\underline{\mathbf{u}}}&\leftrightarrow &\underline{\underline{\mathbf{s}}},\nonumber\\
\underline{\underline{\mathbf{t}}}&\leftrightarrow &\underline{\underline{\mathbf{r}}},\nonumber\\
\underline{\mathbf{u}}_i&\leftrightarrow &\underline{\mathbf{s}}_i,\nonumber\\
\underline{\mathbf{t}}_i&\leftrightarrow &\underline{\mathbf{r}}_i,
\end{eqnarray}
for $i\in\{0,1\}$.

In the rest of this proof we assume 
\begin{equation}
\label{neweq22}
m=\lvert \Omega_{\text{noqub}}\rvert\leq \nu_\text{unf} N.
\end{equation}
We consider the $n-$bit strings  $\tilde{\mathbf{d}}_i=(\tilde{d}_{i,1},\ldots,\tilde{d}_{i,n})$ for $i\in\{0,1\}$. By definition of the token scheme, we have $\tilde{d}_{i,j}=d_{j}\oplus i\oplus c$, for $j\in[n]$ and $i\in\{0,1\}$. Thus, we have $\tilde{d}_{1,j}=\tilde{d}_{0,j}\oplus 1$, for $j\in[n]$, from which follows that the $n-$bit strings $\tilde{\mathbf{d}}_0$ and $\tilde{\mathbf{d}}_1$ are the complement of each other. We have
\begin{eqnarray}
\label{Nq6}
\lvert \underline{\Delta}_0\cup \underline{\Delta}_1\rvert &=&\lvert \underline{\Delta}\rvert\nonumber\\
&=&\lvert\underline{\Lambda}\rvert\nonumber\\
&\geq& \lvert\Lambda\rvert-\lvert\Omega_{\text{noqub}}\rvert\nonumber\\
&\geq&n - \nu_\text{unf} N,
\end{eqnarray}
where in the first line we used (\ref{neweq10}); in the second line we used (\ref{neweq16}) and the fact that $g:\Lambda\rightarrow[n]$ is a one-to-one function; in the third line we used (\ref{neweq12}) and (\ref{neweq14}); and in the last line we used (\ref{neweq11}) and (\ref{neweq22}).

For a given possible outcome $x$, we define the sets
\begin{eqnarray}
\label{N1008}
\Gamma^x_{\mathbf{t}\mathbf{u}}&=&\{(\mathbf{a},\mathbf{b})\in \{0,1\}^{n}\times \{0,1\}^{n}\vert d(\mathbf{a}_0,\mathbf{r}_0)\nonumber\\
&&\qquad\leq\lvert \Delta_0\rvert \gamma_\text{err},d(\mathbf{b}_1,\mathbf{r}_1)\leq \lvert \Delta_1\rvert \gamma_\text{err}\},\\
\label{N1008.1}
\underline{\Gamma}^x_{\mathbf{t}\mathbf{u}}&=&\{(\mathbf{a},\mathbf{b})\in \{0,1\}^{n}\times \{0,1\}^{n}\vert d(\underline{\mathbf{a}}_0,\underline{\mathbf{r}}_0)\nonumber\\
&&\qquad\leq\lvert \Delta_0\rvert \gamma_\text{err},d(\underline{\mathbf{b}}_1,\underline{\mathbf{r}}_1)\leq \lvert \Delta_1\rvert \gamma_\text{err}\},\\
\label{N1008.2}
\underline{\tilde{\Gamma}}^x_{\mathbf{t}\mathbf{u}}&=&\{(\mathbf{a},\mathbf{b})\in \{0,1\}^{n}\times \{0,1\}^{n}\vert d(\underline{\mathbf{a}}_0,\underline{\mathbf{r}}_0)\nonumber\\
&&\qquad+d(\underline{\mathbf{b}}_1,\underline{\mathbf{r}}_1)\leq \bigl(\lvert \Delta_0\rvert +\lvert \Delta_1\rvert\bigr)\gamma_\text{err}\},\\
\label{N1008.4}
\underline{\overline{\Gamma}}^x_{\mathbf{t}\mathbf{u}}&=&\{(\mathbf{a},\mathbf{b})\in \{0,1\}^{n}\times \{0,1\}^{n}\vert d(\underline{\mathbf{a}}_0,\underline{\mathbf{r}}_0)\nonumber\\
&&\qquad+d(\underline{\mathbf{b}}_1,\underline{\mathbf{r}}_1)\leq \bigl(\lvert \underline{\Delta}_0\rvert +\lvert \underline{\Delta}_1\rvert\bigr)\delta\},\\
\label{N1008.3}
\xi^x_{\mathbf{a}\mathbf{b}\underline{\mathbf{u}}}&=&\{\underline{\mathbf{t}}\in \{0,1\}^{\Omega_{\text{qub}}}\vert d(\underline{\mathbf{a}}_0,\underline{\mathbf{r}}_0)+d(\underline{\mathbf{b}}_1,\underline{\mathbf{r}}_1)\nonumber\\
&&\quad\qquad\qquad\qquad \leq  \bigl(\lvert \underline{\Delta}_0\rvert +\lvert \underline{\Delta}_1\rvert\bigr)\delta\},
\end{eqnarray}
where
\begin{equation}
\label{N1012}
\tilde{\delta}=\frac{\bigl(\lvert \Delta_0\rvert+\lvert \Delta_1\rvert\bigr)\gamma_\text{err}}{\lvert\underline{\Delta}_0\rvert+\lvert \underline{\Delta}_1\rvert},
\end{equation}
and where $\delta$ is given by (\ref{Al4repeated}). As this is useful below, we have clarified with the chosen notation that $\xi^x_{\mathbf{a}\mathbf{b}\underline{\mathbf{u}}}$ does not depend on $\mathbf{t}\overline{\mathbf{u}}$.
We show below that 
\begin{equation}
\label{Nj0}
0< \tilde{\delta}\leq \delta <\lambda(\theta,\beta_{\text{PB}}).
\end{equation}
It follows straightforwardly that
\begin{equation}
\label{N1011}
 \Gamma^x_{\mathbf{t}\mathbf{u}}\subseteq\underline{\Gamma}^x_{\mathbf{t}\mathbf{u}}\subseteq\underline{\tilde{\Gamma}}^x_{\mathbf{t}\mathbf{u}}\subseteq \underline{\overline{\Gamma}}^x_{\mathbf{t}\mathbf{u}}.
\end{equation}

We show (\ref{Nj0}). From the condition $n\geq \gamma_\text{det} N$ for Bob not aborting and from (\ref{Al1repeated}), we have 
\begin{equation}
\label{neweq26}
n-\nu_\text{unf} N\geq (\gamma_\text{det}-\nu_\text{unf})N>0.
\end{equation}
Thus, from $\gamma_\text{err}>0$, (\ref{Nq5}), (\ref{neweq10}), (\ref{Nq6}), (\ref{N1012}) and (\ref{neweq26}), we have $\tilde{\delta}> 0$. From (\ref{Nq5}), (\ref{neweq10}), (\ref{Nq6}) and (\ref{N1012}), we have
\begin{eqnarray}
\label{Nq11}
\tilde{\delta}&\leq&\frac{n\gamma_\text{err}}{n-\nu_\text{unf} N}\nonumber\\
&=&\gamma_\text{err}\Bigl(1+\frac{\nu_\text{unf} N}{n-\nu_\text{unf} N}\Bigr)\nonumber\\
&\leq&\gamma_\text{err}\Bigl(1+\frac{\nu_\text{unf}}{\gamma_\text{det}-\nu_\text{unf}}\Bigr)\nonumber\\
&<&\lambda(\theta, \beta_{\text{PB}}),
\end{eqnarray}
where in the third line we used the condition $n\geq\gamma_\text{det} N$ for Bob not aborting, and in the last line we used (\ref{Al1repeated}). Since $\delta= \frac{\gamma_\text{err}\gamma_\text{det}}{\gamma_\text{det}-\nu_\text{unf}}$, as defined by (\ref{Al4repeated}), (\ref{Nj0}) follows from (\ref{Nq11}).

The probability that Alice obtains outcomes $x, \mathbf{a}$ and $\mathbf{b}$ following her strategy $\mathcal{S}$ for given values of $\Omega_{\text{qub}}$, $\mathbf{u}$ and $\mathbf{t}$ is given by
\begin{equation}
\label{N1009}
P_{\mathcal{S}}^{\Omega_{\text{qub}}\overline{\mathbf{t}}\overline{\mathbf{u}}}[x\mathbf{a}\mathbf{b}\vert \underline{\mathbf{t}}\underline{\mathbf{u}}]=\text{Tr}\Bigl[\Bigl(\bigl(\phi_{\underline{\mathbf{t}}\underline{\mathbf{u}}}\bigr)_{\underline{A}}\otimes \bigl(\Phi_{\overline{\mathbf{t}}\overline{\mathbf{u}}}\bigr)_{\overline{A}E}\Bigr)\Pi_{x\mathbf{a}\mathbf{b}}\Bigr],
\end{equation}
where
\begin{eqnarray}
\label{neweq27}
\bigl(\phi_{\underline{\mathbf{t}}\underline{\mathbf{u}}}\bigr)_{\underline{A}}&=&\bigl(\lvert\phi_{\underline{\mathbf{t}}\underline{\mathbf{u}}}\rangle \langle\phi_{\underline{\mathbf{t}}\underline{\mathbf{u}}}\rvert\bigr)_{\underline{A}},\nonumber\\
\bigl(\Phi_{\overline{\mathbf{t}}\overline{\mathbf{u}}}\bigr)_{\overline{A}E}&=&\bigl(\lvert\Phi_{\overline{\mathbf{t}}\overline{\mathbf{u}}}\rangle \langle\Phi_{\overline{\mathbf{t}}\overline{\mathbf{u}}}\rvert\bigr)_{\overline{A}}\otimes \bigl(\lvert \chi\rangle\langle\chi\rvert\bigr)_E,
\end{eqnarray}
and where $\lvert\phi_{\underline{\mathbf{t}}\underline{\mathbf{u}}}\rangle_{\underline{A}}$ and $\lvert\Phi_{\overline{\mathbf{t}}\overline{\mathbf{u}}}\rangle_{\overline{A}}$ are defined by (\ref{neweq28}).
Thus, Alice's success probability $P_{\mathcal{S}}^{\Omega_{\text{qub}}\overline{\mathbf{t}}\overline{\mathbf{u}}}$ satisfies
\begin{eqnarray}
\label{N1007}
P_{\mathcal{S}}^{\Omega_{\text{qub}}\overline{\mathbf{t}}\overline{\mathbf{u}}}&=&\sum_{\underline{\mathbf{t}},\underline{\mathbf{u}},x}\sum_{ (\mathbf{a},\mathbf{b})\in \Gamma^x_{\mathbf{t}\mathbf{u}}} P_{\underline{\mathbf{t}}\underline{\mathbf{u}}}P_{\mathcal{S}}^{\Omega_{\text{qub}}\overline{\mathbf{t}}\overline{\mathbf{u}}}[x\mathbf{a}\mathbf{b}\vert \underline{\mathbf{t}}\underline{\mathbf{u}}]\nonumber\\
&\leq&\sum_{\underline{\mathbf{t}},\underline{\mathbf{u}},x}~\sum_{ (\mathbf{a},\mathbf{b})\in \underline{\overline{\Gamma}}^x_{\mathbf{t}\mathbf{u}}} P_{\underline{\mathbf{t}}\underline{\mathbf{u}}}P_{\mathcal{S}}^{\Omega_{\text{qub}}\overline{\mathbf{t}}\overline{\mathbf{u}}}[x\mathbf{a}\mathbf{b}\vert \underline{\mathbf{t}}\underline{\mathbf{u}}]\nonumber\\
&=&\sum_{\underline{\mathbf{u}},x}~\sum_{\substack{\underline{\mathbf{t}}_0,\underline{\mathbf{t}}_1,\underline{\mathbf{t}}'\\ \underline{\mathbf{b}}_0,\underline{\mathbf{a}}_1,\overline{\overline{\mathbf{a}}},\overline{\overline{\mathbf{b}}}}}~\sum_{ (\underline{\mathbf{a}}_0,\underline{\mathbf{b}}_1): C} P_{\underline{\mathbf{t}}\underline{\mathbf{u}}}P_{\mathcal{S}}^{\Omega_{\text{qub}}\overline{\mathbf{t}}\overline{\mathbf{u}}}[x\mathbf{a}\mathbf{b}\vert \underline{\mathbf{t}}\underline{\mathbf{u}}]\nonumber\\
&=&\sum_{\underline{\mathbf{u}},x}~\sum_{\substack{\underline{\mathbf{a}}_0,\underline{\mathbf{b}}_1,\underline{\mathbf{t}}'\\ \underline{\mathbf{b}}_0,\underline{\mathbf{a}}_1,\overline{\overline{\mathbf{a}}},\overline{\overline{\mathbf{b}}}}}~\sum_{ (\underline{\mathbf{t}}_0,\underline{\mathbf{t}}_1): C} P_{\underline{\mathbf{t}}\underline{\mathbf{u}}}P_{\mathcal{S}}^{\Omega_{\text{qub}}\overline{\mathbf{t}}\overline{\mathbf{u}}}[x\mathbf{a}\mathbf{b}\vert \underline{\mathbf{t}}\underline{\mathbf{u}}]\nonumber\\
&=&\sum_{\mathbf{a},\mathbf{b},x,\underline{\mathbf{u}}}~\sum_{ \underline{\mathbf{t}}\in \xi^x_{\mathbf{a}\mathbf{b}\underline{\mathbf{u}}}} P_{\underline{\mathbf{t}}\underline{\mathbf{u}}}P_{\mathcal{S}}^{\Omega_{\text{qub}}\overline{\mathbf{t}}\overline{\mathbf{u}}}[x\mathbf{a}\mathbf{b}\vert \underline{\mathbf{t}}\underline{\mathbf{u}}],
\end{eqnarray}
where $C$ denotes the constraint 
\begin{equation}
\label{neweq24}
d (\underline{\mathbf{a}}_0,\underline{\mathbf{r}}_0)+d(\underline{\mathbf{b}}_1,\underline{\mathbf{r}}_1)\leq (\lvert\underline{\Delta}_0\rvert+\lvert\underline{\Delta}_1\rvert)\delta;
\end{equation}
where in the first line we used (\ref{N1008}) and (\ref{N1009}); in the second line we used (\ref{N1011}); in the third line we used (\ref{Nq7}), (\ref{N1037}), and (\ref{N1008.4}); in the fourth line we used that $\underline{\mathbf{t}}_i$ is in one to one correspondence with  $\underline{\mathbf{r}}_i$, for $i\in\{0,1\}$; and in the last line we used (\ref{Nq7}), (\ref{N1037}) and (\ref{N1008.3}).

\subsection{Entanglement-based version}

We use an entanglement-based version of the task to re-write the last line of (\ref{N1007}). For $k\in\Omega_{\text{qub}}$, Bob first prepares a pair of qubits $B_kA_k$ in the Bell state $\lvert\Phi^+\rangle_{B_kA_k}=\frac{1}{\sqrt{2}}\bigl(\lvert 0\rangle\lvert0\rangle+\lvert1\rangle\lvert1\rangle\bigr)_{B_kA_k}$, sends the qubit $A_k$ to Alice, chooses $u_k\in\{0,1\}$ with probability $P_{\text{PB}}^k(u_k)$ and then measures the qubit $B_k$ in the the basis $\mathcal{D}_{u_k}^k=\{\lvert \phi_{tu_k}^k\rangle\}_{t=0}^1$, obtaining the outcome $\lvert \phi_{t_ku_k}^k\rangle$ randomly, with Alice's qubit $A_k$ projecting into the same state, for $t_k\in\{0,1\}$. We note that in order to deal with the fact that the probability $P_{\underline{\mathbf{t}}}$ does not necessarily correspond to the random distribution, for $\underline{\mathbf{t}}\in\{0,1\}^{\Omega_{\text{qub}}}$, we will need to introduce a factor of $P_{\underline{\mathbf{t}}}2^{\lvert\Omega_{\text{qub}}\rvert}$. For $k\in\Omega_{\text{noqub}}$, Bob generates the bits $t_k$ and $u_k$ in such a way that the strings $\overline{\mathbf{t}}$ and $\overline{\mathbf{u}}$ are generated with the probability distribution $P_{\overline{\mathbf{t}}\overline{\mathbf{u}}}$. Given the obtained values for $\overline{\mathbf{t}}$ and $\overline{\mathbf{u}}$, Bob prepares a finite dimensional quantum state $\lvert \Phi_{\overline{\mathbf{t}}\overline{\mathbf{u}}}\rangle_{\overline{A}}$ and sends it to Alice. Alice introduces an ancillary quantum system $E$ of arbitrary finite Hilbert space dimension in a pure state $\lvert \chi\rangle_E$ and then applies a projective measurement on $AE$, with projector operators $\Pi_{x\mathbf{a}\mathbf{b}}$, where the possible measurement outcomes $x=(\Lambda,g,\mathbf{d},c,\zeta)$ and $\mathbf{a},\mathbf{b}\in\{0,1\}^{n}$, run over the set of values satisfying the constraints, and where $A=A_1\cdots A_N=\underline{A}\overline{A}$. We define the quantum state
\begin{equation}
\label{Nl1.1}
\rho_{\overline{\mathbf{t}}\overline{\mathbf{u}}}=\bigl(\Phi^+\bigr)_{\underline{B}\underline{A}}\otimes\bigl(\Phi_{\overline{\mathbf{t}}\overline{\mathbf{u}}}\bigr)_{\overline{A}E},
\end{equation}
where $\underline{B}$ denotes the system held by Bob, $\bigl(\Phi^+\bigr)_{\underline{B}\underline{A}}=\bigotimes_{k\in\Omega_{\text{qub}}}\bigl(\lvert\Phi^+\rangle\langle\Phi^+\rvert\bigr)_{B_kA_k}$, and where the state $\bigl(\Phi_{\overline{\mathbf{t}}\overline{\mathbf{u}}}\bigr)_{\overline{A}E}
$is defined by (\ref{neweq27}). We define the positive semi definite (and Hermitian) operators
\begin{equation}
\label{Nl1.7x}
D_{x\mathbf{a}\mathbf{b}}
=\sum_{\underline{\mathbf{u}}}P_{\underline{\mathbf{u}}}\sum_{\underline{\mathbf{t}}\in \xi^x_{\mathbf{a}\mathbf{b}\underline{\mathbf{u}}}} P_{\underline{\mathbf{t}}}2^{\lvert\Omega_{\text{qub}}\rvert}\bigl(\phi_{\underline{\mathbf{t}}\underline{\mathbf{u}}}\bigr)_{\underline{B}},
\end{equation}
and 
\begin{equation}
\label{Nl1.7}
\tilde{P}=\sum_{x,\mathbf{a},\mathbf{b}} \bigl(D_{x\mathbf{a}\mathbf{b}}
\bigr)_{\underline{B}}\otimes \bigl(\Pi_{x\mathbf{a}\mathbf{b}}\bigr)_{AE},
\end{equation}
where $\bigl(\phi_{\underline{\mathbf{t}}\underline{\mathbf{u}}}\bigr)_{\underline{B}}$ is given by  (\ref{neweq27}), replacing $\underline{A}$ by $\underline{B}$, i.e. $\bigl(\phi_{\underline{\mathbf{t}}\underline{\mathbf{u}}}\bigr)_{\underline{B}}=\bigotimes _{k\in\Omega_{\text{qub}}} \bigl(\lvert \phi_{t_ku_k}^k\rangle\langle\phi_{t_ku_k}^k\rvert\bigr)_{B_k}$; and where $\underline{\mathbf{u}}$ runs over $\{0,1\}^{\Omega_{\text{qub}}}$, $x$ runs over its set of possible values, and $\mathbf{a}$ and $\mathbf{b}$ run over $\{0,1\}^{n}$.
It follows straightforwardly from (\ref{N1009}) -- (\ref{Nl1.7}), and from $P_{\underline{\mathbf{t}}\underline{\mathbf{u}}}=P_{\underline{\mathbf{t}}}P_{\underline{\mathbf{u}}}$, which follows from (\ref{N1034}), that
\begin{eqnarray}
\label{N1010}
P_{\mathcal{S}}^{\Omega_{\text{qub}}\overline{\mathbf{t}}\overline{\mathbf{u}}}&\leq& \text{Tr} \bigl( \tilde{P}\rho_{\overline{\mathbf{t}}\overline{\mathbf{u}}}\bigr)\nonumber\\
&\leq&\lVert \tilde{P}
\rVert\nonumber\\
&=&\max_{x,\mathbf{a},\mathbf{b}} \bigl\lVert D_{x\mathbf{a}\mathbf{b}}
\bigr\rVert,
\end{eqnarray}
where in the second line we used Proposition \ref{proposition1}; and where in the third line we used (\ref{Nl1.7}) and Proposition \ref{proposition2}, since $\{\Pi_{x\mathbf{a}\mathbf{b}}\}_{x,\mathbf{a},\mathbf{b}}$ is a projective measurement acting on a finite dimensional Hilbert space, and $\{D
_{x\mathbf{a}\mathbf{b}}\}_{x,\mathbf{a},\mathbf{b}}$ is a finite set of positive semi definite operators acting on a finite dimensional Hilbert space. We show below that
\begin{equation}
\label{N1030}
 \max_{x,\mathbf{a},\mathbf{b}} \lVert D_{x\mathbf{a}\mathbf{b}}
 \rVert\leq e^{-Nf(\gamma_\text{err},\beta_\text{PS},\beta_\text{PB},\theta,\nu_\text{unf},\gamma_\text{det})},
\end{equation}
with $f(\gamma_\text{err},\beta_\text{PS},\beta_\text{PB},\theta,\nu_\text{unf},\gamma_\text{det})$ given by (\ref{Al3repeated}). Thus, the claimed bound (\ref{Nq3}) follows from (\ref{N1010}) and (\ref{N1030}).

\subsection{Using Lemma \ref{lemma0} to prove (\ref{N1030})}

We compute an upper bound on $\max_{x,\mathbf{a},\mathbf{b}} \bigl\lVert D_{x\mathbf{a}\mathbf{b}}\bigr\rVert$. First, we define the set
\begin{eqnarray}
\label{N1013}
\underline{\xi}^x_{\mathbf{a}\mathbf{b}\underline{\mathbf{u}}_0\underline{\mathbf{u}}_1}&=&\bigl\{(\underline{\mathbf{t}}_0,\underline{\mathbf{t}}_1)\in \{0,1\}^{\underline{\Lambda}_0}\times \{0,1\}^{\underline{\Lambda}_1}\vert d(\underline{\mathbf{a}}_0,\underline{\mathbf{r}}_0)\nonumber\\
&&\qquad +d(\underline{\mathbf{b}}_1,\underline{\mathbf{r}}_1)\leq  \bigl(\lvert \underline{\Delta}_0\rvert +\lvert \underline{\Delta}_1\rvert\bigr)\delta\bigr\}.
\end{eqnarray}
As explicitly stated by the notation, we note that the dependence of $\underline{\xi}^x_{\mathbf{a}\mathbf{b}\underline{\mathbf{u}}_0\underline{\mathbf{u}}_1}$ on $\mathbf{u}$ is only via the sub-strings $\underline{\mathbf{u}}_0$ and $\underline{\mathbf{u}}_1$. This follow in particular because the constant $\delta$ does not depend on $\mathbf{u}$, as follows from (\ref{Al4repeated}). Thus, from (\ref{N1037}), (\ref{N1008.3}), (\ref{Nl1.7x}) and (\ref{N1013}), we have
\begin{equation}
\label{N1014}
D_{x\mathbf{a}\mathbf{b}}=\tilde{D}_{x\mathbf{a}\mathbf{b}}\otimes \tilde{\phi}_x,
\end{equation}
where
\begin{eqnarray}
\label{N1015}
\tilde{D}_{x\mathbf{a}\mathbf{b}}&=&\!\!\sum_{\underline{\mathbf{u}}_0,\underline{\mathbf{u}}_1}\!\!\!
P_{\underline{\mathbf{u}}_0\underline{\mathbf{u}}_1}\!\!\!\!\sum_{(\underline{\mathbf{t}}_0,\underline{\mathbf{t}}_1)\in\underline{\xi}^x_{\mathbf{a}\mathbf{b}\underline{\mathbf{u}}_0\underline{\mathbf{u}}_1}} \!\!\!\!\!\!\!\!\!\!\!\!
\!P_{\underline{\mathbf{t}}_0\underline{\mathbf{t}}_1}\!\!\bigl(\!\phi_{\underline{\mathbf{t}}_0\underline{\mathbf{t}}_1\underline{\mathbf{u}}_0\underline{\mathbf{u}}_1}\!\bigr)_{\underline{B}_0\underline{B}_1}\!,\\
\label{N1016}
\tilde{\phi}_x&=&2^{\lvert \Omega_{\text{qub}}\rvert}\sum_{\underline{\mathbf{t}}',\underline{\mathbf{u}}'}P_{\underline{\mathbf{u}}'\underline{\mathbf{t}}'}\bigl(\phi_{\underline{\mathbf{t}}'\underline{\mathbf{u}}'}\bigr)_{\underline{B}'},
\end{eqnarray}
and where $\underline{B}=\underline{B}_0\underline{B}_1\underline{B}'$. It follows from (\ref{N1014}) that
\begin{equation}
\label{N1019}
\lVert D_{x\mathbf{a}\mathbf{b}}
\rVert =
\lVert \tilde{D}_{x\mathbf{a}\mathbf{b}}\rVert~ \lVert \tilde{\phi}_x\rVert.
\end{equation}

We deduce an upper bound on $\max_{\mathbf{a},\mathbf{b}}\lVert \tilde{D}_{x\mathbf{a}\mathbf{b}}\rVert$. For given values of $x,\mathbf{a}$ and $\mathbf{b}$, we define the operator
\begin{equation}
\label{N1023}
\tilde{\underline{D}}_{x\mathbf{a}\mathbf{b}}=\sum_{\underline{\mathbf{u}}_0,\underline{\mathbf{u}}_1}P_{\underline{\mathbf{u}}_0\underline{\mathbf{u}}_1}\!\!\!\sum_{(\underline{\mathbf{t}}_0,\underline{\mathbf{t}}_1)\in\underline{\xi}^x_{\mathbf{a}\mathbf{b}\underline{\mathbf{u}}_0\underline{\mathbf{u}}_1}} \!\!\!\bigl(\phi_{\underline{\mathbf{t}}_0\underline{\mathbf{t}}_1\underline{\mathbf{u}}_0\underline{\mathbf{u}}_1}\bigr)_{\underline{B}_0\underline{B}_1}.
\end{equation}
For a given $x$, it follows from (\ref{Nj0}), (\ref{N1013}) and (\ref{N1023}), and from Lemma \ref{lemma0} that
\begin{equation}
\label{N1022}
\max_{\mathbf{a},\mathbf{b}} \lVert \tilde{\underline{D}}_{x\mathbf{a}\mathbf{b}}\rVert \leq e^{-\bigl(\!\lvert \underline{\Delta}_0\rvert +\lvert \underline{\Delta}_1\rvert \!\bigr)\frac{\lambda(\theta,\beta_{\text{PB}})}{2}\!\Bigl(\!1-\frac{\delta}{\lambda(\theta,\beta_{\text{PB}}\!)}\Bigr)^2},
\end{equation}
where
\begin{equation}
\label{N1049}
\lambda(\theta,\beta_{\text{PB}})=\frac{1}{2}\Bigl(1\!-\!\sqrt{1\!-\![1\!-\!(O(\theta))^2](1\!-\!4\beta_\text{PB}^2)}\Bigr).
\end{equation}
To see this more clearly, recall the following facts. First, $\underline{\mathbf{e}}_i$ is a bit string with bit entries $e_j$ with labels $j\in\underline{\Delta}_i$, for $i\in\{0,1\}$ and $\mathbf{e}\in\{\mathbf{a},\mathbf{b},\mathbf{r},\mathbf{s}\}$. Second,  $\underline{\mathbf{e}}_i$ is a bit string with bit entries $e_k$ with labels $k\in\underline{\Lambda}_i$, for $i\in\{0,1\}$ and $\mathbf{e}\in\{\mathbf{u},\mathbf{t}\}$. Third, there is a one-to-one correspondence between the sets $\underline{\Lambda}_i$ and $\underline{\Delta}_i$ via the one-to-one function $g$, i.e $k\in \underline{\Lambda}_i$ iff $g(k)=j\in\underline{\Delta}_i$, for $i\in\{0,1\}$. Fourth, there is a one-to-one correspondence between $\underline{\mathbf{u}}_i$ and $\underline{\mathbf{s}}_i$, and between $\underline{\mathbf{t}}_i$ and $\underline{\mathbf{r}}_i$, i.e. $u_k=s_j$ and $t_k=r_j$ for $j=g(k)$ and $k\in\underline{\Lambda}_i$, and for $i\in\{0,1\}$. Fifth, the sets $\underline{\Lambda}_0$ and $\underline{\Lambda}_1$ do not intersect, hence the sets $\underline{\Delta}_0$ and $\underline{\Delta}_1$ do not intersect either, which implies that $\lvert \underline{\Lambda}_0\cup \underline{\Lambda}_1\rvert=\lvert \underline{\Delta}_0 \cup \underline{\Delta}_1\rvert=\lvert \underline{\Delta}_0 \rvert +\lvert \underline{\Delta}_1\rvert$. Sixth, the bit entries $\tilde{d}_{0,j}=d_{j}\oplus  c$ and $ \tilde{d}_{1,j}=d_{j}\oplus 1\oplus c$ of the respective strings $\tilde{\mathbf{d}}_0$ and $\tilde{\mathbf{d}}_1$ defined in the token scheme $\mathcal{QT}_1$ are different: $\tilde{d}_{0,j}=\tilde{d}_{1,j}\oplus 1$; which are also different in the token scheme $\mathcal{QT}_2$ by setting $d_j=0$ for $j\in[n]$. Seventh, $\underline{\Delta}_i\subseteq \Delta_i$ for $i\in\{0,1\}$, where $ \Delta_i=\{j\in[n]\vert \tilde{d}_{i,j}=s_j\}$, for $i\in\{0,1\}$. From the previous observations, we can associate the parameter $N$ and the $N-$bit string $\mathbf{h}$ in Lemma \ref{lemma0} with $\lvert \underline{\Delta}_0 \rvert +\lvert \underline{\Delta}_1\rvert$ and with a string of bit entries $h_j=\tilde{d}_{0,j}$ for $j\in\underline{\Delta}_0\cup\underline{\Delta}_1$, respectively.
From (\ref{Nj0}) and (\ref{N1049}), we associate the parameters $\gamma_\text{err}$, $\lambda$ and $O$ in Lemma \ref{lemma0} with the parameters $\delta$, $\lambda(\theta,\beta_{\text{PB}})$ and $O(\theta)$ here, respectively. Thus, since $\tilde{d}_{0,j}=\tilde{d}_{1,j}\oplus 1$, for $j\in\underline{\Delta}_0\cup\underline{\Delta}_1$, the set $S_i^{\mathbf{h}}$ in Lemma \ref{lemma0} corresponds to the set  $\underline{\Delta}_i$ here, for $i\in\{0,1\}$. Thus, from (\ref{N1013}) and (\ref{N1023}), we can associate the operator $D_{\mathbf{a},\mathbf{b}}$ in Lemma \ref{lemma0} with the operator $\tilde{\underline{D}}_{x\mathbf{a}\mathbf{b}}$ here. Therefore, the bound (\ref{N1022}) follows from Lemma \ref{lemma0}.

Since as follows from (\ref{N1034}), $P_{\underline{\mathbf{t}}}=\prod_{k\in\Omega_{\text{qub}}}P^k_{\text{PS}}(t_k)$ with $ \frac{1}{2}-\beta_\text{PS}\leq P^k_{\text{PS}}(t_k)\leq \frac{1}{2}+\beta_\text{PS}$ for  $t_k\in\{0,1\}$ and for $k\in\Omega_{\text{qub}}$, we have from (\ref{N1015}) and (\ref{N1023}) that
\begin{equation}
\label{N1024}
\tilde{D}_{x\mathbf{a}\mathbf{b}} \leq \Bigl(\frac{1}{2}+\beta_\text{PS}\Bigr)^{(\lvert \underline{\Delta}_0\rvert +\lvert \underline{\Delta}_1\rvert)}
\tilde{\underline{D}}_{x\mathbf{a}\mathbf{b}},
\end{equation}
where we used that $\lvert \underline{\Lambda}_i\rvert=\lvert \underline{\Delta}_i\rvert$, for $i\in\{0,1\}$. Thus, from (\ref{N1022}) and (\ref{N1024}), we have
\begin{eqnarray}
\label{N1025}
&&\max_{\mathbf{a},\mathbf{b}} \lVert \tilde{D}_{x\mathbf{a}\mathbf{b}}\rVert \nonumber\\
&&\quad\leq e^{-\bigl(\lvert \underline{\Delta}_0\rvert +\lvert \underline{\Delta}_1\rvert \bigr)\Bigl[\frac{\lambda(\theta,\beta_{\text{PB}})}{2}\bigl(1-\frac{\delta}{\lambda(\theta,\beta_{\text{PB}})}\bigr)^2-\text{ln}(1+2\beta_\text{PS})\Bigr]}\times\nonumber\\
&&\qquad\qquad \times 2^{-\bigl(\lvert \underline{\Delta}_0\rvert +\lvert \underline{\Delta}_1\rvert \bigr)}.
\end{eqnarray}

We derive an upper bound on $\lVert \tilde{\phi}_x\rVert$. From (\ref{N1037}) and (\ref{N1016}), we have
\begin{equation}
\label{N1017}
\tilde{\phi}_x=2^{\lvert \Omega_{\text{qub}}\rvert}\bigotimes_{k\in\Omega_{\text{qub}}\setminus\underline{\Lambda}}\bigl(\rho^k\bigr)_{B_k},
\end{equation}
where $\rho^k$ is a qubit density matrix given by
\begin{equation}
\label{N1018}
\rho^k=\sum_{u=0}^1 \sum_{t=0}^1P^k_{\text{PB}}(u)P^k_{\text{PS}}(t)\lvert \phi_{tu}^k\rangle\langle\phi_{tu}^k\rvert,
\end{equation}
for $k\in\Omega_{\text{qub}}\setminus\underline{\Lambda}$. Let $\mu_\pm^k$ be the eigenvalues of $\rho^k$, satisfying $\mu_-^k\leq\mu_+^k$, for $k\in\Omega_{\text{qub}}\setminus\underline{\Lambda}$. It follows from Lemma \ref{eigenvalue}  that
\begin{equation}
\label{N1020}
\mu_+^k\leq \frac{1}{2}\Bigl(1+h(\beta_\text{PS},\beta_\text{PB},\theta)\Bigr),
\end{equation}
where $h(\beta_\text{PS},\beta_\text{PB},\theta)$ is given by (\ref{Al4repeated}), for $k\in\Omega_{\text{qub}}\setminus\underline{\Lambda}$. It follows that
\begin{eqnarray}
\label{N1026}
\lVert \tilde{\phi}_x\rVert&=& 2^{\lvert \Omega_{\text{qub}}\rvert}\prod_{k\in\Omega_{\text{qub}}\setminus\underline{\Lambda}}\mu_+^k\nonumber\\
&\leq& 2^{(\lvert \underline{\Delta}_0\rvert+\lvert \underline{\Delta}_1\rvert)}\bigl(1+h(\beta_\text{PS},\beta_\text{PB},\theta)\bigr)^{\lvert \Omega_{\text{qub}}\rvert -\lvert\underline{\Delta}_0\rvert-\lvert \underline{\Delta}_1\rvert},\nonumber\\
\end{eqnarray}
where in the first line we used (\ref{N1017}) and (\ref{N1018}); and in the second line we used (\ref{N1020}), the fact that $\Omega_{\text{qub}}=\underline{\Lambda}_0\cup\underline{\Lambda}_1\cup\{\Omega_{\text{qub}}\setminus\underline{\Lambda}\}$, that the sets $\underline{\Lambda}_0$ and $\underline{\Lambda}_1$ do not intersect, that $\underline{\Lambda}=\underline{\Lambda}_0\cup\underline{\Lambda}_1$, and that $\lvert \underline{\Delta}_i\rvert =\lvert \underline{\Lambda}_i\rvert$ for $i\in\{0,1\}$.

It follows from (\ref{N1019}), (\ref{N1025}) and (\ref{N1026}) that
\begin{eqnarray}
\label{N1027}
&&\max_{\mathbf{a},\mathbf{b}}\lVert D_{x\mathbf{a}\mathbf{b}}\rVert \nonumber\\
&&\quad\leq e^{-(\lvert \underline{\Delta}_0\rvert+\lvert \underline{\Delta}_1\rvert)\Bigl[\frac{\lambda(\theta,\beta_{\text{PB}})}{2}\bigl(1-\frac{\delta}{\lambda(\theta,\beta_{\text{PB}})}\bigr)^2-\text{ln}(1+2\beta_\text{PS})\Bigr]}\times\nonumber\\
&&\qquad\qquad\times e^{\text{ln}\bigl(1+h(\beta_\text{PS},\beta_\text{PB},\theta)\bigr)\bigl(\lvert \Omega_{\text{qub}}\rvert-\lvert \underline{\Delta}_0\rvert-\lvert \underline{\Delta}_1\rvert\bigr)}\nonumber\\
&&\quad\leq e^{-Nf(\gamma_\text{err},\beta_\text{PS},\beta_\text{PB},\theta,\nu_\text{unf},\gamma_\text{det})},
\end{eqnarray}
where in the second line we used (\ref{Al3repeated}) and
\begin{eqnarray}
\label{Nq12}
&&Nf(\!\gamma_\text{err},\!\beta_\text{PS},\!\beta_\text{PB},\!\theta,\!\nu_\text{unf},\!\gamma_\text{det})\nonumber\\
&&\!\quad\!\leq\!(\!\lvert \underline{\Delta}_0\rvert\!+\!\lvert \underline{\Delta}_1\rvert\!)\!\biggl[\!\frac{\lambda(\theta,\beta_{\text{PB}})}{2}\biggl(\!1\!-\!\frac{\delta}{\lambda(\theta,\beta_{\text{PB}})}\!\biggr)^2\!\!-\!\text{ln}(1\!+\!2\beta_\text{PS})\!\biggr]\nonumber\\
&&\qquad-\text{ln}\bigl(\!1\!+\!h(\beta_\text{PS},\!\beta_\text{PB},\!\theta)\bigr)\!\bigl(\lvert \Omega_{\text{qub}}\rvert-\lvert \underline{\Delta}_0\rvert-\lvert \underline{\Delta}_1\rvert\bigr).
\end{eqnarray}
As we show below, (\ref{Nq12}) holds for all possible values of $x$. Thus, (\ref{N1027}) holds for all possible values of $x$. It follows that 
\begin{equation}
\label{N1029}
\max_{x,\mathbf{a},\mathbf{b}}\lVert D_{x\mathbf{a}\mathbf{b}}\rVert \leq e^{-Nf(\gamma_\text{err},\beta_\text{PS},\beta_\text{PB},\theta,\nu_\text{unf},\gamma_\text{det})},
\end{equation}
with $f(\gamma_\text{err},\beta_\text{PS},\beta_\text{PB},\theta,\nu_\text{unf},\gamma_\text{det})$ given by (\ref{Al3repeated}), which is the claimed bound (\ref{N1030}).

We show that (\ref{Nq12}) holds for all possible values of $x$. First, we note from (\ref{Al1repeated}) that 
\begin{equation}
\label{neweq29}
\frac{\lambda(\theta,\beta_\text{PB})}{2}\Bigl(1-\frac{\delta}{\lambda(\theta,\beta_\text{PB})}\Bigr)^2-\text{ln}(1+2\beta_\text{PS})>0.
\end{equation}
Second, from $\beta_\text{PS}>0$ and from the definition (\ref{Al4repeated}) we have $h(\beta_\text{PS},\beta_\text{PB},\theta)>0$. Thus, we have 
\begin{equation}
\label{neweq30}
\text{ln}\bigl(1+h(\beta_\text{PS},\beta_\text{PB},\theta)\bigr)>0.
\end{equation}
Third, from (\ref{neweq10}) and (\ref{Nq6}), and from the condition $n\geq \gamma_\text{det} N$ for Bob not aborting, we have 
\begin{equation}
\label{neweq31}
 \lvert \underline{\Delta}_0\rvert +\lvert \underline{\Delta}_1\rvert\geq N(\gamma_\text{det} - \nu_\text{unf}).
\end{equation}
Fourth, it follows from (\ref{Al1repeated}) that 
\begin{equation}
\label{neweq32}
\gamma_\text{det}-\nu_\text{unf}>0.
\end{equation}
Thus, since $\lvert \Omega_{\text{qub}}\rvert \leq N$, (\ref{Nq12}) follows from (\ref{neweq29}) -- (\ref{neweq32}) and from the definition of $f(\gamma_\text{err},\beta_\text{PS},\beta_\text{PB},\theta,\nu_\text{unf},\gamma_\text{det})$ given by (\ref{Al3repeated}).


\section{The case of $2^M$ presentation points}
\label{manyapplast}

The proof of Theorem \ref{manypoints} follows straightforwardly from Lemmas \ref{lastrobustM} -- \ref{lastBobM} and from Theorem \ref{lastAliceM} at the end of this section.

The quantum token schemes $\mathcal{QT}_1^M$ and $\mathcal{QT}_2^M$ presented below extend the quantum token schemes $\mathcal{QT}_1$ and $\mathcal{QT}_2$ of Tables \ref{real1} and \ref{real2} to the case of $2^M$ presentation points, for arbitrary integer $M\geq1$. Broadly speaking, the schemes $\mathcal{QT}_1^M$ and $\mathcal{QT}_2^M$ generate the classical inputs and outputs of the schemes $\mathcal{QT}_1$ and $\mathcal{QT}_2$ as subroutines, $M$ times in parallel, with a few differences arising due to the fact that $\mathcal{QT}_1^M$ and $\mathcal{QT}_2^M$ have $2^M$ presentation points instead of two. Similarly to $\mathcal{QT}_1$ and $\mathcal{QT}_2$, $\mathcal{QT}_1^M$ and $\mathcal{QT}_2^M$ can be implemented in practice with the photonic setups of \damian{Fig. \ref{setup}}, respectively.



In the schemes $\mathcal{QT}_1$ and $\mathcal{QT}_2$, there are two presentation points $Q_0$ and $Q_1$, Alice has agents $\mathcal{A},\mathcal{A}_0, \mathcal{A}_1$ and Bob has agents $\mathcal{B},\mathcal{B}_0, \mathcal{B}_1$. From Tables \ref{real1} and \ref{real2}, we see that in $\mathcal{QT}_1$ and $\mathcal{QT}_2$, $\mathcal{B}$ obtains $\bold{t},\bold{u}\in\{0,1\}^N$, $\Omega_\text{noqub}\subseteq[N]$, and $\bold{r},\bold{s}\in\{0,1\}^n$ in the intersection of the causal pasts of the presentation points; $\mathcal{A}$ obtains $\Lambda\subseteq[N]$, $n=\lvert\Lambda\rvert$, $W\in\{0,1\}^\Lambda$, $g:\Lambda\rightarrow [n]$, $\bold{y},\bold{x},\bold{d}\in\{0,1\}^n$, and $b,c,z\in\{0,1\}$ in the intersection of the causal pasts of the presentation points; $\mathcal{B}_i$ obtains $\bold{d}_i,\tilde{\bold{d}}_i\in\{0,1\}^n$ in the causal past of $Q_i$, for $i\in\{0,1\}$; $\mathcal{A}_b$ presents $\bold{x}$ to $\mathcal{B}_b$ in $Q_b$; and $\mathcal{B}_b$ obtains $\bold{x}_b,\bold{r}_b\in\{0,1\}^n$ and $\Delta_b$ in $Q_b$.

On the other hand, in the schemes $\mathcal{QT}_1^M$ and $\mathcal{QT}_2^M$, there are $2^M$ presentation points $Q_i$, Alice has agents $\mathcal{A},\mathcal{A}_i$, and Bob has agents $\mathcal{B},\mathcal{B}_i$, where $i=(i^1,\ldots,i^M)\in\{0,1\}^M$. In these schemes, Alice's and Bob's agents obtain the inputs and outputs of the schemes $\mathcal{QT}_1$ and $\mathcal{QT}_2$ in the corresponding spacetime regions, as mentioned above, in $M$ independent rounds. For the $l$th round, we label the inputs and outputs mentioned above by a superscript $l$. In the schemes $\mathcal{QT}_1^M$ and $\mathcal{QT}_2^M$, the experimental imperfections of Table \ref{tableimp} and the assumptions of Table \ref{tableassu} apply independently to each of the $M$ rounds.

The schemes $\mathcal{QT}_1^M$ and $\mathcal{QT}_2^M$ have a new step  (step 12 of $\mathcal{QT}_1^M$) compared to $\mathcal{QT}_1$ and $\mathcal{QT}_2$, in which $\mathcal{B}_i$ sends a signal to agents $\mathcal{B}_{i'}$ with $Q_{i'}$ in the causal future of $Q_i$ indicating whether a token was presented at $Q_i$ by $\mathcal{A}_i$. This extra step allows us to reduce the proof of unforgeability to the case of spacelike separated presentation points. We note that instant validation is still satisfied, as no extra delays  for token validation due to cross-checking are required.

The schemes $\mathcal{QT}_1^M$ and $\mathcal{QT}_2^M$ are presented precisely below. 

\subsection{Quantum token scheme $\mathcal{QT}_1^M$ for $2^M$ presentation points}

Steps 1 to 10 below are repeated in $M$ independent rounds, labelled by $l\in[M]$. Steps 1 to 9 take place within the intersection
of the causal pasts of the presentation points.

\subsubsection{Preparation stage}

\begin{itemize}
\item[$0.$] Alice and Bob agree on a reference frame, on presentation points $Q_i$ in the agreed frame, for $i\in\{0,1\}^M$, and on parameters $N\in\mathbb{N}$,  $\beta_{\text{PB}}\in\bigl(0,\frac{1}{2}\bigr)$, $\gamma_{\text{det}}\in(0,1)$ and $\gamma_{\text{err}}\in(0,1)$.
\end{itemize}

\subsubsection{Stage I}


\begin{itemize}
\item[$1.$] For $k\in[N]$, $\mathcal{B}$ prepares bits $t_k^l$ and $u_k^l$ with respective probability distributions $P_{\text{PS}}^{k,l}(t_k^l)$ and $P_{\text{PB}}^{k,l}(u_k^l)$, satisfying $\frac{1}{2}-\beta_{\text{X}}\leq P_{\text{X}}^{k,l}(t) \leq \frac{1}{2}+\beta_{\text{X}}$, where $\beta_{\text{X}}\in\bigl(0,\frac{1}{2}\bigr)$ is a small parameter, for $\text{X}\in\{\text{PS},\text{PB}\}$, $t\in\{0,1\}$ and $k\in[N]$. We define $\mathbf{t}^l=(t_1^l,\ldots,t_N^l)$ and $\mathbf{u}^l=(u_1^l,\ldots,u_N^l)$. For $k\in[N]$, $\mathcal{B}$ prepares a quantum system $A_k^l$ in a quantum state $\lvert \psi_k^l\rangle$ and sends it to $\mathcal{A}$ with its label $(k,l)$. $\mathcal{B}$ chooses $(k,l)\in\Omega_{\text{noqub}}^l$ with probability $P_\text{noqub}>0$ or $(k,l)\in\Omega_{\text{qub}}^l$ with probability $1-P_\text{noqub}$. For $(k,l)\in\Omega_{\text{qub}}$, $\lvert\psi_k^l\rangle=\lvert \phi_{t_k^lu_k^l}^{k,l}\rangle$ is a qubit state, where $\langle \phi_{0u}^{k,l}\vert \phi_{1u}^{k,l}\rangle=0$ for $u\in\{0,1\}$, where the qubit orthonormal basis $\mathcal{D}_{u}^{k,l}=\{\lvert\phi_{tu}^{k,l}\rangle\}_{t=0}^1$ is the computational (Hadamard) basis up to an uncertainty angle $\theta$ on the Bloch sphere if $u=0$ ($u=1$).  For $(k,l)\in\Omega_{\text{noqub}}^l$, $\lvert\psi_k^l\rangle=\lvert \Phi_{t_k^lu_k^l}^{l,k}\rangle$ is a quantum state of arbitrary finite Hilbert space dimension greater than two. In photonic implementations, a vacuum or one-photon pulse has label $(k,l)\in\Omega_{\text{qub}}^l$, with a one-photon pulse encoding a qubit state, while a multi-photon pulse has label $(k,l)\in\Omega_{\text{noqub}}^l$ and encodes a quantum state of finite Hilbert space dimension greater than two.

\item[$2.$]  For $k\in[N]$, $\mathcal{A}$ measures $A_k^l$ in the qubit orthonormal basis $\mathcal{D}_{w_k^l}$, for $w_k^l\in\{0,1\}$. Due to losses, $\mathcal{A}$ only successfully measures quantum states $\vert \psi_k^l\rangle$ with labels $(k,l)$ from a proper subset $\Lambda^l$ of $[N]$.
Let $W^l$ be the string of bit entries $w_k^l$ for $(k,l)\in\Lambda^l$ and let $n^l=\lvert\Lambda^l\rvert$. Conditioned on $(k,l)\in\Lambda^l$, the probability 
that $\mathcal{A}$ measures $A_k^l$ in the basis $\mathcal{D}_{w_k^l}$ satisfies $P_{\text{MB}}(w_k^l)=\frac{1}{2}$, for $w_k^l\in\{0,1\}$ and $k\in[N]$. $\mathcal{A}$ reports to $\mathcal{B}$ the set $\Lambda^l$ with its label $l$. $\mathcal{B}$ does not abort if and only if $n^l\geq \gamma_\text{det} N$.

\item[$3.$] $\mathcal{A}$ chooses a one-to-one function $g^l: \Lambda^l\rightarrow [n]$, for example the numerical ordering, and sends it to $\mathcal{B}$ with its label $l$. Let $y_j^l\in\{0,1\}$ indicate the basis $\mathcal{D}_{y_j^l}$ on which the quantum state $\vert \psi_k^l\rangle$ is measured by $\mathcal{A}$ and let $x_j^l\in\{0,1\}$ be the measurement outcome, where $j=g^l(k)$, for $k\in\Lambda^l$ and $j\in[n]$. Let $\mathbf{y}^l=(y_1^l,\ldots,y_n^l)\in\{0,1\}^{n}$ and $\mathbf{x}^l=(x_1^l,\ldots,x_n^l)\in\{0,1\}^{n}$ denote the strings of Alice's measurement bases and outcomes, respectively.

\item[$4.$] $\mathcal{A}$ sends $\mathbf{x}^l$ to $\mathcal{A}_i$ with its label $l$, for $i\in\{0,1\}^M$.

\item[$5.$] $\mathcal{A}$ chooses a bit $z^l\in\{0,1\}$ with probability $P^l_{\text{E}}(z^l)$ that satisfies $\frac{1}{2}-\beta_{\text{E}}\leq P^l_{\text{E}}(z^l) \leq \frac{1}{2}+\beta_{\text{E}}$, for $z^l\in\{0,1\}$, and for a small parameter $\beta_{\text{E}}\in\bigl(0,\frac{1}{2}\bigr)$. $\mathcal{A}$ computes the string $\mathbf{d}^l\in\{0,1\}^{n}$ with bit entries $d_j^l=y_j^l\oplus z^l$, for $j\in[n]$. $\mathcal{A}$ sends $\mathbf{d}^l$ to $\mathcal{B}$ with its label $l$.

\item[$6.$] For $i=(i^1,\ldots,i^M)\in\{0,1\}^M$, $\mathcal{B}$ sends $\mathbf{d}^l$ to $\mathcal{B}_i$ with its label $l$, and $\mathcal{B}_i$ computes the string $\mathbf{d}_{i^l}^l\in\{0,1\}^{n}$ with bit entries $d_{i^l,j}^l=d_j^l\oplus i^l$, for $j\in[n]$.

\item[$7.$] $\mathcal{B}$ uses $\mathbf{t}^l,\mathbf{u}^l,\Lambda^l$ and $g^l$ to compute the strings $\mathbf{s}^l,\mathbf{r}^l\in\{0,1\}^{n}$, as follows. We define $r_j^l=t_k^l$, and $s_j^l=u_k^l$, where $j=g^l(k)$, for $j\in[n]$ and $k\in\Lambda^l$. We define $\mathbf{r}^l$ and $\mathbf{s}^l$ as the strings with bit entries $r_j^l$ and $s_j^l$, for $j\in[n]$, respectively. For $\mathcal{B}$ sends $\mathbf{s}^l$ and $\mathbf{r}^l$ to $\mathcal{B}_i$ with its label $l$, for $i\in\{0,1\}^M$.
\end{itemize}

\subsubsection{Stage II}


\begin{itemize}
\item[$8.$] $\mathcal{A}$ chooses the $l$th entry $b^l\in\{0,1\}$ for the bit string $b = (b^1,\ldots,b^M)\in\{0,1\}^M$ that labels the presentation point $Q_b$ where to present the token.
$\mathcal{A}$ computes the bit $c^l=b^l\oplus z^l$ and sends it to $\mathcal{B}$ with its label $l$.

\item[$9.$] $\mathcal{B}$ sends $c^l$ with its label $l$ to $\mathcal{B}_i$, for $i\in\{0,1\}^M$.

\item[$10.$] For $i\in\{0,1\}^M$, in the causal past of $Q_i$, $\mathcal{B}_i$ computes the string $\tilde{\mathbf{d}}_{i^l}^l\in\{0,1\}^{n}$ with bit entries $\tilde{d}_{i^l,j}^l=d_{i^l,j}^l\oplus c^l$, for $j\in[n]$.

\item[$11.$] $\mathcal{A}$ sends a signal to $\mathcal{A}_b$ indicating to present the token at $Q_b$, and $\mathcal{A}_b$ presents the token $\mathbf{x}=(\mathbf{x}^1,\ldots,\mathbf{x}^M)$ to $\mathcal{B}_b$ in $Q_b$.

\item[$12.$] For all $i\in\{0,1\}^M$, if $\mathcal{B}_i$ receives a token from $\mathcal{A}_i$ at $Q_i$, $\mathcal{B}_i$ sends a signal to $\mathcal{B}_{i'}$ indicating so, for all $i'\in\{0,1\}^M$ such that $Q_{i'}$ is in the causal future of $Q_i$.

\item[$13.$] $\mathcal{B}_b$ validates the token $\mathbf{x}$ received in $Q_b$ if two conditions hold: 1) $\mathcal{B}_b$ does not receive signals from Bob's agent $\mathcal{B}_i$ indicating that a token has been presented by Alice at $Q_i$, for any $i\in\{0,1\}^M$ such that $Q_i$ is in the causal past of $Q_b$; and 2) for all $l\in[M]$, the Hamming distance between the strings $\mathbf{x}_{b^l}^l$ and $\mathbf{r}_{b^l}^l$ satisfies $d(\mathbf{x}_{b^l}^l,\mathbf{r}_{b^l}^l)\leq \lvert \Delta_{b^l}^l\rvert \gamma_\text{err}$, where $\Delta_v^l=\{j\in [n]\vert \tilde{d}_{v,j}^l=s_j^l\}$, and where $\mathbf{a}_{v}^l$ is the restriction of a string $\mathbf{a}^l\in\{\mathbf{x}^l,\mathbf{r}^l\}$ to entries $a_j^l$ with $j\in\Delta_v^l$, for $v\in\{0,1\}$.

\end{itemize}

\subsection{Quantum token scheme $\mathcal{QT}_2^M$ for $2^M$ presentation points}

Steps 1 to 7 below are repeated in $M$ independent rounds, labelled by $l\in[M]$. Steps 1 to 6 take place within the intersection of the causal pasts of the presentation points.

\subsubsection{Preparation stage}

\begin{itemize}
\item[$0.$] As step 0 of $\mathcal{QT}_1^M$.
\end{itemize}

\subsubsection{Stage I}
\begin{itemize}
\item[$1.$] As step 1 of $\mathcal{QT}_1^{M}$.

\item[$2.$] The step 2 of $\mathcal{QT}_1^M$ is replaced by the following. $\mathcal{A}$ chooses a bit $z^l$ with probability $P_{\text{E}}^l(z^l)$ satisfying $\frac{1}{2}-\beta_{\text{E}}\leq P_{\text{E}}^l(z^l)\leq \frac{1}{2}+\beta_{\text{E}}$, for $z^l\in\{0,1\}$, and for a small parameter $\beta_{\text{E}}\in\bigl(0,\frac{1}{2}\bigr)$. $\mathcal{A}$ measures $A_k^l$ in the qubit orthonormal basis $\mathcal{D}_{z^l}$, for $k\in[N]$. Due to losses, $\mathcal{A}$ only successfully measures quantum states $\vert \psi_k^l\rangle$ with labels $(l,k)$ from a proper subset $\Lambda^l$ of $[N]$. $\mathcal{A}$ reports to $\mathcal{B}$ the set $\Lambda^l$ with its label $l$. Let $n^l=\lvert\Lambda^l\rvert$. $\mathcal{B}$ does not abort if and only if $n^l\geq \gamma_\text{det} N$.


\item[$3.$] As step 3 of $\mathcal{QT}_1^M$. 
The string $\mathbf{y}^l\in\{0,1\}^{n}$ of Alice's measurement bases has bit entries $y^l_j=z^l$, for $j\in[n]$. 

\item[$4.$] As step 4 of $\mathcal{QT}_1^M$. The steps 5 and 6 of $\mathcal{QT}_1^M$ are discarded. 

\item[$5.$] As step 7 of $\mathcal{QT}_1^M$.

\subsubsection{Stage II}

\item[$6.$] As steps 8 and 9 of $\mathcal{QT}_1^M$.

\item[$7.$] The step 10 of $\mathcal{QT}_1^M$ is replaced by the following. 
For $i=(i^1,\ldots,i^M)\in\{0,1\}^M$, in the causal past of $Q_i$, $\mathcal{B}_i$ computes the string $\tilde{\mathbf{d}}_{i^l}^l\in\{0,1\}^{n}$ with bit entries $\tilde{d}_{{i^l},j}^l= i^l\oplus c^l$, for $j\in[n]$.

\item[$8.$] As steps 11, 12 and 13 of $\mathcal{QT}_1^M$.

\end{itemize}

\subsection{Comments}

We note that steps 1 to 11 and 1 to 7 of the token schemes $\mathcal{QT}_1^M$ and $\mathcal{QT}_2^M$ are straightforward extensions of the corresponding steps in the token schemes $\mathcal{QT}_1$ and $\mathcal{QT}_2$, respectively. As we discussed above, this basically comprises applying the corresponding steps of $\mathcal{QT}_1$ and $\mathcal{QT}_2$ in $M$ parallel and independent rounds. However, step 12 of $\mathcal{QT}_1^M$ is a new step, and steps 13 and 8 of $\mathcal{QT}_1^M$ and $\mathcal{QT}_2^M$ modify steps 12 and 8 of $\mathcal{QT}_1$ and $\mathcal{QT}_2$, respectively, to account for the new step.

We note from steps 12 and 13 of $\mathcal{QT}_1^M$, and from step 8 of $\mathcal{QT}_2^M$, that a token received by Bob's agent $\mathcal{B}_b$ from Alice's agent $\mathcal{A}_b$ at a presentation point $Q_b$ can be validated by $\mathcal{B}_b$ nearly instantly at $Q_b$. In particular, $\mathcal{B}_b$ does not need to wait for any signals coming from agents $\mathcal{B}_i$ who have possibly received tokens from Alice's agents at presentation points $Q_i$ that are not in the causal past of $Q_b$.

For $M>1$, steps 12 and 13 of $\mathcal{QT}_1^M$, and step 8 of $\mathcal{QT}_2^M$, allow us to guarantee unforgeability, as discussed 
below. In the case $M=1$, steps 12 and 13 of $\mathcal{QT}_1^M$, and step 8 of $\mathcal{QT}_2^M$, can simply be replaced by step 12 of $\mathcal{QT}_1$, and by step 8 of $\mathcal{QT}_2$, respectively.

Every observation made previously about the token schemes $\mathcal{QT}_1$ and $\mathcal{QT}_2$ also applies to the token schemes $\mathcal{QT}_1^M$ and $\mathcal{QT}_2^M$. In particular, the schemes $\mathcal{QT}_1^M$ and $\mathcal{QT}_2^M$ also allow for the experimental imperfections of Table \ref{tableimp} and make the assumptions of Table \ref{tableassu}, for the $l$th round and for $l\in[M]$.
Stage I includes the quantum communication, which can take place between adjacent laboratories, and must be implemented within the intersection of the
causal pasts of all the presentation points. This stage can take an arbitrarily long time and can be completed arbitrarily in the past of the presentation
points, which is very helpful for practical implementations. Stage II comprises only classical processing and communication, and must usually be completed within a very short time. 
We note that Alice chooses her presentation point in stage
II, meaning in particular that it can take place after her
quantum measurements have been completed, which gives Alice
great flexibility in spacetime to choose her presentation
point. The token schemes $\mathcal{QT}_1^M$ and $\mathcal{QT}_2^M$  can be modified in various ways, as discussed previously for the $\mathcal{QT}_1$ and $\mathcal{QT}_2$  schemes.

\subsection{Robustness, correctness, privacy and unforgeability}

As discussed for the token schemes $\mathcal{QT}_1$ and $\mathcal{QT}_2$, in the token schemes $\mathcal{QT}_1^M$ and $\mathcal{QT}_2^M$ we define $P_{\text{det}}$ as the probability that a quantum state $\lvert \psi_k^l\rangle$ transmitted by Bob is reported by Alice as being successfully measured, with label $(l,k)\in\Lambda^l$, for $k\in[N]$ and $l\in[M]$. 
We define $E$ as the probability that Alice obtains a wrong measurement outcome when she measures a quantum state $\lvert\psi_k^l\rangle$ in the basis of preparation by Bob; if the error rates $E_{tu}$ are different for different prepared states, labelled by $t$, and for different measurement bases, labelled by $u$, we simply take $E=\max_{t,u}\{E_{tu}\}$.

The robustness, correctness, privacy and unforgeability
of $\mathcal{QT}_1^M$ and $\mathcal{QT}_2^M$ are stated by the following lemmas and theorem.
These lemmas and theorem consider parameters $\gamma_\text{det},\gamma_\text{err}\in(0,1)$, allow for the experimental imperfections of Table \ref{tableimp} and make the assumptions of Table \ref{tableassu}, for each of the $M$ rounds labelled by $l\in[M]$, as discussed above.

\begin{lemma}
\label{lastrobustM}
If
\begin{equation}
				\label{lastrob1M}
		0<\gamma_\text{det} <P_{\text{det}},
	\end{equation}
then $\mathcal{QT}_1^M$ and $\mathcal{QT}_2^M$ are $\epsilon_{\text{rob}}^{M}-$robust with
\begin{equation}
			\label{lastrob2M}
						\epsilon_{\text{rob}}^{M}=1-(1-\epsilon_{\text{rob}})^M\leq M\epsilon_{\text{rob}},
\end{equation}
where
\begin{equation}
\label{lastlastrob}
\epsilon_{\text{rob}}=e^{-\frac{P_{\text{det}}N}{2}\bigl(1-\frac{\gamma_\text{det} }{P_{\text{det}}}\bigr)^2}.
\end{equation}
\end{lemma}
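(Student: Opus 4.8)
The plan is to reduce robustness for the $2^M$-point schemes directly to the two-point case, which is already established in Lemma \ref{robust1}. Recall that in $\mathcal{QT}_1^M$ (and $\mathcal{QT}_2^M$) the quantum stage and the abort condition are carried out in $M$ independent rounds labelled by $l\in[M]$, and in the $l$th round Bob aborts if and only if $n^l<\gamma_\text{det}N$, exactly as in the two-point scheme $\mathcal{QT}_1$. Honest Bob aborts the whole protocol if and only if he aborts in at least one round. So the first step is to write $P_\text{abort}^M=\text{Pr}\bigl[\exists\, l\in[M]:n^l<\gamma_\text{det}N\bigr]=1-\text{Pr}\bigl[\forall\, l\in[M]:n^l\geq\gamma_\text{det}N\bigr]$.

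Next I would invoke the independence of the $M$ rounds: the experimental imperfections of Table \ref{tableimp} and the assumptions of Table \ref{tableassu} hold independently for each round, and in particular the detection events in round $l$ are independent of those in the other rounds. Hence $\text{Pr}\bigl[\forall\, l:n^l\geq\gamma_\text{det}N\bigr]=\prod_{l=1}^M\text{Pr}\bigl[n^l\geq\gamma_\text{det}N\bigr]=(1-p)^M$, where $p=\text{Pr}[n^l<\gamma_\text{det}N]$ is the single-round abort probability, which is the same for every $l$ since each round is statistically identical. From the proof of Lemma \ref{robust1} (specifically the Chernoff bound step leading to Eq.~(\ref{rob2})), under the hypothesis $0<\gamma_\text{det}<P_\text{det}$ we have $p<\epsilon_\text{rob}=e^{-\frac{P_\text{det}N}{2}(1-\gamma_\text{det}/P_\text{det})^2}$. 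Therefore $P_\text{abort}^M=1-(1-p)^M\leq 1-(1-\epsilon_\text{rob})^M$, and the final inequality $1-(1-\epsilon_\text{rob})^M\leq M\epsilon_\text{rob}$ is the standard union-bound estimate, provable by induction on $M$ or directly from Bernoulli's inequality $(1-\epsilon_\text{rob})^M\geq 1-M\epsilon_\text{rob}$. This establishes that $\mathcal{QT}_1^M$ and $\mathcal{QT}_2^M$ are $\epsilon_\text{rob}^M$-robust with $\epsilon_\text{rob}^M$ as claimed in Eqs.~(\ref{lastrob2M})--(\ref{lastlastrob}).

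I do not anticipate a genuine obstacle here: the argument is essentially bookkeeping plus the already-proved single-round bound plus a union bound. The one point that warrants a sentence of care is the justification that the per-round abort probabilities are both \emph{identical} and \emph{independent} across rounds — this relies on the fact that the $M$ rounds use freshly generated randomness and freshly transmitted pulses, and that honest Bob's decision in round $l$ depends only on $n^l$. Once that is stated, the rest is immediate. The same proof applies verbatim to $\mathcal{QT}_2^M$ since its abort condition (step 2) is word-for-word that of $\mathcal{QT}_1^M$. I would also remark, as the text following Lemma \ref{robust1} does, that the hypothesis $\gamma_\text{det}<P_\text{det}$ is necessary, since otherwise even a single round aborts with probability tending to $1$ as $N\to\infty$, and hence so does the $M$-round scheme.
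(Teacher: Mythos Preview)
Your proposal is correct and follows essentially the same approach as the paper: reduce to the single-round bound of Lemma~\ref{robust1}, use independence of the $M$ rounds to get $P_\text{abort}\leq 1-(1-\epsilon_\text{rob})^M$, and finish with Bernoulli's inequality. The only cosmetic difference is that the paper phrases the independence step via conditional abort probabilities (bounding $P_\text{abort}^l$ given no abort in rounds $1,\dots,l-1$), whereas you write the product directly; the content is identical.
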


\begin{proof}
Suppose that (\ref{lastrob1M}) holds and that Alice and Bob follow the scheme $\mathcal{QT}_a^M$ honestly, for $a\in\{0,1\}$. From Lemma \ref{robust1}, the probability $P_\text{abort}^1$ that Bob aborts in the round label by $l=1$ satisfies $P_\text{abort}^1\leq \epsilon_\text{rob}$, with $\epsilon_\text{rob}$ given by (\ref{lastlastrob}). Since steps 1 to 10 of $\mathcal{QT}_1^M$ and steps 1 to 7 of $\mathcal{QT}_2^M$ are implemented in $M$ independent rounds, we also have from Lemma \ref{robust1} that  the probability $P_\text{abort}^l$ that Bob aborts in the $l$th round, given that he does not abort in the rounds $1,2,\ldots,l-1$, satisfies $P_\text{abort}^l\leq \epsilon_\text{rob}$, with $\epsilon_\text{rob}$ given by (\ref{lastlastrob}), for $l\in\{2,\ldots,M\}$. Thus, the probability $P_\text{abort}$ that Bob aborts in the scheme satisfies
\begin{equation}
P_\text{abort}\leq 1- (1-\epsilon_\text{rob})^M.
\end{equation}
Thus, the schemes $\mathcal{QT}_1^M$ and $\mathcal{QT}_2^M$ are $\epsilon_{\text{rob}}^{M}-$robust with $\epsilon_{\text{rob}}^{M}$ given by (\ref{lastrob2M}), as claimed. The inequality in (\ref{lastrob2M}) follows from Bernoulli's inequality.
\end{proof}

\begin{lemma}
\label{lastcorrectM}
If
\begin{eqnarray}
				\label{lastcor1M}
				0&<&\frac{\gamma_\text{err}}{2}<E<\gamma_\text{err},\nonumber\\
		0&<&\nu_\text{cor}<\frac{P_{\text{det}}(1-2\beta_\text{PB})}{2},
	\end{eqnarray}
then $\mathcal{QT}_1^M$ and $\mathcal{QT}_2^M$ are $\epsilon_{\text{cor}}^M-$correct with
\begin{equation}
			\label{lastcor2M}
			\epsilon_{\text{cor}}^{M}=1-(1-\epsilon_{\text{cor}})^M\leq M\epsilon_{\text{cor}},
\end{equation}
where
	\begin{equation}
			\label{lastlastcor}		
						\epsilon_{\text{cor}}=e^{-\frac{P_{\text{det}}(1-2\beta_\text{PB})N}{4}\bigl(1-\frac{2\nu_\text{cor}}{P_{\text{det}}(1-2\beta_\text{PB})}\bigr)^2}+e^{-\frac{E\nu_\text{cor} N}{3}\bigl(\frac{\gamma_\text{err}}{E}-1\bigr)^2}.
			\end{equation}
\end{lemma}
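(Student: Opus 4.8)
\textbf{Proof plan for Lemma \ref{lastcorrectM}.}
The plan is to reduce the correctness of the multi-point scheme to $M$ independent applications of Lemma \ref{correct1}, exactly as was done for robustness in Lemma \ref{lastrobustM}. First I would recall the setup: in $\mathcal{QT}_1^M$ and $\mathcal{QT}_2^M$, steps 1 through 10 (resp.\ 1 through 7) are executed in $M$ independent rounds labelled by $l\in[M]$, and the experimental imperfections of Table \ref{tableimp} and the assumptions of Table \ref{tableassu} apply independently to each round. If Alice and Bob both follow the scheme honestly and Alice chooses presentation point $Q_b$ with $b=(b^1,\dots,b^M)$, then by step 13 of $\mathcal{QT}_1^M$ (resp.\ step 8 of $\mathcal{QT}_2^M$) Bob's agent $\mathcal{B}_b$ accepts the token $\mathbf{x}=(\mathbf{x}^1,\dots,\mathbf{x}^M)$ if and only if, for \emph{every} $l\in[M]$, the Hamming-distance condition $d(\mathbf{x}_{b^l}^l,\mathbf{r}_{b^l}^l)\le\lvert\Delta_{b^l}^l\rvert\gamma_\text{err}$ holds; the signalling condition in part 1) of step 13 is automatically satisfied in the honest run because Alice presents at only one point.

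The key steps are then as follows. I would first observe that in each round the sub-protocol reproduces the classical inputs and outputs of $\mathcal{QT}_1$ (resp.\ $\mathcal{QT}_2$), so Lemma \ref{correct1} applies verbatim to round $l$: under the hypotheses (\ref{lastcor1M}) (which are identical to (\ref{cor1})), the probability $P_\text{error}^l$ that the round-$l$ Hamming-distance check fails, conditioned on the checks in rounds $1,\dots,l-1$ having passed, is at most $\epsilon_\text{cor}$ with $\epsilon_\text{cor}$ given by (\ref{lastlastcor}) — this uses the independence of the rounds, just as in the proof of Lemma \ref{lastrobustM}. Next I would combine these: the probability $P_\text{error}$ that Bob fails to accept Alice's honest token is the probability that \emph{at least one} round's check fails, so
\begin{equation}
P_\text{error}\le 1-\prod_{l=1}^M(1-P_\text{error}^l)\le 1-(1-\epsilon_\text{cor})^M.
\end{equation}
Finally, Bernoulli's inequality $(1-\epsilon_\text{cor})^M\ge 1-M\epsilon_\text{cor}$ gives $1-(1-\epsilon_\text{cor})^M\le M\epsilon_\text{cor}$, which is exactly the bound (\ref{lastcor2M}). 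Hence $\mathcal{QT}_1^M$ and $\mathcal{QT}_2^M$ are $\epsilon_\text{cor}^M$-correct with $\epsilon_\text{cor}^M$ as stated.

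I do not expect any real obstacle here — the argument is a routine union-bound over independent rounds and is structurally identical to the robustness proof. The only point requiring a line of care is the conditioning: one must phrase the per-round failure probability as conditioned on the previous rounds passing (rather than claiming outright independence of the failure events), but since the quantum and classical data of distinct rounds are generated independently, Lemma \ref{correct1} bounds each conditional probability by the same $\epsilon_\text{cor}$, and the telescoping product then yields the claim. A brief remark that the signalling clause of step 13 plays no role in the honest execution (because only one token is ever presented) rounds out the argument.
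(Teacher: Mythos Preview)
Your proposal is correct and matches the paper's proof essentially step for step: the paper also reduces to $M$ independent rounds, applies Lemma~\ref{correct1} to each, bounds the overall failure probability by $1-(1-\epsilon_{\text{cor}})^M$, and invokes Bernoulli's inequality. Your explicit remark that the signalling clause of step~13 is vacuous in the honest run, and your phrasing via conditional probabilities before invoking independence, are slightly more careful than the paper (which simply asserts independence outright), but the argument is the same.
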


\begin{proof}
Suppose that (\ref{lastcor1M}) holds and that Alice and Bob follow the scheme $\mathcal{QT}_a^M$ honestly, for $a\in\{0,1\}$. We see from step 13 of $\mathcal{QT}_1^M$ and step 8 of $\mathcal{QT}_2^M$ that $\mathcal{B}_b$ validates Alice's token $\mathbf{x}=(\mathbf{x}^1,\ldots,\mathbf{x}^M)$ at the presentation point $Q_b$ if the condition $d(\mathbf{x}_{b^l}^l,\mathbf{r}_{b^l}^l)\leq \lvert \Delta_{b^l}^l\rvert \gamma_\text{err}$ is satisfied for all $l\in[M]$. Since steps 1 to 10 of $\mathcal{QT}_1^M$ and steps 1 to 7 of $\mathcal{QT}_2^M$ are implemented in $M$ independent rounds, we see that the probability that each of these conditions is satisfied is independent of whether the other conditions are satisfied. We see that steps 1 to 10 (1 to 7) of the $l$th round in $\mathcal{QT}_1^M$ ($\mathcal{QT}_2^M$) and the $l$th condition for token validation in step 13 (8) of $\mathcal{QT}_1^M$ ($\mathcal{QT}_2^M$) are equivalent to the corresponding steps and the condition for token validation in $\mathcal{QT}_1$ ($\mathcal{QT}_2$), for $l\in[M]$. Thus, we have from Lemma \ref{correct1} that the probability $P_\text{fail}^l$ that the $l$th condition for token validation in $\mathcal{QT}_1^M$ ($\mathcal{QT}_2^M$) is not passed satisfies $P_\text{fail}^l\leq \epsilon_\text{cor}$, with $\epsilon_\text{cor}$ given by (\ref{lastlastcor}), for $l\in[M]$. 
Thus, the probability $P_\text{fail}$ that the token $\textbf{x}$ is not validated by $\mathcal{B}_b$ in $Q_b$ in either scheme $\mathcal{QT}_1^M$ or $\mathcal{QT}_2^M$ satisfies
\begin{equation}
P_\text{fail}\leq 1- (1-\epsilon_\text{cor})^M.
\end{equation}
Thus, the schemes $\mathcal{QT}_1^M$ and $\mathcal{QT}_2^M$ are $\epsilon_{\text{cor}}^{M}-$correct with $\epsilon_{\text{cor}}^{M}$ given by (\ref{lastcor2M}), as claimed. The inequality in (\ref{lastcor2M}) follows from Bernoulli's inequality.
\end{proof}

\begin{lemma}
\label{lastBobM}
$\mathcal{QT}_1^M$ and $\mathcal{QT}_2^M$ are $\epsilon_{\text{priv}}^{M}-$private with
\begin{equation}
\label{lastBo1M}
\epsilon_{\text{priv}}^{M}=\frac{1}{2^M}\bigl[(1+2\epsilon_{\text{priv}})^M-1\bigr],
\end{equation}
with
\begin{equation}
\label{lastlastBo1M}
\epsilon_{\text{priv}}=\beta_\text{E}.
\end{equation}
\end{lemma}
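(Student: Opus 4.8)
The plan is to reduce the privacy of $\mathcal{QT}_1^M$ and $\mathcal{QT}_2^M$ to $M$ independent instances of the $M=1$ privacy bound already established in Lemma \ref{Bob1}. Recall that in these schemes Alice chooses $b=(b^1,\ldots,b^M)\in\{0,1\}^M$ with the bits $b^l$ drawn uniformly and independently, and that in the $l$th round the only message she sends Bob that can depend on $b^l$ before token presentation is the bit $c^l=b^l\oplus z^l$, where $z^l$ has distribution $P_\text{E}^l$ with $\frac{1}{2}-\beta_\text{E}\le P_\text{E}^l(z^l)\le\frac{1}{2}+\beta_\text{E}$. As in the proof of Lemma \ref{Bob1}, this claim rests on assumptions C and E (see Table \ref{tableassu}) and on the assumption that Alice's laboratories and the channels between her agents are secure and authenticated; I would invoke that argument verbatim, applied to the $l$th round.

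First I would note that, because steps 1 to 10 of $\mathcal{QT}_1^M$ (respectively steps 1 to 7 of $\mathcal{QT}_2^M$) are carried out in $M$ mutually independent rounds, and every round-$l$ message depends only on round-$l$ data and on $b^l$ (through $c^l$), the collection $\{(b^l,\text{classical messages Bob receives in round }l)\}_{l\in[M]}$ is mutually independent. Equivalently, the joint distribution of $b$ together with everything Bob learns before token presentation is a product over $l\in[M]$ of the corresponding round-$l$ joint distributions; in particular the round-$l'$ messages carry no information about $b^l$ for $l'\neq l$.

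Next I would apply the elementary fact that for a product distribution $P(b,e)=\prod_{l=1}^M P_l(b^l,e^l)$ on $(b,e)$ the maximal probability of correctly guessing the entire string $b$ from $e$ factorizes, $\max_{g}\Pr[g(e)=b]=\prod_{l=1}^M\max_{g_l}\Pr[g_l(e^l)=b^l]$, the optimal global guess being the concatenation of the optimal per-round guesses. By Lemma \ref{Bob1} applied to round $l$, each factor is at most $\frac{1}{2}+\beta_\text{E}=\frac{1}{2}+\epsilon_\text{priv}$ with $\epsilon_\text{priv}$ given by (\ref{lastlastBo1M}), so Bob's probability $P_\text{Bob}$ to guess $b$ satisfies
\begin{equation}
P_\text{Bob}\le\Bigl(\frac{1}{2}+\epsilon_\text{priv}\Bigr)^{\!M}=\frac{1}{2^M}\bigl(1+2\epsilon_\text{priv}\bigr)^M=\frac{1}{2^M}+\frac{1}{2^M}\bigl[(1+2\epsilon_\text{priv})^M-1\bigr].
\end{equation}
This is exactly the assertion that $\mathcal{QT}_1^M$ and $\mathcal{QT}_2^M$ are $\epsilon_\text{priv}^M$-private with $\epsilon_\text{priv}^M$ given by (\ref{lastBo1M}), which completes the proof.

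The only real obstacle is the second step: one must argue carefully that no information about $b^l$ leaks through any other round, or through any message Bob receives in round $l$ other than $c^l$, so that the guessing probability genuinely factorizes over rounds. This is where the independence of the $M$ rounds and the per-round conclusion of Lemma \ref{Bob1} (which already encapsulates assumptions C and E and the security of Alice's channels) do all the work; once that product structure is in place, turning $(\tfrac12+\epsilon_\text{priv})^M$ into $\tfrac1{2^M}+\epsilon_\text{priv}^M$ is immediate algebra.
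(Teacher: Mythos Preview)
Your proof is correct and follows essentially the same approach as the paper: both invoke assumptions C and E together with the security of Alice's channels to reduce Bob's information about $b$ to the bits $c^l=b^l\oplus z^l$, use the independence of the $M$ rounds to factorize Bob's optimal guessing probability as $\prod_{l=1}^M P_\text{Bob}^{(l)}$, apply Lemma~\ref{Bob1} to bound each factor by $\tfrac{1}{2}+\epsilon_\text{priv}$, and then rewrite $(\tfrac{1}{2}+\epsilon_\text{priv})^M$ in the required form. Your treatment is slightly more explicit about why the guessing probability factorizes over a product distribution, but this is a presentational refinement rather than a different argument.
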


\begin{proof}
Suppose that Alice follows the scheme $\mathcal{QT}_a^M$ honestly, for $a\in\{0,1\}$. From assumption C (see Table \ref{tableassu}),
 the set $\Lambda^l$ of labels transmitted to $\mathcal{B}$ in step 2 of $\mathcal{QT}_1^M$
and $\mathcal{QT}_2^M$ in the $l$th round gives $\mathcal{B}$ no information about the string $W^l$ and the bit $z^l$, for $l\in[M]$. Furthermore, from assumption E (see Table \ref{tableassu}),
 $\mathcal{B}$ cannot use degrees of freedom
not previously agreed for the transmission of the quantum states to affect, or obtain information about, the statistics of the quantum measurement devices of $\mathcal{A}$. Moreover, in our setting, we assume that Alice's laboratories are secure and that communication among Alice's agents is made through secure and authenticated classical channels. It follows from these assumptions that the only way in which Bob can obtain information about Alice's bit string $b=(b^1,\ldots,b^M)$ before she presents the token is via the bit messages $c^l=z^l\oplus b^l$, for $l\in[M]$. Since Alice prepares the bits $z^l$ in independent rounds, for $l\in[M]$, the probability that Bob guesses Alice's bit string $b$ is given by
\begin{equation}
\label{eqeq1}
P_{\text{Bob}}=\prod_{l=1}^MP_{\text{Bob}}^{l},
\end{equation}
where $P_{\text{Bob}}^{l}$ is the probability that Bob guesses Alice's bit $b^l$, for $l\in[M]$.

We see that the steps 1 to 10 (1 to 7) of the $l$th round in $\mathcal{QT}_1^M$ ($\mathcal{QT}_2^M$) are equivalent to the corresponding steps in $\mathcal{QT}_1$ ($\mathcal{QT}_2$), for $l\in[M]$. Thus, from Lemma \ref{Bob1}, we have
\begin{equation}
\label{eqeq2}
P_{\text{Bob}}^{l}\leq \frac{1}{2}+\epsilon_\text{priv},
\end{equation}
for $l\in[M]$, with $\epsilon_\text{priv}$ given by (\ref{lastlastBo1M}). From (\ref{eqeq1}) and (\ref{eqeq2}), we have that
\begin{eqnarray}
\label{eqeq3}
P_{\text{Bob}}&\leq&\Bigl(\frac{1}{2}+\epsilon_\text{priv}\Bigr)^M\nonumber\\
&=&\frac{1}{2^M}+\epsilon_\text{priv}^M,
\end{eqnarray}
with $\epsilon_\text{priv}^M$ given by (\ref{lastBo1M}). Thus, the schemes $\mathcal{QT}_1^M$ and $\mathcal{QT}_2^M$ are $\epsilon_{\text{priv}}^{M}-$private with $\epsilon_\text{priv}^M$ given by (\ref{lastBo1M}), as claimed.
\end{proof}

\begin{theorem}
\label{lastAliceM}

Consider the constraints
\begin{eqnarray}
\label{lastAl1M}
0&<&\gamma_\text{err}<\lambda(\theta,\beta_{\text{PB}}),\nonumber\\
0&\!<\!&\!P_\text{noqub}\!<\!\nu_\text{unf}\!<\!\min\biggl\{\!\!2P_{\text{noqub}},\!\gamma_\text{det}\biggl(\!1\!-\!\frac{\gamma_\text{err}}{\lambda(\theta,\beta_{\text{PB}})}\!\biggr)\!\!\biggr\},\nonumber\\
0&<&\beta_\text{PS}<\frac{1}{2}\biggl[e^{\frac{\lambda(\theta,\beta_{\text{PB}})}{2}\bigl(1-\frac{\delta}{\lambda(\theta,\beta_{\text{PB}})}\bigr)^2}-1\biggr].
	\end{eqnarray}
	We define the function
\begin{eqnarray}
\label{lastAl3M}	
	&&f(\!\gamma_\text{err},\!\beta_\text{PS},\!\beta_\text{PB},\!\theta,\!\nu_\text{unf},\!\gamma_\text{det})\nonumber\\
	&&\!\!\quad=(\gamma_\text{det}\!-\!\nu_\text{unf})\biggl[\!\frac{\lambda(\theta,\!\beta_{\text{PB}}) }{2}\!\biggl(\!1\!-\!\frac{\delta}{\lambda(\theta,\beta_{\text{PB}})}\!\biggr)^2\!\!-\!\ln(\!1\!+\!2\beta_\text{PS})\!\biggr]\nonumber\\
	&&\qquad-\bigl(1\!-\!(\gamma_\text{det}\!-\!\nu_\text{unf})\bigr)\ln\bigl[1+h(\beta_\text{PS},\beta_\text{PB},\theta)\bigr],
	\end{eqnarray}	
	 where 
	 \begin{eqnarray}
\label{lastAl4M}
h(\beta_\text{PS},\beta_\text{PB},\theta)&=&2\beta_\text{PS}\sqrt{\frac{1}{2}\!+\!2\beta_\text{PB}^2\!+\!\Bigl(\frac{1}{2}\!-\!2\beta_\text{PB}^2\Bigr)\sin(2\theta)},\nonumber\\
\delta&=&\frac{\gamma_\text{det}\gamma_\text{err}}{\gamma_\text{det}-\nu_\text{unf}}.
\end{eqnarray}
Let $L\leq 2^M$ be the number of pair-wise spacelike separated presentation points among the $2^M$ presentation points and let $C=\frac{L(L-1)}{2}$  be the number of pairs of spacelike separated presentation points, if $L\geq 2$, and $C=0$ if $L=0$.
For any $M\geq 1$, there exist parameters satisfying the constraints (\ref{lastAl1M}), for which $f(\!\gamma_\text{err},\!\beta_\text{PS},\!\beta_\text{PB},\!\theta,\!\nu_\text{unf},\!\gamma_\text{det})>0$. For these parameters, $\mathcal{QT}_1^M$ and $\mathcal{QT}_2^M$ are $\epsilon_{\text{unf}}^M-$unforgeable with
\begin{equation}
\label{lastAl5M}
\epsilon_{\text{unf}}^M=C\epsilon_{\text{unf}},
\end{equation}
with
\begin{equation}
\label{lastlastAl5M}
\epsilon_{\text{unf}}\!=\!e^{-\frac{P_\text{noqub}N}{3}\bigl(\frac{\nu_\text{unf}}{P_\text{noqub}}-1\bigr)^2}\! +e^{-Nf(\gamma_\text{err},\beta_\text{PS},\beta_\text{PB},\theta,\nu_\text{unf},\gamma_\text{det})}.	
\end{equation}
	 \end{theorem}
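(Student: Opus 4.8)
The plan is to reduce the unforgeability of $\mathcal{QT}_1^M$ and $\mathcal{QT}_2^M$ to the two-presentation-point bound of Theorem \ref{Alice1}, using the independence of the $M$ rounds together with the signalling step (step 12 of $\mathcal{QT}_1^M$, absorbed into step 8 of $\mathcal{QT}_2^M$). The existence of parameters satisfying (\ref{lastAl1M}) for which $f(\gamma_\text{err},\beta_\text{PS},\beta_\text{PB},\theta,\nu_\text{unf},\gamma_\text{det})>0$ is the same claim as in Theorem \ref{Alice1} and carries over verbatim; I would simply invoke it.

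First I would show that Bob cannot validate tokens at two presentation points $Q_i$ and $Q_{i'}$ that are causally ordered. If $Q_i$ lies in the causal past of $Q_{i'}$ and Alice's agent presents a token at $Q_i$, then by step 12 of $\mathcal{QT}_1^M$ (step 8 of $\mathcal{QT}_2^M$) the agent $\mathcal{B}_{i'}$ receives a signal to this effect before it must decide, so condition 1 of step 13 fails and $\mathcal{B}_{i'}$ rejects. Consequently, any successful forgery entails that tokens are accepted at both members of some pair of spacelike separated presentation points. A union bound over the $C$ such pairs then reduces the task to bounding, for a fixed spacelike pair $(Q_i,Q_{i'})$, the probability $p_{i,i'}$ that Alice has tokens accepted at both.

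Next, for a fixed spacelike pair I would pick any round $l\in[M]$ in which the bit strings $i$ and $i'$ differ, i.e. $i^l\neq i'^l=i^l\oplus1$; such an $l$ exists because $i\neq i'$. By condition 2 of step 13, acceptance of Alice's token at $Q_i$ requires in particular $d(\mathbf{x}_{i^l}^l,\mathbf{r}_{i^l}^l)\leq\lvert\Delta_{i^l}^l\rvert\gamma_\text{err}$, and acceptance at $Q_{i'}$ requires $d(\mathbf{x}_{i^l\oplus1}^l,\mathbf{r}_{i^l\oplus1}^l)\leq\lvert\Delta_{i^l\oplus1}^l\rvert\gamma_\text{err}$; these are exactly the two validation conditions of $\mathcal{QT}_1$ (resp. $\mathcal{QT}_2$) at its two presentation points, applied to the round-$l$ data. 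Because steps $1$ through $10$ of $\mathcal{QT}_1^M$ (steps $1$ through $7$ of $\mathcal{QT}_2^M$) are executed in $M$ independent rounds, the quantum systems and classical data of the other $M-1$ rounds are uncorrelated with those of round $l$ and can be folded into the ancilla $E$ of the entanglement-based cheating strategy analysed in Appendix \ref{newapp}; and since $Q_i$ and $Q_{i'}$ are spacelike separated, Alice's agent at $Q_{i'}$ cannot learn whether the token at $Q_i$ was accepted before presenting, which is precisely the configuration treated in the proof of Theorem \ref{Alice1}. Hence $p_{i,i'}\leq\epsilon_{\text{unf}}$ with $\epsilon_{\text{unf}}$ given by (\ref{lastlastAl5M}), and summing over the $C$ spacelike pairs yields $\epsilon_{\text{unf}}^M\leq C\epsilon_{\text{unf}}$.

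The step requiring the most care is this last reduction: one must verify that restricting a fully general $M$-round adaptive strategy to its action on round $l$ yields an \emph{admissible} cheating strategy for the two-point scheme on that round — in particular that the projective measurement Alice applies, though it may depend on all rounds and on side information, still fits the form $\{\Pi_{x\mathbf{a}\mathbf{b}}\}$ on $AE$ used in Appendix \ref{newapp} once the other rounds and all classical messages are included in $E$, and that the causal-past case (already excluded by the union bound to spacelike pairs) does not re-enter through the adaptivity allowed by steps 12--13. Making this absorption explicit, and noting that the ordering of the two presentations is immaterial for a spacelike pair exactly as argued in the summary of the proof of Theorem \ref{Alice1}, completes the argument.
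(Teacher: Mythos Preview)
Your proposal is correct and follows essentially the same route as the paper's proof: rule out timelike pairs via the signalling step, union-bound over the $C$ spacelike pairs, pick a round $l$ where the two labels differ, observe that acceptance at both points forces the two validation conditions of the two-point scheme on that round, and invoke Theorem~\ref{Alice1}. Your remark about absorbing the other $M-1$ rounds into the ancilla $E$ is the right way to justify that Alice's fully general (possibly cross-round, adaptive) strategy still fits the projective-measurement model of Appendix~\ref{newapp}; the paper handles this point more tersely by simply noting that Bob's round-$l'$ actions and validation conditions are equivalent to those of $\mathcal{QT}_1$ (resp.\ $\mathcal{QT}_2$) and then citing Theorem~\ref{Alice1}.
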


\begin{proof}
From Theorem \ref{Alice1}, there exist parameters satisfying the constraints (\ref{lastAl1M}), for which $f(\!\gamma_\text{err},\!\beta_\text{PS},\!\beta_\text{PB},\!\theta,\!\nu_\text{unf},\!\gamma_\text{det})>0$. This holds for arbtrary $M\geq1$ because the constraints (\ref{lastAl1M}) and the function $f(\!\gamma_\text{err},\!\beta_\text{PS},\!\beta_\text{PB},\!\theta,\!\nu_\text{unf},\!\gamma_\text{det})$ are independent of $M$.

Suppose that the constraints (\ref{lastAl1M}) hold and that $f(\!\gamma_\text{err},\!\beta_\text{PS},\!\beta_\text{PB},\!\theta,\!\nu_\text{unf},\!\gamma_\text{det})>0$. Suppose that Bob follows the scheme $\mathcal{QT}_a^M$ honestly and Alice follows an arbitrary cheating strategy $\mathcal{S}$, for $a\in\{0,1\}$. From step 12 (8) of $\mathcal{QT}_1^M$ ($\mathcal{QT}_2^M$), Alice cannot succeed in making Bob validate tokens at timelike separated presentation points. For this reason, we consider without loss of generality that Alice tries to make Bob validate tokens at spacelike separated presentation points.

Let $L\leq 2^M$ be the number of spacelike separated presentation points.
 Alice's general cheating strategy $\mathcal{S}$ comprises using the received classical information and quantum sates from Bob to output classical data to give Bob as required by the scheme $\mathcal{QT}_1^M$ ($\mathcal{QT}_2^M$) in steps 1 to 8 (1 to 6), and to obtain a token to give Bob at each of the $L$ spacelike separated presentation points. We note that in our schemes there are no penalties for Alice if Bob catches her cheating. Thus, it does not affect Alice to present tokens at all spacelike separated presentation points, even if this can in principle increase the probability that Bob catches her cheating. Moreover, by presenting tokens at all spacelike separated presentation points, Alice has a greater probability to make Bob validate tokens at any two or more different presentation points. For example, if by giving tokens at $K<L$ spacelike separated presentation points Alice can make Bob validate tokens at any two or more different presentation points with probability $P$, Alice can additionally give a random token at another spacelike separated presentation point and in this way increase the probability to some value $P'>P$.

Let $\mathcal{P}_\text{spacelike}$ be the set of labels $i=(i^1,\ldots,i^M)\in\{0,1\}^M$ for the spacetime presentation points $Q_i$ that are spacelike separated. Let $v,w\in\mathcal{P}_\text{spacelike}$ with $v\neq w$. Let $\mathbf{a}=(\mathbf{a}^1,\ldots,\mathbf{a}^M)$ and $\mathbf{b}=(\mathbf{b}^1,\ldots,\mathbf{b}^M)$ be the tokens that Alice gives Bob at $Q_v$ and $Q_w$, respectively. Let $P^{\mathcal{S}}_{vw}$ be the probability that Bob validates the token $\mathbf{a}$ at $Q_v$ and the token $\mathbf{b}$ at $Q_w$. Let $P^\mathcal{S}$ be the probability that Bob validates tokens at any two or more different presentation points. We have
\begin{equation}
\label{oo1}
P^\mathcal{S}\leq\sum_{\substack{v,w\in\mathcal{P}_\text{spacelike}\\ v\neq w}}P^{\mathcal{S}}_{vw}.
\end{equation}
We show below that
\begin{equation}
 \label{oo2}
P^{\mathcal{S}}_{vw}\leq \epsilon_\text{unf},
\end{equation}
for any $v,w\in\mathcal{P}_\text{spacelike}$ with $v\neq w$, and for any cheating strategy $\mathcal{S}$ by Alice, where $\epsilon_\text{unf}$ is given by (\ref{lastlastAl5M}). By noticing that by definition, $L=\lvert \mathcal{P}_\text{spacelike}\rvert$ is the number of spacelike separated presentation points and $C$ is the number of pairs of spacelike separated presentation points, it follows from (\ref{oo1}) and (\ref{oo2}) that $\mathcal{QT}_1^M$ and $\mathcal{QT}_2^M$ are $\epsilon_\text{unf}^M$ unforgeable, with $\epsilon_\text{unf}^M$ given by (\ref{lastAl5M}), as claimed.

We show (\ref{oo2}). Let $v,w\in\mathcal{P}_\text{spacelike}$ with $v\neq w$. Let $v = (v^1,\ldots,v^M)$ and $w=(w^1,\ldots,w^M)$, where $v^l,w^l\in\{0,1\}$, for $l\in[M]$. Since $v\neq w$, there exists $l'\in[M]$ such that 
 \begin{equation}
 \label{eqeqeq0}
 v^{l'}=w^{l'}\oplus 1.
 \end{equation}
 Thus, without loss of generality, let
  \begin{equation}
 \label{eqeqeq3}
 v^{l'}=0\qquad \text{and}\qquad w^{l'}= 1.
 \end{equation}
 Bob validating the token $\mathbf{a}$ at $Q_v$ and the token $\mathbf{b}$ at $Q_w$ requires satisfaction of the conditions $d(\mathbf{a}_{v^l}^l,\mathbf{r}_{v^l}^l)\leq \lvert \Delta^l_{v^l}\rvert\gamma_\text{err}$ at $Q_v$ and $d(\mathbf{b}_{w^l}^l,\mathbf{r}_{w^l}^l)\leq \lvert \Delta^l_{w^l}\rvert\gamma_\text{err}$ at $Q_w$, for all $l\in[M]$. Thus, it requires in particular satisfaction of the conditions 
 \begin{eqnarray}
 \label{eqeqeq1}
 d(\mathbf{a}_{0}^{l'},\mathbf{r}_{0}^{l'})&\leq& \lvert \Delta^{l'}_{0}\rvert\gamma_\text{err},\nonumber\\
 d(\mathbf{b}_{1}^{l'},\mathbf{r}_{1}^{l'})&\leq& \lvert \Delta^{l'}_{1}\rvert\gamma_\text{err},
 \end{eqnarray}
at $Q_v$ and $Q_w$, respectively, where we used (\ref{eqeqeq3}).

Since Bob follows the scheme honestly, he follows the steps 1 to 10 (1 to 7) of the $l'$th round in $\mathcal{QT}_1^M$ ($\mathcal{QT}_2^M$) independently of rounds with label $l\neq l'$. we see that Bob's steps 1 to 10 (1 to 7) of the $l'$th round in $\mathcal{QT}_1^M$ ($\mathcal{QT}_2^M$) and his $l'$th conditions for token validation at $Q_v$ and $Q_w$ given by (\ref{eqeqeq1}) are equivalent to Bob's corresponding steps and conditions for token validation at the two presentation points in $\mathcal{QT}_1$ ($\mathcal{QT}_2$). Thus, from Theorem \ref{Alice1}, the probability that both conditions (\ref{eqeqeq1}) are satisfied is upper bounded by $\epsilon_\text{unf}$, with $\epsilon_\text{unf}$ given by (\ref{lastlastAl5M}). It follows that
 \begin{equation}
 \label{eqeqeq2}
P^{\mathcal{S}}_{vw}\leq \epsilon_\text{unf},
\end{equation}
for any $v,w\in\mathcal{P}_\text{spacelike}$ with $v\neq w$, and for any cheating strategy $\mathcal{S}$ by Alice.




\end{proof}

We note that Lemmas \ref{lastrobustM}, \ref{lastcorrectM} and \ref{lastBobM} reduce to Lemmas \ref{robust1}, \ref{correct1} and \ref{Bob1} in the case $M=1$, respectively. Similarly, Theorem \ref{lastAliceM} reduces to Theorem \ref{Alice1} in the case $M=1$ if the presentation points are spacelike separated. This follows straightforwardly from the fact that $\mathcal{QT}_a^M$ reduces to $\mathcal{QT}_a$ for the case $M=1$, for $a\in\{1,2\}$, except for steps 12 and 13 of $\mathcal{QT}_1^M$ and step 8 of $\mathcal{QT}_2^M$, which as mentioned above can simply be replaced by step 12 of $\mathcal{QT}_1$ and step 8 of $\mathcal{QT}_2$ in this case, respectively. In the case $M=1$ with timelike separated presentation points, differently to Theorem \ref{Alice1}, Theorem \ref{lastAliceM} sates that the probability that Bob validates tokens at both presentation points is zero. This is due to the extra step 12 (8) in $\mathcal{QT}_1^M$ ($\mathcal{QT}_2^M$). In any case, Theorem \ref{lastAliceM} is consistent with Theorem \ref{Alice1} in the case $M=1$, if the presentation points are timelike or spacelike separated.

\end{document}